\newtheorem{theorem}{Theorem}[section]
\newtheorem{lemma}[theorem]{Lemma}
\newtheorem{conjecture}{Conjecture}
\theoremstyle{definition}
\newtheorem{remark}{Remark}
\title[Fear effect]
    {The effect of ``fear" on two species competition}
\begin{document}

\maketitle
\centerline{\scshape  Vaibhava Srivastava $^{1}$, Eric M. Takyi $^{2}$ and Rana D. Parshad $^{1}$}
%\centerline{\scshape   Thomas Griffin$^{1}$, Landrey Bobo$^{3}$ and Rana D. Parshad$^{1}$}
\medskip
{\footnotesize
% please put the address of the first author
\centerline{ 1)Department of Mathematics,}
 \centerline{Iowa State University,}
   \centerline{Ames, IA 50011, USA.}
  \medskip
\centerline{ 2)Department of Mathematics and Computer Science,}
 \centerline{Ursinus College,}
   \centerline{Collegeville, PA 19426, USA.}
%    \medskip
% \centerline{3) Department of Mathematics,}
% \centerline{ University of Texas,}
%  \centerline{Austin, TX 78712, USA }
 
}

%\maketitle
%%\centerline{\scshape Eric M. Takyi$^{1}$, Rana D. Parshad$^{1}$}
%
%\medskip
%{\footnotesize
%% please put the address of the first author
%
%% \centerline{1) Department of Mathematics,}
%% \centerline{Iowa State University,}
% %  \centerline{Ames, IA 50011, USA.}
%  % \medskip
%
%}

%\medskip

%%The abstract of your paper
%\begin{abstract}
\begin{abstract}
Non-consumptive effects such as fear of depredation, can strongly influence predator-prey dynamics. These effects have not been as well studied in the case of purely competitive systems, despite ecological and social motivations for the same. In this work we consider the classic two species ODE and PDE Lokta-Volterra competition models, where \emph{one} of the competitors is ``fearful" of the other. We find that the presence of fear can have several interesting dynamical effects on the classical scenarios of weak and strong competition, and competitive exclusion. Notably, for fear levels in certain regimes, we show bi-stability between interior equilibrium and boundary equilibrium is possible - contrary to the classical strong competition situation where bi-stability is only possible between boundary equilibrium. Furthermore, in the spatially explicit setting, the effects of several spatially heterogeneous fear functions are investigated. 
In particular, we show that under certain $\mathbb{L}^{1}$ restrictions on the fear function, a weak competition type situation can change to competitive exclusion. Applications of these results to ecological as well as sociopolitical settings are discussed, that connect to the ``landscape of fear" (LOF) concept in ecology. 
\end{abstract}

%\begin{keywords}
%Predator; Prey; Fear; Hopf Bifurcation; Cannibalism
%\end{keywords}

\section{Introduction}

Fear, is defined as,
 
\begin{quotation}
\emph{An unpleasant emotion caused by the belief that someone or something is dangerous} \cite{mer02}. 
\end{quotation}

It is a complex emotion, that is critical as a safety measure, and can trigger the ``fight or flight" response \cite{Brac04} - in particular it can change the way one acts, even when there is no threat present \cite{Cress11}. In predator-prey systems, this is most naturally observed among prey, due to their perceived threat of depredation \cite{Sher20}. This perception can lead to non-consumptive effects or trait-mediated interactions, which are behavioral, morphological or physiological changes in prey phenotype, due to this threat \cite{Peac13, Sher20}. Such effects are known to strongly influence predator-prey dynamics \cite{Peck08}. 
From a mathematical viewpoint, the effects of fear in predator-prey systems has been intensely investigated since the seminal work of Brown et. al. \cite{Brown99}, where optimal foraging theory is extended to consider a game theoretic setup, played out by predator and prey, exhibiting stealth and fear, in which an animal follows a map or a ``landscape of fear" (LOF), which describes its predation risk while it navigates the physical landscape. In recent work, Wang et. al. \cite{Wang16}, model fear of depredation, as a (predator) density dependent effect, that negatively effects the prey population. In essence, the prey's growth rate is modeled as a monotonically decreasing function of predator density. Dynamically, a key finding in \cite{Wang16} is that under the parametric restrictions of a Hopf bifurcation, an increase in the fear parameter (and prey's birth rate parameter) can alter the direction of a Hopf bifurcation from supercritical to subcritical. Thus, fear enables both supercritical and subcritical Hopf bifurcations, contrary to only the supercritical bifurcations found in classical predator-prey systems. In essence, the fear effect can change the fundamental cylical patterns of predator-prey dynamics, leading to large scale ecological consequences \cite{Ab00}. 

These results have since initiated a host of activities in diverse ecological scenarios such as when refuges are present \cite{Wang19, Zhang19}, when the prey has tendencies to avoid predators \cite{Wang17}, or when the predators responses are influenced by interference pressures, for instance, via a Beddington-DeAngelis functional response \cite{Palb19}. Various works have considered the fear effect in case of group defense by the prey \cite{Das_Sam21, Das21}.
 It has been investigated in the context of cooperative and competitive systems within the larger predator-prey context. These include the fear effect when predators are cooperating \cite{Pal19} in the hunting process, or when they are hunting for competing prey \cite{Yousef22}. These effects have been investigated in the three and multi-species settings as well \cite{Pandey18, Panday19} where fear can damp population explosions \cite{Verma21}. Various authors have considered the fear effect in a stochastic setting \cite{Das18} as well as a spatially explicit setting, in the context of taxis type movements, as well as pattern formation \cite{Wu18, Wang18}. It can also lead to chaotic dynamics \cite{Duan19}. However, the effect of fear has been far less investigated in classical monotone systems, such as purely cooperative or competitive two species systems - that are outside the predator-prey setting.

 Competition among two species, typically modeled via the Lotka–Volterra competition model and its variants have been intensely investigated in the last few decades. These models take into account growth and inter/intraspecific competition \cite{Lou2008}, and predict well-observed states in biology of co-existence, competitive exclusion of one competitor, and bi-stability, and find diverse applications in ecology and invasion science \cite{Boer86, Chess00, O89, S97, A06}.
 There are several ecological motivations for competitors being fearful of each other. This is perhaps most naturally seen to occur with intraguild predation - a widespread phenomenon in many food webs, where competitors will kill and consume each other \cite{Pol89}. Recent evidence of non-consumptive effects exerted by intraguild predator mites (\emph{Blattisocius dentriticus}) on their competitor (\emph{Neoseiulus cucumeris}) show this can be an important factor in determining food web dynamics in biological control \cite{Gu22, V96}. However, there is strong evidence for fear in purely competitive two species systems \emph{without} predatory effects. Barred owls (\emph{Strix varia}) are a species of owl, native to eastern North America. They have expanded their range westward over the last century and are considered invasive in western North America. Currently, their range overlaps with the spotted owl (\emph{Strix occidentalis}), which is native to the north west and western North America. This has resulted in intense competition between the two species \cite{Long19}. Barred owls exert a strong negative influence on spotted owls, threatening their possible competitive exclusion \cite{Van11}. Field observations report frequent barred owl attacks on spotted owls, and even on surveyors imitating spotted owl calls \cite{Gut04}. There is also evidence of barred owls aggressively chasing spotted owls out of shared habitat - but \emph{not} the opposite \cite{Wiens14}. Such evidence clearly motivates considering fear type dynamics into a purely competitive two species model where one of the competitors is fearful of the other.

There are also several socio-economic-political settings, where pure competitors may be fearful of each other. Small/new businesses may be fearful of large businesses, due to their already large market share \cite{Byun20}. But large business may also be fearful of small local businesses, due to their familiarity with local nuances, that may yield competitive advantage at a small local scale \cite{Moen99}. Fear is also conceivable among two competing political parties, where the weaker party on a national scale, may have a stronger voter bank at a regional scale \cite{Lof16}. Or perhaps two warring drug cartels, where the weaker cartel has certain local/territorial strongholds \cite{Wain16, Sul02} - within which they might be able to induce fear among the stronger cartel \cite{Wain16}. Such phenomenon becomes even more interesting in the spatially explicit case where this fear could be heterogeneous in the spatial domain of interest. This connects back to the LOF concept, where the fear function is essentially the map that describes how the fear levels change as a species disperses over a physical landscape. 
 
Motivated by all of the affore mentioned sociopolitical, economic as well as ecological settings, the current manuscript considers the effect of fear in a competitive two species system. We restrict our analysis to the case where only \emph{one} of the competitors is fearful of the other. Our investigations show that:
 
 \begin{itemize}
 
 \item Sufficiently large fear can change a situation of competitive exclusion, to a strong competition type scenario, where there is bi-stability between boundary equilibrium. See Fig.~\ref{fig:Fig1} (C). Dynamically, this occurs via a transcritical bifurcation. This is shown via Lemma \ref{thm:tc}, see Fig.~\ref{fig:trans}.
 
 \item Fear in a certain parametric regime can change a situation of competitive exclusion to bi-stability between boundary equilibrium and interior equilibrium, see Fig.~\ref{fig:ode_two_postive} and Fig.~\ref{fig:Fig1} (B). Dynamically, this occurs via a saddle-node bifurcation. This is shown via Lemma \ref{thm:sad}, see Fig.~\ref{fig:sad_node}. This is in sharp contrast with classical competition theory, where bi-stability occurs only between boundary equilibriums.
 
 \item Sufficiently large fear can change a situation of weak competition to a competitive exclusion type scenario. This is shown via Lemma \ref{lem:ce_ode_2}, see Fig.~\ref{fig:ode_wc}.
 
 \item Fear cannot qualitatively change a strong competition type scenario. This is shown via Lemma \ref{ce_strong}, see Fig.~\ref{fig:ode_st_comp}.
 Also, fear cannot produce periodic orbits. This is demonstrated via Lemma~\ref{lem:dc1}.
 
 \item In the spatially explicit setting, comparison theory is used to determine point-wise restrictions on the fear functions such that competitive exclusion or strong competition type dynamics abounds. These are shown via Theorem \ref{thm:ce_1_pde}, Theorem \ref{thm:ce_2_pde} and Theorem \ref{st_pde}, see Figs.~[\ref{fig:pde_ce1_pde1},\ref{fig:pde_ce2_pde1},\ref{fig:pde_sc1_pde1}].

 \item In the spatially explicit setting, fear can change a situation of weak competition to a competitive exclusion type scenario, for fear functions with certain $\mathbb{L}^{1}$ restrictions. This is shown via Theorem \ref{thm:cotoce_1_pde} and Lemma \ref{lem:ff1}, see Figs.~[\ref{fig:remark_5},\ref{fig:remark_51},\ref{fig:remark_52}]. In particular the fear functions need not lie uniformly above the critical fear levels derived in the ODE case via Lemma \ref{lem:ce_ode_2}. 
 
 \item Various heterogeneous fear functions are constructed to demonstrate these results numerically, see Fig.~\ref{fig:fear_plot}. Applications of these to ecological as well as socio-political settings are discussed in section \ref{disc}.

 \end{itemize}

\section{The ODE case}
\subsection{Model formulation}

Consider the classical two species Lotka-Volterra ODE competition model,

\begin{equation}\label{eq:GeneralEquation}
\left\{ \begin{array}{ll}
\dfrac{du }{dt} &~ = u (a_{1}-b_{1}u - c_{1}v) ,\\[2ex]
\dfrac{dv }{dt} &~ =  v (a_{2}-b_{2}v - c_{2}u),
\end{array}\right.
\end{equation}
%
%Herein the strength of competition is judged in terms of the strength of interspecific competition
%Now weak competition means
%\begin{align}\label{weakcondition}
%\dfrac{b_{1}}{c_{2}}>\dfrac{a_{1}}{a_{2}}>\dfrac{c_{1}}{b_{2}},
%\end{align}
%
%and strong competition means
%\begin{align}\label{strongcondition}
%\dfrac{b_{1}}{c_{2}}<\dfrac{a_{1}}{a_{2}}<\dfrac{c_{1}}{b_{2}}.
%\end{align}
where $u$ and $v$ are the population densities of two competing species, $a_1$ and $a_2$ are the intrinsic (per capita) growth rates, $b_1$ and $b_2$ are the intraspecific competition rates, $c_1$ and $c_2$ are the interspecific competition rates. All parameters considered are positive. The dynamics of this system are well studied \cite{Murray93}. We recap these briefly,

\begin{itemize}
    \item $E_0 = (0,0)$ is always unstable.
    \item $E_u = (\frac{a_1}{b_1},0)$ is globally asymptotically stable if $\dfrac{a_{1}}{a_{2}} > \max\left\lbrace\dfrac{b_{1}}{c_{2}},\dfrac{c_{1}}{b_{2}}\right\rbrace$. Herein $u$ is said to competitively exclude $v$.
    \item $E_v = (0,\frac{a_2}{b_2})$ is globally asymptotically stable if $\dfrac{a_{1}}{a_{2}}<\min\left\lbrace\dfrac{b_{1}}{c_{2}},\dfrac{c_{1}}{b_{2}}\right\rbrace$. Herein $v$ is said to competitively exclude $u$. 
    \item $E^* = \Big(\frac{a_1b_2-a_2c_1}{b_1b_2-c_1c_2},\frac{a_2b_1-a_1c_2}{b_1b_2-c_1c_2}\Big)$ exists when $b_1b_2-c_1c_2 \neq 0$. The positivity of the equilibrium holds if $\frac{c_2}{b_1}<\frac{a_2}{a_1}<\frac{b_2}{c_1}$ and is globally asymptotically stable if $b_1b_2-c_1c_2>0$. This is said to be the case of weak competition.
   
   \item If $b_1b_2-c_1c_2<0$, then $E^* = \Big(\frac{a_1b_2-a_2c_1}{b_1b_2-c_1c_2},\frac{a_2b_1-a_1c_2}{b_1b_2-c_1c_2}\Big)$ is unstable as a saddle. In this setting, one has initial condition dependent attraction to either $E_u(\frac{a_1}{b_1},0)$ or $E_v(0,\frac{a_2}{b_2})$. This is the case of strong competition.
\end{itemize}

%We consider first the competitive exclusion case,
%
%\begin{align}\label{extinctioncondition2}
%\dfrac{a_{1}}{a_{2}} > \max\left\lbrace\dfrac{b_{1}}{c_{2}},\dfrac{c_{1}}{b_{2}}\right\rbrace, 
%\end{align}
%
%%\begin{align}\label{extinctioncondition2}
%%\dfrac{a_{1}}{a_{2}} > \max\left\lbrace\dfrac{b_{1}}{c_{2}},\dfrac{c_{1}}{b_{2}}\right\rbrace, \ \mbox{or} \  \dfrac{a_{1}}{a_{2}}<\min\left\lbrace\dfrac{b_{1}}{c_{2}},\dfrac{c_{1}}{b_{2}}\right\rbrace.
%%\end{align}
%
%%or
%%
%%\begin{align}\label{extinctioncondition1}
%%\dfrac{a_{1}}{a_{2}}<\min\left\lbrace\dfrac{b_{1}}{c_{2}},\dfrac{c_{1}}{b_{2}}\right\rbrace. \quad 
%%\end{align}
%
%In this setting, as $t\rightarrow \infty$, the solutions $(u(t),v(t))$ converges uniformly to $(a_{1}/b_{1},0)$ irrespective of initial conditions. WLOG we consider the case when $(a_{1}/b_{1},0)$ is globally asymptotically stable, thus $u$ is the stronger competitor and drives $v$ to extinction, and $v$ is said to be competitively excluded \cite{Murray93}. 
%
%We posit that including finite time extinction mechanisms (FTEM) can alter the above classical dynamics.
We proceed by considering the effects of fear on the classical model \eqref{eq:GeneralEquation}, when \emph{one} of the competitors is fearful of the other.

%\subsection{Equilibria}
%The system (\ref{eq: Fearmodel}) possesses the following biologically feasible equilibria. These are 
%\begin{itemize}
%\item $\mathbf{E}_1=(0,0)$,
%\item $\mathbf{E}_2=\left(\dfrac{a_1}{b_1},0 \right)$,
%\item $\mathbf{E}_3=\left(0,\dfrac{a_2}{b_2} \right)$,
%\item $\mathbf{E}_4=\left(u^*,v^*\right)$,
%\end{itemize}
%where 
%\begin{align}\label{u_star}
%	u^*=\dfrac{1}{b_1}\left(\dfrac{a_1}{1+fv^*}-c_1 v^* \right)
%\end{align}
%and $v^*$ is a positive root of the following fourth-order polynomial
%\begin{equation}
%\label{eq: vpoly}
%A(v^*)^4+B(v^*)^3+C(v^*)^2+D(v^*)+E=0
%\end{equation}
%where  
%\begin{equation}
%\begin{split}
%A&=b_1 b_2 c_1 f^2 k-c_1^2 c_2 f^2 k,\\
%B&=b_1 c_1 c_2 f^2+2 b_1 b_2 c_1 f k-b_1^2 b_2 f^2-2 c_1^2 c_2 f k,\\
%C&=a_2 b_1^2 f^2-a_1 b_2 b_1 f k+2 a_1 c_1 c_2 f k+2 b_1 c_1 c_2 f+b_2 b_1 c_1 k-2 b_2 b_1^2 f-c_1^2 c_2 k,\\
%D&=-a_1 b_1 c_2 f+2 a_2 b_1^2 f-a_1 b_2 b_1 k+2 a_1 c_1 c_2 k+b_1 c_1 c_2-b_2 b_1^2,\\
%E&=-a_1 b_1 c_2+a_2 b_1^2-a_1^2 c_2 k.
%\end{split}
%\end{equation}
%
%The Jacobian matrix of system $(\ref{eq: Fearmodel})$ is given by
%
%\begin{equation}\label{jacob_main}
%	\mathbf{J}(u^*,v^*)=
%	\left(
%	\begin{array}{cc}
%		\dfrac{a_1}{fv^*+1}-2b_1u^*-c_1v^* &-\dfrac{a_1 f u^*}{(fv^*+1)^2} -c_1 u^* \\
%		-\dfrac{a_2 k v^*}{(k u^*+1)^2}-c_2 v^* & \dfrac{a_2}{k u^*+1}-2 b_2 v^*-c_2 u^* \\
%	\end{array}
%	\right).
%\end{equation}

\subsection{The case of $v$ fearing $u$}

We consider the case of the competitor $v$ being fearful of $u$. Thus in the classical model \eqref{eq:GeneralEquation}, we model the fear effect as in \cite{Wang16}, where the growth rate of the fearful competitor $v$, is not constant but rather density dependent. Essentially, the growth rate
is decreased by a factor $\approx \frac{1}{1+k u}$, where $k \geq 0$ is a fear coefficient. Thus a higher density of the competitor $u$ increases the fear in $v$. When $k=0$, the assumption is there is no fear and one recovers the classical model \eqref{eq:GeneralEquation}. If fear is present, we obtain the following ODE model for two competing species $u$ and $v$, where $v$ is fearful of $u$.

\begin{align}\label{eq:ODE2}
	\begin{split}
		\dfrac{du}{dt} &= a_1 u -b_1 u^2 -c_1 uv, \\
		\dfrac{dv}{dt} &=  \dfrac{a_2 v}{1+ku}  -b_2 v^2 -c_2 uv.
	\end{split}
\end{align}

\subsubsection{Existence}

	The nullclines associated with the problem \eqref{eq:ODE2} are
	\[ u(a_1 -b_1 u -c_1 v)=0 \quad \text{and} \quad  v \Big( \dfrac{a_2 }{1+ku}  -b_2 v -c_2 u\Big).\]
	Hence, the boundary equilibrium points are obtained by substituting $u=0$ and $v=0$ in the above equations of the nullclines, respectively. Denote the boundary equilibrium points as 
	$\widehat{E}_1=(0,0)$, $\widehat{E}_2=(\frac{a_1}{b_1},0)$ and  $\widehat{E}_3=(0,\frac{a_2}{b_2})$.
	
	For the interior equilibrium, substitute $u^*=\frac{a_1}{b_1} - \frac{c_1}{b_1} v^*$ in the second nullcline equation, i.e.,
	
	\begin{align*}
		\begin{split}
			\dfrac{a_2 }{1+k \Big( \frac{a_1}{b_1} - \frac{c_1}{b_1} v^*\Big) }  -b_2 v^* -c_2 \Big(\frac{a_1}{b_1} - \frac{c_1}{b_1} v^*\Big) =0.
		\end{split}
	\end{align*}
	
	On simplification, we have that $v^*$ solves a quadratic equation of the form
	$A(v^*)^2+Bv^*+C=0$, where 
	\begin{equation}\label{quad_param}
		\begin{split}
			A&=c_1 k(b_2 b_1 -c_1 c_2),\\
			B&=b_1(c_1 c_2-b_2 b_1)-a_1k(b_2 b_1-2c_1 c_2),\\
			C&=b_1(a_2 b_1 -a_1 c_2)-a_1^2 c_2 k.
		\end{split}
	\end{equation}
	Let 
	\begin{equation}\label{quad_roots}
			v_{1,2}^* = \frac{-B \pm \sqrt{B^2-4AC}}{2A}
	\end{equation}
   be the two roots of above qudratic equation. WLOG assume $v_1^*<v_2^*.$  Moreover, consider the following parametric restriction
	 \begin{align}\label{pos_u_null}
		b_1(a_1 c_2-a_2 b_1)+a_1^2 c_2 k < \Big[ (b_2 b_1 -c_1 c_2)(2 a_1k-b_1) -a_1kc_1 c_2 \Big] \dfrac{a_1}{c_1}.
	\end{align}
	
	We can prove the existence of a positive equilibrium point $\widehat{E}_4$ with the choice of specific parameters. Let us use Descartes's rule of sign to establish some sufficient conditions for the existence of one or two positive equilibrium points.
	
	\textit{Two positive equilibrium points:} Under the assumption $A>0,B<0$ and $C>0$, i.e., $b_2b_1>2c_1c_2>c_1c_2$ and $k<\dfrac{1}{a_1^2c_2} \Big( b_1^2a_2-a_1c_2b_1\Big),$ we have two positive roots. In order to claim that these two roots correspond to two positive interior equilibria, we need some extra assumption given by:
	\[ v_1^*<v_2^*:=\dfrac{-B + \sqrt{B^2-4AC}}{2A} < \dfrac{a_1}{c_1} \implies -C < (A \dfrac{a_1}{c_1}+B)\dfrac{a_1}{c_1}.\]
	Hence, if $b_1(a_1 c_2-a_2 b_1)+a_1^2 c_2 k < [ (b_2 b_1 -c_1 c_2)(2 a_1k-b_1) -a_1kc_1 c_2] \frac{a_1}{c_1},$ we have two positive interior equilibrium points $\widehat{E}_4=(u_i^*,v_i^*)$ for $i=1,2$.
	
	\textit{One positive equilibrium point:} Under the assumption $A>0,B<0$, i.e., $b_2b_1>2c_1c_2>c_1c_2$, we have at least one positive root of the quadratic equation. If $C<0$, which is $k>\dfrac{1}{a_1^2c_2} \Big( b_1^2a_2-a_1c_2b_1\Big)$ along with $\eqref{pos_u_null}$ gives existence of one positive equilibrium point $\widehat{E}_4=(u^*,v^*)$. Moreover, if $C>0$, which is $k<\dfrac{1}{a_1^2c_2} \Big( b_1^2a_2-a_1c_2b_1\Big)$, and $v_1^*<\frac{a_1}{c_1}$ and $v_2^*>\frac{a_1}{c_1}$, then we have existence of one positive equilibrium point $\widehat{E}_4=(u^*,v^*)$.
	
	We formulate all these restrictions as an existence theorem:

\begin{theorem}\label{thm:exist}
	For the given ODE system \eqref{eq:ODE2}, we always have three boundary equilibrium points, namely  $\widehat{E}_1:=(0,0),\widehat{E}_2:=(\frac{a_1}{b_1},0)$ and $\widehat{E}_3:=(0,\frac{a_2}{b_2})$. 
	For the case of two positive interior equilibrium points $\widehat{E}_4=\Big(u_i^*,v_i^*\Big)_{i=1,2}$, we have the following parametric restrictions:
	\begin{align*}
		b_2b_1>&2c_1c_2>c_1c_2, \quad k<\frac{1}{a_1^2c_2} \Big[ b_1^2a_2-a_1c_2b_1\Big] \quad
		\& \quad \eqref{pos_u_null} \hspace{0.1in} \text{holds true}.
	\end{align*}
	Lastly, for the case of one positive interior equilibrium point $\widehat{E}_4=(u^*,v^*)$, we have either one of the following choices of parametric restrictions:
	\begin{enumerate}
		\item $b_2b_1>2c_1c_2>c_1c_2, \quad k>\dfrac{1}{a_1^2c_2} \Big( b_1^2a_2-a_1c_2b_1\Big)$  and $\eqref{pos_u_null}$ holds true.
		\item $b_2b_1>2c_1c_2>c_1c_2, \quad k<\dfrac{1}{a_1^2c_2} \Big( b_1^2a_2-a_1c_2b_1\Big)$, and $v_1^*<\frac{a_1}{c_1}$ and $v_2^*>\frac{a_1}{c_1}$.
		\item $b_2b_1<c_1c_2<2c_1c_2,\quad k>\dfrac{1}{a_1^2c_2} \Big( b_1^2a_2-a_1c_2b_1\Big)$, and $v_1^*<\frac{a_1}{c_1}$ and $v_2^*>\frac{a_1}{c_1}$,
	\end{enumerate}
	where $v_i^*$ are the roots of the quadratic equation defined as $\eqref{quad_roots}.$
\end{theorem}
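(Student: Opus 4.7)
\medskip

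\noindent\textbf{Proof proposal.} The boundary equilibria are immediate: setting $u\equiv 0$ in \eqref{eq:ODE2} forces either $v=0$ or $v=a_2/b_2$, and setting $v\equiv 0$ forces either $u=0$ or $u=a_1/b_1$. This yields $\widehat{E}_1,\widehat{E}_2,\widehat{E}_3$ unconditionally. The substantive content is thus the classification of interior equilibria, and the plan is to reduce this to a sign analysis for a single scalar quadratic.

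First, I would use the first nullcline to solve $u^{*}=\frac{a_1}{b_1}-\frac{c_1}{b_1}v^{*}$, substitute into the second nullcline, and clear denominators to obtain exactly the quadratic $A(v^{*})^{2}+Bv^{*}+C=0$ with coefficients \eqref{quad_param}. A positive interior equilibrium then corresponds to a root $v^{*}$ of this quadratic that additionally satisfies
\[
0<v^{*}<\tfrac{a_1}{c_1},
\]
the right inequality being precisely the condition that $u^{*}>0$.

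The next step is Descartes's rule of signs applied to $p(v)=Av^{2}+Bv+C$. The hypothesis $b_2b_1>2c_1c_2$ gives $A>0$ and also $c_1c_2-b_2b_1<0$, hence $B<0$; this guarantees at least one sign change and hence at least one positive root. The sign of $C$, governed by whether $k$ lies below or above the threshold $k^{\ast}:=\frac{b_1^{2}a_2-a_1c_2b_1}{a_1^{2}c_2}$, then decides between two positive roots ($C>0$, two sign changes) or one positive root ($C<0$, one sign change). For the last case in the theorem, $b_2b_1<c_1c_2$ reverses the sign of $A$ (so $A<0$), and combining with $C<0$ again yields a single positive root by the same rule. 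This disposes of positivity of the roots themselves.

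The main subtlety, and what I expect to be the crux of the proof, is enforcing $v^{*}<\frac{a_1}{c_1}$, i.e.\ translating condition \eqref{pos_u_null} into a statement about root location. For $A>0$ the parabola $p$ opens upward, so $v_2^{*}<\frac{a_1}{c_1}$ is equivalent to $p(a_1/c_1)>0$; a direct expansion shows this is precisely \eqref{pos_u_null}, which simultaneously forces $v_1^{*}<\frac{a_1}{c_1}$ since $v_1^{*}<v_2^{*}$. This handles the two-equilibrium case and case (1) of the one-equilibrium statement. For case (2), only one of the two positive roots is admissible, and the hypothesis $v_1^{*}<\frac{a_1}{c_1}<v_2^{*}$ is exactly the statement $p(a_1/c_1)<0$; picking $v^{*}=v_1^{*}$ gives a unique admissible interior point. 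Case (3) is analogous but with $A<0$, so the parabola opens downward and the same interleaving condition $v_1^{*}<\frac{a_1}{c_1}<v_2^{*}$ singles out the admissible root, now $v_2^{*}$. Assembling these cases produces exactly the list stated, and no further estimates are needed.
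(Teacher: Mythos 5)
Your proposal follows essentially the same route as the paper: the paper's ``proof'' is precisely the pre-theorem discussion that derives the quadratic $A(v^*)^2+Bv^*+C=0$ with coefficients \eqref{quad_param}, applies Descartes's rule of signs case by case, and encodes the admissibility constraint $v^*<a_1/c_1$ (equivalently $u^*>0$) through the sign of $p(a_1/c_1)$, which is condition \eqref{pos_u_null}.

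Two small slips are worth flagging. First, in case (3) you have $A<0$ and $C<0$, so the product of the roots is $C/A>0$: the real roots, if any, have a common sign, and Descartes gives zero or two positive roots (according to the sign of $B$), not ``a single positive root by the same rule''; moreover, under the interleaving $v_1^*<a_1/c_1<v_2^*$ the admissible root is the \emph{smaller} one $v_1^*$ (the one satisfying $v^*<a_1/c_1$), not $v_2^*$ as you wrote. The conclusion of one interior equilibrium survives, but the bookkeeping is off. Second, for an upward parabola the condition $p(a_1/c_1)>0$ is not by itself equivalent to $v_2^*<a_1/c_1$; it also holds when $a_1/c_1<v_1^*$, so one must additionally place the vertex $-B/(2A)$ to the left of $a_1/c_1$ (or argue that $v_1^*<a_1/c_1$ separately). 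The paper glosses over this same point, and likewise neither you nor the paper checks that the discriminant is nonnegative in the two-root case, so these are shared rather than new gaps.
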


We now provide several lemmas, so that we can compare the effect of fear to the classical competition case.

\begin{lemma}
\label{lem:cotce}
Consider the given ODE system \eqref{eq:ODE2}, with $k=0$, s.t we are in the weak competition setting with $b_2b_1>2c_1c_2$.
Then for a fear coefficient $k$ s.t. $k > k_{c} = \dfrac{1}{a_1^2c_2} \Big( b_1^2a_2-a_1c_2b_1\Big)$, $(\frac{a_1}{b_1},0)$ is globally asymptotically stable. That is, $u$ will competitively exclude $v$.
\end{lemma}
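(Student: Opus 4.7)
The plan is to combine a local stability computation at $\widehat E_2 = (a_1/b_1, 0)$ with a global nonexistence result for interior equilibria, then close the argument using the Poincar\'e--Bendixson theorem together with Dulac's criterion.

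First I linearize at $\widehat E_2$. The Jacobian there is upper-triangular because the $v$-equation has a factor of $v$, so the eigenvalues read off immediately as $-a_1$ and $\mu(k) = \frac{a_2 b_1}{b_1 + k a_1} - \frac{a_1 c_2}{b_1}$, and a short rearrangement shows $\mu(k) < 0$ is equivalent to $k > k_c$. The analogous computations identify $\widehat E_1$ as a source and, using the weak-competition inequality $a_1 b_2 > a_2 c_1$, $\widehat E_3$ as a saddle whose stable manifold is the (invariant) positive $v$-axis.

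Next I show there is no interior equilibrium for $k > k_c$. An interior equilibrium requires a root $v^*$ of the quadratic $A v^2 + B v + C$ from \eqref{quad_param} lying in the open interval $(0, a_1/c_1)$, so that $u^* = (a_1 - c_1 v^*)/b_1 > 0$. The hypothesis $b_1 b_2 > 2 c_1 c_2 > c_1 c_2$ gives $A > 0$ for every $k > 0$. Evaluating the quadratic at $v = a_1/c_1$, the $k$-dependent contributions cancel and the value simplifies to $b_1^2 (a_2 c_1 - a_1 b_2)/c_1$, which is strictly negative under weak competition. Since $A > 0$, the point $a_1/c_1$ therefore lies strictly between the two real roots. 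For $k > k_c$ one has $C < 0$, hence $v_1^* v_2^* = C/A < 0$: one root is negative, and the other must exceed $a_1/c_1$. In either case no root lies in $(0, a_1/c_1)$, so no interior equilibrium exists.

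To finish, boundedness and positive invariance in $\mathbb{R}^2_{\geq 0}$ are standard: the coordinate axes are invariant, and the scalar logistic comparisons $\dot u \le u(a_1 - b_1 u)$ and $\dot v \le v(a_2 - b_2 v)$ bound every orbit. The Dulac function $B(u,v) = 1/(uv)$ yields $\partial_u(B f_1) + \partial_v(B f_2) = -b_1/v - b_2/u < 0$ on the open positive quadrant, ruling out closed orbits there. Poincar\'e--Bendixson then forces every $\omega$-limit set to be an equilibrium; an orbit starting in the open positive quadrant cannot accumulate at the source $\widehat E_1$ or at $\widehat E_3$ (whose basin of attraction is only the $v$-axis), and by the previous step no interior equilibrium is available, so the orbit must converge to $\widehat E_2$. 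I expect the main obstacle to be the evaluation at $v = a_1/c_1$: the cancellation of the $k$-dependent terms is what allows the local stability threshold $k_c$ to serve, without further parametric assumptions, as the threshold for global stability; without this identity one would need a delicate parameter-by-parameter discriminant analysis or a Lyapunov argument on the $v$-coordinate.
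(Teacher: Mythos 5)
Your proof is correct, but it takes a genuinely different route from the paper's. The paper establishes this statement (in the essentially identical Lemma \ref{lem:ce_ode_2}) by a nullcline-geometry argument: the condition $k>k_{c}$ is exactly the statement that the $v$-nullcline is negative at $u=a_1/b_1$, and since that nullcline is decreasing and convex it lies entirely below the $u$-nullcline, whence standard comparison/monotone competition theory gives global convergence to $(\frac{a_1}{b_1},0)$. You instead combine the local eigenvalue computation at $\widehat{E}_2$ with an explicit non-existence result for interior equilibria and then close with Dulac plus Poincar\'e--Bendixson. The key new ingredient in your version is the evaluation of the quadratic $Av^2+Bv+C$ from \eqref{quad_param} at $v=a_1/c_1$: the $k$-dependent terms do cancel (I checked: they contribute $\frac{a_1^2k}{c_1}\bigl[(b_2b_1-c_1c_2)-(b_2b_1-2c_1c_2)\bigr]-a_1^2c_2k=0$), leaving $\frac{b_1^2}{c_1}(a_2c_1-a_1b_2)<0$ under weak competition, so $a_1/c_1$ sits strictly between the roots; combined with $C<0$ for $k>k_c$ this cleanly forces both roots outside $(0,a_1/c_1)$. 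This is a nice observation the paper does not make explicit: it shows $k_c$ is simultaneously the threshold for local stability of $\widehat{E}_2$ and for disappearance of interior equilibria, and it only uses $b_1b_2>c_1c_2$ rather than the stated $b_1b_2>2c_1c_2$. What the paper's route buys is brevity and a picture; what yours buys is a self-contained, checkable argument that reuses the paper's own Dulac computation (Lemma \ref{lem:dc1}). The only point to tighten is the final Poincar\'e--Bendixson step: the $\omega$-limit set could a priori also be a graphic, so you should note that no heteroclinic cycle can exist here because $\widehat{E}_1$ is a source (nothing limits onto it forward in time) and $\widehat{E}_2$ is a sink (nothing leaves it), which rules out any cycle of saddle connections.
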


%To make a comparison with the classical ecological competition \cite{Murray93}  among two species for one positive interior equilibrium, we can also restate the above existence theorem as:

\begin{remark}
	For the existence of one positive interior equilibrium point for the given system \eqref{eq:ODE2}, we have either one of the following choices of parametric restrictions: For weak competition, 
	\begin{align}\label{one_post_weak}
		b_2b_1>2c_1c_2>c_1c_2 \quad \& \quad \dfrac{c_2}{b_1}<\dfrac{a_2}{a_1} <\dfrac{b_2}{c_1},
	\end{align}
	and for strong competition,
	\begin{align}\label{one_post_strong}
		b_2b_1<c_1c_2<2c_1c_2 \quad \& \quad \dfrac{c_2}{b_1}>\dfrac{a_2}{a_1} >\dfrac{b_2}{c_1}.
	\end{align}
	
\end{remark}

\subsubsection{Linear Stability Analysis}

We next perform stability analysis on the equilibrium points of system \eqref{eq:ODE2}. The Jacobian matrix of system \eqref{eq:ODE2} is given by
\begin{equation}\label{jacob_second}
	\widehat{J}^*(u^*,v^*)=
	\left(
	\begin{array}{cc}
		a_1-2b_1u^*-c_1v^* & -c_1 u^* \\
		-\dfrac{a_2 k v^*}{(k u^*+1)^2}-c_2 v^* & \dfrac{a_2}{k u^*+1}-2 b_2 v^*-c_2 u^* \\
	\end{array}
	\right).
\end{equation}

We state the following lemmas.
\begin{lemma}
	$\widehat{E}_1$ is locally unstable.
\end{lemma}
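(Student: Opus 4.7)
The plan is an immediate linearization argument. I would substitute $(u^*, v^*) = (0,0)$ into the Jacobian $\widehat{J}^*$ displayed in \eqref{jacob_second}. The off-diagonal entries vanish at the origin: the top-right entry is $-c_1 u^* = 0$, and the bottom-left entry $-\frac{a_2 k v^*}{(ku^*+1)^2} - c_2 v^* = 0$. The top-left entry collapses to $a_1$ and the bottom-right to $\frac{a_2}{1} - 0 - 0 = a_2$.

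Thus $\widehat{J}^*(0,0) = \operatorname{diag}(a_1, a_2)$, which is already in diagonal form. Its eigenvalues are $a_1 > 0$ and $a_2 > 0$ (both growth rates are positive by the standing assumption that all parameters are positive). Since both eigenvalues have strictly positive real part, the Hartman-Grobman theorem implies that $\widehat{E}_1 = (0,0)$ is a locally unstable (in fact, a repelling node) equilibrium of the nonlinear system \eqref{eq:ODE2}.

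There is essentially no obstacle here: the fear nonlinearity $\frac{1}{1+ku}$ reduces to $1$ at $u=0$, so the linearization at the origin is identical to that of the classical Lotka-Volterra system \eqref{eq:GeneralEquation}, for which the origin is well known to be a source. The only thing to verify carefully is that evaluating the $v$-derivative of $\frac{a_2 v}{1+ku}$ at the origin does indeed yield $a_2$ (not something altered by the fear term), which is immediate because this derivative is $\frac{a_2}{1+ku} - 2b_2 v - c_2 u$ evaluated at $u=v=0$. No further computation is needed.
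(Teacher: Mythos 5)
Your proof is correct and is essentially identical to the paper's: both evaluate the Jacobian \eqref{jacob_second} at the origin, obtain the diagonal matrix with entries $a_1$ and $a_2$, and conclude instability from the two positive eigenvalues. Your extra remarks (vanishing off-diagonal terms, the fear factor reducing to $1$ at $u=0$) are accurate but add nothing beyond the paper's argument.
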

\begin{proof}
	On evaluating Eq.$(\ref{jacob_second})$ at $\widehat{E}_1$, we have
	\begin{equation*}
		\widehat{J}^*(\widehat{E}_1)=
		\left(
		\begin{array}{cc}
			a_1 & 0 \\
			0 & a_2 \\
		\end{array}
		\right).
	\end{equation*}
	Being a triangular matrix, we know that the above matrix has two positive eigenvalues $a_1$ and $a_2$. Hence, the equilibrium point $\widehat{E}_1$ is locally unstable.
\end{proof}

\begin{lemma}\label{lem:ce_ode_1}
	$\widehat{E}_2$ is locally stable iff $k>\dfrac{a_2 b_1^2-c_2a_1b_1}{a_1^2c_2}$.
\end{lemma}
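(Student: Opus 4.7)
The plan is to directly evaluate the Jacobian $\widehat{J}^*$ from \eqref{jacob_second} at the point $\widehat{E}_2 = (a_1/b_1, 0)$ and analyze its spectrum. Because the second component of $\widehat{E}_2$ is zero, both entries in the lower row that carry the factor $v^*$ vanish, so the Jacobian at $\widehat{E}_2$ becomes upper triangular. This reduces the stability question to checking the signs of the two diagonal entries, which are the eigenvalues.

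The top-left diagonal entry evaluates to $a_1 - 2b_1(a_1/b_1) = -a_1 < 0$, which is automatically negative since $a_1 > 0$. The bottom-right diagonal entry evaluates to
\begin{equation*}
\lambda_2 \;=\; \frac{a_2}{1 + k a_1/b_1} \;-\; \frac{c_2 a_1}{b_1} \;=\; \frac{a_2 b_1}{b_1 + k a_1} \;-\; \frac{c_2 a_1}{b_1}.
\end{equation*}
Local stability is equivalent to $\lambda_2 < 0$. Clearing denominators (both are positive) gives $a_2 b_1^2 < c_2 a_1 (b_1 + k a_1)$, which rearranges precisely to
\begin{equation*}
k \;>\; \frac{a_2 b_1^2 - c_2 a_1 b_1}{a_1^2 c_2},
\end{equation*}
and the strict inequality in the Routh–Hurwitz sense gives the ``iff.''

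The proof should be essentially a one-line computation followed by an algebraic rearrangement. I do not anticipate any obstacle: the main care needed is simply to note that triangularity at this boundary equilibrium makes the eigenvalues immediate, and that the denominator $1 + k a_1 / b_1$ is strictly positive for all admissible $k \geq 0$, so the rearrangement is reversible. One small comment worth including for the reader is that when the right-hand side is negative (which happens when $a_2 b_1 < c_2 a_1$, the classical competitive exclusion condition on $v$ by $u$ at $k=0$), the condition is satisfied for every $k \geq 0$, recovering the classical case; otherwise a genuine threshold $k_c > 0$ is required, which is exactly the threshold appearing in Lemma \ref{lem:cotce}.
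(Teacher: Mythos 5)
Your proposal is correct and follows essentially the same route as the paper: evaluate the Jacobian at $\widehat{E}_2$, observe that the vanishing of $v^*$ makes it triangular with eigenvalues $-a_1$ and $\tfrac{a_2 b_1}{b_1+ka_1}-\tfrac{c_2 a_1}{b_1}$, and note that the sign condition on the second eigenvalue is equivalent to the stated threshold on $k$. The closing observation about the threshold being vacuous when $a_2 b_1 < c_2 a_1$ is a nice addition but not needed.
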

\begin{proof}
	We again evaluate Eq.$(\ref{jacob_second})$ at $\widehat{E}_2$ and obtain
	\begin{equation*}
		\widehat{J}^*(\widehat{E}_2)=
		\left(
		\begin{array}{cc}
			-a_1 & -\dfrac{c_1a_1}{b_1} \\
			0 &  \dfrac{a_2b_1}{b_1 + k a_1} -\dfrac{c_2 a_1}{b_1} \\
		\end{array}
		\right).
	\end{equation*}
	Being a triangular matrix, the above matrix has two eigenvalues, $\lambda_1=-a_1$ and $\lambda_2=\frac{a_2b_1}{b_1 + k a_1} -\frac{c_2 a_1}{b_1}$. As $\lambda_1$ is always negative, if we can show that $\lambda_2$ is negative, we are done. We make the assumption that,
	\[ k>\dfrac{a_2 b_1^2-c_2a_1b_1}{a_1^2c_2}  \iff \lambda_2 =\dfrac{a_2b_1}{b_1 + k a_1} - \dfrac{c_2 a_1}{b_1}<0. \]
	Therefore, the boundary equilibrium point $\widehat{E}_2$ is locally stable.
\end{proof}
Local stability of $\widehat{E}_2$ actually implies global stability, we can see this via a simple geometric argument.

\begin{lemma}\label{lem:ce_ode_2}
$\widehat{E}_2$ is globally stable if $k > k_{c} = \dfrac{a_2 b_1^2-c_2a_1b_1}{a_1^2c_2}$.
\end{lemma}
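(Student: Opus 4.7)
The plan is to combine boundedness, planar competitive-system theory, and a nullcline picture. I would first establish that every forward orbit is precompact in the closed first quadrant: the inequality $u' \le u(a_1 - b_1 u)$ gives $\limsup_{t\to\infty} u(t) \le a_1/b_1$, and $v' \le v(a_2 - b_2 v)$ gives $\limsup_{t\to\infty} v(t) \le a_2/b_2$. Next I would verify the Kamke/competitive condition for \eqref{eq:ODE2}: the off-diagonal Jacobian entries $-c_1 u$ and $-v\bigl(a_2 k/(1+ku)^2 + c_2\bigr)$ are both nonpositive throughout the first quadrant. The Hirsch--Smith theorem for planar competitive systems then rules out nontrivial periodic orbits and guarantees that each bounded trajectory's $\omega$-limit set is a single equilibrium.

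The geometric heart of the argument is to show that under $k > k_c$ no interior equilibrium exists. Let $h(u) := (a_1-b_1 u)/c_1$ and $g(u) := (1/b_2)\bigl(a_2/(1+ku) - c_2 u\bigr)$ be the interior branches of the $u$- and $v$-nullclines. A direct computation yields $g''(u) = 2 a_2 k^2/\bigl(b_2 (1+ku)^3\bigr) > 0$, so $g$ is strictly convex while $h$ is linear, whence $h-g$ is concave on $[0, a_1/b_1]$. The hypothesis $k > k_c$ is algebraically equivalent to $g(a_1/b_1) < 0 = h(a_1/b_1)$, and under the weak-competition framing inherited from Lemma \ref{lem:cotce} one also has $h(0) = a_1/c_1 > a_2/b_2 = g(0)$. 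A concave function positive at both endpoints of an interval is positive throughout, so $h > g$ on $[0, a_1/b_1]$; the nullclines therefore do not meet in the open first quadrant and no interior equilibrium exists.

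To conclude, I would identify the only boundary equilibrium that can serve as an $\omega$-limit for an interior trajectory. The point $\widehat{E}_1$ is a source (both Jacobian eigenvalues $a_1, a_2 > 0$). At $\widehat{E}_3$, the Jacobian has eigenvalues $a_1 - c_1 a_2/b_2 > 0$ (by the weak-competition assumption) and $-a_2 < 0$, so $\widehat{E}_3$ is a saddle whose one-dimensional stable manifold coincides with the positive $v$-axis. Hence no trajectory initiated in the open first quadrant can accumulate at $\widehat{E}_1$ or $\widehat{E}_3$; combined with the previous two steps, every such trajectory must converge to $\widehat{E}_2$.

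I expect the main subtlety to be keeping track of the implicit hypothesis $a_1/c_1 > a_2/b_2$: without it, $\widehat{E}_3$ would be a sink rather than a saddle, the nullcline comparison at $u = 0$ would fail, and bi-stability would preclude global attraction to $\widehat{E}_2$. This condition is essential in both the endpoint check for concavity and the classification of $\widehat{E}_3$, so it should be carried over explicitly from the weak-competition setup of Lemma \ref{lem:cotce} rather than left tacit in the statement.
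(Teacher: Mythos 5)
Your proposal is correct, and its geometric core --- the $v$-nullcline $g(u)=\tfrac{1}{b_2}\bigl(\tfrac{a_2}{1+ku}-c_2u\bigr)$ is convex, the hypothesis $k>k_c$ is exactly $g(a_1/b_1)<0$, and together with $g(0)=a_2/b_2<a_1/c_1=h(0)$ this forces the nullclines apart on $(0,a_1/b_1)$ --- is the same observation the paper makes. Where you genuinely diverge is in how global convergence is then extracted. The paper compares $v$ with the solution $\tilde v$ of an auxiliary classical Lotka--Volterra system whose $v$-nullcline is the chord joining the intercepts of $g$ (the chord lies above $g$ by convexity and below $h$ by the endpoint inequalities), and invokes the known competitive-exclusion dynamics of that system. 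You instead run the monotone-systems route: dissipativity via logistic comparison, the Kamke condition (both off-diagonal Jacobian entries nonpositive), Hirsch--Smith convergence of bounded orbits to equilibria, nonexistence of interior equilibria via the concavity of $h-g$, and elimination of $\widehat{E}_1$ (source) and $\widehat{E}_3$ (saddle with stable manifold equal to the invariant $v$-axis). Your version is more self-contained and arguably more rigorous --- the concavity-of-$h-g$ argument pins down nullcline separation cleanly where the paper only gestures at it, and the equilibrium classification replaces the paper's terse appeal to ``standard theory of competition and a comparison argument.'' The paper's route buys brevity by outsourcing the dynamics to the classical model. Finally, your closing observation is well taken and applies equally to the paper's own proof: the condition $a_1b_2>c_1a_2$ (equivalently $h(0)>g(0)$), inherited from the weak-competition framing of Lemma \ref{lem:cotce}, is genuinely needed --- without it $\widehat{E}_3$ is a sink and the conclusion fails --- so it should appear explicitly in the hypotheses rather than being left tacit.
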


\begin{proof}
	Consider the nullclines of $u$ and $v$, where,
	\[ v = \dfrac{1}{c_1}  (a_1 -b_1 u) \quad \& \quad v = \dfrac{1}{b_2} \Big( \dfrac{a_2}{1+ k u} -c_2 u \Big). \]
	In order to establish the global stability of $\widehat{E}_2$, via the geometry of the nullclines, it suffices to show that 
	\[  \Big[ \dfrac{a_1-b_1 u}{c_1} \Big] > \Big[ \dfrac{1}{b_2} \Big( \dfrac{a_2}{1+ k u} - c_2 u \Big)\Big]  \]
	when $u=\frac{a_1}{b_{1}}$,
	
	i.e., when $k>\frac{a_2 b_1^2-c_2a_1b_1}{a_1^2c_2}$. Herein, the $v$-nullcline lies completely below the $u$-nullcline, and via the convexity of the $v$-nullcline, it lies completely below the straight line connecting its $v$ and $u$ intercepts - which lies completely below the $u$-nullcline.
	Now, via the standard theory of competition and a comparison argument, where $v$ is compared to the $\tilde{v}$ that is a solution to the straight line nullcline connecting the $v$ and $u$ intercepts of the $v$-nullcline, we have the global stability of $\widehat{E}_2$.
\end{proof}

%
%\begin{proof}
%	Consider the nullclines of $u$ and $v$, 
%	\[ v = \dfrac{1}{c_1}  (a_1 -b_1 u) \quad \& \quad v = \dfrac{1}{b_2} \Big( \dfrac{a_2}{1+ k u} -c_2 u \Big). \]
%	In order to establish the global stability of $\widehat{E}_2$, it is enough to show that 
%	\[ \dfrac{d}{du} \Big[ \dfrac{a_1-b_1 u}{c_1} \Big] - \dfrac{d}{du} \Big[ \dfrac{1}{b_2} \Big( \dfrac{a_2}{1+ k u} -c_2 u \Big)\Big] >0 \]
%	i.e.,
%	\[ \dfrac{b_1}{c_1} < \dfrac{1}{b_1} \Big( \dfrac{a_2 k}{(1+ku)^2} + c_2 \Big).\]
%	Under the assumption $k>\frac{1}{a_2} \Big(  \frac{b_1 b_2 -c_1 c_2}{c_1} \Big)$ and using density of reals, we have 
%	\[ \dfrac{b_1 b_2 -c_1 c_2}{c_1} < \dfrac{a_2 k }{(1+ku)^2} < a_2 k \implies \dfrac{b_1}{c_1} < \dfrac{1}{b_1} \Big( \dfrac{a_2 k}{(1+ku)^2} + c_2 \Big), \]
%	and hence we have the global stability of $\widehat{E}_2$.
%\end{proof}

\begin{figure}[h]
	\begin{subfigure}[b]{.475\linewidth}
		\includegraphics[width=\linewidth,height=2in]{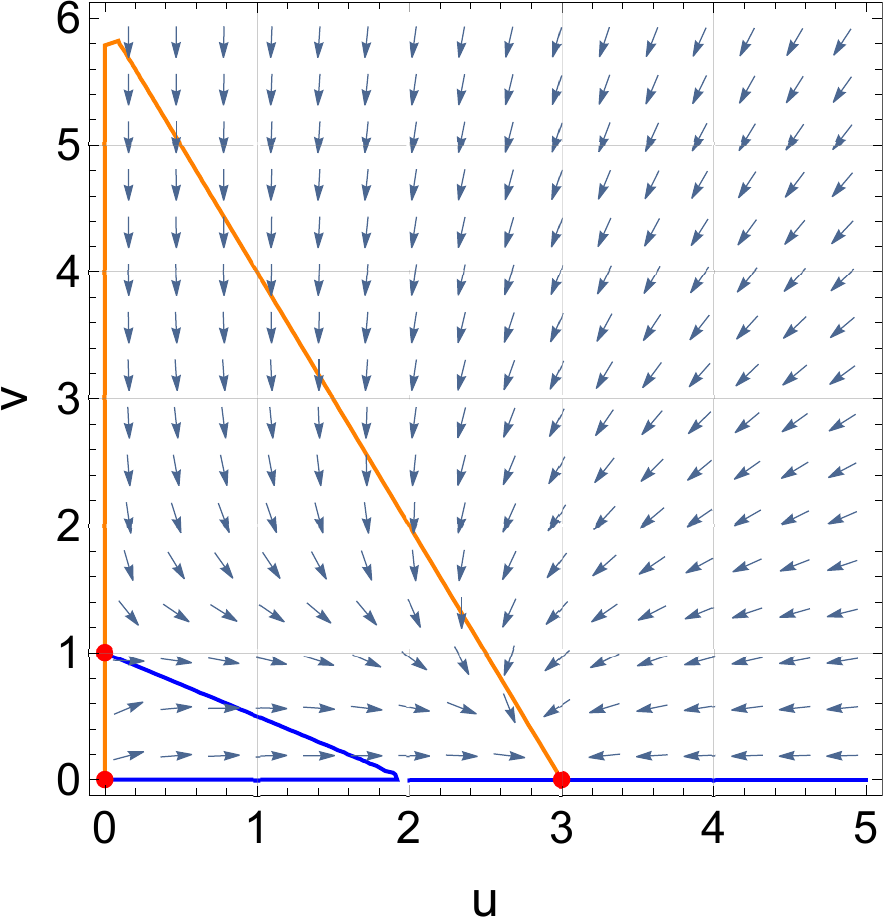}
		\caption{$k=0$ }
	\end{subfigure}
	\hfill
%	\begin{subfigure}[b]{.49\linewidth}
%		\includegraphics[width=\linewidth,height=2in]{}
%		\caption{$k=0.2$}
%	\end{subfigure}
%	\newline
%	\begin{subfigure}[b]{.475\linewidth}
%		\includegraphics[width=\linewidth,height=2in]{}
%		\caption{$k=1$ }
%	\end{subfigure}
%	\hfill
	\begin{subfigure}[b]{.49\linewidth}
		\includegraphics[width=\linewidth,height=2in]{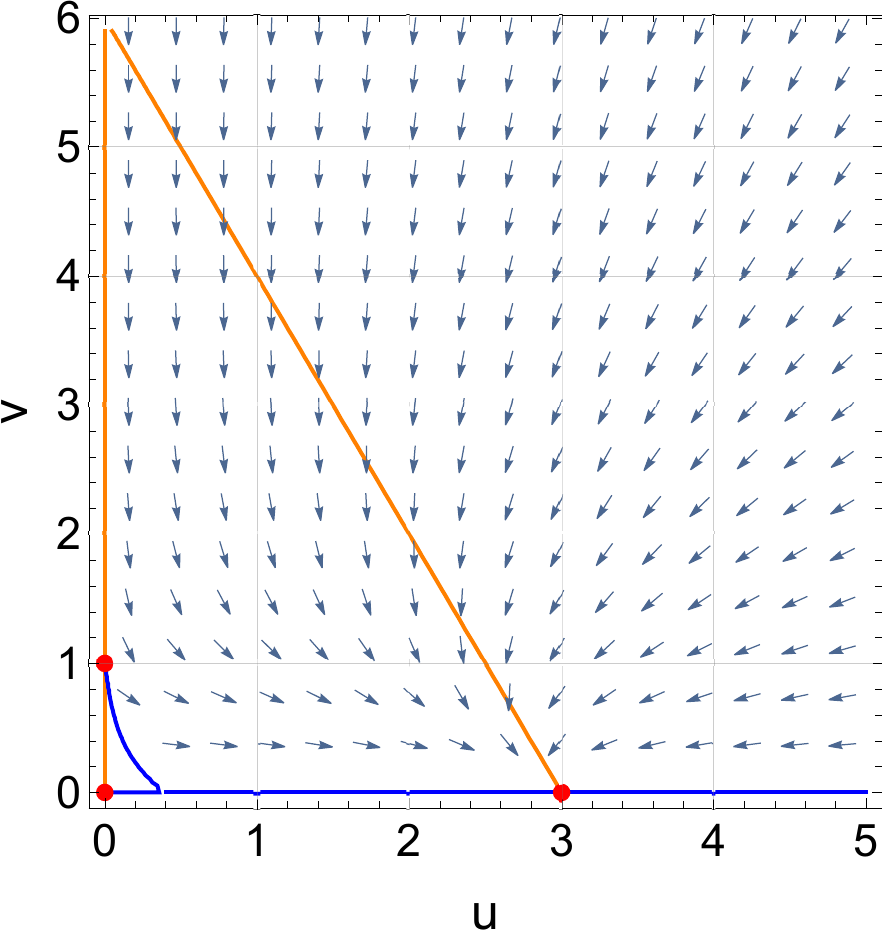}
		\caption{$k=10$}
	\end{subfigure}
	\caption{Phase plots showing dynamics under competition exclusion parametric restriction with $f=0$. The orange curve is the u-nullcline and blue curve is the v-nullcline. Here, $(u^*,0)$ always wins. Parameters used are $a_1=3, a_2=1,b_1=1, b_2=1, c_1=0.5, c_2=0.5$.}
	\label{fig:ode_ce}
\end{figure}

\begin{lemma}
	The equilibrium point $\widehat{E}_3$ is locally stable iff $a_1b_2<c_1a_2$.
\end{lemma}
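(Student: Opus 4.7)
The plan is to mimic the strategy used in the preceding two lemmas: evaluate the Jacobian $\widehat{J}^*$ from \eqref{jacob_second} at $\widehat{E}_3 = (0, a_2/b_2)$, observe that it is triangular, and read off the sign conditions on its eigenvalues.

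First I would substitute $u^* = 0$, $v^* = a_2/b_2$ into \eqref{jacob_second}. The upper-right entry is $-c_1 u^* = 0$, so the matrix is lower triangular and the eigenvalues appear on the diagonal. The $(1,1)$ entry simplifies to $a_1 - c_1 a_2/b_2$; the $(2,2)$ entry simplifies, using $1+ku^* = 1$ and $2b_2 v^* = 2a_2$, to $a_2 - 2a_2 = -a_2$. Thus the two eigenvalues are
\begin{equation*}
\lambda_1 = a_1 - \dfrac{c_1 a_2}{b_2}, \qquad \lambda_2 = -a_2.
\end{equation*}

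Since $a_2 > 0$, we always have $\lambda_2 < 0$, so local stability is equivalent to $\lambda_1 < 0$, i.e.\ $a_1 - c_1 a_2/b_2 < 0$, which rearranges to the stated condition $a_1 b_2 < c_1 a_2$. Conversely, if $a_1 b_2 > c_1 a_2$ then $\lambda_1 > 0$ and $\widehat{E}_3$ is unstable, yielding the ``only if'' direction.

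There is essentially no obstacle here: the fear term $a_2/(1+ku)$ evaluated at $u=0$ reduces to the classical $a_2$, so the Jacobian at $\widehat{E}_3$ coincides with that of the fear-free system \eqref{eq:GeneralEquation} at $(0, a_2/b_2)$, and the stability criterion is identical to the classical one. The fear coefficient $k$ enters only through the $(2,1)$ entry $-a_2^2 k/b_2 - c_2 a_2/b_2$, which is off-diagonal in a triangular matrix and therefore does not influence the eigenvalues. This is worth flagging in the write-up, since it explains why the $\widehat{E}_3$ stability threshold is independent of $k$, in contrast with the $k$-dependent threshold for $\widehat{E}_2$ in Lemma \ref{lem:ce_ode_1}.
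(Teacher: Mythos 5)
Your proof is correct and follows essentially the same route as the paper: evaluate the Jacobian \eqref{jacob_second} at $\widehat{E}_3$, note it is triangular, and read off the eigenvalues. In fact your computation is slightly more careful than the paper's, which contains two typos at this step (it writes the second eigenvalue as $-a_1$ rather than the correct $-a_2$, and writes $a_1 - c_1 a_2/b_1$ where the denominator should be $b_2$); your values $\lambda_1 = a_1 - c_1 a_2/b_2$ and $\lambda_2 = -a_2$ are the right ones, and your closing remark about the $k$-independence of this threshold is a worthwhile observation.
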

\begin{proof}
	Similar evaluation of  Eq.$(\ref{jacob_second})$ at $\widehat{E}_3$ yields
	\begin{equation*}
		\widehat{J}^*(\widehat{E}_3)=
		\left(
		\begin{array}{cc}
			a_1- \dfrac{c_1a_2}{b_2} & 0 \\
			-\dfrac{ka_2^2}{b_2} -\dfrac{c_2a_2}{b_2}& -a_1 \\
		\end{array}
		\right).
	\end{equation*}
	Being a triangular matrix, the above matrix has two eigenvalues, $\lambda_1=-a_1$ and $\lambda_2=a_1- \dfrac{c_1a_2}{b_1}$. Under the assumed parametric restriction,
	\[ a_1b_2<c_1a_2 \iff \lambda_2<0.\]
	Hence, the equilibrium point $\widehat{E}_3$ is locally stable.
	
\end{proof}

%\begin{theorem}\label{co_ode_1}
%	The interior equilibrium $\widehat{E}_4$ exits and is locally stable if $(\ref{one_post_weak})$ holds true and  $b_2b_1>\frac{b_1^2 c_1 k a_2}{(b_1+ka_1)^2}$ and  the parameter $k$ solves a quadratic equation
%	\[(a_1^2b_2)k^2 + (2a_1b_1b_2-b_1c_1a_2)k + b_1^2b_2>0.\]
%\end{theorem}

\begin{lemma}\label{lem:co_ode_1}
The interior equilibrium $\widehat{E}_4$ exists and is locally stable if 
$k < \frac{1}{a_{2}} \left(\frac{b_{1}b_{2}}{c_{1}} - c_{2}\right).$
\end{lemma}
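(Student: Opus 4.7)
The strategy is to evaluate the Jacobian \eqref{jacob_second} at the interior equilibrium $\widehat{E}_4=(u^*,v^*)$, simplify the diagonal entries using the nullcline identities $a_1-b_1u^*-c_1v^*=0$ and $\frac{a_2}{1+ku^*}-b_2v^*-c_2u^*=0$, and then apply the trace--determinant test for a $2\times 2$ matrix. Since both nullcline equations vanish at $\widehat{E}_4$, the diagonal entries collapse to $-b_1u^*$ and $-b_2v^*$ respectively, so the Jacobian at the interior equilibrium takes the clean form
\[
\widehat{J}^*(\widehat{E}_4)=
\begin{pmatrix}
-b_1u^* & -c_1 u^* \\[0.5ex]
-\dfrac{a_2 k v^*}{(1+ku^*)^2}-c_2 v^* & -b_2 v^*
\end{pmatrix}.
\]

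With this simplification the trace is $-(b_1 u^*+b_2 v^*)$, which is automatically negative for a positive equilibrium and requires no further work. The task thus reduces to guaranteeing $\det\widehat{J}^*(\widehat{E}_4)>0$. A direct expansion yields
\[
\det\widehat{J}^*(\widehat{E}_4)=u^*v^*\left(b_1 b_2 - c_1 c_2 - \frac{c_1 a_2 k}{(1+ku^*)^2}\right),
\]
so stability is equivalent to $b_1 b_2 - c_1 c_2 > \frac{c_1 a_2 k}{(1+ku^*)^2}$. The plan is to remove the dependence on the unknown $u^*$ by using the crude bound $(1+ku^*)^2 \geq 1$, which yields the clean sufficient condition $b_1 b_2 - c_1 c_2 > c_1 a_2 k$; rearranging gives precisely $k < \frac{1}{a_2}\left(\frac{b_1 b_2}{c_1}-c_2\right)$, as claimed.

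For the existence half of the statement, I plan to invoke Theorem \ref{thm:exist}: note that the stated upper bound on $k$ is only meaningful when $b_1b_2>c_1c_2$, which is the weak competition regime. In that regime, the quadratic coefficient $A=c_1k(b_1b_2-c_1c_2)$ is positive and, for $k$ small enough relative to the classical weak-competition thresholds, the coefficients line up with case (2) (or the $k=0$ limit) of Theorem \ref{thm:exist}, producing a unique interior root $v^*\in(0,a_1/c_1)$ and hence a positive $u^*=(a_1-c_1v^*)/b_1$. The main obstacle I anticipate is bookkeeping here: one must verify that the upper bound $k<\frac{1}{a_2}(\frac{b_1b_2}{c_1}-c_2)$ is compatible with one of the existence scenarios of Theorem \ref{thm:exist} (most naturally weak competition, where $k=0$ already yields an interior equilibrium that persists by continuity for $k$ in a neighborhood of $0$). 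The stability calculation itself is routine; the subtlety is matching the parametric hypothesis to an existence regime rather than asserting existence as an additional hypothesis.
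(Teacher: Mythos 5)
Your proposal is correct and follows essentially the same route as the paper's proof: simplify the Jacobian at $\widehat{E}_4$ via the nullcline identities, observe the trace is automatically negative, and bound the determinant from below using $(1+ku^*)^2>1$ to reduce the stability condition to $k<\frac{1}{a_2}\left(\frac{b_1b_2}{c_1}-c_2\right)$. Your additional remarks on reconciling the hypothesis with the existence cases of Theorem \ref{thm:exist} go beyond what the paper's proof actually addresses (it only verifies stability), but they do not change the argument.
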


\begin{proof}
	On evaluating Eq.$(\ref{jacob_second})$ again at $\widehat{E}_4$, we have
	\begin{equation*}
		\widehat{J}^*(\widehat{E}_4)=
		\left(
		\begin{array}{cc}
			-b_1 u^* & -c_1 u^* \\
			-\dfrac{ka_2v^*}{(1+ku^*)^2} -c_2 v^*&  -b_2 v^* \\
		\end{array}
		\right).
	\end{equation*}
	For the local stability of $\widehat{E}_4$, it is enough to show that $Trace(\widehat{J}^*(\widehat{E}_4))<0$ and $Det(\widehat{J}^*(\widehat{E}_4))>0.$
	Simple computations yield
	\[  Trace(\widehat{J}^*(\widehat{E}_4)) = -b_1 u^* -b_2 v^*<0,\]
	and
	\[ Det(\widehat{J}^*(\widehat{E}_4) = b_1b_2 u^* v^* - c_1u^* \Big( \dfrac{ka_2v^*}{(1+ku^*)^2} +c_2 v^* \Big) =  u^*v^* \Big\{ b_1b_2 -  c_1 \Big( \dfrac{k a_2}{ ( 1 + ku^* )^2} +c_2 \Big) \Big\} .  \]
	Note that
	\[  k a_{2} > \frac{k a_{2}}{(1+ku^*)^2}.\]
	Therefore, if $k$ is chosen s.t, $k < \frac{1}{a_{2}} \left(\frac{b_{1}b_{2}}{c_{1}} - c_{2}\right)$, then,
	\[  b_{1} b_{2} > c_{1}(ka_{2} + c_{2})  > c_1 \Big[ \dfrac{k a_2}{ ( 1 + ku^* )^2} +c_2 \Big] \implies Det(\widehat{J}^*(\widehat{E}_4))>0,\]
	 and the result follows.
	
\end{proof}

\begin{figure}[h]
	\begin{subfigure}[b]{.49\linewidth}
		\includegraphics[width=\linewidth,height=2in]{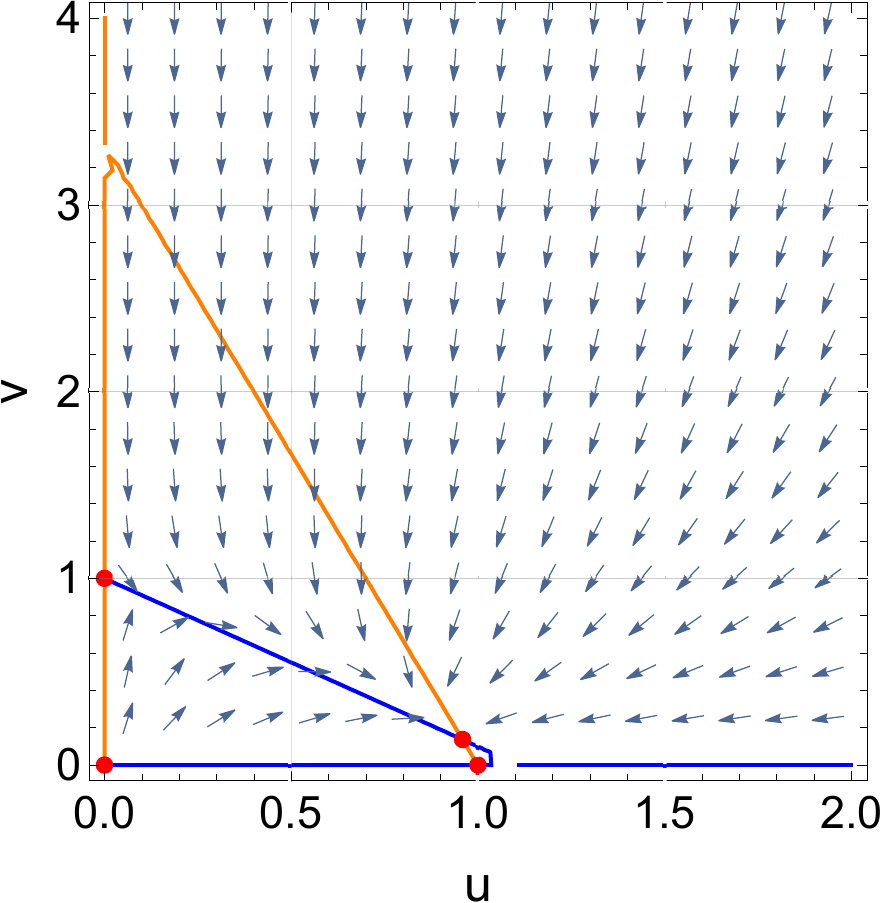}
		\caption{$k=0$}
	\end{subfigure}
	\hfil
	\begin{subfigure}[b]{.475\linewidth}
		\includegraphics[width=\linewidth,height=2in]{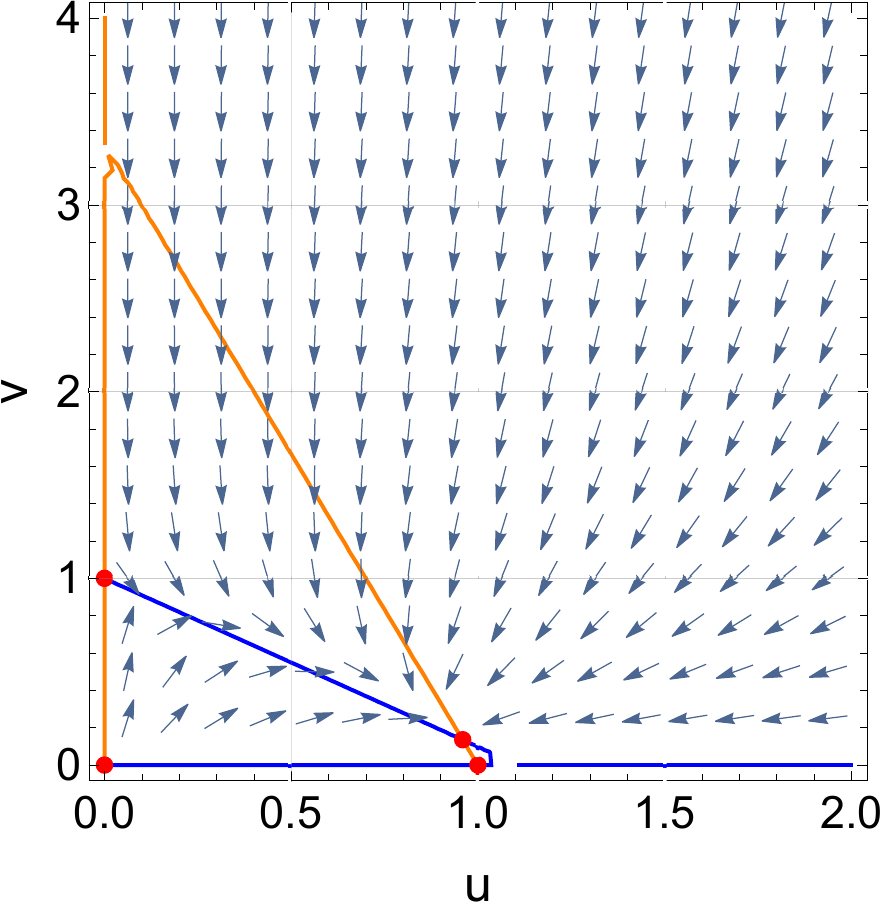}
		\caption{$k=10^{-3}$ }
	\end{subfigure}
	\newline
	\begin{subfigure}[b]{.49\linewidth}
		\includegraphics[width=\linewidth,height=2in]{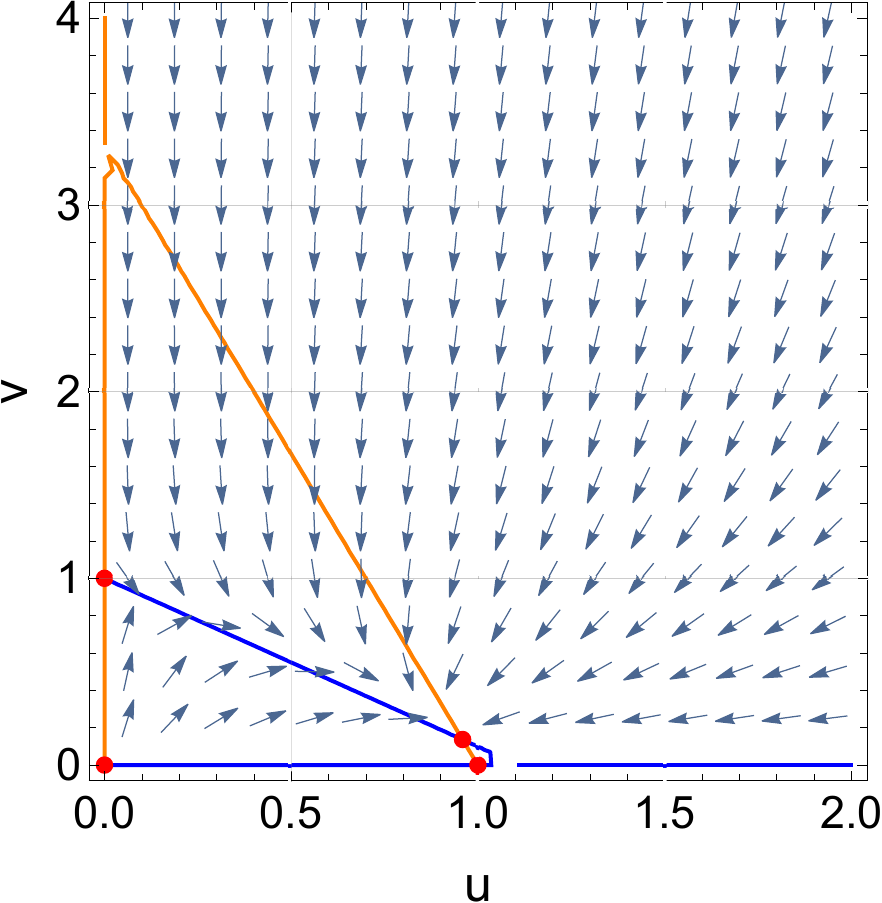}
		\caption{$k=10^{-4}$}
	\end{subfigure}
	\hfil
	\begin{subfigure}[b]{.49\linewidth}
		\includegraphics[width=\linewidth,height=2in]{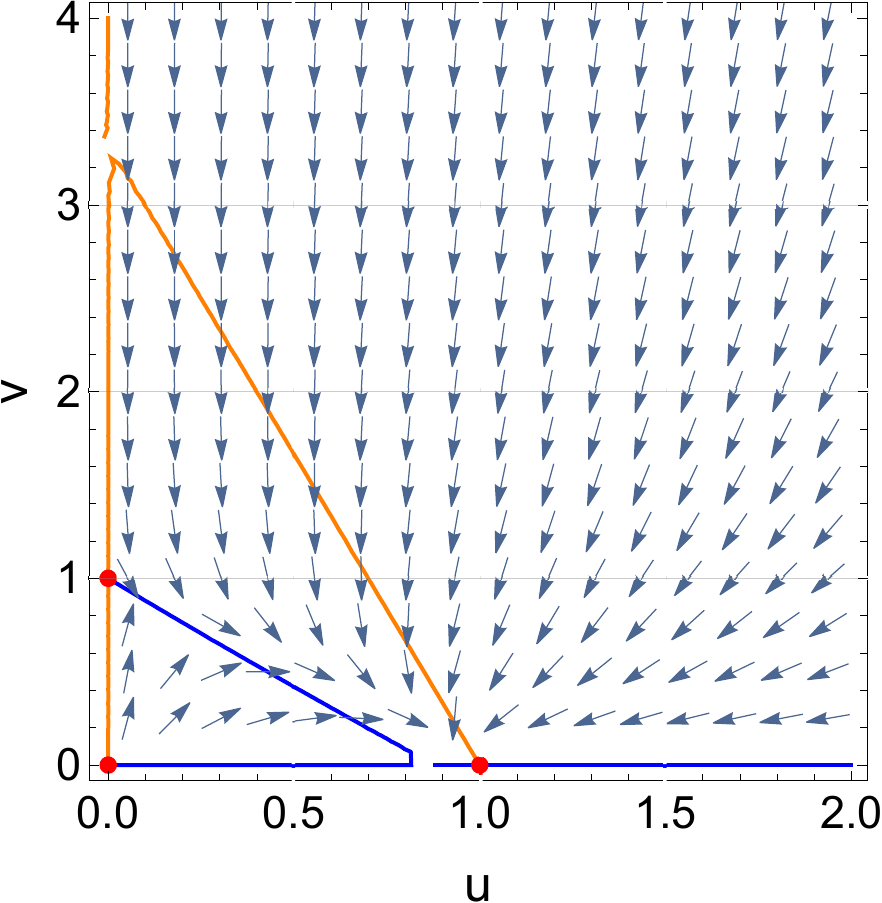}
		\caption{$k=0.3$}
	\end{subfigure}
	\caption{Phase plots showing various dynamics under weak competition parametric restriction with $f=0$. Here $(u^*,v^*)$ is a sink and both $(0,v^*)$ and $(u^*,0)$  are saddles in $(A),(B)$ and $(C)$. In $(D)$ there is competition exclusion and $(u^*,0)$ wins. The orange curve is the u-nullcline and blue curve is the v-nullcline. Parameters used are $a_1=1, a_2=2,b_1=1, b_2=2, c_1=0.3, c_2=1.8$.}
	\label{fig:ode_wc}
\end{figure}

%\begin{theorem}\label{ce_strong}
%	The interior equilibrium $\widehat{E}_4$ exits and is saddle if $(\ref{one_post_strong})$ holds true and
%	\[ \Big( \dfrac{b_2b_1}{c_1} -c_2\Big) <\frac{k a_2b_1^2}{(b_1+ka_1)^2}.\]
%	%where $u^*$ is defined by $(\ref{exist}).$
%\end{theorem}

\begin{lemma}\label{ce_strong}
	The interior equilibrium $\widehat{E}_4$ exists and is a saddle if 
	\[ \Big( \dfrac{b_2b_1}{c_1} -c_2\Big) <\frac{k a_2b_1^2}{(b_1+ka_1)^2}.\]
	%where $u^*$ is defined by $(\ref{exist}).$
\end{lemma}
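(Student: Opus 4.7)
The plan is to leverage the Jacobian expression \eqref{jacob_second} already written down and reduce the problem to a sign analysis of the determinant, since the trace at $\widehat E_4$ is manifestly negative: $\operatorname{Tr}\widehat J^*(\widehat E_4)=-b_1u^*-b_2v^*<0$. For a $2\times 2$ matrix with negative trace, being a saddle is equivalent to having negative determinant, so the whole argument reduces to showing $\det\widehat J^*(\widehat E_4)<0$ under the stated hypothesis. Existence of $\widehat E_4$ is already covered by Theorem~\ref{thm:exist}, so I will simply invoke it and focus on the stability calculation.

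The key algebraic step is the computation (identical to the one in the proof of Lemma~\ref{lem:co_ode_1})
\[
\det\widehat J^*(\widehat E_4)=u^*v^*\left\{b_1b_2-c_1\left(\frac{ka_2}{(1+ku^*)^2}+c_2\right)\right\},
\]
from which saddleness is equivalent to
\[
\frac{b_1b_2}{c_1}-c_2<\frac{ka_2}{(1+ku^*)^2}.
\]
The hypothesis of the lemma is precisely this inequality with $u^*$ replaced by $a_1/b_1$, since
\[
\frac{ka_2b_1^2}{(b_1+ka_1)^2}=\frac{ka_2}{(1+ka_1/b_1)^2}.
\]
So the real content of the proof is a comparison between $1+ku^*$ and $1+ka_1/b_1$.

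The crucial observation, and the only non-cosmetic step, is that at any positive interior equilibrium one has $u^*<a_1/b_1$. This is immediate from the first nullcline: $a_1-b_1u^*-c_1v^*=0$ with $v^*>0$ forces $u^*=(a_1-c_1v^*)/b_1<a_1/b_1$. Consequently $(1+ku^*)^2<(1+ka_1/b_1)^2$, so
\[
\frac{ka_2}{(1+ku^*)^2}>\frac{ka_2}{(1+ka_1/b_1)^2}=\frac{ka_2b_1^2}{(b_1+ka_1)^2}>\frac{b_1b_2}{c_1}-c_2,
\]
where the last inequality is exactly the assumption of the lemma. This yields $\det\widehat J^*(\widehat E_4)<0$, and combined with the negative trace we conclude that $\widehat E_4$ is a saddle.

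I do not anticipate any serious obstacle: the lemma is essentially a dual of Lemma~\ref{lem:co_ode_1}, and the only genuinely new ingredient is the monotonicity observation $u^*<a_1/b_1$, which is forced by positivity of $v^*$ on the $u$-nullcline. The one subtlety to be careful about is that the hypothesis is a sufficient condition that overshoots the sharp threshold $\tfrac{b_1b_2}{c_1}-c_2<\tfrac{ka_2}{(1+ku^*)^2}$; it trades the implicit quantity $u^*$ for the explicit upper bound $a_1/b_1$, which is the cleanest way to state a parameter-only criterion.
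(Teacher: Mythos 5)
Your proof is correct and follows the same route as the paper: negative trace plus negative determinant, with the determinant sign obtained by bounding $\frac{ka_2}{(1+ku^*)^2}$ from below by $\frac{ka_2b_1^2}{(b_1+ka_1)^2}$ using $u^*<a_1/b_1$. In fact you make explicit the one step the paper leaves implicit (justifying $u^*<a_1/b_1$ from the $u$-nullcline with $v^*>0$, where the paper merely cites ``density of reals''), so your write-up is if anything slightly more complete.
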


\begin{proof}
	In order to claim that the interior equilibrium $\widehat{E}_4$ is a saddle, it is enough to show that $Trace(\widehat{J}^*(\widehat{E}_4))<0$ and $Det(\widehat{J}^*(\widehat{E}_4))<0.$
	We have that
	\[  Trace(\widehat{J}^*(\widehat{E}_4)) = -b_1 u^* -b_2 v^*<0,\]
	
	and
	\[ Det(\widehat{J}^*(\widehat{E}_4) = b_1b_2 u^* v^* - c_1u^* \Big( \dfrac{ka_2v^*}{(1+ku^*)^2} +c_2 v^* \Big) = u^* v^* \Big\{ b_1b_2 - c_1 \Big( \dfrac{k a_2}{(1+ku^*)^2} + c_2\Big) \Big\}.  \]

	Under the assumption and density of reals, we have

	\begin{align*}
		\Big( \dfrac{b_2b_1}{c_1} -c_2\Big) &<\frac{k a_2}{(1+\frac{ka_1}{b_1})^2}  <\frac{k a_2}{(1+ku^*)^2} \implies Det(\widehat{J}^*(\widehat{E}_4)<0.
	\end{align*}
	
	Hence, $\widehat{E}_4$ is a saddle.
\end{proof}

\begin{figure}[h]
	\begin{subfigure}[b]{.475\linewidth}
		\includegraphics[width=\linewidth,height=2in]{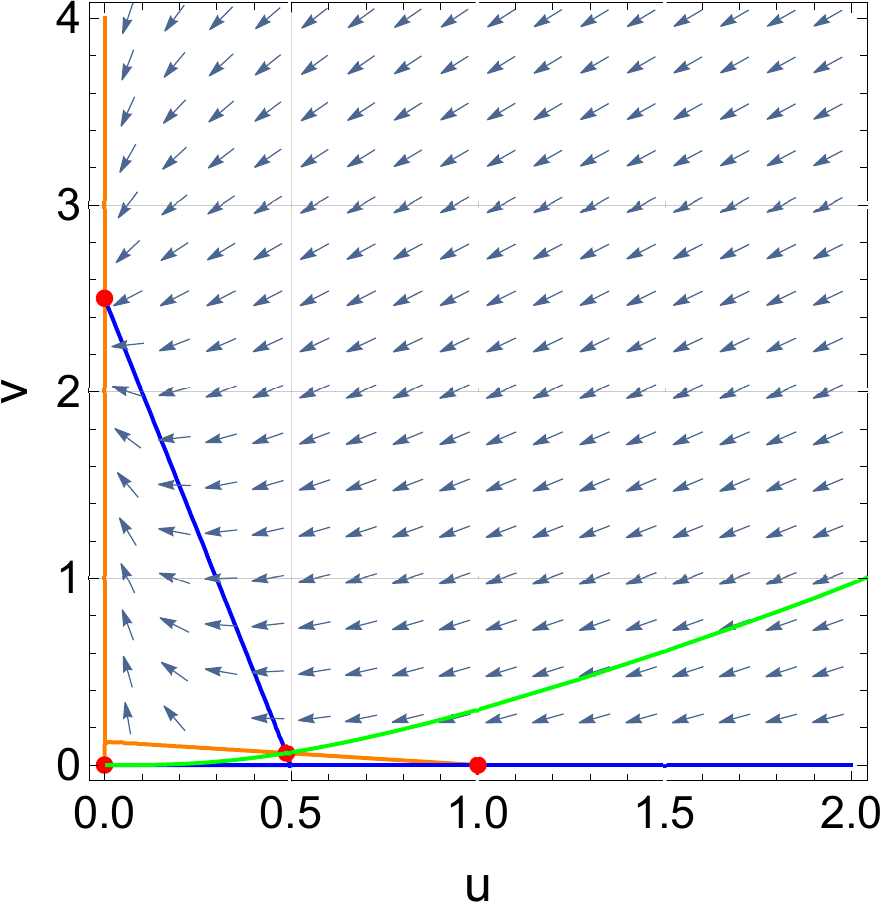}
		\caption{$k=0$ }
	\end{subfigure}
	\hfill
%	\begin{subfigure}[b]{.475\linewidth}
%		\includegraphics[width=\linewidth,height=2in]{}
%		\caption{$k=5$ }
%	\end{subfigure}
%	\newline
%	\begin{subfigure}[b]{.475\linewidth}
%		\includegraphics[width=\linewidth,height=2in]{}
%		\caption{$k=10$ }
%	\end{subfigure}
%	\hfill
	\begin{subfigure}[b]{.49\linewidth}
		\includegraphics[width=\linewidth,height=2in]{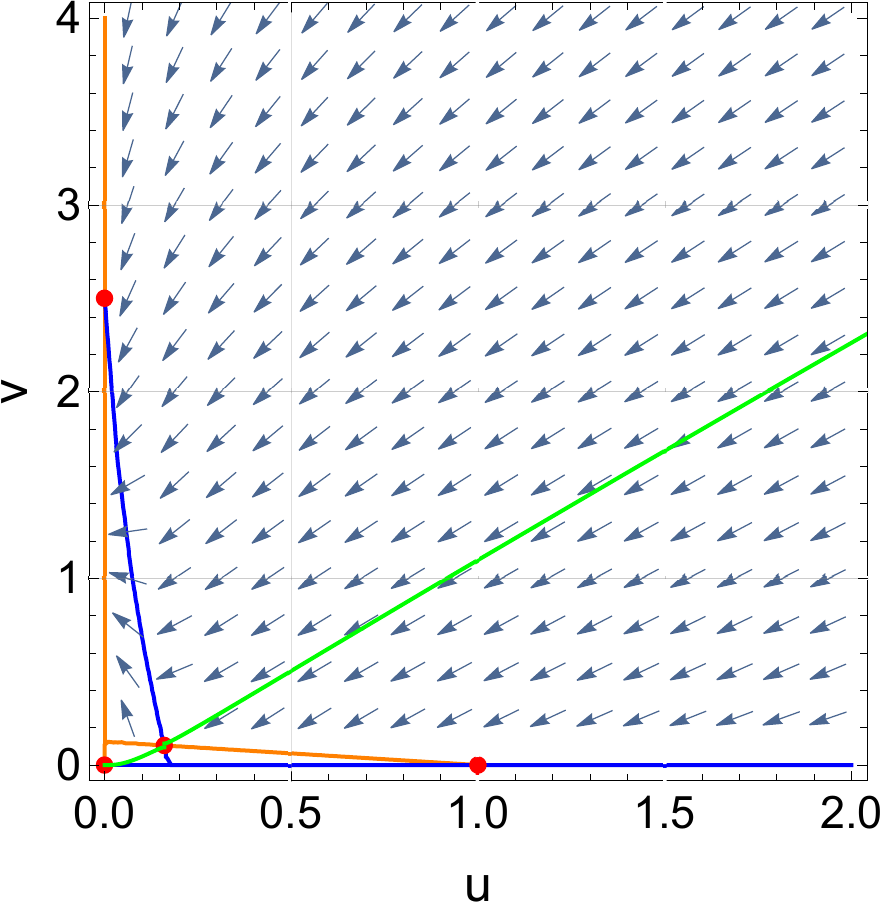}
		\caption{$k=11$}
	\end{subfigure}
	\caption{Phase diagrams showing dynamics under strong competition parametric restriction with $f=0$. The orange curve is the u-nullcline, blue curve is the v-nullcline and separatrix/stable manifold is in green. Here, $(u^*,v^*)$ is saddle. Parameters used are $a_1=0.5, a_2=2,b_1=0.5, b_2=0.8, c_1=4, c_2=4$.}
	\label{fig:ode_st_comp}
\end{figure}

%\begin{theorem}\label{two_post_ode}
%	Consider the problem $(\ref{ODE2})$. For any given $k>0$ such that
%		\[ b_2b_1>\frac{b_1^2 c_1 k  a_2}{(b_1+k  a_1)^2} \]
%	and parametric restrictions given by $(\ref{two_post})$ holds, then there exist two positive interior equilibrium, out of which one is stable, and other is a saddle.
%\end{theorem}
%
%\begin{proof}
%	The existence of two positive interior equilibriums is justified as $k$ satisfies the parametric restrictions given by $(\ref{two_post})$ [See Theorem$~\ref{exist}$]. By Theorem$~\ref{co_ode_1}$, we have the stability of one of the interior equilibriums. By alternate equilibria condition, the other interior equilibrium is a saddle.
%\end{proof}

\begin{lemma}\label{lem:two_post_ode}
	Consider the system \eqref{eq:ODE2}. For any given $k>0$ s.t.,
		 there exist two positive interior equilibria, a sink and a saddle.
\end{lemma}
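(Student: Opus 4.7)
The plan is to combine Theorem \ref{thm:exist} with the trace--determinant criterion at the two interior equilibria, exploiting the monotonicity in $u^{*}$ of the correction term appearing in the determinant and the strict convexity of the $v$-nullcline as a function of $u$. By Theorem \ref{thm:exist} the parametric hypothesis yields two positive interior equilibria $\widehat{E}_4^{(i)}=(u_i^*,v_i^*)$, $i=1,2$; label them so that $v_1^*<v_2^*$. Since both lie on the strictly decreasing $u$-nullcline $v=(a_1-b_1u)/c_1$, this forces $u_1^*>u_2^*$. Evaluating \eqref{jacob_second} at $\widehat{E}_4^{(i)}$, exactly as in the proofs of Lemmas \ref{lem:co_ode_1} and \ref{ce_strong}, gives
\begin{equation*}
\mathrm{tr}\,\widehat{J}^*(\widehat{E}_4^{(i)}) = -b_1 u_i^* - b_2 v_i^* < 0,\qquad \det \widehat{J}^*(\widehat{E}_4^{(i)}) = u_i^* v_i^* \, h(u_i^*),
\end{equation*}
where $h(u):= b_1b_2 - c_1c_2 - \frac{c_1 k a_2}{(1+ku)^2}$. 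Since the trace is negative at both equilibria, stability is determined entirely by the sign of $h(u_i^*)$, and a direct differentiation shows $h$ is strictly increasing in $u\ge 0$.

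The core step is therefore to prove $h(u_2^*)<0<h(u_1^*)$. I would argue geometrically: the $v$-nullcline $\varphi(u):=\frac{1}{b_2}\bigl(\frac{a_2}{1+ku}-c_2 u\bigr)$ satisfies $\varphi''(u)=\frac{2k^2 a_2}{b_2(1+ku)^3}>0$ for $u\ge 0$, hence is strictly convex, while the $u$-nullcline $\psi(u):=(a_1-b_1u)/c_1$ is linear. Thus $D:=\varphi-\psi$ is strictly convex, and by the existence of exactly two transversal interior intersections its only zeros are $u_2^*<u_1^*$. Convexity of $D$ with two simple zeros then forces $D'(u_2^*)<0<D'(u_1^*)$, and a short computation gives the identity $b_2 c_1\, D'(u) = h(u)$, which at once yields $h(u_2^*)<0<h(u_1^*)$ as required.

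Combining these facts, $\det\widehat{J}^*(\widehat{E}_4^{(2)})<0$ identifies $\widehat{E}_4^{(2)}$ as a saddle, while $\det\widehat{J}^*(\widehat{E}_4^{(1)})>0$ together with the strictly negative trace identifies $\widehat{E}_4^{(1)}$ as a sink. The main obstacle is precisely the step of converting the two-intersection nullcline picture into the sign inequalities on $h$; the convexity/transversality argument above handles both equilibria simultaneously and avoids the more painful algebraic alternative of differentiating the defining quadratic $F(v):=Av^2+Bv+C$ from \eqref{quad_param} and tracking how $\mathrm{sgn}\,F'(v_i^*)$ relates back to $\mathrm{sgn}\,\det\widehat{J}^*(\widehat{E}_4^{(i)})$ through the parametrization of the $u$-nullcline.
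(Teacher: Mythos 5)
Your proof is correct, and it takes a genuinely different route from the paper's. The paper's argument is purely qualitative: it cites Theorem \ref{thm:exist} for existence, asserts that stability must alternate between adjacent equilibria of a planar system (an index-theoretic fact quoted from \cite{perko2013differential}), and then rules out a source by invoking the Dulac argument of Lemma \ref{lem:dc1}, concluding the unstable one is a saddle. You instead compute everything: the trace is negative at both interior equilibria, the determinant equals $u^*v^*h(u^*)$ with $h(u)=b_1b_2-c_1c_2-\tfrac{c_1ka_2}{(1+ku)^2}$, and your identity $b_2c_1\,D'(u)=h(u)$ (which checks out: $D'(u)=\tfrac{1}{b_2}\bigl(-\tfrac{a_2k}{(1+ku)^2}-c_2\bigr)+\tfrac{b_1}{c_1}$) converts the sign of the determinant into the sign of the slope of the nullcline gap, which strict convexity of $D$ pins down as negative at the smaller root and positive at the larger. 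What your approach buys is self-containedness and extra information: it avoids the somewhat hand-waved ``stability has to alternate'' step, does not need Lemma \ref{lem:dc1} at all, and identifies \emph{which} equilibrium is the sink (the one with larger $u^*$, smaller $v^*$). What the paper's approach buys is brevity and robustness to the algebra (it would survive a perturbation of the growth-suppression term $\tfrac{1}{1+ku}$ to any form for which Dulac still applies). One small caveat worth making explicit in your write-up: since $D$ is strictly convex on its whole domain $u>-1/k$, it has at most two zeros there, so the two interior equilibria exhaust its zeros and both are automatically simple --- that is what licenses the strict inequalities $D'(u_2^*)<0<D'(u_1^*)$ without a separate transversality hypothesis.
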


\begin{proof}
	Theorem \ref{thm:exist} provides conditions under which two positive interior equilibria exists. Since stability has to alternate by standard theory \cite{perko2013differential} for planar systems, one of the equilibrium is stable while the other is unstable. Instability as a source is impossible due to the lack of periodic dynamics in the system via Lemma \ref{lem:dc1}. Thus the unstable equilibrium must be a saddle. This proves the lemma.
\end{proof}

\begin{lemma}\label{lem:dc1}
	Consider the ODE system \eqref{eq:ODE2}. There do not exist any periodic orbits for the system, for any values of the fear parameter $k$.
\end{lemma}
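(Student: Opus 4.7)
The plan is to apply the Bendixson--Dulac criterion on the open first quadrant $\{u>0,v>0\}$, with the classical Dulac function $B(u,v)=\dfrac{1}{uv}$. Before computing anything, observe that any periodic orbit of \eqref{eq:ODE2} must lie entirely inside the open first quadrant: the right-hand side of the $u$-equation has $u$ as a factor and the right-hand side of the $v$-equation has $v$ as a factor, so both coordinate axes are invariant and $B$ is well-defined and smooth on the region where a periodic orbit could live.

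Writing $P(u,v)=u(a_1-b_1u-c_1v)$ and $Q(u,v)=v\!\left(\dfrac{a_2}{1+ku}-b_2v-c_2u\right)$, I would simplify
\[
BP=\frac{a_1-b_1u-c_1v}{v},\qquad BQ=\frac{1}{u}\!\left(\frac{a_2}{1+ku}-b_2v-c_2u\right),
\]
and then compute
\[
\frac{\partial(BP)}{\partial u}+\frac{\partial(BQ)}{\partial v}=-\frac{b_1}{v}-\frac{b_2}{u},
\]
which is strictly negative on the open first quadrant. The crucial observation making this work is that the fear factor $a_2/(1+ku)$ depends only on $u$, so it is annihilated by $\partial/\partial v$; consequently the divergence calculation is as clean as in the classical Lotka--Volterra case, independent of the value of $k$. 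By the Bendixson--Dulac theorem, the divergence of $B(P,Q)$ having constant nonzero sign on the simply connected region $\{u>0,v>0\}$ rules out any periodic orbit there, and combined with the invariance of the axes this gives the lemma.

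There is essentially no genuine obstacle here: the only thing to verify is that the simplest possible Dulac function succeeds, which it does precisely because $k$ multiplies $u$ (the ``other'' variable) inside the fear denominator. If the fear term had taken the form $a_2/(1+kv)$ or something depending on both variables, a more careful choice of $B$ would have been required; that contingency will not arise for \eqref{eq:ODE2}. No invariant-region or Poincar\'e--Bendixson argument is needed beyond noting axis invariance.
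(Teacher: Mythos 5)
Your proof is correct and uses exactly the same approach as the paper: the Dulac function $1/(uv)$ on the open first quadrant, yielding the divergence $-b_1/v-b_2/u<0$. Your additional remark that the axes are invariant (so any periodic orbit lies where the Dulac function is smooth) is a small but worthwhile point of rigor that the paper leaves implicit.
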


\begin{proof}
	Consider the function $\phi(u,v) = \frac{1}{u v}$ where $u$ and $v$ are both non-zero. Let,
	\begin{equation*}\label{eq:F}
	\begin{split}
		F_1(u,v) &= a_1 u -b_1 u^2 -c_1 uv, \\
		F_2(u,v)&=  \dfrac{a_2 v}{1+ku}  -b_2 v^2 -c_2 uv.
	\end{split}
\end{equation*}
		
	 Then we have
	\begin{eqnarray}
	\dfrac{\partial (F_1 \phi)}{\partial u} + \dfrac{\partial (F_2 \phi)}{\partial v}  
	&=& \dfrac{\partial}{\partial u} \left( \dfrac{1}{u v} (a_1 u -b_1 u^2 -c_1 uv)  \right) + \dfrac{\partial}{\partial v} \left( \dfrac{1}{u v} \left(\dfrac{a_2 v}{1+ku}  -b_2 v^2 -c_2 uv\right)  \right),  \nonumber \\
	&=& - \frac{b_{1}}{v} - \frac{b_{2}}{u} < 0.\nonumber \\
	\end{eqnarray}
	
	The result follows by application of the Dulac criterion \cite{perko2013differential}.
\end{proof}

\begin{figure}[h]
	\begin{subfigure}[b]{.475\linewidth}\label{fig:ode_two_postive_base}
		\includegraphics[width=\linewidth,height=2in]{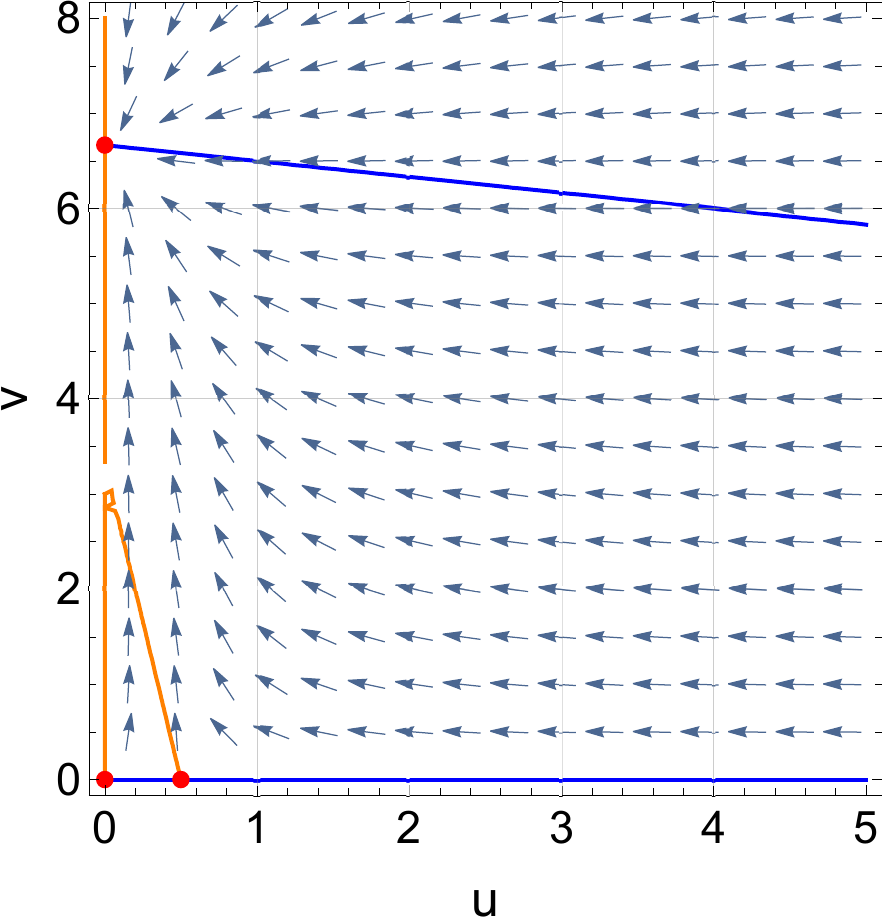}
		\caption{$k=0$ }
	\end{subfigure}
	\hfill
	\begin{subfigure}[b]{.475\linewidth}
		\includegraphics[width=\linewidth,height=2in]{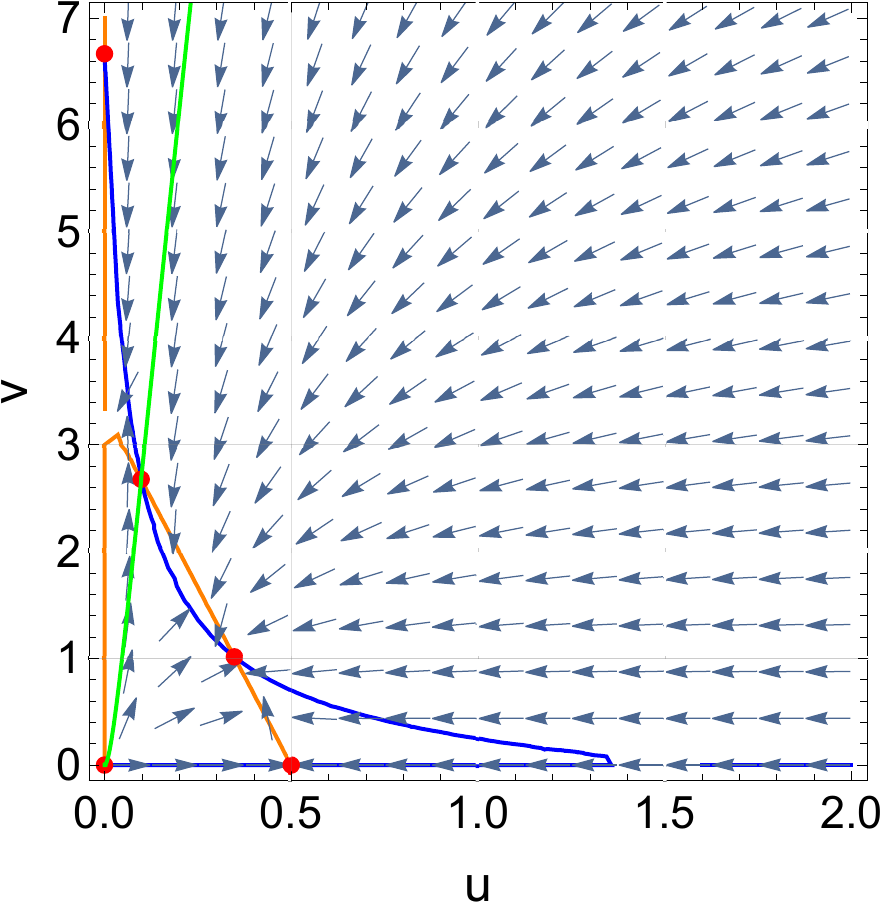}
		\caption{$k=15$ }
	\end{subfigure}
%	\newline
%	\begin{subfigure}[b]{.49\linewidth}
%		\includegraphics[width=\linewidth,height=2in]{}
%		\caption{$k=20$}
%	\end{subfigure}
	\caption{Phase plot showing competition exclusion in $(A)$ when fear is absent in both competing species where $(0,v^*)$ wins. In $(B)$, we observe the occurrence of  two positive interior equilibria when $f=0$ and  $k=15$. The orange curve is the u-nullcline, blue curve is the v-nullcline and separatrix/stable manifold is in green. Parameters used are $a_1=1, a_2=2,b_1=2, b_2=0.3, c_1=0.3, c_2=0.05$.}
	\label{fig:ode_two_postive}
\end{figure}

\begin{remark}
	Some observations:
\begin{enumerate}
		\item From Fig.~$\ref{fig:ode_two_postive}$, we observe that when there is no fear, then $(0,v^*)$ is globally stable. For a sufficiently large level of  fear in species $v$, a bi-stability situation is created. That is, for a certain initial data, species $u$ is completely excluded by $v$ and initial data is attracted to the co-existence state (See Fig. ~\ref{fig:ode_two_postive}). 
	
		\item If $(0,v^*)$ is globally attracting,  a much higher level of fear in species $v$ ($\approx 200$) can change the dynamics to a strong competition type case. However most data in this setting is attracted to $(u^{*},0)$. For initial data $(u_{0},v_{0})$ to be attracted to 
	$(0,v^*)$, we would need $v_{0} >>1, u_{0} <<1$.	
\end{enumerate}
\end{remark}

\subsection{The case of $u$ fearing $v$}

In this subsection, we shall consider the case when the competitor $u$ is being fearful of $v$. For the modeling construct, we will follow the modeling approach of  fear effect as in the model  in \cite{Wang16}, where the growth rate of the fearful competitor $u$, is not constant but rather density dependent. Essentially, the growth rate is decreased by a factor $\approx \frac{1}{1+f v}$, where $f \geq 0$ is a fear coefficient. Thus a higher density of the competitor $v$ increases the fear in $u$. When $f=0$, the assumption is there is no fear, and one recovers the classical model \eqref{eq:GeneralEquation}. If fear is present, we obtain the following ODE model for two competing species $u$ and $v$, where $u$ is fearful of $v$.

\begin{align}\label{eq:ODE3}
	\begin{split}
		\dfrac{du}{dt}&= \dfrac{a_1 u}{1+fv}  -b_1 u^2 -c_1 uv,\\
	\dfrac{dv}{dt} &= a_2v -b_2 v^2 -c_2 uv.
	\end{split}
\end{align}

The system \eqref{eq:ODE3} possesses the following biologically feasible non-negative equilibria. These are 
\begin{itemize}
	\item $E_1=(0,0)$,
	\item $E_2=(\frac{a_1}{b_1},0 )$,
	\item $E_3=(0,\frac{a_2}{b_2} )$,
	\item $E_4=(u^*,v^*)$,
\end{itemize}
where $u^*$ is given by $(\ref{u_star})$ and $v^*$ is a positive root of the following third order polynomial in  Eq.$(\ref{three_post1})$.
\subsubsection{Existence}
The system \eqref{eq:ODE3} can have one or two positive interior equilibria, but not three. This is established via the following lemma,

\begin{lemma}
\label{lem:n3}
Consider the ODE system \eqref{eq:ODE3}, with $f=0$. If we are in the competitive exclusion, weak-competition, or strong-competition setting, then for any $f>0$, under the parametric restrictions in  $(\ref{three_post})$, it is impossible to find  three positive interior equilibria. 
\end{lemma}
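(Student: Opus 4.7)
My plan is to reduce the question to one about the positive roots of a single-variable polynomial and then use Descartes' rule of signs, augmented by the parametric hypotheses. Starting from the interior nullclines $a_2 = b_2 v + c_2 u$ and $a_1/(1+fv) = b_1 u + c_1 v$, I eliminate $u$ by substituting $u^* = (a_2 - b_2 v^*)/c_2$ (which also forces $v^* < a_2/b_2$) into the second nullcline and clearing the denominator $(1+fv^*)$. The paper packages the resulting polynomial relation as the cubic in \eqref{three_post1}, and a positive interior equilibrium corresponds to a positive root $v^*$ of this cubic that additionally satisfies $0 < v^* < a_2/b_2$ (so that $u^* > 0$).

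Next, I would read off the signs of the cubic's coefficients in each of the three base regimes. The crucial combination is $c_1 c_2 - b_1 b_2$: it is negative in weak competition, positive in strong competition, and of a regime-dependent sign under competitive exclusion. Together with $f > 0$ and the defining ratio inequalities $a_2/a_1 \gtrless b_2/c_1, c_2/b_1$, plus the auxiliary restrictions \eqref{three_post}, these fix the signs of the leading coefficient and the constant term, and constrain the intermediate coefficients. I would then tabulate sign changes case by case and show that in every admissible configuration the coefficient sequence admits at most two sign changes, so Descartes' rule yields at most two positive real roots. This immediately precludes three positive interior equilibria.

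The main obstacle I anticipate is the competitive exclusion regime, where $b_1 b_2 - c_1 c_2$ is not pinned down by the regime alone and one must split into the two subcases (``$u$ excludes $v$'' and ``$v$ excludes $u$'' at $f=0$). In each subcase the conditions in \eqref{three_post} must be invoked to force a definite sign on the middle coefficients; ensuring these sign assignments are consistent with the strict inequalities in \eqref{three_post} is the delicate part of the bookkeeping. If the direct sign count turns out to permit three sign changes in some borderline subcase, my fallback is a nullcline geometry argument: the $v$-nullcline is the line $b_2 v + c_2 u = a_2$, while the $u$-nullcline, $(1+fv)(b_1 u + c_1 v) = a_1$, is a smooth curve that can be written as $u = u(v)$ with controlled monotonicity on $\{v \ge 0\}$ owing to the factor $1/(1+fv)$. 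A line and such a monotone curve cannot intersect more than twice in the open first quadrant, which again rules out three interior equilibria.
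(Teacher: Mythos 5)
Your primary route is essentially the paper's proof: reduce the interior\-/equilibrium condition to the cubic \eqref{three_post1}, note that three positive roots force the sign pattern encoded in \eqref{three_post}, and then derive a case\-/by\-/case contradiction with the three classical regimes --- including, for weak competition, the need to combine the two middle inequalities of \eqref{three_post} rather than read off coefficient signs individually (the paper adds them to obtain $(c_1c_2-b_1b_2)+f(a_1c_2-a_2b_1)>0$, which is impossible under weak competition). Two small corrections: the genuinely delicate case is weak competition rather than competitive exclusion, where the constant term $E=b_1(a_2b_1-a_1c_2)$ is immediately of the wrong sign when $u$ excludes $v$ (though you are right that the paper's proof silently omits the ``$v$ excludes $u$'' subcase); and your fallback geometric argument rests on the wrong property --- monotonicity of $u=u(v)$ does not bound its intersections with a line by two, whereas strict convexity of $v\mapsto \frac{1}{b_1}\bigl(\frac{a_1}{1+fv}-c_1v\bigr)$ does, and that is what you would need to invoke on that route.
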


\begin{proof}

If $k=0$ and $f>0$, Eq.(\ref{eq: vpoly}) reduces to the cubic equation $B(v^*)^3+C(v^*)^2+D(v^*)+E=0$, where 
\begin{equation}\label{three_post1}
	\begin{split}
		B&=b_1f^2(c_1c_2-b_1b_2),\\
		C&=a_2b_1^2f^2+2b_1f(c_1c_2-b_2b_1),\\
		D&=b_1f(-a_1c_2+2a_2b_1)+b_1(c_1c_2-b_2b_1),\\
		E&=b_1(-a_1c_2+a_2b_1).
	\end{split}
\end{equation}
We will require the following conditions to satisfy Descartes's rule of signs, so as to obtain three positive roots to the cubic equation. These are:

\begin{align}\label{three_post}
	\begin{split}
		c_1c_2-b_1b_2 &<0, \\
		a_2b_1f+2(c_1c_2-b_2b_1)& >0,\\
		f(-a_1c_2+2a_2b_1) & <-(c_1c_2-b_2b_1),\\
		-a_1c_2+a_2b_1 &>0.
	\end{split}
\end{align}
Let us prove this is an impossible claim by contradiction:
	First, assume the parameter set satisfies both competitive exclusion-state parametric restriction and $(\ref{three_post})$. We know that competitive exclusion-state is asymptotically stable if $\frac{a_1}{a_2} > \max \Big\{ \frac{b_1}{c_2},\frac{c_1}{b_2}  \Big \}$. On using these parametric restriction, $-a_1c_2 + a_2b_1<0$, which is a contradiction to the last inequality in $(\ref{three_post})$.
	
	If the parameter set satisfies both the strong competition state parametric restriction and $(\ref{three_post})$, then we have a contradiction because of the first inequality in $(\ref{three_post})$, as under the strong competition state parametric restriction that inequality should be positive.
	
	For the weak competition, recall the parametric restrictions:
	\[\frac{c_2}{b_1}<\frac{a_2}{a_1}<\frac{b_2}{c_1}.\]
	Let's re-write the third inequality in $(\ref{three_post})$,
	\[ -(c_1c_2-b_2b_1)-f(-a_1c_2+2a_2b_1)>0.\]
	On adding the second and third inequality in $(\ref{three_post})$, we have
	\[(c_1c_2-b_1b_2) + f(a_1c_2-a_2b_1)>0, \]
	which is a contradiction as the added inequality should be negative by the parametric restrictions of weak-competition. This proves the lemma.
	
\end{proof}

\subsubsection{Linear Stability Analysis}
The Jacobian matrix of system \eqref{eq:ODE3} is given by
\begin{equation}
	J^*=
	\left(
	\begin{array}{cc}
		\dfrac{a_1}{f v^*+1}-2 b_1 u^*-c_1 v^* & -\dfrac{a_1 f u^*}{(f v^*+1)^2}-c_1 u^* \\
		-c_2 v^* & a_2-2 b_2 v^*-c_2 u^* \\
	\end{array}
	\right).
\end{equation}

\begin{lemma}
	The trivial steady state $E_1$ is locally unstable.
\end{lemma}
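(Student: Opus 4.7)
The plan is to apply the standard linearization test to $E_1=(0,0)$ using the Jacobian $J^*$ already written above. First I would substitute $u^*=0$ and $v^*=0$ into each entry of $J^*$ and simplify. The off-diagonal entries, being proportional to $u^*$ or $v^*$, vanish, and the diagonal entries reduce to $a_1/(f\cdot 0+1)=a_1$ and $a_2-0-0=a_2$. Thus $J^*(E_1)$ becomes a diagonal (in particular triangular) matrix, and its two eigenvalues $\lambda_1=a_1$ and $\lambda_2=a_2$ can be read directly off the diagonal, exactly as was done for $\widehat{E}_1$ in the analogous argument for system \eqref{eq:ODE2}.

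Next I would invoke the standing positivity assumption on all model parameters stated at the beginning of Section~2 to conclude that $\lambda_1=a_1>0$ and $\lambda_2=a_2>0$. By the Hartman--Grobman theorem (equivalently, the principle of linearized stability for hyperbolic equilibria), since $J^*(E_1)$ has no eigenvalue with non-positive real part, $E_1$ is a hyperbolic unstable equilibrium of \eqref{eq:ODE3}; in fact it is an unstable node with a two-dimensional unstable manifold tangent to $\mathbb{R}^2$ at the origin.

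There is essentially no obstacle here: the fear factor $\tfrac{a_1}{1+fv}$ is harmless at the origin because it collapses to $a_1$ when $v=0$, and similarly the $uv$ coupling terms drop out. The only observation worth emphasizing is that the linearization at $E_1$ is completely independent of the fear coefficient $f$, so the instability conclusion is insensitive to the strength of fear, mirroring the $k$-insensitivity already observed for $\widehat{E}_1$ in the companion system \eqref{eq:ODE2}.
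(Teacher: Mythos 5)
Your proposal is correct and matches the paper's argument: the appendix proof (Lemma \ref{3eq1}) likewise evaluates the Jacobian at the origin, obtains the characteristic equation $(a_1-\lambda)(a_2-\lambda)=0$, and concludes instability from $\lambda_1=a_1>0$ and $\lambda_2=a_2>0$. Your additional remarks on the $f$-independence of the linearization are accurate but not needed for the result.
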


\begin{proof}
	For proof details refer to \eqref{3eq1}
\end{proof}

\begin{figure}[h]
	\begin{subfigure}[b]{.475\linewidth}
		\includegraphics[width=\linewidth,height=2in]{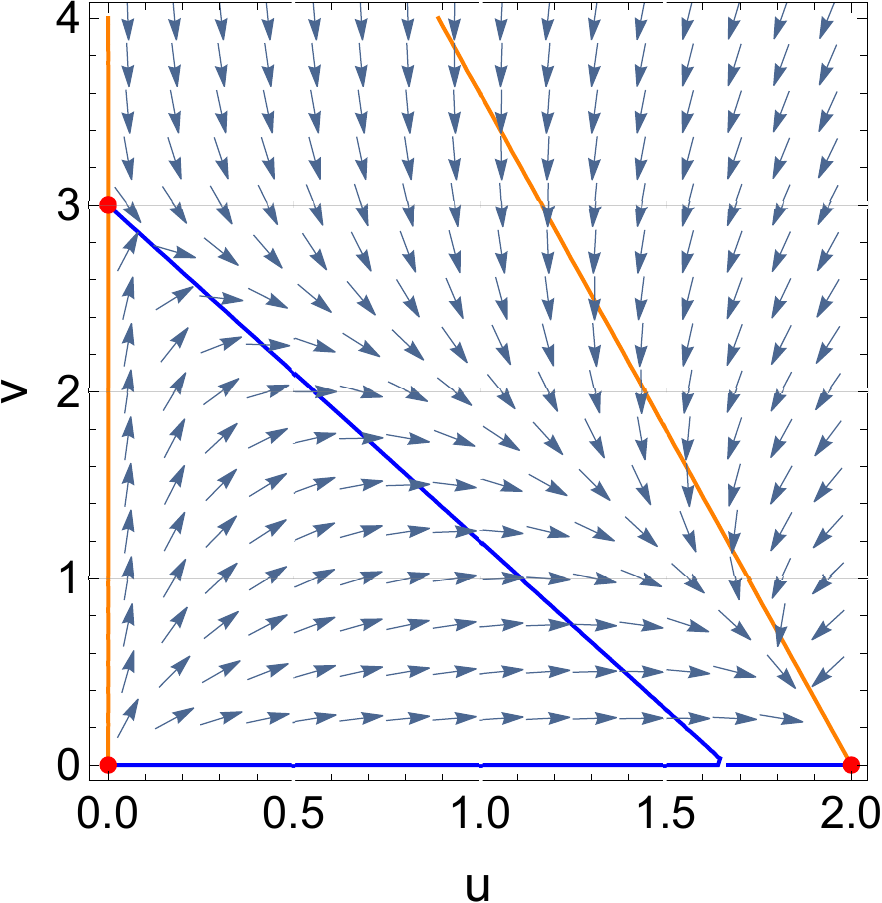}
		\caption{$f=0$ }
	\end{subfigure}
	\hfill
	\begin{subfigure}[b]{.49\linewidth}
		\includegraphics[width=\linewidth,height=2in]{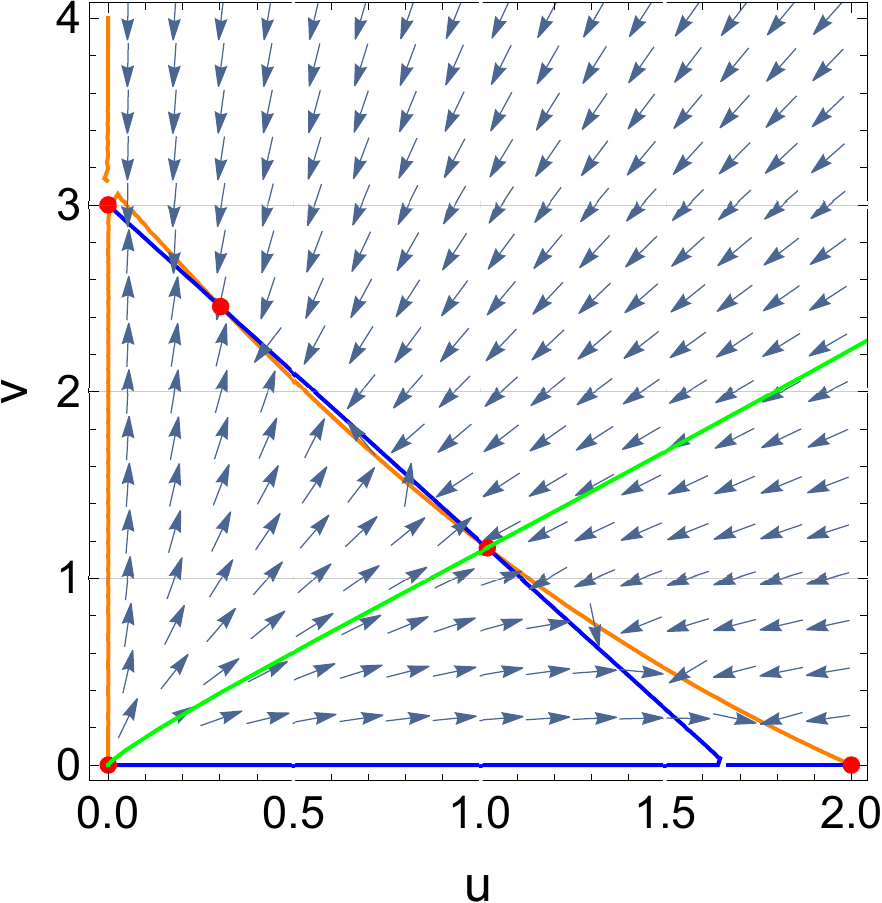}
		\caption{$f=0.42$}
	\end{subfigure}
	\newline
	\begin{subfigure}[b]{.475\linewidth}
		\includegraphics[width=\linewidth,height=2in]{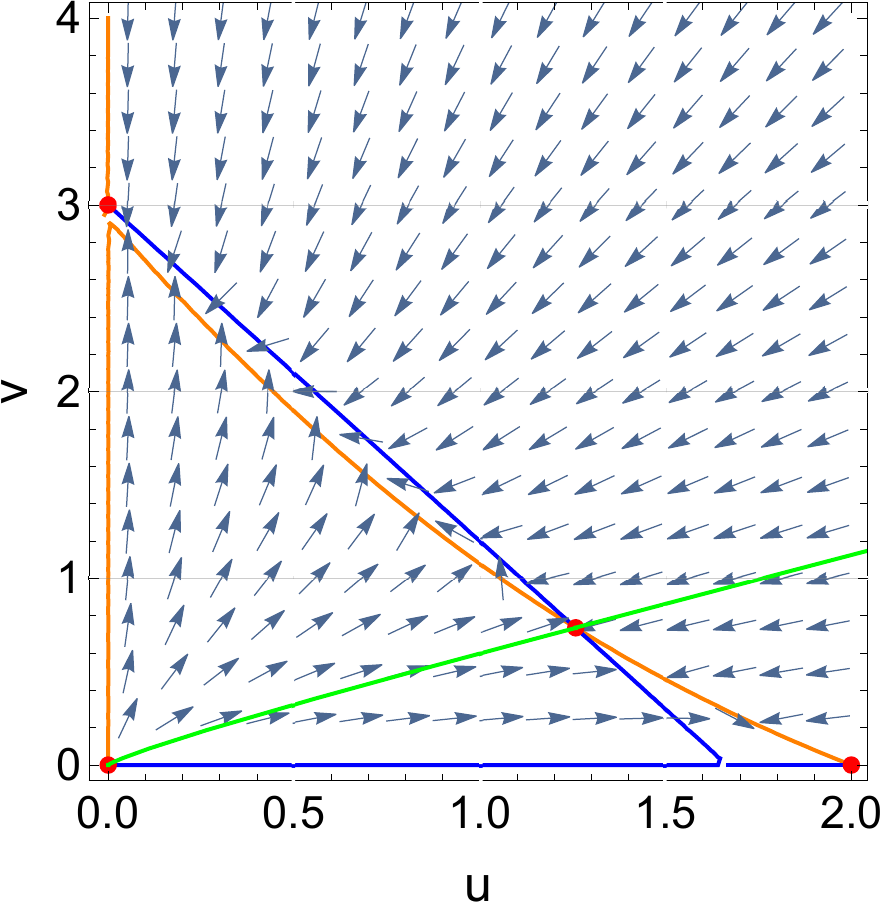}
	\end{subfigure}
	\caption{Phase plots showing dynamics under competition exclusion parametric restriction with  $k=0$ in each plot. In $(A)$, $(0,v^*)$ is saddle point. In $(B)$ we observe multiple coexistence points and $(0,v^*)$ and $(u^*,0)$ are saddle points. In $(C)$ a further increase in $f$ still leads to bi-stability but multiple coexistence states are lost. The parameters $a_1=3.6, a_2=3,b_1=1.8, b_2=1, c_1=0.5, c_2=1.8$ are used for each simulation. In each plot, the orange curve is the u-nullcline, blue curve is the v-nullcline and the green curve is the separatrix/stable manifold.}
	\label{fig:Fig1}
\end{figure}

\begin{lemma}
	The boundary equilibrium point $E_2$ is locally stable iff $a_2b_1<c_2a_1$.
\end{lemma}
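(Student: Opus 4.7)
The plan is to evaluate the Jacobian $J^*$ at $E_2=(a_1/b_1,0)$ and read off its eigenvalues, just as was done for the analogous boundary equilibria in the preceding $v$-fears-$u$ analysis (cf.\ the proof of Lemma~\ref{lem:ce_ode_1}). Because the bottom-left entry of $J^*$ contains a factor of $v^*$, substituting $v^*=0$ kills it, so the Jacobian at $E_2$ becomes upper triangular and its eigenvalues are precisely the two diagonal entries.

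First I would compute the diagonal entries at $u^*=a_1/b_1,\ v^*=0$. The (1,1) entry reduces to $a_1-2b_1(a_1/b_1)-0=-a_1$, which is strictly negative for all admissible parameters and thus contributes an always-stable direction along the $u$-axis. The (2,2) entry reduces to $a_2-c_2(a_1/b_1)$, whose sign is exactly controlled by the inequality in the statement: it is negative iff $a_2 b_1<c_2 a_1$. Since both eigenvalues must be negative for local asymptotic stability of a hyperbolic fixed point, and one of them is automatically negative, the condition $a_2b_1<c_2a_1$ is both necessary and sufficient.

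The argument is essentially a direct linearization and requires no delicate estimates; note in particular that the fear coefficient $f$ only enters the (1,2) entry of $J^*(E_2)$, which is irrelevant for the spectrum of a triangular matrix. This is why the stability threshold for $E_2$ is independent of $f$, mirroring the classical $k=0$ Lotka--Volterra criterion. I would conclude by stating that strictness of the inequality gives hyperbolicity, so local stability in the linearized sense transfers to the nonlinear system by the standard Hartman--Grobman theorem; the boundary case $a_2b_1=c_2a_1$ is degenerate and lies outside the iff.

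No significant obstacle is anticipated: the only thing to be careful about is to observe that the off-diagonal $(-a_1fu^*/(fv^*+1)^2 - c_1u^*)$ does not enter, so the presence of fear does not shift the stability boundary of $E_2$ — which is a qualitative fact worth emphasizing when contrasting with the $v$-fears-$u$ case, where the threshold for $\widehat{E}_2$ did depend on the fear coefficient $k$.
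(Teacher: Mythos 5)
Your proposal is correct and follows essentially the same route as the paper's proof (given in the appendix as Lemma~\ref{3eq2}): evaluate the Jacobian at $E_2=(a_1/b_1,0)$, observe that the factor $v^*$ annihilates the bottom-left entry so the matrix is triangular with eigenvalues $-a_1$ and $a_2-\tfrac{c_2a_1}{b_1}$, and conclude that stability is equivalent to $a_2b_1<c_2a_1$. Your additional remarks on hyperbolicity and on the fear parameter $f$ appearing only in the irrelevant off-diagonal entry are accurate and consistent with the paper.
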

\begin{proof}
	For proof details refer to \eqref{3eq2}
\end{proof}

\begin{lemma}
	The boundary equilibrium point $E_3$ is locally stable iff $f>\dfrac{a_1 b_2^2 - a_2 b_2 c_1}{a_2^2 c_1}$.
\end{lemma}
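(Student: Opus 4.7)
The plan is to follow exactly the template used in the preceding stability lemmas for $\widehat{E}_1$, $\widehat{E}_2$, $\widehat{E}_3$ of system \eqref{eq:ODE2}: evaluate the Jacobian $J^*$ at $E_3=(0,a_2/b_2)$, observe that it becomes triangular, and read off the sign condition on the eigenvalues.

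Concretely, I would substitute $u^*=0$, $v^*=a_2/b_2$ into $J^*$. The $(1,2)$ entry $-\dfrac{a_1 f u^*}{(fv^*+1)^2}-c_1u^*$ vanishes because it is proportional to $u^*$, so the matrix is lower triangular. The $(2,2)$ entry becomes $a_2 - 2b_2(a_2/b_2) - c_2\cdot 0 = -a_2 < 0$, giving one automatically negative eigenvalue $\lambda_1 = -a_2$. The $(1,1)$ entry becomes
\[
\lambda_2 \;=\; \frac{a_1}{f(a_2/b_2)+1} - c_1\cdot\frac{a_2}{b_2} \;=\; \frac{a_1 b_2}{fa_2+b_2} - \frac{c_1 a_2}{b_2}.
\]
Local stability (a hyperbolic sink) is then equivalent to $\lambda_2<0$, since $\lambda_1<0$ is automatic.

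The inequality $\lambda_2<0$ rearranges, after clearing the positive denominators $b_2(fa_2+b_2)>0$, to $a_1 b_2^2 < c_1 a_2(fa_2+b_2)$, i.e.\ $a_1 b_2^2 - a_2 b_2 c_1 < c_1 a_2^2 f$, which is exactly
\[
f \;>\; \frac{a_1 b_2^2 - a_2 b_2 c_1}{a_2^2 c_1}.
\]
Since every manipulation is reversible (the denominators are strictly positive under the standing assumption that all parameters are positive), the iff statement follows.

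There is no real obstacle here: the main point is simply recognizing that the Jacobian is triangular at a boundary equilibrium so that the two eigenvalues can be read off directly, mirroring the arguments already given for the $v$-fears-$u$ system. A minor care point is that the threshold $\dfrac{a_1 b_2^2 - a_2 b_2 c_1}{a_2^2 c_1}$ can be negative (when $a_1 b_2 < a_2 c_1$), in which case the inequality $f>\dfrac{a_1 b_2^2 - a_2 b_2 c_1}{a_2^2 c_1}$ is satisfied by every $f\ge 0$ and $E_3$ is stable in the classical (fear-free) sense already; this is consistent with the classical competition theory recap for model \eqref{eq:GeneralEquation}, so no additional case analysis is needed.
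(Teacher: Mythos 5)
Your proposal is correct and follows essentially the same route as the paper's proof in the appendix (Lemma \ref{3eq3}): evaluate the Jacobian of \eqref{eq:ODE3} at $E_3=(0,a_2/b_2)$, note that it is triangular so the eigenvalues are $\lambda_1=-a_2$ and $\lambda_2=\frac{a_1b_2}{fa_2+b_2}-\frac{c_1a_2}{b_2}$, and observe that $\lambda_2<0$ is equivalent to the stated threshold on $f$ after clearing the positive denominators. Your closing remark about the threshold possibly being negative is a sensible sanity check consistent with the classical case, though the paper does not comment on it.
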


\begin{proof}
	For proof details refer to \eqref{3eq3}
\end{proof}

\begin{lemma}\label{lem:coexi_ode2}
	The interior equilibrium $E_4$ exists and is locally stable  if  
	\[ f < \frac{1}{a_1} \Big( \frac{b_1b_2}{c_2} - c_1 \Big).\]
\end{lemma}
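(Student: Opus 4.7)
The plan is to mirror the argument used in Lemma \ref{lem:co_ode_1}, which handled the symmetric case of $v$ fearing $u$. Since the Jacobian of system \eqref{eq:ODE3} has already been written down, I would first use the nullcline equations at an interior equilibrium to simplify it. From the $u$-nullcline, $\frac{a_1}{1+fv^*} = b_1 u^* + c_1 v^*$, and from the $v$-nullcline, $a_2 = b_2 v^* + c_2 u^*$. Substituting these into the diagonal entries of $J^*$ collapses the Jacobian at $E_4$ to
\[
\widehat{J}^*(E_4)=\begin{pmatrix} -b_1 u^* & -\dfrac{a_1 f u^*}{(1+fv^*)^2}-c_1 u^* \\ -c_2 v^* & -b_2 v^* \end{pmatrix}.
\]

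The next step is to examine the trace and determinant. The trace is $-b_1 u^* - b_2 v^* < 0$ for any positive equilibrium, so no assumption is needed there. The determinant is
\[
\mathrm{Det}(\widehat{J}^*(E_4)) = u^* v^* \left[ b_1 b_2 - c_2\left(\dfrac{a_1 f}{(1+fv^*)^2} + c_1\right)\right].
\]
The key observation, exactly as in Lemma \ref{lem:co_ode_1}, is the monotonicity estimate
\[
\dfrac{a_1 f}{(1+f v^*)^2} < a_1 f,
\]
so that the hypothesis $f < \frac{1}{a_1}\bigl(\frac{b_1 b_2}{c_2} - c_1\bigr)$, which rearranges to $c_2(a_1 f + c_1) < b_1 b_2$, immediately forces the bracketed quantity to be positive. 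Combined with the sign of the trace this yields two eigenvalues with negative real parts and hence local asymptotic stability.

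For the existence part of the statement, I would appeal to the polynomial analysis already developed: $v^*$ is a positive root of the cubic \eqref{three_post1} with $u^* = \tfrac{a_2 - b_2 v^*}{c_2}$, and one checks that under the stated bound on $f$ at least one root lies in $(0, a_2/b_2)$ so that $u^* > 0$. Equivalently, one can argue geometrically: the $u$-nullcline $v = \frac{1}{c_1}\bigl(\frac{a_1}{1+fv} - b_1 u\bigr)$ and the $v$-nullcline $v = \frac{a_2 - c_2 u}{b_2}$ must cross in the positive quadrant whenever $f$ is small enough that the former still meets the $u$-axis above the latter's $v$-intercept in the weak-competition configuration; the fear term only shifts the $u$-nullcline downward continuously in $f$, so a transverse intersection persists for small $f$.

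The main obstacle I expect is not the stability computation itself, which is essentially algebraic once the nullcline substitutions are made, but rather cleanly tying existence to the same bound $f < \frac{1}{a_1}(\frac{b_1 b_2}{c_2} - c_1)$. The determinant inequality only guarantees that \emph{if} an interior equilibrium exists it is hyperbolic and stable; one must separately ensure a positive intersection of the nullclines in the relevant parameter regime, and the cleanest route is to invoke Theorem \ref{thm:exist}'s analogue together with Lemma \ref{lem:n3} to rule out the exotic case of three positive roots, and then verify positivity of $u^*$ directly from $u^* = (a_2 - b_2 v^*)/c_2$.
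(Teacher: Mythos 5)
Your proposal matches the paper's own proof (given in the appendix as Lemma \ref{3eq4}): the same reduced Jacobian via the nullcline substitutions, the same trace computation, and the same determinant estimate using $\frac{a_1 f}{(1+fv^*)^2} < a_1 f$ to conclude $\mathrm{Det}>0$ under the stated bound on $f$. Your additional care about the existence half of the claim is well placed — the paper's proof silently assumes a positive interior equilibrium and only verifies stability — but this does not change the fact that your argument is essentially the paper's.
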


\begin{proof}
	For proof details refer to \eqref{3eq4}
\end{proof}

\begin{lemma}\label{lem:strong_ode2}
	The interior equilibrium $E_4$ exists and is a saddle  if  
	\[\Big( \dfrac{b_1 b_2}{c_2} -c_1 \Big)< \dfrac{a_1 c_2^2 f}{(f a_2 +c_2)^2}.\]
\end{lemma}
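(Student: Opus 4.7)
The plan is to mirror the strategy used in the proof of Lemma \ref{ce_strong} for the symmetric case where $v$ feared $u$. I would evaluate the Jacobian at the interior equilibrium $E_4=(u^*,v^*)$, reduce it via the nullcline relations, verify that the trace is negative, and then show that the hypothesis forces the determinant to be negative. Signs of trace and determinant together imply $E_4$ is a saddle.

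First, I would use the two nullcline identities at an interior equilibrium,
\begin{equation*}
\frac{a_1}{1+fv^*} = b_1 u^* + c_1 v^*, \qquad a_2 = b_2 v^* + c_2 u^*,
\end{equation*}
to simplify the diagonal entries of $J^*$. Substituting these into the Jacobian collapses the $(1,1)$ entry to $-b_1 u^*$ and the $(2,2)$ entry to $-b_2 v^*$, giving
\begin{equation*}
J^*(E_4)=
\begin{pmatrix}
-b_1 u^* & -\dfrac{a_1 f u^*}{(1+f v^*)^2}-c_1 u^* \\[1ex]
-c_2 v^* & -b_2 v^*
\end{pmatrix}.
\end{equation*}
From this, the trace is $-b_1 u^* - b_2 v^*$, which is negative for any interior equilibrium, so the trace inequality is immediate.

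The substantive step is the determinant. A direct expansion yields
\begin{equation*}
\det J^*(E_4) = u^* v^* \left\{ b_1 b_2 - c_1 c_2 - \frac{a_1 c_2 f}{(1+f v^*)^2} \right\},
\end{equation*}
so we need the bracketed quantity to be negative, equivalently
\begin{equation*}
\frac{b_1 b_2}{c_2} - c_1 < \frac{a_1 f}{(1+f v^*)^2}.
\end{equation*}
The hypothesis of the lemma controls the right-hand side from below, but only through the shape $\dfrac{a_1 c_2^2 f}{(f a_2+c_2)^2}$, so the key technical point is a uniform a priori bound on $v^*$ coming from the second nullcline: since $a_2 = b_2 v^* + c_2 u^*$ at an interior point with $u^*>0$, we obtain an inequality of the form $v^* < a_2/c_2$ in the relevant parameter regime (or the analogous bound $v^* \le a_2/b_2$, up to what the authors likely intended by the shape of the bound). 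Feeding this bound into $(1+fv^*)^2$ gives
\begin{equation*}
\frac{a_1 f}{(1+f v^*)^2} > \frac{a_1 c_2^2 f}{(f a_2 + c_2)^2},
\end{equation*}
and chaining with the hypothesis produces $\det J^*(E_4)<0$. Combined with the trace computation, this identifies $E_4$ as a saddle.

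The main obstacle, as in the earlier lemma, is pinning down the right a priori bound on $v^*$ from the nullclines so that the crude estimate matches the specific algebraic form on the right-hand side of the hypothesis. Everything else is either algebraic simplification using the nullcline identities or a direct reading of the trace and determinant.
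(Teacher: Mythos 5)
Your proposal follows essentially the same route as the paper's proof (Lemma \ref{3eq5} in the appendix): trace $-b_1u^*-b_2v^*<0$, determinant $u^*v^*\bigl[b_1b_2-c_2\bigl(\tfrac{a_1f}{(1+fv^*)^2}+c_1\bigr)\bigr]$, and an a priori bound on $v^*$ from the $v$-nullcline to chain the stated hypothesis into $\det<0$. The one point you leave open --- which bound on $v^*$ yields exactly the shape $\tfrac{a_1c_2^2f}{(fa_2+c_2)^2}$, namely $v^*<a_2/c_2$ rather than the natural nullcline bound $v^*<a_2/b_2$ --- is also the weak spot in the paper's own argument, which asserts $v^*=\tfrac{a_2}{c_2}-c_2u^*$ even though the $v$-nullcline actually gives $v^*=(a_2-c_2u^*)/b_2$.
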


\begin{proof}
	For proof details refer to \eqref{3eq5}
\end{proof}

\begin{lemma}\label{lem:dc2}
	Consider the ODE system \eqref{eq:ODE3}. There do not exist any periodic orbits for the system, for any values of the fear parameter $f$.
\end{lemma}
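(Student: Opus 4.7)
The plan is to mirror exactly the Dulac-function argument used for Lemma \ref{lem:dc1}, using the same auxiliary function $\phi(u,v)=\frac{1}{uv}$, which is smooth and nonvanishing on the open first quadrant $Q=\{(u,v): u>0, v>0\}$. Since the coordinate axes are invariant under \eqref{eq:ODE3} and all boundary equilibria lie on them, any hypothetical periodic orbit must lie entirely in $Q$, so it suffices to apply Dulac's criterion there.

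First, I would write the vector field as $G_1(u,v)=\frac{a_1 u}{1+fv}-b_1 u^2 - c_1 uv$ and $G_2(u,v)=a_2 v - b_2 v^2 - c_2 uv$, and then compute
\[
\phi G_1 \;=\; \frac{a_1}{v(1+fv)}-\frac{b_1 u}{v}-c_1, \qquad \phi G_2 \;=\; \frac{a_2}{u}-\frac{b_2 v}{u}-c_2.
\]
The key observation, which makes the same Dulac function work despite the fear modification, is that the fear factor $\frac{1}{1+fv}$ depends only on $v$; therefore the $f$-dependent term in $\phi G_1$ vanishes upon differentiating in $u$. Explicitly, $\partial_u(\phi G_1)=-b_1/v$ and $\partial_v(\phi G_2)=-b_2/u$, so
\[
\frac{\partial(\phi G_1)}{\partial u}+\frac{\partial(\phi G_2)}{\partial v} \;=\; -\frac{b_1}{v}-\frac{b_2}{u} \;<\;0 \qquad \text{on } Q.
\]

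Then by Dulac's criterion (see \cite{perko2013differential}), the system has no closed orbit contained in the simply connected region $Q$, and combined with the invariance of the axes noted above, the system \eqref{eq:ODE3} admits no periodic orbits for any $f\ge 0$. I do not expect any real obstacle: the Dulac function $1/(uv)$ was tailored in Lemma \ref{lem:dc1} to cancel the competitive cross-terms, and because the fear factor here enters only through a function of $v$ multiplying the $u$-equation, the relevant partial derivative $\partial_u$ kills it automatically. The only point worth stating carefully is that the divergence has a strict sign (not merely a sign almost everywhere), which is immediate since $b_1,b_2>0$ and $u,v>0$ on $Q$.
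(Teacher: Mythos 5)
Your proposal is correct and is exactly the argument the paper intends: its proof of this lemma simply defers to the Dulac computation of Lemma~\ref{lem:dc1}, and your calculation with $\phi(u,v)=\frac{1}{uv}$ yielding divergence $-\frac{b_1}{v}-\frac{b_2}{u}<0$ is the same one, correctly adapted to system~\eqref{eq:ODE3}. Your added observation that the fear factor depends only on $v$ and is therefore annihilated by $\partial_u$ is a nice explicit justification that the paper leaves implicit.
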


\begin{proof}
	The proof follows as in the proof of Lemma \ref{lem:dc1}.
\end{proof}

\begin{figure}[h]
	\begin{subfigure}[b]{.475\linewidth}
		\includegraphics[width=\linewidth,height=2in]{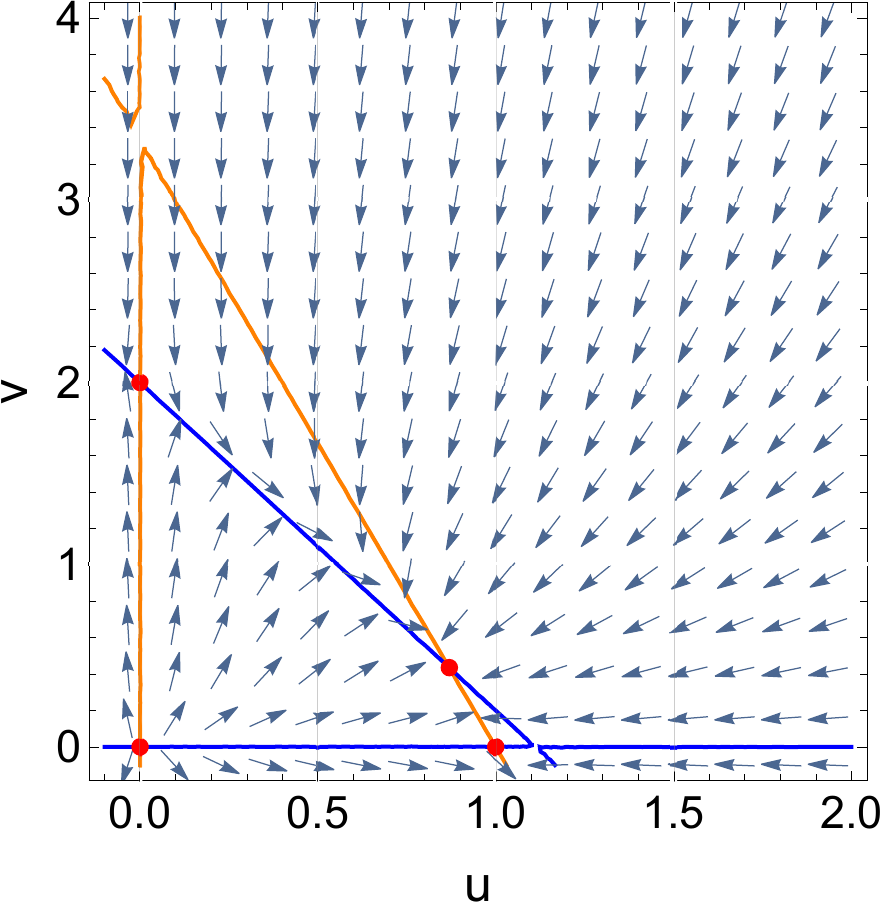}
		\caption{$f=0$ }
	\end{subfigure}
	\hfill
	\begin{subfigure}[b]{.49\linewidth}
		\includegraphics[width=\linewidth,height=2in]{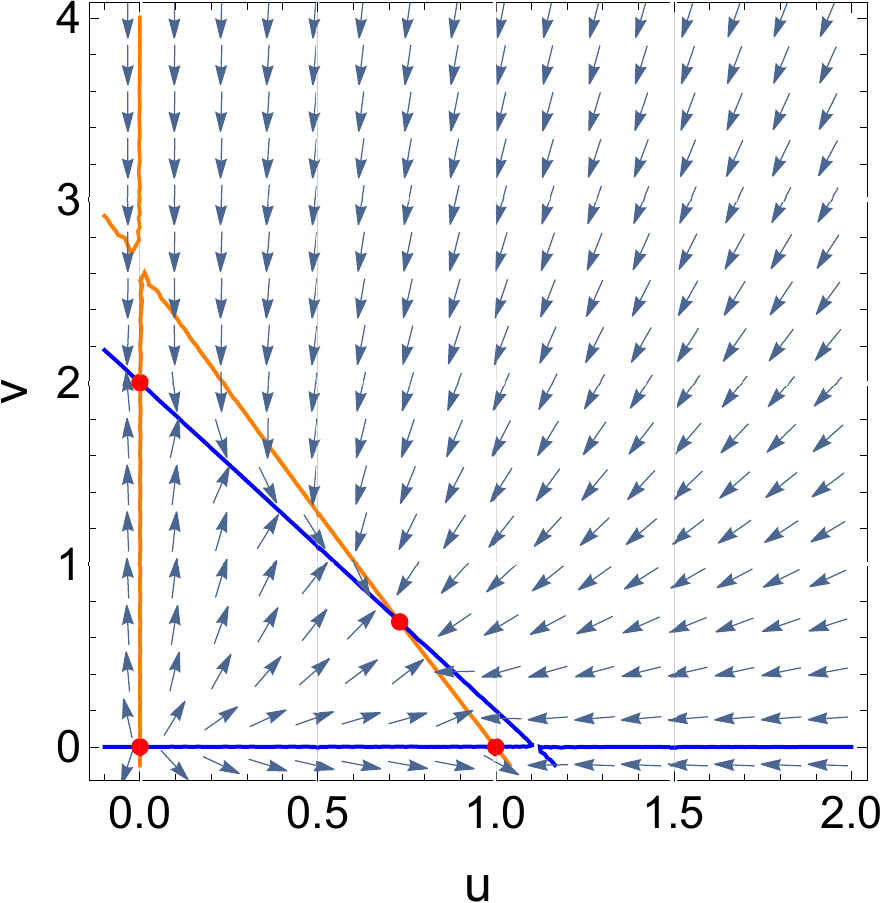}
		\caption{$f=0.1$}
	\end{subfigure}
	\newline
	\begin{subfigure}[b]{.475\linewidth}
		\includegraphics[width=\linewidth,height=2in]{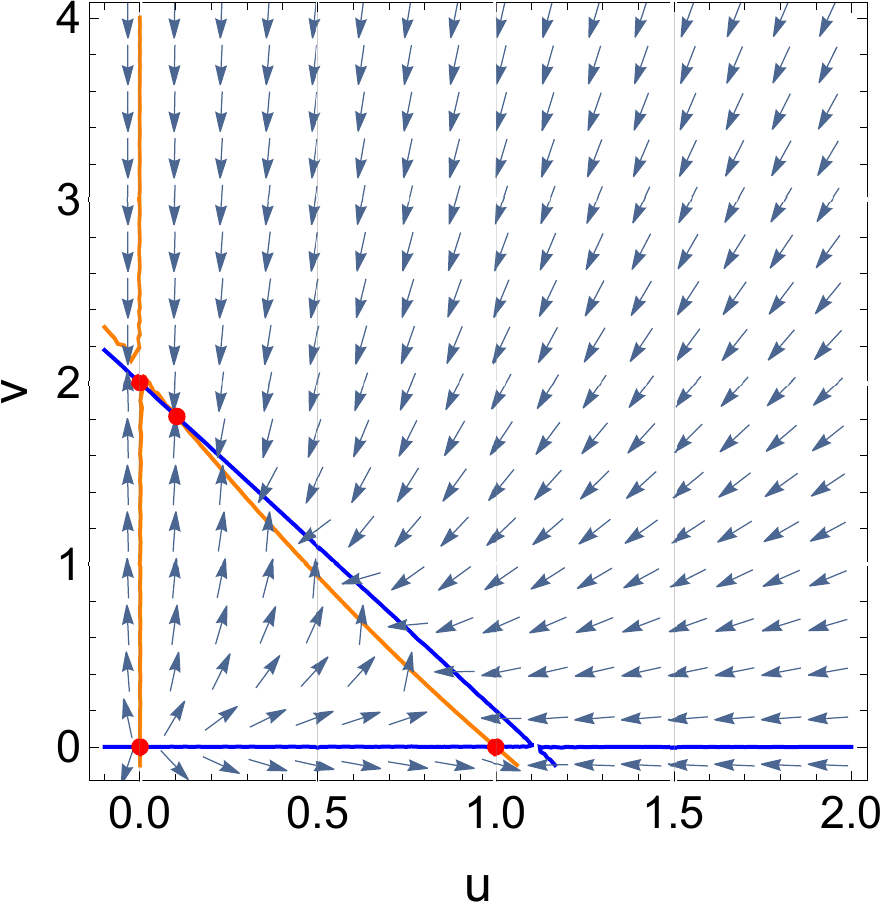}
		\caption{$f=0.3$ }
	\end{subfigure}
	\hfill
	\begin{subfigure}[b]{.49\linewidth}
		\includegraphics[width=\linewidth,height=2in]{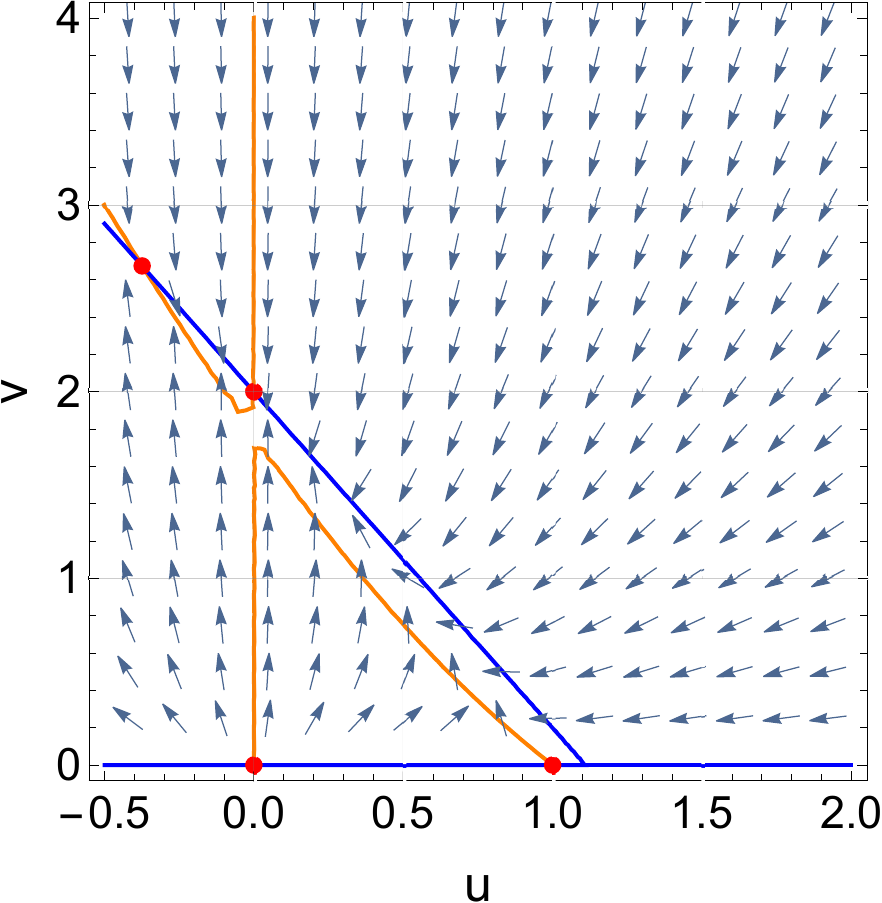}
		\caption{$f=0.5$}
	\end{subfigure}
	\caption{ Dynamics under weak competition parametric restriction with $k=0$ in each plot. Here $(u^*,v^*)$ is a sink and both $(0,v^*)$ and $(u^*,0)$  are saddles in $(A), (B)$ and $(C)$. $(0,0)$ is a source.  The orange curve is the u-nullcline and blue curve is the v-nullcline. In $(D)$,  $(0,v^*)$ changes from a saddle to a sink. Parameters used are $a_1=1, a_2=2,b_1=1, b_2=1, c_1=0.3, c_2=1.8$.}
	\label{fig:Fig2}
\end{figure}

\begin{figure}[h]
	\begin{subfigure}[b]{.475\linewidth}
		\includegraphics[width=\linewidth,height=2in]{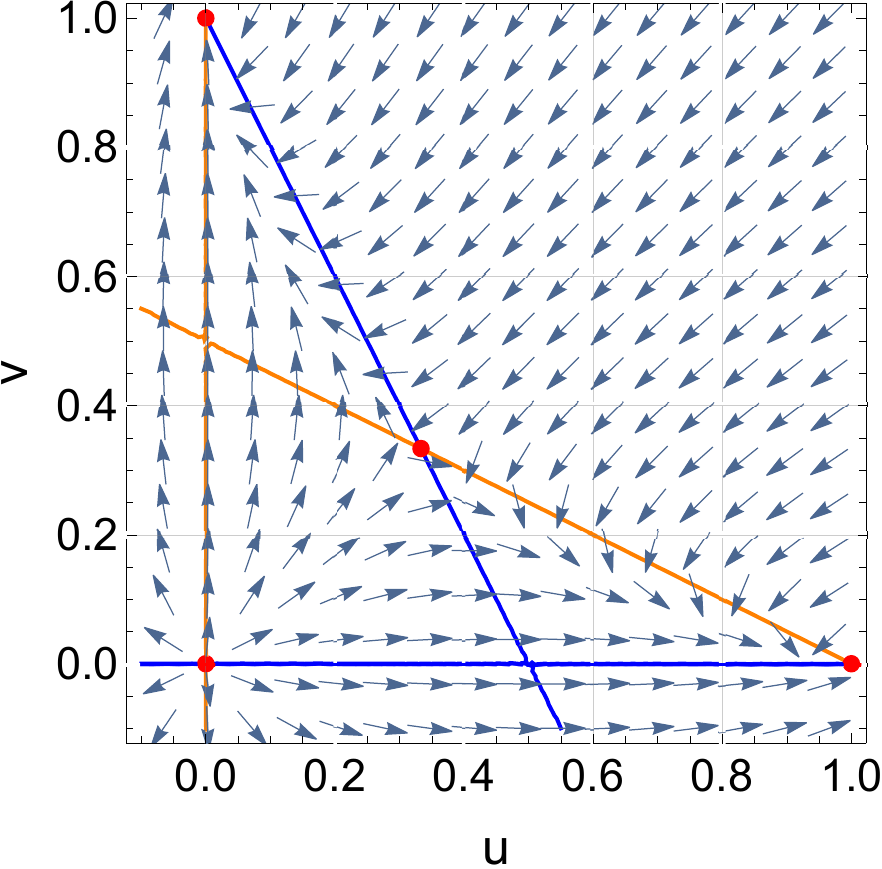}
		\caption{$ f=0$ }
	\end{subfigure}
%	\hfill
%	\begin{subfigure}[b]{.475\linewidth}
%		\includegraphics[width=\linewidth,height=2in]{}
%		\caption{$f=1$}
%	\end{subfigure}
%	\newline
%	\begin{subfigure}[b]{.49\linewidth}
%		\includegraphics[width=\linewidth,height=2in]{}
%		\caption{$f=5$ }
%	\end{subfigure}
	\hfill
	\begin{subfigure}[b]{.49\linewidth}
		\includegraphics[width=\linewidth,height=2in]{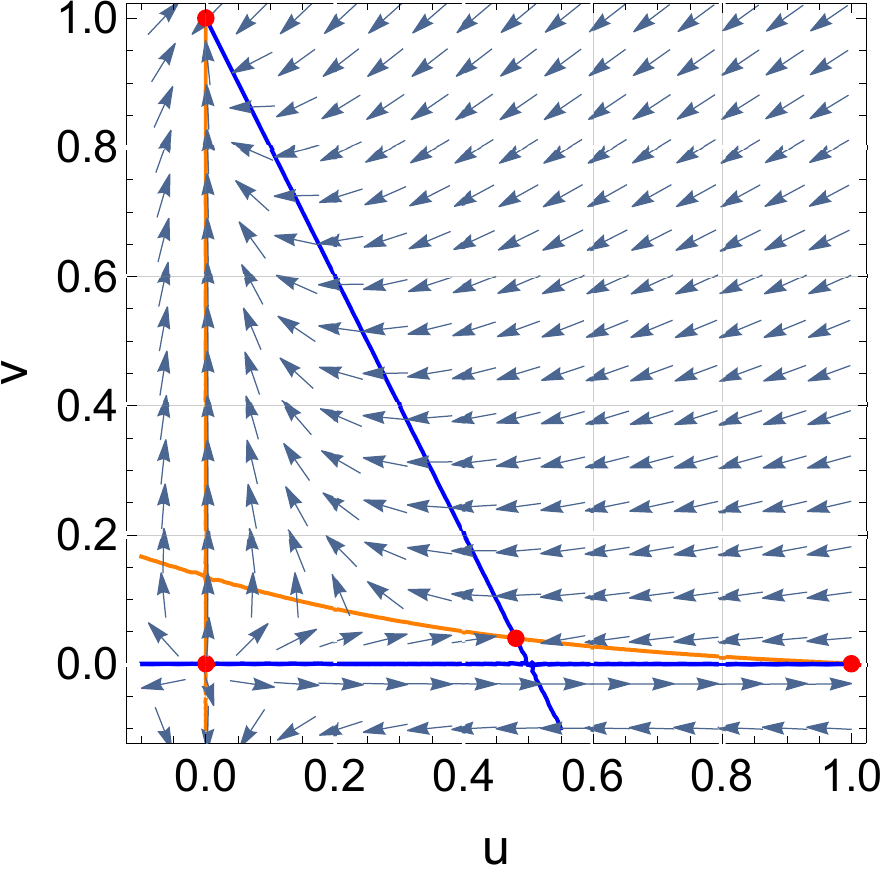}
		\caption{$f=20$}
	\end{subfigure}
	\caption{Phase plots showing dynamics under strong competition parametric restriction with $k=0$. Parameters used are $a_1=1, a_2=1,b_1=1, b_2=1, c_1=2, c_2=2$.}
	\label{Fig3}
\end{figure}

\subsection{Bifurcation Analysis}
A bifurcation is said to occur in a dynamical system when the behavior of solutions changes when a parameter is varied. Bifurcation analysis is useful in understanding and measuring these qualitative changes as the system switches from stable to unstable and vice-versa. 

\subsubsection{Saddle-node bifurcation}

The following theorem is connected to the existence of a saddle-node bifurcation for the growth rate $a_1$  when there is no fear effect in competitor $u$.
\begin{theorem}\label{thm:sad}
	The ODE system \eqref{eq:ODE2} undergoes a saddle-node bifurcation around $E_4^*$ at $a_1=a_1^*$ when the conditions $det(J^{*})=0$ and $tr(J^{*})<0$ hold for the system parameters.
\end{theorem}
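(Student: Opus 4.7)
The plan is to invoke Sotomayor's theorem for saddle-node bifurcations, treating $a_1$ as the bifurcation parameter in the planar vector field
\[ F(u,v;a_1) = \Big( a_1 u - b_1 u^2 - c_1 uv,\ \tfrac{a_2 v}{1+ku} - b_2 v^2 - c_2 uv \Big)^{T}. \]
The hypothesis $\det(\widehat{J}^{*}(E_4^*)) = 0$ at $a_1 = a_1^{*}$ guarantees that $0$ is an eigenvalue of $\widehat{J}^{*}$, and the hypothesis $\operatorname{tr}(\widehat{J}^{*}) < 0$ guarantees that the other eigenvalue equals the trace and is strictly negative, so the zero eigenvalue is simple. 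The saddle-node branch, rather than a transcritical or pitchfork branch, is then singled out by verifying Sotomayor's two transversality conditions at $(E_4^{*}, a_1^{*})$.

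First I would compute a right null vector $V = (V_1, V_2)^{T}$ of $\widehat{J}^{*}$ and a left null vector $W = (W_1, W_2)^{T}$ of $\widehat{J}^{*}$ directly from the explicit Jacobian \eqref{jacob_second}; both exist and are unique up to scaling because the zero eigenvalue is simple. Concretely, from the first row of $\widehat{J}^{*}$ one reads $V_2 = -\tfrac{b_1 u^{*}}{c_1 u^{*}}\, V_1 \cdot(\ldots)$-type relations, and similarly for $W$ from the first column of $\widehat{J}^{*}$. Next I would check the first Sotomayor condition: $F_{a_1}(E_4^{*}, a_1^{*}) = (u^{*},\, 0)^{T}$, so
\[ W^{T} F_{a_1}(E_4^{*}, a_1^{*}) = W_1 u^{*}, \]
which is nonzero provided $W_1 \neq 0$ and $u^{*} > 0$; both follow from the structure of $\widehat{J}^{*}$ at an interior equilibrium (nondegeneracy of the off-diagonal entries, which are negative and nonzero for $u^{*}, v^{*} > 0$).

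The second and harder Sotomayor condition requires
\[ W^{T} \bigl[ D^{2} F(E_4^{*}, a_1^{*})(V, V) \bigr] \neq 0. \]
I would compute the Hessian tensor of each component of $F$. The quadratic terms $-b_1 u^2$, $-c_1 uv$, $-b_2 v^2$, $-c_2 uv$ contribute constant second derivatives, while the fear term $\tfrac{a_2 v}{1+k u}$ contributes the nontrivial second-order pieces
\[ \partial_{uu}\!\Big(\tfrac{a_2 v}{1+ku}\Big) = \tfrac{2 a_2 k^{2} v}{(1+ku)^{3}}, \qquad \partial_{uv}\!\Big(\tfrac{a_2 v}{1+ku}\Big) = -\tfrac{a_2 k}{(1+ku)^{2}}, \]
evaluated at $(u^{*}, v^{*})$. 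Substituting these into $D^{2}F(V,V)$ and contracting with $W$ yields an explicit rational expression in the system parameters; the final step would be to argue that this expression is generically nonzero, and in particular is nonzero under the two standing hypotheses together with $u^{*}, v^{*} > 0, k > 0$.

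The main obstacle is the last step: the nondegeneracy check $W^{T} D^{2} F(V,V) \neq 0$ produces a somewhat unwieldy algebraic expression because the fear nonlinearity $1/(1+ku)$ mixes with the Lotka--Volterra quadratic terms, and the entries of $V$ and $W$ themselves depend on $(u^{*}, v^{*}, k)$ through \eqref{jacob_second}. I would handle this by normalizing $V$ and $W$ so that $V_1 = 1$ and $W_2 = 1$, then using the relation $\det(\widehat{J}^{*}) = 0$ to eliminate one Jacobian entry in favor of the others, collapsing the expression to a single term whose sign is controlled by $k > 0$ and $v^{*} > 0$. With the two Sotomayor conditions confirmed, the theorem concludes that system \eqref{eq:ODE2} undergoes a saddle-node bifurcation at $E_4^{*}$ as $a_1$ crosses $a_1^{*}$.
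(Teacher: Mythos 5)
Your proposal follows essentially the same route as the paper: both invoke Sotomayor's theorem with $a_1$ as the bifurcation parameter, use $\det(J^*)=0$ and $\operatorname{tr}(J^*)<0$ to obtain a simple zero eigenvalue, compute the right and left null vectors of $J^*$, verify $W^{T}F_{a_1}=W_1u^*\neq 0$, and then assert the quadratic nondegeneracy condition $W^{T}D^2F(V,V)\neq 0$ (which the paper likewise states as nonzero without a full sign analysis). The approaches match in structure and in level of rigor.
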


\begin{proof}
	We shall use the Sotomayor's Theorem \cite{perko2013differential}  to show the occurrence of a saddle-node bifurcation at $a_1=a_1^*$. At $a_1=a_1^*$, we can have $det(J^{*})=0$ and $tr(J^{*})<0$. This implies that, $det(J^{*})$ admits a zero eigenvalue. Now let $P=(p_1,p_2)^T$ and $Q=(q_1,q_2)^T$ be the eigenvectors of $J^{*}$ and $J^{*T}$ corresponding to the zero eigenvalue respectively.
	
	We have that, $P=\left(\dfrac{-B}{A},1 \right)^T$ and $Q=\left(\dfrac{-A}{F},1 \right)^T$ \\where 
	$A=a_1-2b_1 u^*-c_1 v^*$, $B=-c_1u^*$, $F=-\dfrac{a_2 k v^*}{(1+ku^*)^2}-c_2 v^*$.
	
	Now, let $X=(X_1,X_2)^T$ where 
	\begin{equation*}
		\begin{split}
			X_1&= a_1u-b_1 u^2-c_1uv,\\
			X_2&= \dfrac{a_2 v}{1+ku}-b_2v^2-c_2 uv.
		\end{split}
	\end{equation*}
	
	Furthermore, $$Q^TX_{a_1}(E_3^*,a_1)=\left(\dfrac{-A}{F},1 \right) \left(u^*,0 \right)^T=-\dfrac{Au^*}{F}\neq 0$$ and 
	\begin{align*}
		Q^T[D^2X(E_3^*,a_1)(P,P)] &=\left(\dfrac{-A}{F},1 \right) \left(\dfrac{2B}{A}\left(b_1+c_1 \right), 2\left(\dfrac{a_2 k B}{A \left(1+ku^* \right)^2} \left[1-\dfrac{kv^*}{1+ku^*} \right ] -b_2 \right)  \right)^T\\
		&\neq 0.
	\end{align*}
	Therefore by the Sotomayor's theorem system \eqref{eq:ODE2} undergoes a saddle-node bifurcation at $a_1=a_1^*$ around $E_4^*$.
\end{proof}

\begin{remark}
	Consider the case of two interior equilibria, such as in Fig. \ref{fig:ode_two_postive}. Then decreasing the fear coefficient $k$ results in a saddle-node bifurcation, where the interior equilibrium (the saddle $E_{4}$ and the node $E_{5}$) collide and disappear, resulting in the boundary equilibrium $E_{2} = (u^{*},0)$ becoming globally asymptotically stable. Thus in this setting, a certain critical level of ``fear" can maintain a co-existence state, but fear less than this level takes the system back to a competitive exclusion type scenario. This transition occurs via a saddle-node bifurcation.
	This can be rigorously proven by adopting the methods of Theorem \ref{thm:sad}, to the parameter $k$ instead of using $a_{1}$.
\end{remark}

\subsubsection{Transcritical bifurcation}

\begin{theorem}
\label{thm:tc}
The ODE system \eqref{eq:ODE3} experiences a transcritical bifurcation around $E_3^*$ at $f=f^*=\dfrac{b_2 \left(a_1 b_2-a_2 c_1\right)}{a_2^2 c_1}$ and when $\left(\dfrac{a_1f}{\left(1+fv^*\right)^2}+c_1-\dfrac{b_1 b_2}{c_2} \right) \neq 0$.
\end{theorem}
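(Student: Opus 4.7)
The plan is to apply Sotomayor's theorem for a transcritical bifurcation, following the same template as the proof of Theorem~\ref{thm:sad} but with bifurcation parameter $f$ in place of $a_1$ and bifurcation point $E_3 = (0, a_2/b_2)$. First I would observe that the Jacobian $J^*(E_3, f)$ is upper triangular, since $u = 0$ at $E_3$ kills the $(1,2)$ entry of the general Jacobian. Its eigenvalues are therefore $-a_2 < 0$ and $\dfrac{a_1 b_2}{b_2 + f a_2} - \dfrac{c_1 a_2}{b_2}$; setting the second one to zero and solving for $f$ recovers exactly $f^* = \dfrac{b_2 (a_1 b_2 - a_2 c_1)}{a_2^2 c_1}$, so $J^*(E_3, f^*)$ has a simple zero eigenvalue. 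A right null vector $P$ of $J^*(E_3,f^*)$ and a left null vector $Q$ of $J^{*T}(E_3, f^*)$ can then be read off directly; because the zero eigenvalue is carried by the first row, one obtains, after rescaling, $P = (1,\, -c_2/b_2)^T$ and $Q = (1,\, 0)^T$.

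Next I would verify the three Sotomayor conditions that single out a transcritical bifurcation. Writing $X = (X_1, X_2)^T$ for the right-hand side of \eqref{eq:ODE3}, the parameter derivative is $X_f = \bigl(-a_1 u v /(1+fv)^2,\, 0\bigr)^T$, which vanishes at $E_3$ since $u = 0$ there. Hence $Q^T X_f(E_3, f^*) = 0$, precisely the condition which rules out a saddle-node. For the transversality condition $Q^T \bigl[ D X_f(E_3, f^*) \, P\bigr] \neq 0$, only the $(1,1)$ entry of $D X_f(E_3, f^*)$ is nonzero, and a short calculation gives $-a_1 a_2 b_2 / (b_2 + f^* a_2)^2 \neq 0$. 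Finally, since $Q = (1,0)^T$, the quadratic form $Q^T[D^2 X(E_3, f^*)(P,P)]$ reduces to $P^T H_1 P$, where $H_1$ is the Hessian of $X_1$. At $u = 0$ this Hessian has entries $\partial_{uu} X_1 = -2 b_1$, $\partial_{uv} X_1 = -\dfrac{a_1 f}{(1+fv^*)^2} - c_1$, and $\partial_{vv} X_1 = 0$, so that plugging in $P = (1, -c_2/b_2)^T$ yields
\[
Q^T [D^2 X(E_3, f^*)(P,P)] \;=\; \frac{2 c_2}{b_2}\left(\frac{a_1 f}{(1+fv^*)^2} + c_1 - \frac{b_1 b_2}{c_2}\right),
\]
which is nonzero exactly under the hypothesis stated in the theorem.

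The main obstacle is not conceptual but arithmetic bookkeeping: one has to keep track of which Hessian entries actually survive at $u = 0$, and in particular confirm that the $\partial_{vv} X_1$ contribution drops out, so that only the diagonal term $p_1^2 \partial_{uu} X_1$ and the cross term $2 p_1 p_2 \partial_{uv} X_1$ combine to produce the precise algebraic expression appearing in the statement. Once this reduction is organised cleanly, all three Sotomayor conditions for a transcritical bifurcation hold simultaneously at $f = f^*$, and the theorem follows.
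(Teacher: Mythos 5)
Your proposal is correct and follows essentially the same route as the paper: Sotomayor's theorem at $E_3$ with the same critical value $f^*$, the left null vector $(1,0)^T$, and a right null vector that is just a scalar multiple of the paper's $(-b_2/c_2,1)^T$, so your final quadratic-form expression differs from the paper's only by the harmless prefactor $2c_2/b_2$ versus $2b_2/c_2$. One trivial slip: vanishing of the $(1,2)$ entry makes the Jacobian at $E_3$ \emph{lower} triangular, not upper triangular, though this does not affect your eigenvalue computation.
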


\begin{proof}
An evaluation of the Jacobian matrix for system \eqref{eq:ODE3} at $E_3$ with $f^*=\dfrac{b_2 \left(a_1 b_2-a_2 c_1\right)}{a_2^2 c_1}$ yields
\begin{equation}\label{transf}
J_f^*=
\left(
\begin{array}{cc}
     0& 0 \\
 -\dfrac{a_2c_2}{b_2} & -a_2 \\
\end{array}
\right).
\end{equation}
A calculation of the  eigenvalues of the Jacobian matrix in Eq. (\ref{transf}) are $\lambda_1=0$ and $\lambda_2=-a_2$. Next, we let $G=(g_1,g_2)^T$ and $H=(h_1,h_2)^T$ denote the eigenvectors corresponding  to the zero eigenvalue of the matrices $J_f^{*}$ and $J_f^{*T}$ respectively.

We have $G=\left(-\dfrac{b_2}{c_2},1 \right)^T$ and $H=\left(1,0 \right)^T$.  Now, let $S=(S_1,S_2)^T$ where 
	\begin{equation*}
		\begin{split}
			S_1&= \dfrac{a_1u}{1+fv}-b_1 u^2-c_1uv,\\
			S_2&= a_2 v-b_2v^2-c_2 uv.
		\end{split}
	\end{equation*}
	The next step is to validate the transversality conditions using the Sotomayor's theorem \cite{perko2013differential}. Now,
	$$H^TR_{f^*}(E_3^*,f)=\left(1,0 \right) \left(0,0 \right)^T= 0.$$
	Also,
	\begin{align*}
H^{T}\left[DS_{f}\left(E_3,f^* \right)G\right] &= \left(
\begin{array}{cc}
 1 & 0 \\
\end{array}
\right) 
\left(
\begin{array}{ccc}
 -\dfrac{a_1 v^*}{\left(1+fv^* \right)^2}& 0  \\
  0 & 0 \\
\end{array}
\right)
\left(
\begin{array}{ccc}
 w_1 \\
  w_2 \\
\end{array}
\right) \\
&=\dfrac{a_1b_2v^*}{c_2\left(1+fv^* \right)^2}\neq 0.
\end{align*}

and 
\begin{equation*}
\begin{split}
H^{T}\left[D^2 S\left(E_3,f^* \right)(G,G)\right] &= \left(
\begin{array}{cc}
 1 & 0 \\
\end{array}
\right) 
\left(
\begin{array}{ccc}
 \dfrac{2b_2}{c_2}\left(\dfrac{a_1f}{\left(1+fv^*\right)^2}+c_1-\dfrac{b_1 b_2}{c_2} \right)  \\
  0 \\
\end{array}
\right) \\
& = \dfrac{2b_2}{c_2}\left(\dfrac{a_1f}{\left(1+fv^*\right)^2}+c_1-\dfrac{b_1 b_2}{c_2} \right)  \neq 0.
\end{split}
\end{equation*}

Therefore by the Sotomayor's theorem system \eqref{eq:ODE3} experiences a transcritical bifurcation at some $f=f^*=\dfrac{b_2 \left(a_1 b_2-a_2 c_1\right)}{a_2^2 c_1}$ around $E_3^*$.
\end{proof}

\begin{remark}
	Consider the case of one interior equilibrium, such as in Fig.~\ref{fig:Fig2} - that is, we are in the weak competition case, when there is no fear or $f=0$. Now increasing the fear coefficient $f$, results in a transcritical bifurcation, where the interior equilibrium (the node $E_{4}$) and the boundary saddle equilibrium $E_{3} = (0, v^{*})$ collide, exchange stability, after which $E_{4}$ now moves to the 2$^{nd}$ quadrant, while $E_{3}$
becomes globally asymptotically stable. Thus in this setting, a certain critical level of ``fear" can move the system from a weak competition setting of coexistence to a competitive exclusion type scenario. This transition occurs via a transcritical bifurcation.
	This is rigorously proven in Theorem \ref{thm:tc}.
\end{remark}

\begin{figure}[h]
	\begin{subfigure}[b]{.475\linewidth}
		\includegraphics[width=\linewidth,height=1.8in]{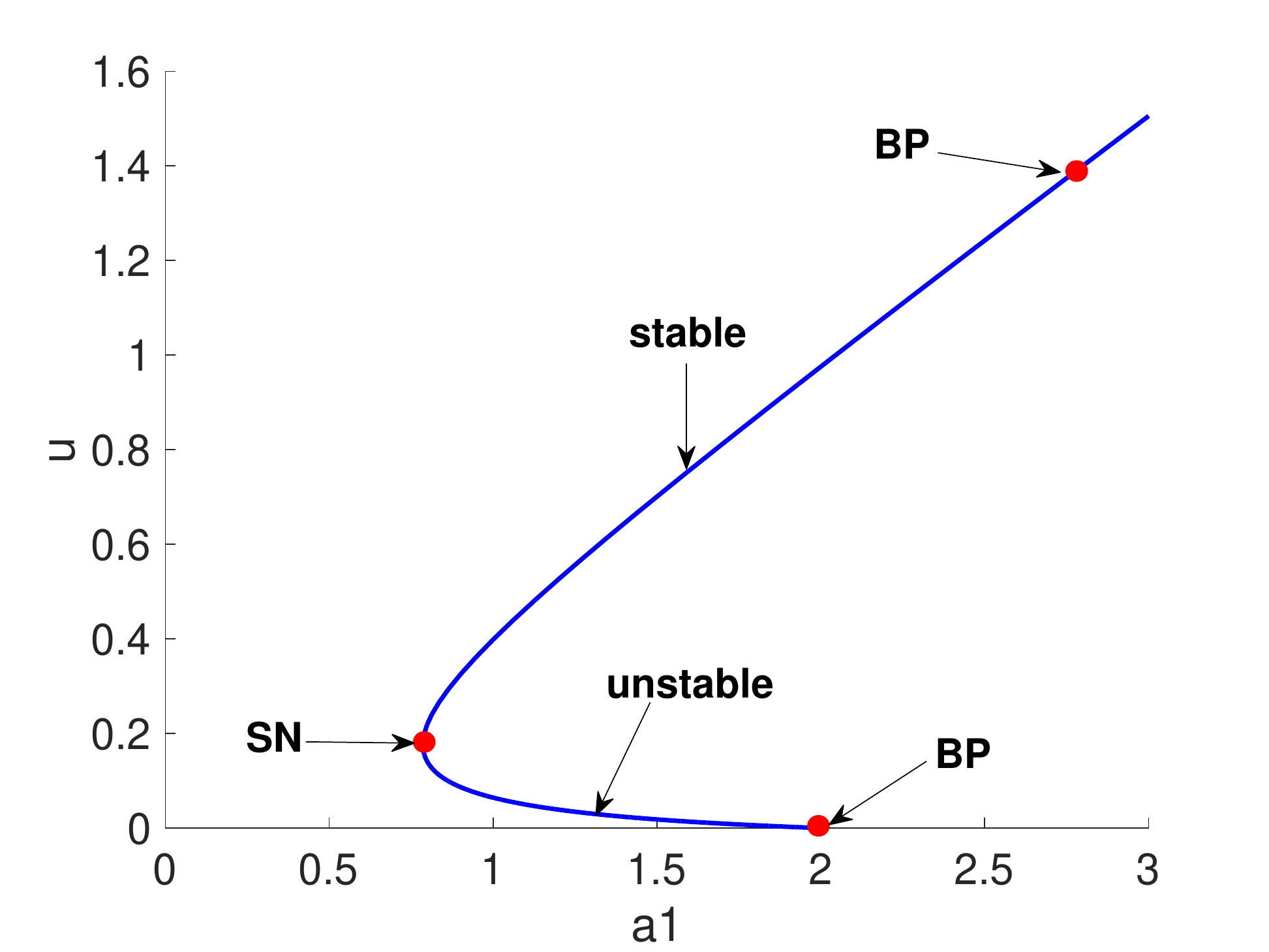}
		\caption{}
	\end{subfigure}
	\hfill
	\begin{subfigure}[b]{.475\linewidth}
		\includegraphics[width=\linewidth,height=1.8in]{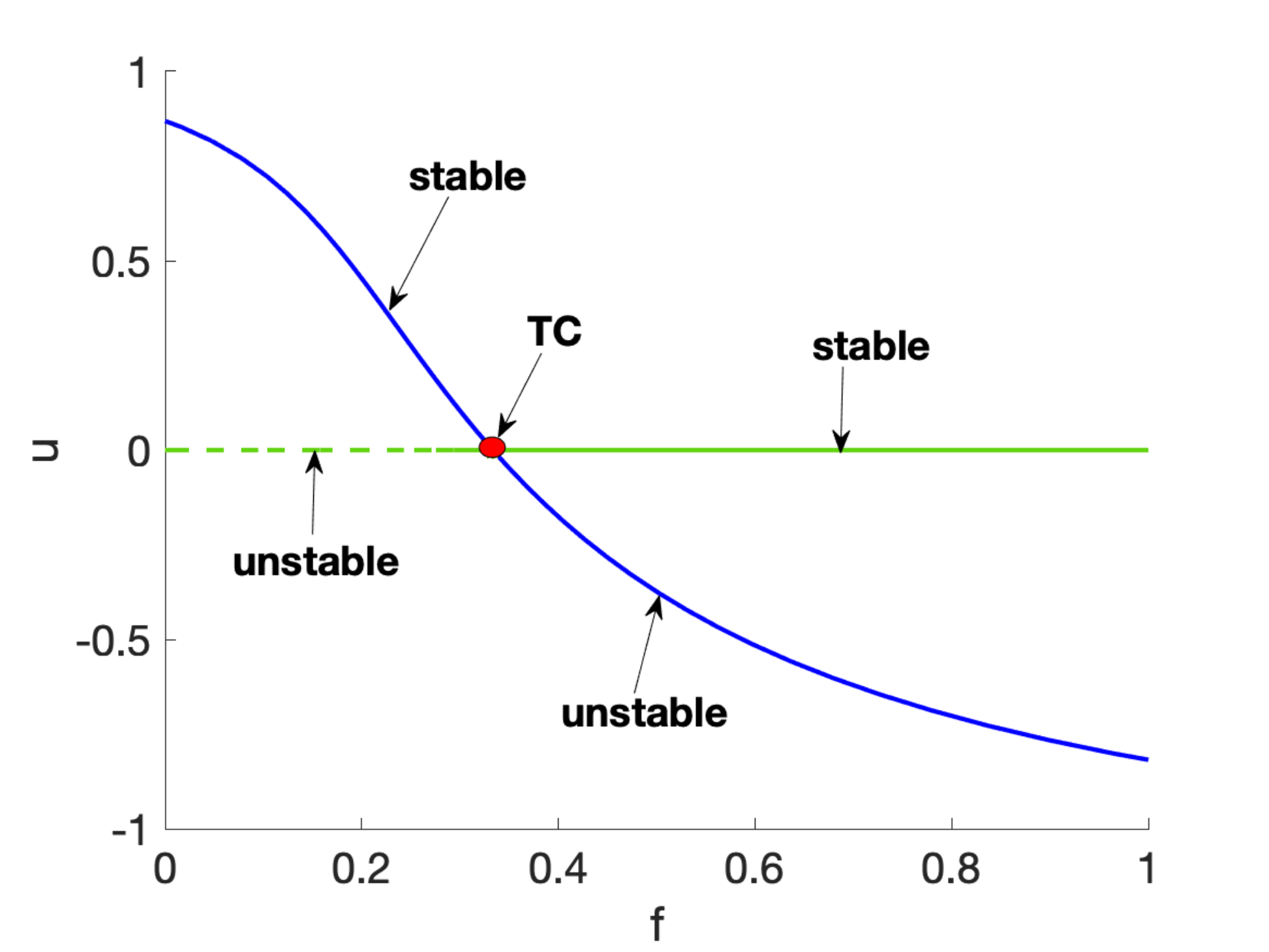}
		\caption{}
	\end{subfigure}
	\caption{Bifurcation diagrams showing the impacts of the intrinsic growth rate $a_1$ and the fear effect parameter $f$. In $(A)$ we observe the occurrence of a saddle-node bifurcation at $a_1=a_1^*=0.785676$. The parameters used are $a_2=2, b_1=2, b_2=0.3, c_1=0.3, c_2=0.05, f=0, k=20$. Initial condition was chosen as $(u_0,v_0)=(0.1,2.5)$. A transcritical bifurcation is also observed in $(B)$ at $f=f^*=0.3333$. The parameters are chosen as $a_1=1,a_2=2,b_1=1,b_2=1$ and $c_1=0.3,c_2=1.8,k=0$. Initial condition was chosen as $(u_0,v_0)=(1,1)$. (Note: TC=Transcritical point, SN=Saddle-Node point, BP=Branch Point.) }
	\label{fig:sad_node}
\end{figure}
%\begin{theorem}
%The model (\ref{ODE2}) undergoes a transcritical bifurcation around $E_2^*$ when $a_1=a_1^*=2b_1u^*$.
%\end{theorem}

\begin{figure}[h]
	\begin{subfigure}[b]{.475\linewidth}
		\includegraphics[width=\linewidth,height=1.8in]{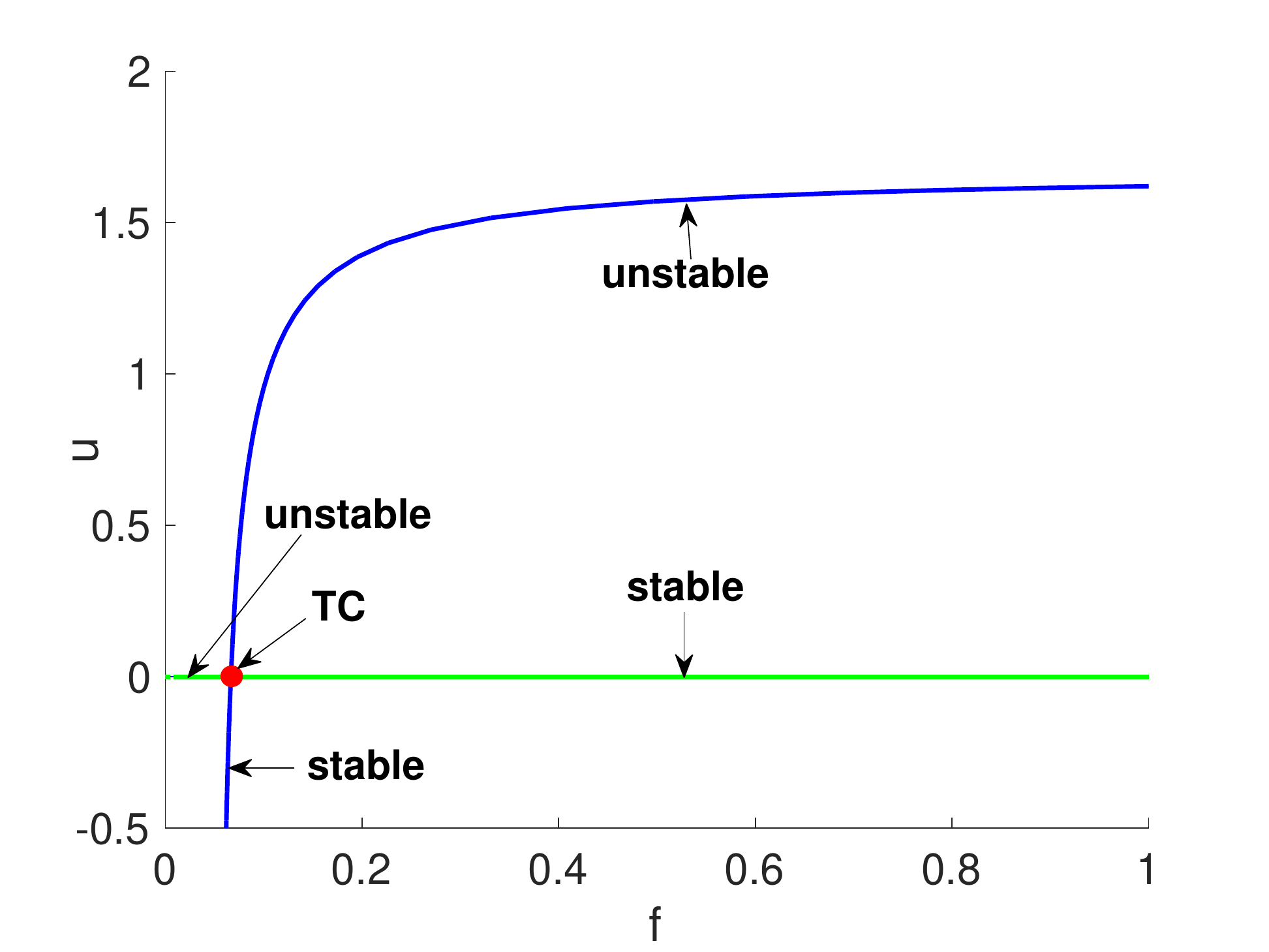}
		\caption{}
		\label{fig:transA}
	\end{subfigure}
	\hfill
	\begin{subfigure}[b]{.475\linewidth}
		\includegraphics[width=\linewidth,height=1.8in]{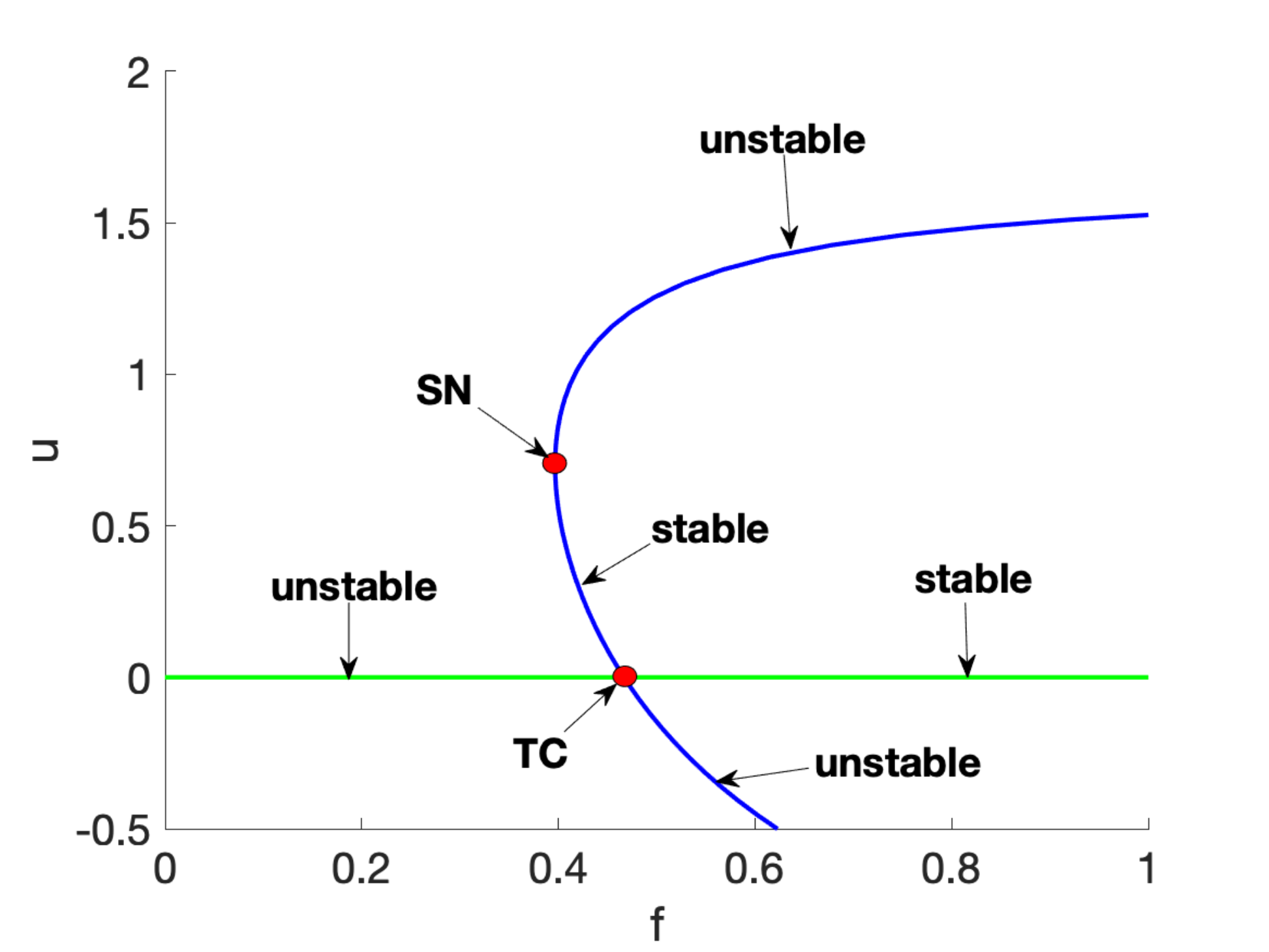}
		\caption{}
		\label{fig:transB}
	\end{subfigure}
	\caption{Bifurcation diagrams showing the impact of the fear parameter $f$ with different parameter sets. The parameters are chosen as $a_2=3, b_2=1, c_1=0.5, c_2=1.8, k=0$. In $(A)$, $a_1=1.8, b_1=1$ and a transcritical bifurcation is observed at $f=f^*=0.0666$. In $(B)$, $a_1=3.6, b_1=1.8$ and we observe the occurrence of a saddle-node at $f=f^*=0.3966$ and a transcritical bifurcation at $f=f^*=0.4666$. The initial condition was chosen as $(u_0,v_0)=(0.5,3)$ for both $(A)$ and $(B)$. (Note: TC=Transcritical point, SN=Saddle-Node point.) }
	\label{fig:trans}
\end{figure}

%\begin{table}[H]
%\caption{Effect of fear on classical competition dynamics}\label{Table:3}
%	\scalebox{0.72}{
%{\begin{tabular}{|c|c|c|c|}
	%	\hline
	%	 after \\: \hline or \cline{col1-col2} \cline{col3-col4} ...
	%	& \mbox{Classical case} & \mbox{$k>0, f=0$} & \mbox{$k=0, f>0$}\\
	%	& &  & &  \\ \hline \hline
	%	(i) & CE $(u^{*},0)$ & & & \\
	%	& &  &  &  \\          \hline
	%	(ii) & CE $(0, v^{*})$&  &   &  \\\
		%&  &  & &  \\ \hline
	%	(iii) & weak competition   &  &   \\
	%	& &  &  &  \\ \hline
	%	(iv) & strong competition & &   &   \\ \\ \hline
	%	\end{tabular}}
	%	}
%\vspace{.1cm}		 
%   \flushleft
%\textbf{Note:}	

%\end{table}

\begin{table}[H]\label{Table:3}
	\caption{Effect of fear on classical competition ODE dynamics for the case of $v$ fearing $u$}
	\scalebox{0.72}{
		\begin{tabular}{|c|c|c|}
			\hline
			& Classical case & $k>0, f=0$\\
			&  &       \\        \hline
			(i) & CE $(u^{*},0)$ &  Species $v$ is competitively excluded for both small and large $k$. \\
			&  &      \\        \hline
			(ii) & CE $(0, v^{*})$ &  (1) If $\frac{a_2 b_1^2 \mathbf{k}}{(b_1 + k a_1)^2}< \frac{b_1 b_2 - c_1 c_2}{c_1}$, then species $u$ is competitively excluded.   \\
			&  &  (2) Possibility of one interior saddle equilibrium, with a large level of $k$.    \\
			&  &  (3) Possibility of two positive interior equilibria, one sink and one saddle, with an intermediate level of $k$.     \\        \hline
			(iii) & weak competition   & (1) If $k< k_c = \frac{b_2b_1-c_2c_1}{a_2c_1}$ via Thm~$\ref{thm:exist}$ yields co-existence.  \\
			&  & (2) If $k>k_c,$ then species $v$ is competitively excluded.      \\        \hline
			(iv) & strong competition &  Interior equilibrium always exists and is a saddle.  \\
			&  &       \\        \hline
	\end{tabular}}
\end{table}

\begin{table}[H]\label{Table:4}
	\caption{Effect of fear on classical competition ODE dynamics for the case of $u$ fearing $v$}
	\scalebox{0.72}{
		\begin{tabular}{|c|c|c|}
			\hline
			& Classical case & $k=0, f>0$\\
			&  &       \\        \hline
			(i) & CE $(u^{*},0)$ &  (1) Species $v$ is competitively excluded, with a small level of $f$.  \\
			&  &  (2) Possibility of one interior saddle equilibrium, with a large level of $f$. \\
			&  & (3) Possibility of two positive interior equilibria, one sink and one saddle, with an intermediate level of $f$.   \\        \hline
			(ii) & CE $(0, v^{*})$ &  Species $u$ is competitively excluded for both small and large $f$.  \\
			&  &     \\  \hline
			(ii) & weak competition &  (1) For small $f$ we have co-existence.   \\
			&  &  (2) For large $f$, $u$ is competitively excluded.    \\  \hline
			iv) & strong competition &   \\
			&  &  Interior equilibrium  always exists and is a saddle.    \\        \hline
	\end{tabular}}
\end{table}

\section{The PDE Case}
Species diffusion is ubiquitous in spatial ecology \cite{Cantrell2003}. Species disperse to find mates, food and shelter \cite{O01}. Such movement is modeled often via reaction diffusion systems \cite{Cantrell2003}. The spatially explicit Lokta-Volterra model, particularly in the case of heterogeneity in spatial resources has been intensely investigated, \cite{Dockery1998, He2013a, Hastings, He2013b, He2016b, Ninomiya1995, Lam2012, Lou2006a, Lou2006b, Lou2008, Ni2012}. Herein, we consider the spatially explicit version of \eqref{eq:ODE2}, resulting in the following reaction diffusion system,

\begin{align}\label{eq:PDE-c}
	\begin{split}
		u_t &= d_1 \Delta u + a_1 u -b_1 u^2 -c_1 uv, \quad x\in \Omega, \\
		v_t &=  d_2 \Delta v +\dfrac{a_2 u}{1+ k v}  -b_2 v^2 -c_2 uv,\quad x\in \Omega,  \\
		\dfrac{\partial u}{\partial \nu} & = \dfrac{\partial v}{\partial \nu} =0, \quad \text{on} \quad \partial \Omega,\\
		u(x,0)&=u_0 (x), \quad v(x,0)=v_0(x).
	\end{split}
\end{align}
 Here $u(x,t), v(x,t)$ are the densities of two competing species, where $v$ is also fearful of $u$. The species diffuse in a bounded domain $\Omega \subset \mathbb{R}^{n}$, with dispersal speeds $d_{1}$ and  $d_{2}$ respectively. We impose no flux Neumann boundary conditions, modeling the effect that the species do not immigrate or emigrate from the domain $\Omega$. 
 We now proceed to study the dynamics of the above model, when various forms of fear are considered.

\subsubsection{Notations and preliminary observations}
\label{notat}

To prove global existence of solutions to \eqref{eq:PDE-c}, it suffices to derive uniform estimate on the $\mathbb{L}^{p}$ norms of the R.H.S. of \eqref{eq:PDE-c}, for some $p > \frac{n}{2}$. 
Classical theory will then yield global existence, \cite{henry}.
The usual norms in spaces $\mathbb{L}^{p}(\Omega )$, $\mathbb{L}^{\infty
}(\Omega ) $ and $\mathbb{C}\left( \overline{\Omega }\right) $ are
respectively denoted by

\begin{equation}
\label{(2.2)}
\left\| u\right\| _{p}^{p} = 
\int_{\Omega }\left| u(x)\right|^{p}dx, \ \left\| u\right\| _{\infty }\text{=}\underset{x\in \Omega }{max}\left|
u(x)\right| .  
\end{equation}

 To this end, we use standard techniques \cite{ Morgan89}. We first recall classical results guaranteeing non-negativity of solutions, local and global existence \cite{P10, Morgan89}:

\begin{lemma}\label{lem:class1}
Let us consider the following $m\times m$ - reaction diffusion system: for all $i=1,...,m,$ 
\begin{equation}
\label{eq:class1}
\partial_t u_i-d_i\Delta u_i=f_i(u_1,...,u_m)~in~ \mathbb{R}_+\times \Omega,~ \partial_\nu u_i=0~ \text{on}~ \partial \Omega, u_i(0)=u_{i0},
\end{equation}
where $d_i \in(0,+\infty)$, $f=(f_1,...,f_m):\mathbb{R}^m \rightarrow \mathbb{R}^m$ is $C^1(\Omega)$ and $u_{i0}\in L^{\infty}(\Omega)$. Then there exists a $T>0$ and a unique classical solution of \ref{eq:class1} on $[0,T).$ If $T^*$ denotes the greatest of these $T's$, then 
\begin{equation*}
\Bigg[\sup_{t \in [0,T^*),1\leq i\leq m} ||u_i(t)||_{L^{\infty}(\Omega)} < +\infty \Bigg] \implies [T^*=+\infty].
\end{equation*}
If the nonlinearity $(f_i)_{1\leq i\leq m}$ is moreover quasi-positive, which means 
$$\forall i=1,..., m,~~\forall u_1,..., u_m \geq 0,~~f_i(u_1,...,u_{i-1}, 0, u_{i+1}, ..., u_m)\geq 0,$$
then $$[\forall i=1,..., m, u_{i0}\geq 0]\implies [\forall i=1,...,m,~ \forall t\in [0,T^*), u_i(t)\geq 0].$$
\end{lemma}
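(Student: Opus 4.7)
The plan is to treat Lemma \ref{lem:class1} as the standard package of local existence, the blow-up dichotomy, and preservation of non-negativity for semilinear parabolic systems, and to handle each of the three assertions in turn using the analytic semigroup $(S_i(t))_{t\ge 0}$ generated by $d_i\Delta$ on $L^\infty(\Omega)$ with homogeneous Neumann boundary conditions. The whole argument is classical (see \cite{henry, P10, Morgan89}), so I only need to sketch how each piece fits.

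For local existence and uniqueness, I would first rewrite \eqref{eq:class1} in its mild (Duhamel) form
\begin{equation*}
u_i(t) = S_i(t) u_{i0} + \int_0^t S_i(t-s) f_i(u_1(s),\ldots,u_m(s))\,ds, \qquad i=1,\ldots,m.
\end{equation*}
Because each $f_i$ is $C^1$, it is locally Lipschitz on bounded sets of $\mathbb{R}^m$. On the Banach space $X_T = C([0,T]; L^\infty(\Omega))^m$, and using that $S_i(t)$ is a contraction on $L^\infty(\Omega)$, the map defined by the right-hand side above is a contraction on a closed ball of $X_T$ provided $T$ is chosen small enough depending on $\|u_{i0}\|_\infty$ and the local Lipschitz constant of $f$. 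Banach's fixed-point theorem then yields a unique mild solution on $[0,T]$, and classical parabolic regularity bootstraps this mild solution to a classical solution.

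For the blow-up dichotomy, define $T^* = \sup\{T>0 : \text{a classical solution exists on } [0,T)\}$ and suppose for contradiction that $T^*<\infty$ and $M := \sup_{t\in[0,T^*),\, i}\|u_i(t)\|_\infty < \infty$. Then for any $t_0<T^*$, one can restart the local existence argument from the initial datum $u(t_0)$; since the local existence time $\tau$ depends only on the $L^\infty$ bound $M$ and the Lipschitz constant of $f$ on the ball of radius $M$, picking $t_0$ sufficiently close to $T^*$ so that $t_0 + \tau > T^*$ produces an extension beyond $T^*$, contradicting the definition of $T^*$. Hence $T^*=\infty$.

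Finally, for preservation of non-negativity under the quasi-positivity assumption, the main obstacle is the coupling between components: one cannot simply apply a scalar maximum principle since each $f_i$ depends on the other $u_j$'s. I would handle this by a standard approximation. Replace $f_i$ by $\tilde f_i^\varepsilon(u) = f_i(u_1^+ \vee \varepsilon ,\ldots, u_{i-1}^+ \vee \varepsilon, u_i, u_{i+1}^+ \vee \varepsilon ,\ldots)$ where $a^+ \vee \varepsilon = \max(a^+,\varepsilon)$, solve the perturbed system, and observe that on the set $\{u_i < 0\}$ the quasi-positivity assumption forces $\tilde f_i^\varepsilon \ge 0$ along the trajectory; multiplying the equation for $u_i^-$ by $u_i^-$ itself and integrating by parts gives
\begin{equation*}
\frac{1}{2}\frac{d}{dt}\|u_i^-(t)\|_2^2 + d_i\|\nabla u_i^-(t)\|_2^2 \le 0,
\end{equation*}
so $u_i^-\equiv 0$ since $u_{i0}\ge 0$. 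Passing $\varepsilon \to 0$ and invoking uniqueness of the original Cauchy problem yields $u_i(t)\ge 0$ on $[0,T^*)$. The truncation/cut-off construction is the delicate step, as one must ensure that the modified nonlinearity is still locally Lipschitz and that the approximations converge to the solution of the original system; both are handled by the continuous dependence already established in the fixed-point step.
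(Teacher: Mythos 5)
The paper does not actually prove this lemma: it is recalled verbatim as a classical result, with the argument delegated to the cited references (\cite{P10}, \cite{Morgan89}), so your sketch supplies a proof the paper never writes out. Your local-existence step (Duhamel formulation plus Banach fixed point on $C([0,T];L^\infty(\Omega))^m$, using the local Lipschitz continuity of the $C^1$ nonlinearity and the $L^\infty$-contractivity of the Neumann heat semigroup, then bootstrapping to classical regularity) and your blow-up alternative (restarting the local theory from $u(t_0)$ with an existence time depending only on the bound $M$ and the Lipschitz constant of $f$ on the ball of radius $M$) are exactly the standard route and are correct, up to the usual cosmetic caveat that the Neumann heat semigroup is not strongly continuous on $L^\infty(\Omega)$, so the fixed-point space needs the customary care (e.g.\ working in $C(\overline{\Omega})$ after an initial smoothing).

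The positivity step, however, contains a genuine error as written. You truncate only the arguments $u_j$ with $j\neq i$ and leave the $i$-th argument of $f_i$ untouched, and then assert that on $\{u_i<0\}$ quasi-positivity forces $\tilde f_i^\varepsilon\ge 0$. Quasi-positivity controls $f_i$ only on the hyperplane where its $i$-th argument is exactly zero; it says nothing about $f_i$ for $u_i<0$. For instance $f_i(u)=u_i$ is quasi-positive yet strictly negative on $\{u_i<0\}$, so the claimed sign condition, and with it the differential inequality $\frac{1}{2}\frac{d}{dt}\|u_i^-\|_2^2+d_i\|\nabla u_i^-\|_2^2\le 0$, fails. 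Two standard repairs: (i) truncate the $i$-th argument as well, replacing it by $u_i^+$, prove nonnegativity for the fully truncated system (where the sign condition does hold on $\{u_i<0\}$), and then invoke uniqueness to identify the truncated solution with the original one once nonnegativity is known; or (ii) keep your truncation but use the local Lipschitz bound $f_i(\dots,u_i,\dots)\ge f_i(\dots,0,\dots)-L|u_i|\ge -L\,u_i^-$ on $\{u_i<0\}$, which gives $\frac{d}{dt}\|u_i^-\|_2^2\le 2L\|u_i^-\|_2^2$ and hence $u_i^-\equiv 0$ by Gronwall. With either repair your sketch becomes the complete standard proof.
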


\begin{lemma}\label{lem:class2}
Using the same notations and hypotheses as in Lemma \ref{lem:class1}, suppose moreover that $f$ has at most polynomial growth and that there exists $\mathbf{b}\in \mathbb{R}^m$ and a lower triangular invertible matrix $P$ with nonnegative entries such that  $$\forall r \in [0,+\infty)^m,~~~Pf(r)\leq \Bigg[1+ \sum_{i=1}^{m} r_i \Bigg]\mathbf{b}.$$
Then, for $u_0 \in L^{\infty}(\Omega, \mathbb{R}_+^m),$ the system (\ref{eq:class1}) has a strong global solution.
\end{lemma}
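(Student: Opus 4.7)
The plan is to verify the blow-up criterion provided by Lemma \ref{lem:class1}: it suffices to produce an a priori $L^\infty$ bound on each $u_i$ over any compact time interval, after which $T^* = +\infty$ follows automatically. Quasi-positivity from Lemma \ref{lem:class1} guarantees $u_i \geq 0$ as long as the solution exists, so throughout the argument all components may be treated as nonnegative.

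First I would extract $L^1$ control. Integrating the $i$-th equation of \eqref{eq:class1} against $1$ on $\Omega$ and using the homogeneous Neumann condition to kill the diffusive term gives
\begin{equation*}
\frac{d}{dt}\int_\Omega u_i\,dx \;=\; \int_\Omega f_i(u)\,dx.
\end{equation*}
Assembling these into a vector identity and applying the matrix $P$ on the left, the structural hypothesis $Pf(r)\leq(1+\sum_i r_i)\mathbf{b}$ yields
\begin{equation*}
\frac{d}{dt}\int_\Omega Pu\,dx \;\leq\; \Bigl(|\Omega|+\textstyle\sum_i\int_\Omega u_i\,dx\Bigr)\mathbf{b}.
\end{equation*}
Because $P$ is lower triangular, invertible, and has nonnegative entries, I can read the inequalities row by row, starting from the first, and invert to obtain a linear Gronwall inequality for $\sum_i \|u_i(t)\|_{L^1}$. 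This gives uniform $L^1(\Omega)$ bounds on each component on every compact subinterval of $[0,T^*)$.

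Next I would bootstrap from $L^1$ to $L^p$ for all finite $p$ via the duality method of M. Pierre (see \cite{Morgan89} and references therein), which is tailored exactly to this type of ``triangular'' algebraic control on the nonlinearities. Writing $w$ for a suitable positive linear combination of $Pu$ (with weights chosen so that the effective diffusion inequality $\partial_t w - d\,\Delta w \leq C(1+w) + (\text{dissipative remainder})$ holds in the distributional sense), one tests against $w^{p-1}$ and uses the parabolic regularity theory together with the $L^1$ bound to close an iteration producing $L^p$ estimates on $w$, hence on each $u_i$, for arbitrarily large $p$.

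Finally, the polynomial growth hypothesis gives $|f_i(u)| \leq C(1+|u|^q)$ for some $q$, so choosing $p$ so large that $p/q > n/2$ makes each $f_i(u)$ lie in $L^{p/q}_{\mathrm{loc}}([0,T^*);L^{p/q}(\Omega))$ with $p/q$ above the critical index for $L^\infty$-regularity of the heat semigroup on $\Omega$. Applying $L^p$-parabolic maximal regularity (or, equivalently, a Moser/De Giorgi iteration) to each scalar equation $\partial_t u_i - d_i \Delta u_i = f_i(u)$ with Neumann data then upgrades the bound to $u_i \in L^\infty([0,T^*)\times\Omega)$. Lemma \ref{lem:class1} now forces $T^* = +\infty$, and the resulting solution is a strong global solution as claimed. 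The main obstacle is the $L^p$ bootstrap: the $L^1$ step is essentially an integration by parts, and the final regularity step is classical, but the duality argument that extracts higher integrability from the one-sided structural inequality on $Pf$ is the technical heart of the proof, and it is precisely where the triangular, invertible, nonnegative structure of $P$ is indispensable.
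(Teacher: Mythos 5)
The paper does not actually prove this lemma --- it is recalled verbatim from Pierre's survey \cite{P10} (see also \cite{Morgan89}) as a black-box classical result, so there is no in-paper argument to compare yours against; your proposal must stand on its own. Its opening is sound: nonnegativity from quasi-positivity, and the $L^1(\Omega)$ Gronwall bound obtained by integrating, applying $P$, and using the positive diagonal of the lower triangular matrix. (You should also record the companion estimate that the triangular structure yields, row by row, uniform $L^1(Q_T)$ bounds on the nonlinearities $f_i(u)$ themselves; this, rather than the mass bound alone, is what the limiting argument ultimately needs.)

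The endgame, however, has a genuine gap. Your plan is to reach $L^\infty([0,T^*)\times\Omega)$ and invoke the continuation criterion of Lemma \ref{lem:class1}; this cannot work, because the hypotheses of Lemma \ref{lem:class2} do not imply uniform bounds. The Pierre--Schmitt examples (quasi-positive, polynomially bounded systems with $f_1+f_2\le 0$, which satisfy the triangular condition after a suitable ordering of the components) have classical solutions that blow up in $L^\infty$ in finite time when $d_1\neq d_2$; this is exactly why the conclusion of the lemma is only a \emph{strong} global solution rather than a classical one. The technical point at which your bootstrap breaks is the duality step: with distinct diffusion coefficients, the combination $w=\sum_i\alpha_i u_i$ satisfies $\partial_t w-\Delta\bigl(a(t,x)w\bigr)\le C(1+w)$ with a merely bounded measurable coefficient $a$, and maximal $L^{p'}$ regularity for the dual problem $-\partial_t\phi-a\Delta\phi=\psi$ is available only for $p'$ in a neighbourhood of $2$ (depending on $\max_i d_i/\min_i d_i$), not for all $p'$. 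Hence the primal estimate gives $u\in L^{2+\varepsilon}(Q_T)$, not $L^p$ for arbitrarily large $p$, and the final maximal-regularity/Moser step cannot be launched for nonlinearities of high polynomial degree. The correct proof (as in \cite{P10}) instead truncates $f$, derives uniform $L^1(Q_T)$ bounds on the truncated nonlinearities from the triangular structure, uses the duality estimate together with the polynomial growth to obtain uniform integrability, and passes to the limit to produce a global solution in the strong, not classical, sense.
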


Under these assumptions, the following local existence result is well known, see D. Henry \cite{henry}, 

\begin{theorem}
\label{thm:class3}
The system \eqref{eq:PDE} admits a unique, classical solution $(u,v)$ on $%
[0,T_{\max }]\times \Omega $. If $T_{\max }<\infty $ then 
\begin{equation}
\underset{t\nearrow T_{\max }}{\lim }\Big\{ \left\Vert u(t,.)\right\Vert
_{\infty }+\left\Vert v(t,.)\right\Vert _{\infty } \Big\} =\infty ,  
\end{equation}%
where $T_{\max }$ denotes the eventual blow-up time in $\mathbb{L}^{\infty }(\Omega ).$
\end{theorem}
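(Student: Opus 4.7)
The plan is to cast the system as an abstract semilinear Cauchy problem on a suitable Banach space and invoke the classical local-existence machinery of D.\ Henry. Set $X = C(\overline{\Omega}) \times C(\overline{\Omega})$ and $U = (u,v)^{T}$, and write the system as
\begin{equation*}
U_{t} = \mathcal{A} U + \mathcal{F}(U), \qquad U(0) = U_{0},
\end{equation*}
where $\mathcal{A} = \mathrm{diag}(d_{1}\Delta, d_{2}\Delta)$ is defined on the domain determined by the homogeneous Neumann boundary conditions, and $\mathcal{F}(U) = (a_{1}u - b_{1}u^{2} - c_{1}uv,\; \frac{a_{2}v}{1+ku} - b_{2}v^{2} - c_{2}uv)^{T}$ collects the reaction terms.

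First I would verify that $\mathcal{A}$ is sectorial and generates an analytic semigroup on $X$. This is standard: the Neumann Laplacian on a smooth bounded domain $\Omega$ is sectorial on $C(\overline{\Omega})$ (and on $L^{p}(\Omega)$ for $1<p<\infty$), and the diagonal sum of two such operators is sectorial on the product space. Next I would show that $\mathcal{F}$ is locally Lipschitz on bounded subsets of the nonnegative cone of $X$. The polynomial pieces are trivially locally Lipschitz. The only delicate term is the fear nonlinearity $\frac{a_{2}v}{1+ku}$; restricted to $u\ge 0$, the map $u\mapsto (1+ku)^{-1}$ is smooth with derivative bounded by $k$, so on any bounded set $\{0\le u,v\le M\}$ the full vector field $\mathcal{F}$ is Lipschitz with constant depending only on $M$ and the system parameters.

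Because the fear term loses meaning if $u$ crosses $-1/k$, nonnegativity must accompany local existence. This is exactly the content of the quasi-positivity hypothesis in Lemma \ref{lem:class1}: one checks $F_{1}(0,v) = 0 \ge 0$ and $F_{2}(u,0) = 0 \ge 0$ for all $u,v\ge 0$, so the maximum principle (or equivalently Lemma \ref{lem:class1}) guarantees that for nonnegative initial data $u_{0},v_{0}\in L^{\infty}(\Omega)$ the solution remains nonnegative on its maximal interval of existence, and hence $1+ku(x,t) \ge 1 > 0$ pointwise, keeping $\mathcal{F}$ smooth along the trajectory.

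With these ingredients, Henry's local existence theorem \cite{henry} produces a unique maximal mild solution $(u,v)\in C([0,T_{\max});X)$, and parabolic regularity (smoothness of $\mathcal{F}$ on the positive cone plus analyticity of the semigroup) upgrades it to a classical solution $(u,v)\in C^{1}((0,T_{\max});X) \cap C((0,T_{\max});D(\mathcal{A}))$. The blow-up alternative is the standard one: either $T_{\max} = \infty$, or the solution exits every bounded set of $X$, which on the nonnegative cone is equivalent to
\begin{equation*}
\lim_{t\nearrow T_{\max}} \Bigl\{\|u(t,\cdot)\|_{\infty} + \|v(t,\cdot)\|_{\infty}\Bigr\} = \infty,
\end{equation*}
as claimed. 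The only real obstacle in this routine argument is bookkeeping around the fear term: one must confirm that the denominator is uniformly bounded away from zero on the interval of existence, and this is precisely what quasi-positivity delivers, so no extra work beyond citing the quoted form of Henry's result is needed.
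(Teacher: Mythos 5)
Your argument is correct and is essentially the same as the paper's: the paper offers no proof at all, simply citing D.\ Henry's classical theory, and your write-up (sectorial Neumann Laplacian, locally Lipschitz nonlinearity on the nonnegative cone, quasi-positivity to keep the denominator $1+ku$ bounded away from zero, then the standard maximal-existence and blow-up alternative) is precisely the standard argument behind that citation. No gaps.
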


The next result follows from the application of standard theory \cite{kish88},

\begin{theorem}
\label{thm:km1}
	Consider the reaction diffusion system \eqref{eq:PDE-c}. For spatially homogenous initial data $u_{0} \equiv c, v_{0} \equiv d$, with $c,d>0$, then the dynamics of \eqref{eq:PDE} and its resulting kinetic (ODE) system, when $d_{1}=d_{2}=0$ in \eqref{eq:PDE-c}, are equivalent.
\end{theorem}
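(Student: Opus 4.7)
The plan is to show that spatially homogeneous initial data produces a spatially homogeneous solution, which must then obey the kinetic ODE system. The argument has two ingredients: (i) construct a spatially constant candidate by solving the ODE, and (ii) invoke uniqueness of the PDE solution to conclude the candidate is the actual solution.

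First, let $(U(t), V(t))$ be the unique solution of the kinetic system
\begin{equation*}
U'(t) = a_1 U - b_1 U^2 - c_1 UV, \qquad V'(t) = \frac{a_2 V}{1+kU} - b_2 V^2 - c_2 UV,
\end{equation*}
with $U(0)=c,\, V(0)=d$. Standard ODE theory gives existence and uniqueness of $(U,V)$ on some maximal interval $[0, T_{\mathrm{ODE}})$; moreover non-negativity is preserved by quasi-positivity of the kinetics. Define candidate functions $\tilde u(x,t) := U(t)$ and $\tilde v(x,t) := V(t)$ on $\overline{\Omega}\times[0,T_{\mathrm{ODE}})$.

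Next I would verify directly that $(\tilde u, \tilde v)$ is a classical solution of \eqref{eq:PDE-c}. Since $\tilde u$ and $\tilde v$ are independent of $x$, both Laplacians vanish identically and $\partial_\nu \tilde u = \partial_\nu \tilde v = 0$ on $\partial\Omega$, so the no-flux boundary conditions are trivially satisfied. The reaction terms evaluated at $(\tilde u, \tilde v)$ coincide pointwise with $U'(t)$ and $V'(t)$, so the PDE is satisfied. The initial conditions match: $\tilde u(x,0) = c = u_0(x)$ and $\tilde v(x,0) = d = v_0(x)$. Thus $(\tilde u,\tilde v)$ is a classical solution of \eqref{eq:PDE-c} on $[0, T_{\mathrm{ODE}})\times\Omega$.

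Finally, I would appeal to uniqueness of the classical solution of \eqref{eq:PDE-c} provided by Theorem~\ref{thm:class3} (together with the local existence framework of Lemma~\ref{lem:class1}) to conclude that the actual PDE solution $(u,v)$ equals $(\tilde u, \tilde v)$ on their common interval of existence. In particular, $u(x,t) = U(t)$ and $v(x,t) = V(t)$ are independent of $x$, and the trajectories in phase space coincide with those of the kinetic ODE. A short matching argument on the maximal intervals of existence then shows that the PDE solution and the ODE solution have the same maximal existence time (the $\mathbb{L}^\infty$ blow-up criterion of Theorem~\ref{thm:class3} reduces to blow-up of $|U(t)|+|V(t)|$).

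No serious obstacle is present: the only mild subtlety is ensuring that uniqueness is applied in the correct function class, so that a priori one cannot have a second, spatially inhomogeneous classical solution emerging from the same constant initial data. This is exactly what the local existence/uniqueness results recalled in Lemma~\ref{lem:class1} and Theorem~\ref{thm:class3} provide, and is the reason the theorem reduces to the verification described above.
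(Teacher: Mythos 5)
Your proposal is correct and is precisely the standard argument the paper is appealing to — the paper offers no written proof, only a citation to ``standard theory'' — namely that a spatially constant solution of the kinetic ODE is itself a classical solution of the Neumann problem (Laplacians and normal derivatives vanish), and uniqueness of classical solutions (Lemma~\ref{lem:class1}, Theorem~\ref{thm:class3}) then forces the PDE solution to coincide with it. One point worth noting is that your verification step uses that the fear coefficient is a constant $k$, which is exactly why the theorem is stated for \eqref{eq:PDE-c} and not for the spatially heterogeneous system \eqref{eq:PDE}, where a flat candidate would fail to satisfy the equation.
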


\subsubsection{Spatially Heterogeneous Fear}

Our objective now is to consider the case of a fear function that may be heterogeneous in space. A motivation for this comes from several ecological and sociological settings. For example it is very common for prey to be highly fearful closer to a predators lair, but less fearful in a region of refuge \cite{Zhang19}, or in regions of high density due to group defense \cite{Samsal20}. Furthermore, a conceivably weaker drug cartel, could have certain localized strongholds, within which they would be more feared by stronger groups. To these ends, it is conceivable that the fear coefficient $k$ is not a constant, but actually varies in the spatial domain $\Omega$, so $k=k(x)$, which could take different forms depending on the application at hand. This is also in line with the LOF concept \cite{Brown99}. Thus we consider the following spatially explicit version of \eqref{eq:ODE2}, with heterogeneous fear function $k(x)$, resulting in the following reaction diffusion system,

\begin{align}\label{eq:PDE}
	\begin{split}
		u_t &= d_1 (u)_{xx} + a_1 u -b_1 u^2 -c_1 uv, \quad x\in \Omega, \\
		v_t &=  d_2 (v)_{xx} +\dfrac{a_2 u}{1+k(x) v}  -b_2 v^2 -c_2 uv,\quad x\in \Omega,  \\
		\dfrac{\partial u}{\partial \nu} & = \dfrac{\partial v}{\partial \nu} =0, \quad \text{on} \quad \partial \Omega.\\
		u(x,0)&=u_0 (x) \equiv c > 0, \quad v(x,0)=v_0(x) \equiv d > 0,
	\end{split}
\end{align}
where $\Omega \subset \mathbb{R}^{n}$. We assume no flux Neumann boundary conditions. Also we prescribe spatially homogeneous (flat) initial conditions
$u(x,0)=u_0 (x) \equiv c > 0, \quad v(x,0)=v_0(x) \equiv d > 0.$ Furthermore, we impose the following restrictions on the fear function $k(x)$,

\begin{align}\label{eq:as1}
\begin{split}
	&(i) \quad k(x)  \in C^{1}(\Omega),
	\\
	&(ii) \quad k(x) \geq 0,
	\\
	& (iii)\quad  \mbox{If} \  k(x) \equiv 0 \ \mbox{on}  \ \Omega_{1} \subset \Omega, \ \mbox{then} \ |\Omega_{1}| = 0.
\\
& (iv)\quad  \mbox{If} \  k(x) \equiv 0 \ \mbox{on}  \ \cup^{n}_{i=1}\Omega_{i} \subset \Omega, \ \mbox{then} \ \Sigma^{n}_{i=1}|\Omega_{i}| = 0.
\end{split}
\end{align}

\begin{remark}
If $k(x) \equiv 0$ on $\Omega_{1} \subset \Omega$, with $|\Omega_{1}| > \delta > 0$, or $k(x) \equiv 0$ on $\cup^{n}_{i=1}\Omega_{i} \subset \Omega$, with $\Sigma^{n}_{i=1}|\Omega_{i}| > \delta > 0$, that is, on non-trivial parts of the domain, the analysis is notoriously difficult, as one now is dealing with a \emph{degenerate} problem. See \cite{Du02a, Du02b} for results on this problem. This case is not in the scope of the current manuscript. 
\end{remark}

Since the nonlinear right hand side of \eqref{eq:PDE} is continuously
differentiable on $\mathbb{R}^{+}\times $ $\mathbb{R}^{+}$, then for any
initial data in $\mathbb{C}\left( \overline{\Omega }\right) $ or $\mathbb{L}%
^{p}(\Omega ),\;p\in \left( 1,+\infty \right) $, it is standard to 
estimate the $\mathbb{L}^{p}-$norms of the solutions and thus deduce global existence. Standard theory will apply even in the case of a bonafide fear function $k(x)$, because due to our assumptions on the form of $k$,  standard comparison arguments will apply \cite{Gil77}. Thus applying the classical methods above, via Theorem \ref{thm:class3}, and Lemmas \ref{lem:class1}-\ref{lem:class2}, we can state the following lemmas,

\begin{lemma}
\label{lem:pos1}
Consider the reaction diffusion system \eqref{eq:PDE}, for $k(x)$ s.t the assumtions via \eqref{eq:as1} hold. Then solutions to \eqref{eq:PDE} are non-negative, as long as they initiate from positive initial conditions.
\end{lemma}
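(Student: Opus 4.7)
The plan is to apply the quasi-positivity framework of Lemma \ref{lem:class1} directly to system \eqref{eq:PDE}; the proof is essentially a verification that the reaction vector field is $C^{1}$ on the relevant domain and points into the positive cone along each of its faces, so the invariance of the positive orthant follows from the standard maximum-principle machinery already recorded in Lemma \ref{lem:class1}--Theorem \ref{thm:class3}.

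First, I would identify the reaction nonlinearities $f_{1}(u,v;x)$ and $f_{2}(u,v;x)$ as the right-hand sides (minus the diffusion) of the two equations in \eqref{eq:PDE}. Both are $C^{1}$ jointly in $(u,v)$ on $\overline{\mathbb{R}_{+}}\times \overline{\mathbb{R}_{+}}$: $f_{1}$ is a polynomial, while the denominator $1+k(x)v$ in $f_{2}$ is bounded below by $1$, since $k(x)\ge 0$ by \eqref{eq:as1}(ii) and $v\ge 0$ on the physical positive orthant. The $x$-dependence enters only through the coefficient $k\in C^{1}(\overline{\Omega})$ by \eqref{eq:as1}(i), so Theorem \ref{thm:class3} delivers a unique classical solution on a maximal time interval $[0,T_{\max})$ for any $L^{\infty}$ initial data.

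Next, I would verify the quasi-positivity condition of Lemma \ref{lem:class1} by evaluating $f_{1},f_{2}$ on the coordinate axes of the positive orthant:
\begin{equation*}
f_{1}(0,v;x)=0\ \ge\ 0 \quad \forall v\ge 0, \qquad f_{2}(u,0;x)=a_{2}u\ \ge\ 0 \quad \forall u\ge 0,
\end{equation*}
both uniformly in $x\in\Omega$. Thus the vector field is tangential or inward-pointing along each face of the positive cone, and the conclusion of Lemma \ref{lem:class1} yields $u(x,t),v(x,t)\ge 0$ throughout $[0,T_{\max})\times\Omega$ whenever $u_{0},v_{0}\ge 0$, which is precisely the claim.

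The only point worth a line of justification is that Lemma \ref{lem:class1} is stated for $x$-independent nonlinearities, whereas here $k=k(x)$; the standard extension to reaction terms with a $C^{1}$ spatial coefficient is routine, because the quasi-positivity check above is performed pointwise in $x$ with $k(x)$ frozen as a non-negative parameter, and $k\in C^{1}(\overline{\Omega})$ is bounded on the compact $\overline{\Omega}$. The stronger hypotheses \eqref{eq:as1}(iii)--(iv), which restrict the zero set of $k$, play no role at this stage and are reserved for the later theorems on asymptotic behavior. There is thus no genuine obstacle; the lemma is a direct corollary of the classical reaction--diffusion framework once quasi-positivity is verified.
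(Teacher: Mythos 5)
Your proposal is correct and matches the paper's approach: the paper states Lemma \ref{lem:pos1} without a separate proof, asserting it follows from the quasi-positivity framework of Lemmas \ref{lem:class1}--\ref{lem:class2} and Theorem \ref{thm:class3}, which is exactly the verification you carry out ($f_{1}(0,v;x)=0$ and $f_{2}(u,0;x)\ge 0$, with $k(x)\ge 0$ keeping the denominator bounded below by $1$). Your added remarks on the harmless $x$-dependence of the coefficient and the irrelevance of \eqref{eq:as1}(iii)--(iv) here are accurate and simply make explicit what the paper leaves implicit.
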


\begin{lemma}
\label{lem:cl1}
Consider the reaction diffusion system \eqref{eq:PDE}. For $k(x)$ s.t the assumtions via \eqref{eq:as1} hold. The solutions to \eqref{eq:PDE} are classical. That is for $(u_{0},v_{0}) \in \mathbb{L}^{\infty }(\Omega )$,  $(u,v) \in C^{1}(0,T; C^{2}(\Omega))$, $\forall T$.
\end{lemma}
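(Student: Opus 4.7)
The plan is to combine the local existence result of Theorem~\ref{thm:class3} with a priori $\mathbb{L}^{\infty}$ bounds obtained by scalar comparison, extend globally via the blow-up alternative, and then upgrade smoothness via standard parabolic Schauder theory.

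First, I would invoke Theorem~\ref{thm:class3} together with Lemma~\ref{lem:pos1} to obtain a unique non-negative classical solution $(u,v)$ on a maximal interval $[0,T_{\max})\times\Omega$. Since $k(x),v\geq 0$ we automatically have $1+k(x)v\geq 1$, so the fear nonlinearity remains smooth in $(u,v)$ and the denominator never degenerates despite the spatial heterogeneity of $k$. This is exactly where assumption \eqref{eq:as1}(ii) earns its keep.

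The second step is to derive uniform $\mathbb{L}^{\infty}$ bounds by scalar comparison. Dropping the nonpositive term $-c_{1}uv$, the $u$-equation is a subsolution of the logistic equation
\[
u_{t}-d_{1}\Delta u \leq a_{1}u-b_{1}u^{2},
\]
so the parabolic comparison principle against the spatially constant supersolution $U(t)$ solving $U'=a_{1}U-b_{1}U^{2}$ with $U(0)=\|u_{0}\|_{\infty}$ yields
\[
\|u(\cdot,t)\|_{\infty}\leq M_{1}:=\max\{\|u_{0}\|_{\infty},\,a_{1}/b_{1}\}.
\]
Substituting this bound and dropping $-c_{2}uv$ in the $v$-equation gives
\[
v_{t}-d_{2}\Delta v \leq a_{2}M_{1}-b_{2}v^{2},
\]
and a second scalar comparison furnishes a uniform bound $\|v(\cdot,t)\|_{\infty}\leq M_{2}$ on $[0,T_{\max})$. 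The blow-up alternative in Theorem~\ref{thm:class3} then forces $T_{\max}=+\infty$.

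Finally, with $(u,v)$ uniformly bounded in $\mathbb{L}^{\infty}(\Omega\times(0,T))$ and $k\in C^{1}(\overline{\Omega})$ by \eqref{eq:as1}(i), a standard parabolic bootstrap delivers classical regularity: parabolic $\mathbb{L}^{p}$ theory with $p$ large upgrades $(u,v)$ to $W^{2,1}_{p}$, Sobolev embedding yields Hölder continuity in $(x,t)$, and Schauder estimates applied to each linear parabolic equation with the reaction frozen as a Hölder source give $(u,v)\in C^{1}(0,T;C^{2}(\Omega))$ for every $T>0$. The main technical obstacle to monitor is that the bootstrap must accommodate the spatial heterogeneity of $k(x)$ inside the denominator of the fear term; however, the combination of $C^{1}$ regularity of $k$ and the pointwise bound $1+k(x)v\geq 1$ keeps $a_{2}u/(1+k(x)v)$ as regular as $(u,v)$ themselves, and the bootstrap closes without obstruction.
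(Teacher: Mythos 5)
Your proposal is correct and follows essentially the same route the paper intends: the paper gives no explicit proof of this lemma, simply asserting that it follows from Theorem~\ref{thm:class3}, Lemmas~\ref{lem:class1}--\ref{lem:class2}, and standard comparison arguments, which is precisely the local-existence-plus-$\mathbb{L}^{\infty}$-bound-plus-bootstrap argument you spell out. Your version merely supplies the details (logistic comparison for $u$, then for $v$, blow-up alternative, parabolic regularity) that the paper leaves implicit.
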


Our goal in this section is to investigate the dynamics of \eqref{eq:PDE}. Herein we will us the comparison technique, and compare to the ODE cases of classical competition, or the constant fear function case, where the dynamics are well known. 

\begin{remark}
	The analysis in this section are primarily focused on the choice of spatially homogenous (flat) initial data.
\end{remark}

We begin by defining the following systems of PDEs,

\begin{align}\label{eq:lv_model}
	\begin{split}
		\overline{u}_t &= d_1 (\overline{u})_{xx} + a_1 \overline{u} -b_1 \overline{u}^2 -c_1 \overline{u}\overline{v}, \\
		\overline{v}_t &=  d_2 (\overline{v})_{xx} +a_2 \overline{v}  -b_2 \overline{v}^2 -c_2 \overline{u}\overline{v},
	\end{split}
\end{align}

\begin{align}\label{eq:upper}
	\begin{split}
		\widehat{u_t} &= d_1 (\widehat{u_t})_{xx} + a_1 \widehat{u} -b_1 \widehat{u}^2 -c_1 \widehat{u}\widehat{v}, \\
		\widehat{v}_t &=  d_2 (\widehat{v})_{xx} +\dfrac{a_2 \widehat{v}}{1+\mathbf{\widehat{k}} \widehat{u}}  -b_2 \widehat{v}^2 -c_2 \widehat{u}\widehat{v},
	\end{split}
\end{align}

\begin{align}\label{eq:lower}
	\begin{split}
		\widetilde{u} _t &= d_1 ( \widetilde{u})_{xx} + a_1  \widetilde{u} -b_1  \widetilde{u}^2 -c_1  \widetilde{u}\widetilde{v},\\
		\widetilde{v}_t &=  d_2 (\widetilde{v})_{xx} +\dfrac{a_2  \widetilde{v}}{1+\mathbf{\widetilde{k}} \widetilde{u}}  -b_2 \widetilde{v}^2 -c_2  \widetilde{u}\widetilde{v},
	\end{split}
\end{align}

\begin{align}\label{eq:lowest}
	\begin{split}
		\tilde{u} _t &= d_1 ( \tilde{u})_{xx} + a_1  \tilde{u} -b_1  \tilde{u}^2 -c_1  \tilde{u}\widetilde{v},\\
		\tilde{v}_t &=  d_2 (\tilde{v})_{xx} +\dfrac{a_2  \tilde{v}}{1+\mathbf{\tilde{k}} \frac{a_{1}}{b_{1}}}  -b_2 \tilde{v}^2 -c_2  \tilde{u}\tilde{v},
	\end{split}
\end{align}

where
\begin{align}\label{eq:lowest1}
 \mathbf{\widehat{k}} = \min_{x\in \Omega} k(x), \quad \, \quad   \mathbf{\widetilde{k} }= \max_{x\in \Omega} k(x).
\end{align}

We assume no flux Neumann boundary conditions for all of the reaction diffusion systems \eqref{eq:lv_model} - \eqref{eq:lowest}. Also in each of the systems we prescribe spatially homogenous (flat) initial conditions
$u(x,0)=u_0 (x) \equiv c > 0, \quad v(x,0)=v_0(x) \equiv d > 0.$

We now state the following lemma,

\begin{lemma}
\label{lem:com1}
Consider the reaction diffusion system \eqref{eq:PDE}, as well as the reaction diffusion systems \eqref{eq:lv_model} - \eqref{eq:lowest}. Then the following point wise comparison holds,

\begin{align*}
	 \tilde{v} \leq	\widetilde{v}	 \leq   v  \leq     \widehat{v}	          \leq \overline{v}.
	\end{align*}

\end{lemma}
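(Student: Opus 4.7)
The plan is to establish each of the four links in the chain separately via the parabolic comparison principle for weakly coupled cooperative systems. For any consecutive pair, label the \emph{larger}-$v$ system $(u_1,v_1)$ and the \emph{smaller}-$v$ system $(u_2,v_2)$, and set
$\phi := v_1 - v_2$ and $\psi := u_2 - u_1$. The twist in $\psi$ reflects competition: if $v$ is smaller then $u$ feels less pressure and should be larger, so $\phi\geq 0$ and $\psi\geq 0$ must be proven simultaneously. Before doing the four pairwise comparisons I would record the scalar a priori bound $u\leq a_1/b_1$ in each of the five systems by comparing the scalar inequality $u_t-d_1\Delta u\leq u(a_1-b_1 u)$ with the logistic ODE $w'=w(a_1-b_1 w)$; this bound is only needed for the bottom link $\tilde v\leq\widetilde v$, where the denominator $1+\widetilde k\,\widetilde u$ must be controlled by the constant $1+\widetilde k(a_1/b_1)$ appearing in the growth rate of $\tilde v$.

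The main algebraic step, illustrated on the representative case $v\leq\widehat v$, is to rewrite the growth-rate difference so that $\phi$ and $\psi$ separate with the right signs. A direct manipulation yields
\begin{equation*}
\frac{a_2 \widehat v}{1+\widehat k\,\widehat u} - \frac{a_2 v}{1+k(x)u}
= \frac{a_2\,\bigl[(1+\widehat k\, u)\phi + \widehat k\, v\,\psi + \widehat v\, u\,(k(x)-\widehat k)\bigr]}{(1+\widehat k\,\widehat u)\,(1+k(x)u)},
\end{equation*}
in which the $\psi$-coefficient is nonnegative and the trailing ``source'' term is nonnegative because $k(x)\geq \widehat k$ on $\Omega$. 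Combined with the elementary identities $\widehat u\widehat v - uv = u\phi - \widehat v\psi$, $\widehat u^2-u^2 = -(u+\widehat u)\psi$, and $\widehat v^2-v^2 = (\widehat v+v)\phi$, this shows that $(\phi,\psi)$ solves a weakly coupled linear parabolic system of the form
\begin{equation*}
\phi_t - d_2\Delta\phi = p\,\phi + q\,\psi + F,\qquad \psi_t - d_1\Delta\psi = r\,\phi + s\,\psi,
\end{equation*}
with bounded $p,s$, nonnegative off-diagonals $q,r\geq 0$, nonnegative source $F\geq 0$, homogeneous Neumann conditions, and zero initial data (since all five systems start from the same flat data). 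The standard weak maximum principle for cooperative weakly coupled parabolic systems then gives $\phi,\psi\geq 0$, i.e.\ $v\leq\widehat v$ (and $\widehat u\leq u$) pointwise.

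The other three links follow the same template, with the nonnegative source arising respectively from $\widehat k\,\widehat u\geq 0$ (for $\widehat v\leq\overline v$), from $\widetilde k-k(x)\geq 0$ (for $\widetilde v\leq v$), and from $\widetilde k\,(a_1/b_1-\widetilde u)\geq 0$ via the logistic bound (for $\tilde v\leq\widetilde v$). The sole obstacle throughout is bookkeeping: each growth-rate difference must be decomposed so that $\phi$ and $\psi$ enter linearly with nonnegative off-diagonal coefficients while the remainder is nonnegative. Once this decomposition is in hand, the cooperative maximum principle closes each link, and concatenating the four inequalities yields the full chain $\tilde v\leq\widetilde v\leq v\leq\widehat v\leq\overline v$.
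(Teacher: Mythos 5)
Your proof is correct and takes essentially the same route as the paper: order the fear-modified growth rates of $v$ using positivity and the logistic bound $u \le a_1/b_1$, then conclude by parabolic comparison for the competitive system. The paper simply records the chain $\frac{a_2}{1+\widetilde{k}a_1/b_1} \le \frac{a_2}{1+\widetilde{k}u} \le \frac{a_2}{1+k(x)u} \le \frac{a_2}{1+\widehat{k}u} \le a_2$ and cites ``standard comparison theory'' for the step you carry out explicitly via the sign flip $\psi=u_2-u_1$ and the cooperative maximum principle.
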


\begin{proof}

Note via positivity of solutions to \eqref{eq:PDE}, \eqref{eq:lv_model} - \eqref{eq:lowest}, the definitions via  \eqref{eq:lowest1}, and the upper bound on $u$, which is a solution to \eqref{eq:PDE} of $\frac{a_{1}}{b_{1}}$, (derived via comparison to the logistic equation), we have
\begin{align}\label{eq:com1}
	\dfrac{a_{2}}{1 +  \mathbf{\widetilde{k}}  \frac{a_{1}}{b_{1}}}  \le    
	 \dfrac{a_{2}}{1 +  \mathbf{\widetilde{k}}  u(x)} \le 
	 \dfrac{a_{2}}{1 + k(x) u(x)} \le 
	 \dfrac{a_{2}}{1 +  \mathbf{\widehat{k}}u(x)} \le 
	 a_{2}, 
	 \quad \forall x\in \Omega.
\end{align}
	
Thus the result follows via standard comparison theory \cite{Gil77}.
\end{proof}

\subsection{The Competitive Exclusion Case}

\begin{theorem}\label{thm:ce_1_pde}
	Consider the reaction diffusion system \eqref{eq:PDE}, for a fear function $k(x)$, s.t the assumptions via \eqref{eq:as1} are met, and
	\[ \mathbf{\widetilde{k} }>\dfrac{a_2 b_1^2-c_2a_1b_1}{a_1^2c_2} \quad \text{and} \quad \dfrac{a_1}{a_2} > \max \Big\{ \dfrac{b_1}{c_2},\dfrac{c_1}{b_2}  \Big\}. \]
	Then the solution $(u,v)$ to \eqref{eq:PDE} converges uniformly to the spatially homogenous state $(\frac{a_{1}}{b_{1}},0)$ as $t \to \infty$.
\end{theorem}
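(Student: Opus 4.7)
My plan is to use the comparison setup of Lemma \ref{lem:com1} to dominate $v$ from above by the solution $\overline{v}$ of the classical (unfeared) Lotka--Volterra PDE \eqref{eq:lv_model} with the same flat initial data, and then to exploit the classical competitive exclusion hypothesis $\tfrac{a_1}{a_2}>\max\{\tfrac{b_1}{c_2},\tfrac{c_1}{b_2}\}$ to drive $\overline{v}\to 0$. I would first record a consistency remark: under the CE hypothesis we have $\tfrac{a_2 b_1^2-c_2 a_1 b_1}{a_1^2 c_2}=\tfrac{b_1(a_2 b_1-a_1 c_2)}{a_1^2 c_2}<0$, so the pointwise lower bound $\widetilde{k}>\tfrac{a_2 b_1^2-c_2 a_1 b_1}{a_1^2 c_2}$ on the fear function is automatically satisfied, mirroring the fact that in the constant-$k$ ODE case (Lemma \ref{lem:ce_ode_2}) the CE condition already forces the desired limit.

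The first step is the upper comparison. Lemma \ref{lem:com1} gives $v(x,t)\leq \overline{v}(x,t)$ pointwise on $\Omega\times(0,\infty)$, where $(\overline{u},\overline{v})$ solves \eqref{eq:lv_model} with flat data $(c,d)$. Since the reaction term in \eqref{eq:lv_model} is spatially independent and the initial data are constants, uniqueness together with the Neumann boundary condition forces $(\overline{u},\overline{v})$ to remain spatially constant, so this system reduces to the classical Lotka--Volterra ODE \eqref{eq:GeneralEquation} with data $(c,d)$; this is precisely the content of Theorem \ref{thm:km1} applied to the unfeared system. By the standard CE analysis recalled in Section 2.1, the corresponding ODE trajectory converges to $(\tfrac{a_1}{b_1},0)$, and therefore $v(x,t)\to 0$ uniformly in $x\in\Omega$ as $t\to\infty$.

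Next I would upgrade this to $u\to\tfrac{a_1}{b_1}$ via a two-sided logistic sandwich. A direct comparison with the scalar logistic equation $w_t=d_1\Delta w+w(a_1-b_1 w)$ provides $\limsup_{t\to\infty}\|u(\cdot,t)\|_{\infty}\leq a_1/b_1$. For the matching lower bound, fix $\varepsilon>0$; by the uniform convergence $v\to 0$ there is $T_\varepsilon>0$ with $c_1 v(x,t)\leq\varepsilon$ for every $x\in\Omega$ and $t\geq T_\varepsilon$. On $[T_\varepsilon,\infty)\times\Omega$ the first equation of \eqref{eq:PDE} then yields the differential inequality
\[
u_t\geq d_1\Delta u+u\bigl((a_1-\varepsilon)-b_1 u\bigr),
\]
and a comparison with the associated perturbed logistic flow gives $\liminf_{t\to\infty}\min_{x\in\Omega} u(x,t)\geq (a_1-\varepsilon)/b_1$. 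Letting $\varepsilon\downarrow 0$ and combining with the upper bound yields $u\to a_1/b_1$ uniformly.

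The main obstacle is essentially organizational: I need to verify that the ``flat-data reduces to ODE'' step is legitimately applicable to the comparison system \eqref{eq:lv_model} (which is immediate from uniqueness and Theorem \ref{thm:km1}), and that the logistic comparisons deliver \emph{uniform} convergence in $x$ rather than merely pointwise. The latter is ensured because the logistic barriers with flat initial data are themselves spatially constant under Neumann boundary conditions, so the sandwich bounds on $u$ do not depend on $x$. Everything else reduces to classical parabolic comparison theory \cite{Gil77} once Lemma \ref{lem:com1} is in hand.
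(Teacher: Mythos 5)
Your proposal is correct, but it takes a genuinely different route from the paper's. The paper squeezes $v$ from \emph{both} sides: it uses the lower comparison system \eqref{eq:lower} together with the hypothesis $\mathbf{\widetilde{k}}>\tfrac{a_2b_1^2-c_2a_1b_1}{a_1^2c_2}$, Lemma \ref{lem:ce_ode_2} and Theorem \ref{thm:km1} to get $(\widetilde{u},\widetilde{v})\to(\tfrac{a_1}{b_1},0)$, and the upper system \eqref{eq:lv_model} with the classical CE condition to get $(\overline{u},\overline{v})\to(\tfrac{a_1}{b_1},0)$, then concludes by sandwiching. You instead discard the lower barrier entirely, using only $0\le v\le\overline{v}$ to force $v\to0$, and then recover $u\to\tfrac{a_1}{b_1}$ by a two-sided logistic comparison with an $\varepsilon$-perturbed reaction term. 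This is essentially the strategy of the paper's own follow-up Theorem \ref{thm:ce_1_pde22}, and your observation that the $\mathbf{\widetilde{k}}$ hypothesis is vacuous here (since $a_1c_2>a_2b_1$ makes the threshold negative while $\mathbf{\widetilde{k}}\ge0$) is exactly why that stronger statement holds. Your treatment of the $u$-component is in fact more careful than the paper's: Lemma \ref{lem:com1} only orders the $v$-components, so the paper's displayed inequality $\lim(\widetilde{u},\widetilde{v})\le\lim(u,v)\le\lim(\overline{u},\overline{v})$ glosses over the convergence of $u$, which your logistic sandwich supplies explicitly. What you lose relative to the paper is only stylistic uniformity with the companion results (Theorem \ref{thm:ce_2_pde} genuinely needs the two-sided squeeze, since there the limit of $v$ is positive); what you gain is a proof that works without any condition on $k(x)$ beyond \eqref{eq:as1}.
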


\begin{proof}
%	From the comparison, we have
%	\[ \dfrac{1}{1 +  \mathbf{\widetilde{k}}  \frac{a_{1}}{b_{1}}} \le     \dfrac{1}{1 +  \mathbf{\widetilde{k}}  u(x)} \le \dfrac{1}{1 + k(x) u(x)} \le \dfrac{1}{1 +  \mathbf{\widehat{k}} {u(x)}} \le 1, \quad \forall x\in \Omega.\]
%	Hence, we can compare the above system's solution, i.e.,
%	
%	\begin{align*}
%		\Big\{  \ref{lower} \Big\} \le \Big\{ \ref{PDE} \Big\} \le \Big\{ \ref{lv_model} \Big\}
%	\end{align*}
%	
	From the classical theory of competition \cite{Murray93}, we know the dynamics for \eqref{eq:lv_model}, that is in the competitive exclusion case, when
	\[  \dfrac{a_1}{a_2} > \max \Big\{ \dfrac{b_1}{c_2},\dfrac{c_1}{b_2}  \Big \} , \]
	we have 
	\[ (\overline{u},\overline{v}) \to \Big(\dfrac{a_1}{b_1},0\Big).\]
	Moreover, under the assumption 
	\[ \mathbf{\widetilde{k} }>\dfrac{a_2 b_1^2-c_2a_1b_1}{a_1^2c_2}, \]
	and making use of Lemma \ref{lem:ce_ode_2}, along with the use of Theorem \ref{thm:km1}, we have
	$(\widetilde{u},\widetilde{v}) \to \Big(\dfrac{a_1}{b_1},0\Big)$. Now using Lemma \ref{lem:com1} we have,
	\begin{align*}
	 \widetilde{v}	 \leq   v    \leq \overline{v},
	\end{align*}
	 which entails,
	\begin{align*}
	\lim_{t \rightarrow \infty}(\widetilde{u}, \widetilde{v})	 \leq  \lim_{t \rightarrow \infty} (u,v)    \leq \lim_{t \rightarrow \infty} (\overline{u},\overline{v}),
	\end{align*}
	subsequently,
	\begin{align*}
		\left(\frac{a_{1}}{b_{1}},0\right)	 \leq  \lim_{t \rightarrow \infty}   (u,v)    \leq 	\left(\frac{a_{1}}{b_{1}},0\right).
	\end{align*}
	Now using a squeezing argument, in the limit that $t \rightarrow \infty$, we have uniform convergence of solutions of \eqref{eq:PDE}, i.e.,
	\[ (u,v) \to \Big(\dfrac{a_1}{b_1},0\Big)  \] as $t \rightarrow \infty$.
\end{proof}

Using the positivity of solutions the requirement on $k(x)$ and so in turn on $\mathbf{\widetilde{k} }$, can be weakened to derive a stronger result,

\begin{theorem}\label{thm:ce_1_pde22}
	Consider the reaction diffusion system \eqref{eq:PDE}, for a fear function $k(x)$, s.t the assumptions via \eqref{eq:as1} are met, and $\dfrac{a_1}{a_2} > \max \Big\{ \dfrac{b_1}{c_2},\dfrac{c_1}{b_2}  \Big\}$.
	Then the solution $(u,v)$ to \eqref{eq:PDE} converges uniformly to the spatially homogeneous state $(\frac{a_{1}}{b_{1}},0)$ as $t \to \infty$.
\end{theorem}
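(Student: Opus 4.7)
The plan is to bypass the restriction on $\mathbf{\widetilde{k}}$ in Theorem \ref{thm:ce_1_pde} by exploiting positivity of $v$ together with an asymptotic logistic argument for $u$. The key observation is that the fear term $\frac{a_2 v}{1+k(x)u}$ is bounded above by $a_2 v$ regardless of how small $k(x)$ is, so the upper comparison against the classical fear-free competition PDE already suffices to drive $v$ to zero once the competitive exclusion condition holds.

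First, I would invoke Lemma \ref{lem:com1} to obtain the pointwise inequality $v(x,t) \leq \overline{v}(x,t)$, where $(\overline{u},\overline{v})$ solves the classical Lotka--Volterra PDE \eqref{eq:lv_model} with the same flat initial data. By Theorem \ref{thm:km1}, solutions of \eqref{eq:lv_model} issuing from spatially homogeneous initial data coincide with the corresponding ODE solution, so $\overline{v}(x,t) \equiv \overline{v}(t)$. Under the hypothesis $\frac{a_1}{a_2} > \max\{\frac{b_1}{c_2},\frac{c_1}{b_2}\}$, the classical competitive exclusion result gives $\overline{v}(t) \to 0$ as $t \to \infty$. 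Combined with $v \geq 0$ from Lemma \ref{lem:pos1}, a squeeze yields $v(x,t) \to 0$ uniformly in $x$.

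Next, I would sandwich $u$ between two logistic reaction-diffusion solutions. For the upper bound, non-negativity of $v$ gives $u_t \leq d_1 \Delta u + a_1 u - b_1 u^2$, and comparison with the associated logistic PDE (whose spatially homogeneous solution equals the ODE logistic solution and converges to $\frac{a_1}{b_1}$) yields $\limsup_{t \to \infty} u(x,t) \leq \frac{a_1}{b_1}$ uniformly. For the lower bound, fix $\epsilon > 0$ and use the uniform convergence $v \to 0$ to pick $T_\epsilon$ with $\|v(\cdot, t)\|_\infty < \epsilon$ for $t > T_\epsilon$; then on $(T_\epsilon, \infty) \times \Omega$ one has $u_t \geq d_1 \Delta u + (a_1 - c_1 \epsilon) u - b_1 u^2$. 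Standard theory for the scalar Fisher--KPP/logistic equation with Neumann boundary on a bounded domain guarantees uniform convergence of every positive solution to $\frac{a_1 - c_1 \epsilon}{b_1}$, so $\liminf_{t \to \infty} u(x,t) \geq \frac{a_1 - c_1 \epsilon}{b_1}$ uniformly. Letting $\epsilon \to 0^+$ and combining with the upper bound delivers $u \to \frac{a_1}{b_1}$ uniformly.

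The main technical point is the lower bound step, where the data $u(\cdot, T_\epsilon)$ at the switching time is no longer spatially homogeneous, so one cannot invoke Theorem \ref{thm:km1} directly. This is, however, exactly the regime handled by the standard $\omega$-limit analysis for scalar Neumann logistic equations on bounded domains, and so presents no genuine obstacle; the rest of the argument is a clean comparison and squeeze.
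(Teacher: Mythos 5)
Your proof is correct and follows essentially the same route as the paper: compare $v$ from above with the fear-free system \eqref{eq:lv_model} via Lemma \ref{lem:com1}, use Theorem \ref{thm:km1} and classical competitive exclusion to send $\overline{v}\to 0$, and squeeze with non-negativity from Lemma \ref{lem:pos1}. The paper's own proof stops there and simply asserts $(u,v)\to(\frac{a_1}{b_1},0)$; your logistic sandwich for the $u$-component (upper bound by dropping $-c_1uv$, lower bound with growth rate $a_1-c_1\epsilon$ after the time $\|v\|_\infty<\epsilon$, then $\epsilon\to 0^+$) supplies the detail the paper leaves implicit, and is a valid way to close that step.
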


\begin{proof}

	From the classical theory of competition \cite{Murray93}, we know the dynamics for \eqref{eq:lv_model}, that is in the competitive exclusion case, when
	\[  \dfrac{a_1}{a_2} > \max \Big\{ \dfrac{b_1}{c_2},\dfrac{c_1}{b_2}  \Big \} , \]
	we have 
	\[ (\overline{u},\overline{v}) \to \Big(\dfrac{a_1}{b_1},0\Big).\]
	 Now using Lemma \ref{lem:com1}, we have
	\begin{align*}
	 v    \leq \overline{v}.
	\end{align*}
	 Using the non negativity of solutions to \eqref{eq:PDE} via Lemma \ref{lem:pos1}, entails,
	\begin{align*}
	 0 \leq \lim_{t \rightarrow \infty} v    \leq \lim_{t \rightarrow \infty} (\overline{v}) = 0,
	\end{align*}
	subsequently,
	\begin{align*}
		  \lim_{t \rightarrow \infty}   (u,v)    \rightarrow 	\left(\frac{a_{1}}{b_{1}},0\right).
	\end{align*}
	
\end{proof}

\begin{theorem}\label{thm:ce_2_pde}
	Consider the reaction diffusion system \eqref{eq:PDE}, for a fear function $k(x)$, s.t the assumptions via \eqref{eq:as1} are met, and
	\[ \dfrac{a_2 b_1^2 \mathbf{\widetilde{k}}}{(b_1 + \mathbf{\widetilde{k}} a_1)^2}< \dfrac{b_1 b_2 - c_1 c_2}{c_1} \quad \text{and} \quad \dfrac{a_1}{a_2} < \min \Big\{ \dfrac{b_1}{c_2},\dfrac{c_1}{b_2}  \Big\}. \]
Then the solution $(u,v)$ to \eqref{eq:PDE}  converges uniformly to the spatially homogeneous state $(0, \frac{a_{2}}{b_{2}})$ as $t \to \infty$.
\end{theorem}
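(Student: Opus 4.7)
The plan is to mirror the squeeze argument used in the proof of Theorem \ref{thm:ce_1_pde}, only now aiming for convergence to the $v$-dominant boundary state. First I would apply Lemma \ref{lem:com1} to sandwich the fearful $v$-component of the solution to \eqref{eq:PDE} between $\tilde{v}$ (solution of the maximum-fear system \eqref{eq:lowest}) and $\overline{v}$ (solution of the fear-free Lotka--Volterra system \eqref{eq:lv_model}): $\tilde{v}\le v\le\overline{v}$.

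For the upper bound, the hypothesis $\frac{a_1}{a_2}<\min\{b_1/c_2,c_1/b_2\}$ is exactly the classical competitive exclusion regime, so the standard theory for \eqref{eq:lv_model} together with Theorem \ref{thm:km1} (which carries the ODE dynamics over to flat initial data in the PDE) yields $(\overline{u},\overline{v})\to(0,a_2/b_2)$ uniformly as $t\to\infty$. For the lower bound, the system \eqref{eq:lowest} is effectively a classical Lotka--Volterra system for $(\tilde u,\tilde v)$ with $\tilde v$ having constant effective intrinsic growth rate $a_2^{*}=a_2 b_1/(b_1+\widetilde{k}a_1)$. Under the hypothesis $\tfrac{a_2 b_1^{2}\widetilde{k}}{(b_1+\widetilde{k}a_1)^{2}}<\tfrac{b_1b_2-c_1c_2}{c_1}$, which is precisely the negation of the saddle criterion in Lemma \ref{ce_strong} with $k=\widetilde{k}$, the fear-modified interior nullclines no longer intersect in the open positive quadrant while the CE inequality on $a_1/a_2$ persists; hence the ODE version of \eqref{eq:lowest} satisfies $(\tilde u,\tilde v)\to(0,a_2/b_2)$ by the nullcline / phase plane argument analogous to that of Lemma \ref{lem:ce_ode_2} (applied in reversed roles). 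Invoking Theorem \ref{thm:km1} lifts this to the PDE.

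The squeeze $\tilde v\le v\le\overline{v}$ then forces $v(x,t)\to a_2/b_2$ uniformly. For the $u$-component, since $0\le u\le\overline{u}$ by Lemma \ref{lem:pos1} and Lemma \ref{lem:com1}, and $\overline{u}\to 0$, a second squeeze gives $u\to 0$ uniformly, completing the convergence $(u,v)\to(0,a_2/b_2)$.

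The main obstacle is the lower-bound step: confirming that the hypothesis $\tfrac{a_2 b_1^{2}\widetilde{k}}{(b_1+\widetilde{k}a_1)^{2}}<\tfrac{b_1b_2-c_1c_2}{c_1}$, combined with the CE inequality, really does force the fear-modified ODE of \eqref{eq:lowest} into the $v$-exclusion basin rather than allowing a stable coexistence point to persist. In principle one can argue by contradiction using Lemma \ref{lem:two_post_ode} and Lemma \ref{lem:dc1} (no periodic orbits), so that only boundary equilibria can attract, and then exclude $\widehat{E}_2$ via the opposite inequality to Lemma \ref{lem:ce_ode_1}; this is the step that requires the most care. As in Theorem \ref{thm:ce_1_pde22}, one could alternatively dispense with the $\widetilde{k}$-condition entirely and rely only on the upper comparison plus positivity, at the cost of weakening the quantitative content of the statement.
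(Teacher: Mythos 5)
Your overall architecture --- squeeze $v$ between a constant\mbox{-}fear comparison system and the fear\mbox{-}free system \eqref{eq:lv_model} via Lemma \ref{lem:com1}, handle the upper bound by classical competitive exclusion plus Theorem \ref{thm:km1}, and handle the $u$\mbox{-}component by positivity --- is the same as the paper's. The gap is in the lower bound, and it is genuine: you take the lower comparison to be $\tilde{v}$, the solution of \eqref{eq:lowest}. That system has the fear factor frozen at $u=\frac{a_1}{b_1}$, so $\tilde v$ obeys a classical competition equation with constant intrinsic rate $a_2^{*}=a_2b_1/(b_1+\mathbf{\widetilde{k}}a_1)<a_2$; its only boundary equilibrium on the $v$-axis is $(0,a_2^{*}/b_2)$, and $(0,a_2/b_2)$ is not an equilibrium of \eqref{eq:lowest} at all. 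Hence the claimed convergence $(\tilde u,\tilde v)\to(0,a_2/b_2)$ cannot hold, and your squeeze closes only to $a_2^{*}/b_2\le\liminf v\le\limsup v\le a_2/b_2$, which does not identify the limit. (Separately, putting \eqref{eq:lowest} in the ``$v$ wins'' regime requires $a_1/a_2^{*}<\min\{b_1/c_2,c_1/b_2\}$, which is strictly stronger than the stated hypothesis on $a_1/a_2$ and fails for large $\mathbf{\widetilde{k}}$, even though the first hypothesis of the theorem is then automatically satisfied since its left side tends to $0$.) The paper instead takes the lower comparison to be $\widetilde{v}$ from \eqref{eq:lower}, which retains the genuine factor $a_2/(1+\mathbf{\widetilde{k}}\,\widetilde{u})$: that factor equals $1$ when $\widetilde u=0$, so $(0,a_2/b_2)$ is the correct boundary state of the comparison system, and the nullcline argument (the $\widetilde v$-nullcline starts above the $\widetilde u$-nullcline at $u=0$ by the competitive-exclusion hypothesis and, under the slope condition encoded by the bound on $\mathbf{\widetilde{k}}$, remains above it) yields $(\widetilde u,\widetilde v)\to(0,a_2/b_2)$.

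Your closing suggestion --- drop the condition on $\mathbf{\widetilde{k}}$ and argue from positivity alone as in Theorem \ref{thm:ce_1_pde22} --- also does not transfer to this theorem. Positivity gives only $v\ge 0$, which suffices to force $v\to 0$ when the target state is on the $u$-axis, but cannot force $v\to a_2/b_2>0$ here; a nontrivial lower comparison whose $v$-component genuinely tends to $a_2/b_2$ is indispensable, which is exactly why the choice between \eqref{eq:lower} and \eqref{eq:lowest} matters.
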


\begin{proof}
From the classical theory of competition \cite{Murray93}, we know the dynamics for \eqref{eq:lv_model}, that is in the competitive exclusion case, when
\[  \dfrac{a_1}{a_2} < \min \Big\{ \dfrac{b_1}{c_2},\dfrac{c_1}{b_2}  \Big\} , \]
we have 
\[ (\overline{u},\overline{v}) \to \Big(0,\dfrac{a_2}{b_2}\Big).\]
Consider the nullclines of \eqref{eq:lower}.  If
\[ \dfrac{d}{du} \Big[ \dfrac{a_1-b_1 u}{c_1} \Big] < \dfrac{d}{du} \Big[ \dfrac{1}{b_2} \Big( \dfrac{a_2}{1+ \mathbf{\widetilde{k}} u} -c_2 u \Big)\Big], \]
it follows via the geometry of the nullclines, that the $v$-nullcline remains above the $u$-nullcline, and thus
 $(\widetilde{u},\widetilde{v}) \to \Big(0, \dfrac{a_2}{b_2}\Big)$.
On simplification, and using the upper bound estimate for $u$, we have
\[ \dfrac{a_2 b_1^2 \mathbf{\widetilde{k}}}{(b_1 + k a_1)^2} < \dfrac{a_2 \mathbf{\widetilde{k}}} {(1+\mathbf{\widetilde{k}} u)^2}< \Big( \dfrac{b_1b_2 -c_1 c_2}{c_1} \Big). \]
Now using Lemma \ref{lem:com1} we have,
\begin{align*}
	\widetilde{v}	 \leq   v    \leq \overline{v},
\end{align*}
which entails,
\begin{align*}
	\lim_{t \rightarrow \infty}(\widetilde{u}, \widetilde{v})	 \leq  \lim_{t \rightarrow \infty} (u,v)    \leq \lim_{t \rightarrow \infty} (\overline{u},\overline{v}),
\end{align*}
subsequently,
\begin{align*}
	\left(0,\frac{a_{2}}{b_{2}}\right)	 \leq  \lim_{t \rightarrow \infty}   (u,v)    \leq 	\left(0,\frac{a_{2}}{b_{2}}\right).
\end{align*}
Now using a squeezing argument, in the limit that $t \rightarrow \infty$, we have uniform convergence of solutions of \eqref{eq:PDE}, i.e.,
\[ (u,v) \to \Big(0,\dfrac{a_2}{b_2}\Big) \] as $t \rightarrow \infty$.
\end{proof}

\begin{figure}[h]
	\begin{subfigure}[b]{.475\linewidth}
		\includegraphics[width=\linewidth,height=1.8in]{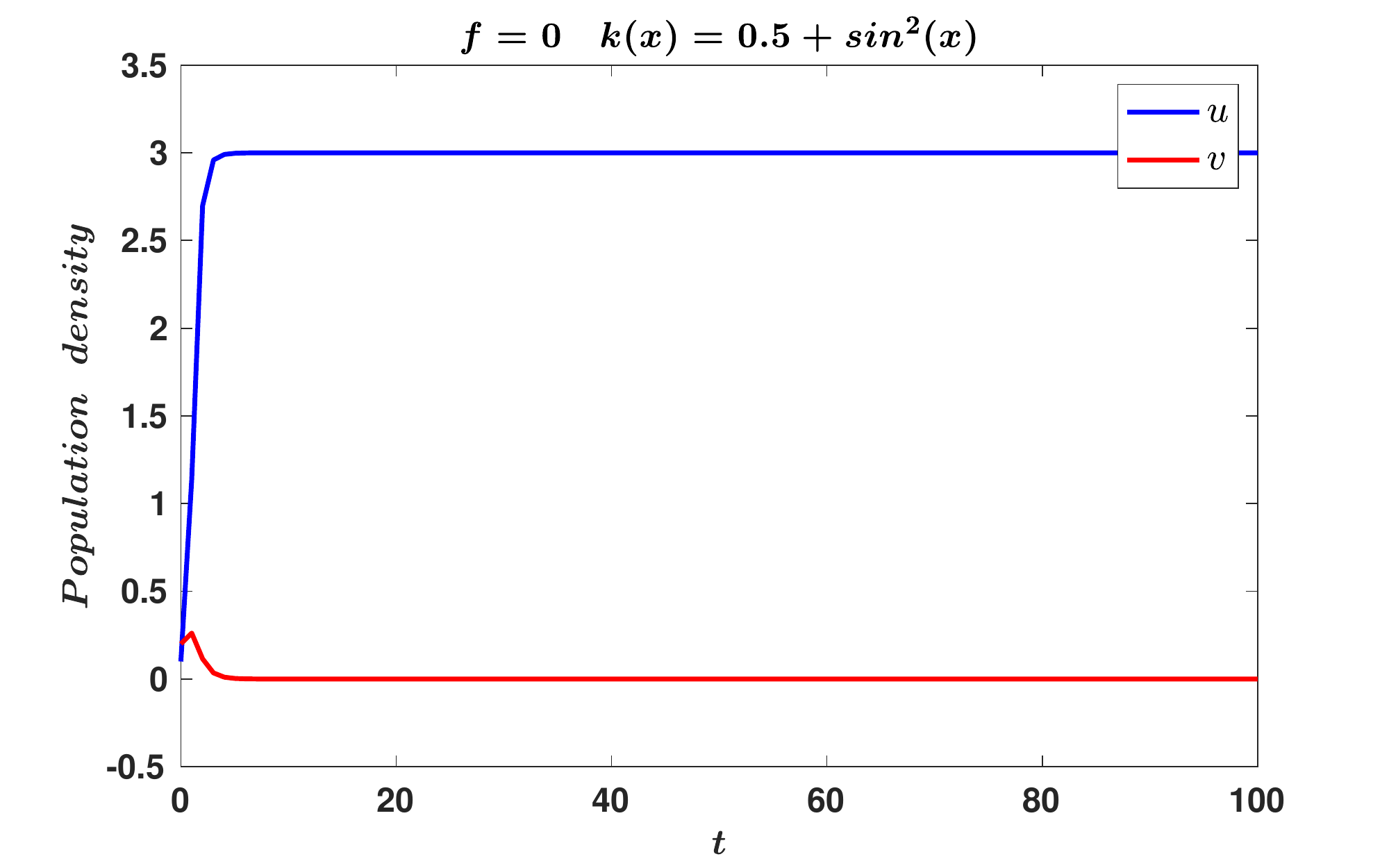}
		\caption{$f=0$ and $k=0.5+\sin^2 (x)$}
	\end{subfigure}
	\hfill
	\begin{subfigure}[b]{.475\linewidth}
		\includegraphics[width=\linewidth,height=1.8in]{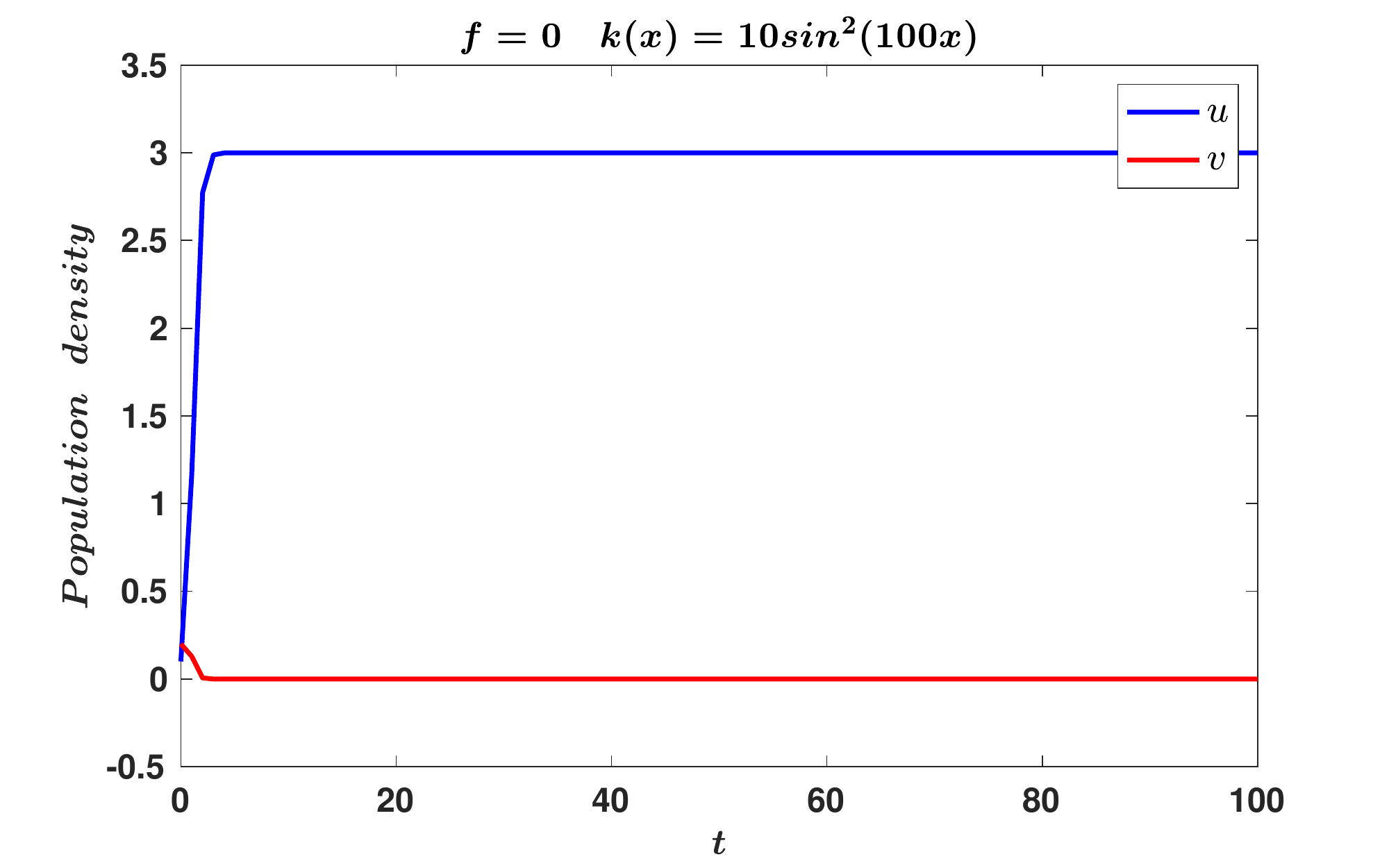}
		\caption{$f=0$ and $k=10+\sin^2 (x)$}
	\end{subfigure}
	\newline
	\begin{subfigure}[b]{.475\linewidth}
		\includegraphics[width=\linewidth,height=1.8in]{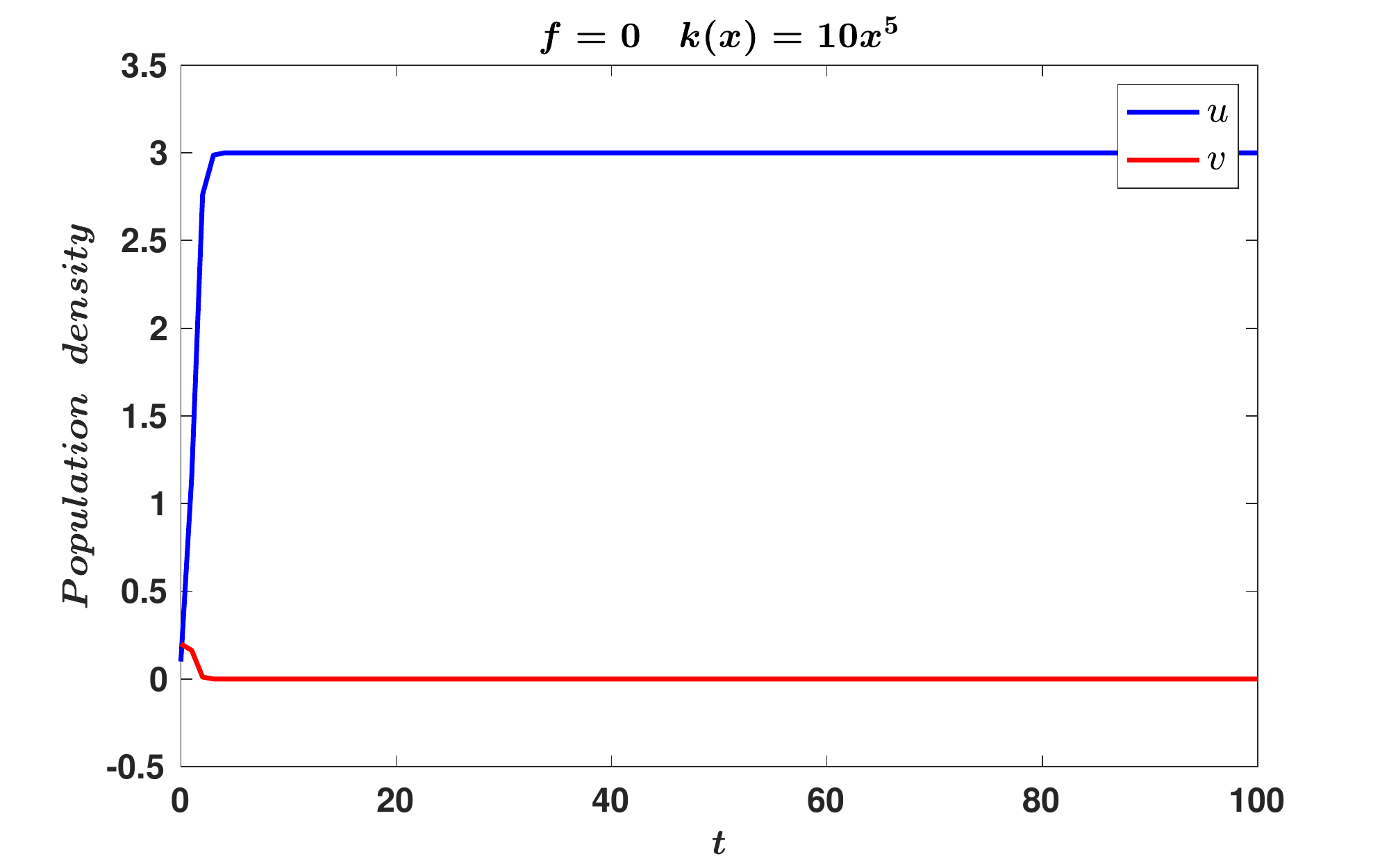}
		\caption{$f=0$ and $k=10x^5$}
	\end{subfigure}
	\hfill
	\begin{subfigure}[b]{.475\linewidth}
		\includegraphics[width=\linewidth,height=1.8in]{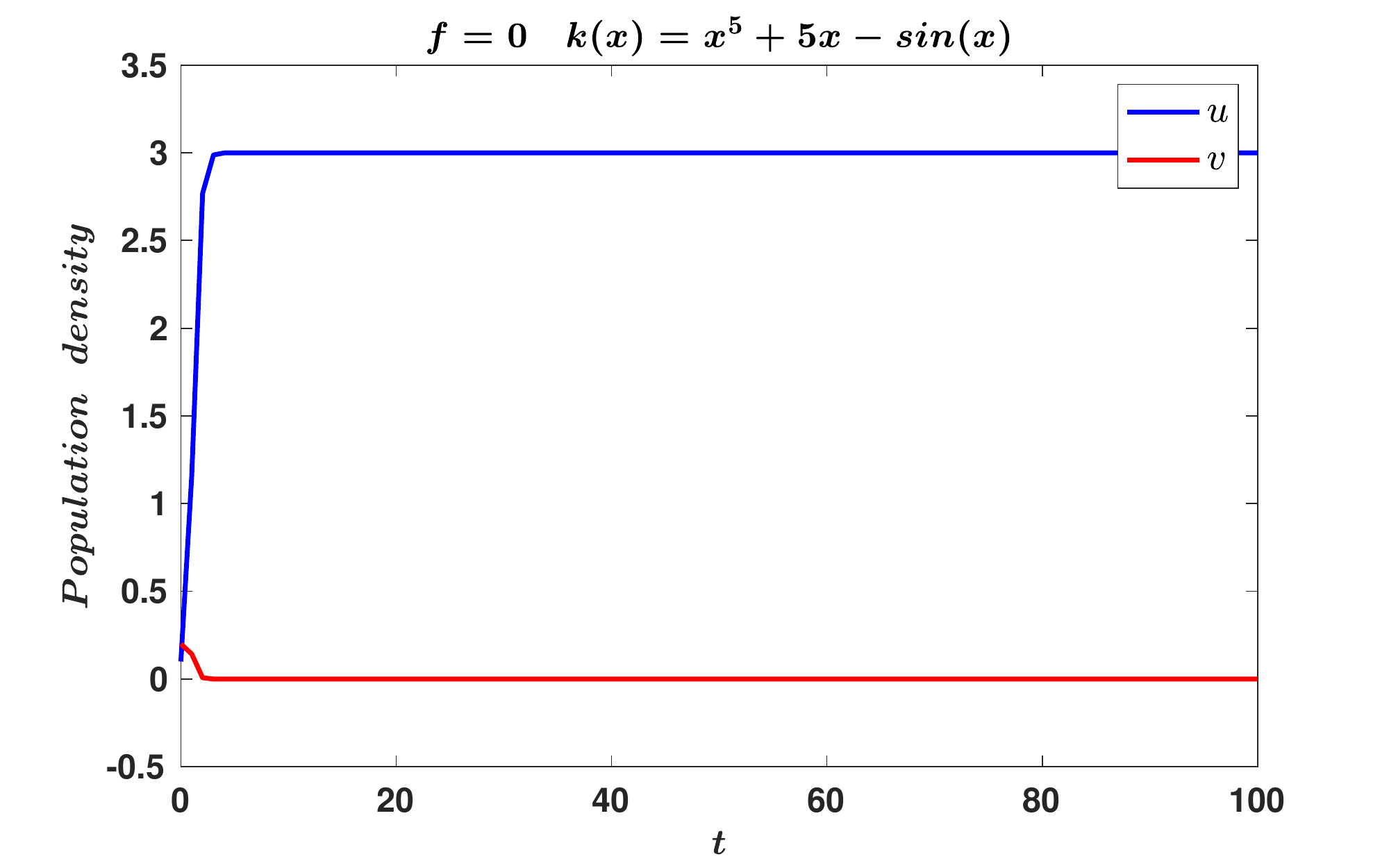}
		\caption{$f=0$ and $k=x^5+5x-\sin(x)$}
	\end{subfigure}
	\caption{Numerical simulation of $(\ref{eq:PDE})$ for the case of competition exclusion in $\Omega=[0,1]$. The parameters are chosen as $[u_0,v_0]=[0.1,0.2],d_1=1,d_2=1,a_1=3,a_2=1,b_1=b_2=1$ and $c_1=c_2=0.5$.}
	\label{fig:pde_ce1_pde1}
\end{figure}

\begin{figure}[h]
	\begin{subfigure}[b]{.475\linewidth}
		\includegraphics[width=\linewidth,height=1.8in]{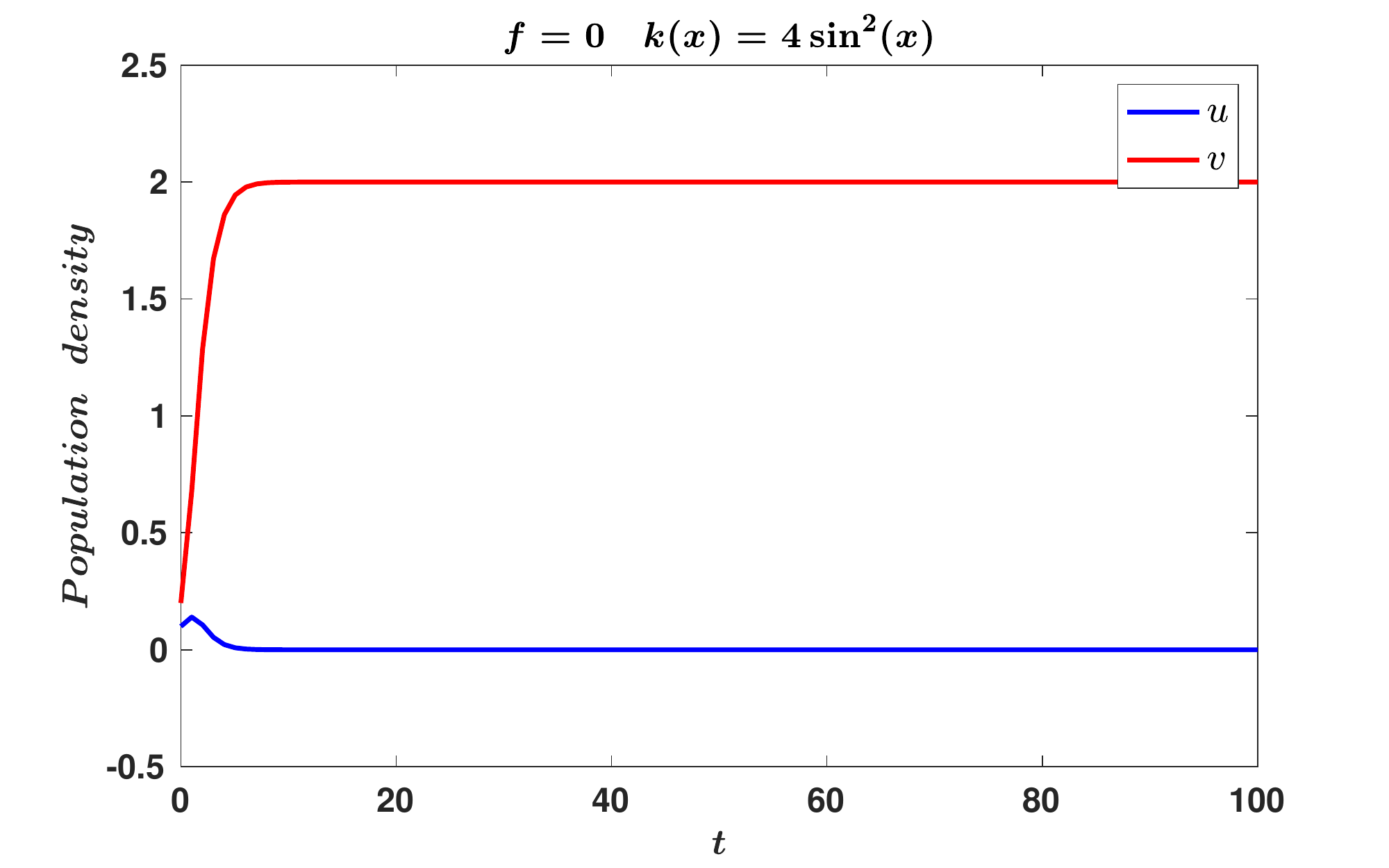}
		\caption{$f=0$ and $k=4\sin^2 (x)$}
	\end{subfigure}
	\hfill
	\begin{subfigure}[b]{.475\linewidth}
		\includegraphics[width=\linewidth,height=1.8in]{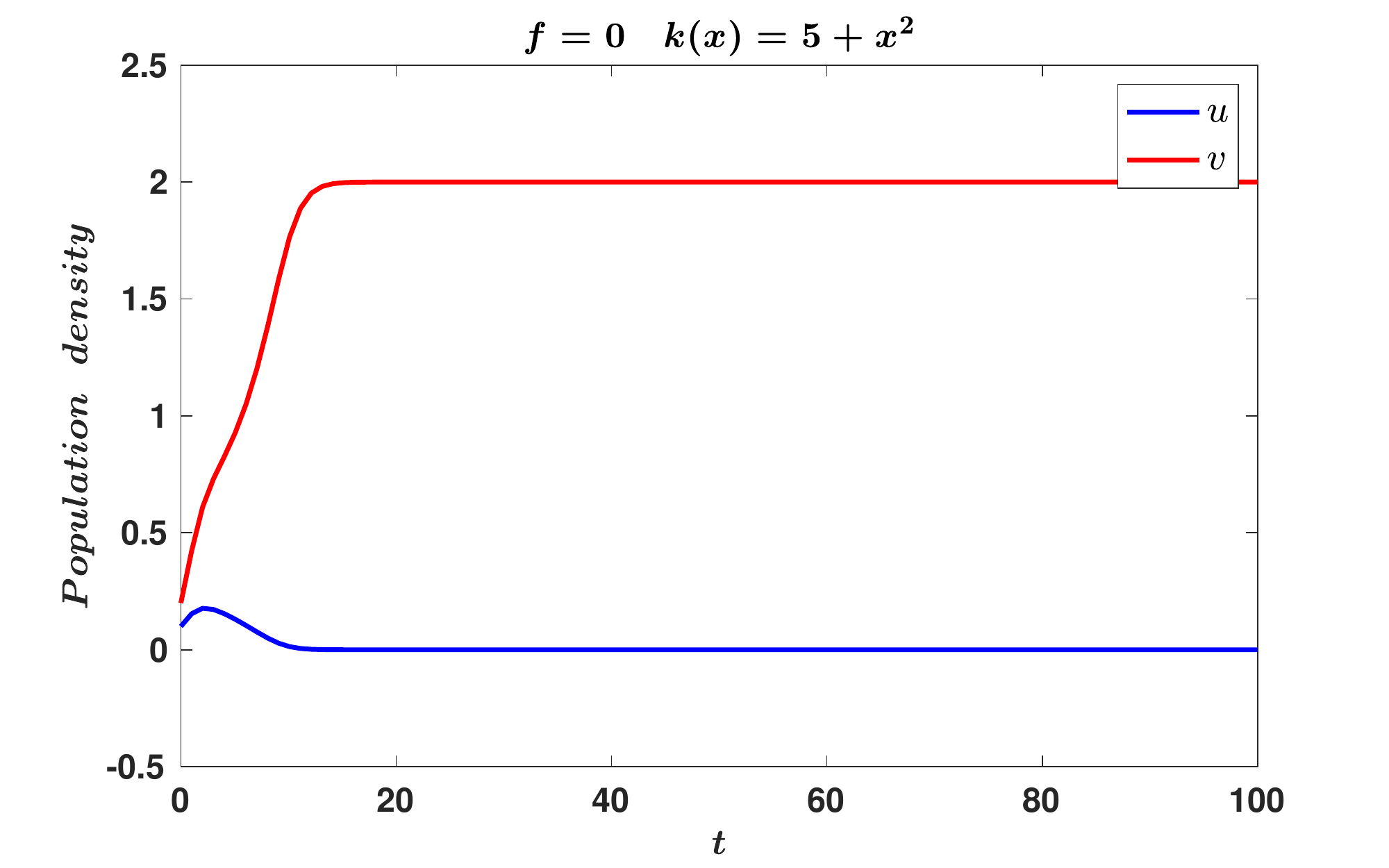}
		\caption{$f=0$ and $k=4+x^2$}
	\end{subfigure}
	\caption{Numerical simulation of $(\ref{eq:PDE})$ for the case of competition exclusion in $\Omega=[0,1]$. The parameters are chosen as $[u_0,v_0]=[0.1,0.2],d_1=1,d_2=1,a_1=1,a_2=2,b_1=2, b_2=1,c_1=1$ and $c_2=1$.}
	\label{fig:pde_ce2_pde1}
\end{figure}

\subsection{The Weak Competition Case}

We state the following result,
\begin{lemma}\label{lem:ce_1_pde-l}
	Consider the reaction diffusion system \eqref{eq:PDE}, for a fear function $k(x)$, s.t the assumptions via \eqref{eq:as1} are met, with  
	\[ \mathbf{\widetilde{k} }>\dfrac{a_2 b_1^2-c_2a_1b_1}{a_1^2c_2} \quad \text{and} \quad b_{1}b_{2} > 2 c_{1} c_{2}. \]
	Then the solution $(u,v)$ to \eqref{eq:PDE}  converges uniformly to the spatially homogeneous state $(u^*,0)$ as $t \to \infty$.
\end{lemma}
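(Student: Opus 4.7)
The plan is to emulate the sub/super-solution sandwich used in Theorem \ref{thm:ce_1_pde}, feeding in the ODE global stability statement from Lemma \ref{lem:ce_ode_2}. The hypotheses $\widetilde{k} > (a_{2}b_{1}^{2} - c_{2}a_{1}b_{1})/(a_{1}^{2}c_{2})$ and $b_{1}b_{2} > 2 c_{1} c_{2}$ place the constant-fear ODE \eqref{eq:ODE2} with $k \equiv \widetilde{k}$ in precisely the regime where Lemma \ref{lem:ce_ode_2} yields global asymptotic stability of $(a_{1}/b_{1},0)$ (weak competition lifted to competitive exclusion by sufficiently large fear). Because the initial data for \eqref{eq:PDE} are spatially flat, Theorem \ref{thm:km1} lifts this ODE conclusion to the companion PDE \eqref{eq:lower}, giving $(\widetilde{u}(x,t), \widetilde{v}(x,t)) \to (a_{1}/b_{1}, 0)$ uniformly in $x$ as $t \to \infty$.

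I would then invoke the pointwise comparison chain of Lemma \ref{lem:com1}, $\widetilde{v} \le v \le \widehat{v} \le \overline{v}$, together with the analogous chain for $u$. The scalar logistic bound $u(x,t) \le a_{1}/b_{1}$ combined with $\widetilde{u} \le u$ and $\widetilde{u} \to a_{1}/b_{1}$ squeezes the $u$-component: $u(x,t) \to a_{1}/b_{1}$ uniformly. The lower estimate $\widetilde{v} \to 0$ gives $0 \le \liminf_{t\to\infty} v(x,t)$, so the argument reduces to producing a matching upper bound $\limsup_{t\to\infty} v(x,t) = 0$.

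The main obstacle is precisely this upper bound. Under weak competition $b_{1}b_{2} > 2 c_{1} c_{2}$ the classical fear-free super-solution $\overline{v}$ converges to the \emph{positive} coexistence value of \eqref{eq:lv_model}, so the naive squeeze against $\overline{v}$ is inadequate. The cleanest remedy is to apply Lemma \ref{lem:ce_ode_2} to the minimum-fear companion system \eqref{eq:upper}: if $\widehat{k} = \min_{x} k(x)$ also exceeds $k_{c}$, then by the same lift via Theorem \ref{thm:km1} one obtains $\widehat{v} \to 0$ uniformly, and the chain $0 \le v \le \widehat{v}$ closes the proof. If one insists on using only $\widetilde{k} > k_{c}$, the deficit must be absorbed by a nullcline-geometry argument in the spirit of the proof of Lemma \ref{lem:ce_ode_2}. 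Specifically, I would use the a priori bound $u \le a_{1}/b_{1}$ to show that the $v$-nullcline of \eqref{eq:PDE} eventually sits beneath the $u$-nullcline, then compare $v$ from above to a spatially homogeneous linear super-solution built from the straight line joining the axis intercepts of that $v$-nullcline (on which decay to zero is explicit). Closing this geometric step rigorously in the heterogeneous-$k(x)$ setting is where I expect the genuine work to lie.

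Finally, once $v \to 0$ uniformly is established, substituting into the $u$-equation and invoking scalar comparison with the logistic reaction-diffusion equation gives $u \to a_{1}/b_{1}$ uniformly, completing the proof.
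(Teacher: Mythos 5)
Your approach is the same as the paper's: the comparison sandwich of Lemma \ref{lem:com1}, the ODE exclusion result (Lemma \ref{lem:cotce} / Lemma \ref{lem:ce_ode_2}) lifted to the constant-fear companion systems via Theorem \ref{thm:km1} and flat initial data, and a final squeeze. The paper's entire proof is the chain $\widetilde{v} \leq v \leq \widehat{v}$ followed by ``the result follows via similar analysis as in Theorem \ref{thm:ce_1_pde}'' --- and you have put your finger on exactly why that does not close as written. The hypothesis only constrains $\mathbf{\widetilde{k}} = \max_{x} k(x)$, which controls the \emph{lower} barrier $\widetilde{v}$; the \emph{upper} barrier $\widehat{v}$ solves \eqref{eq:upper} with $\mathbf{\widehat{k}} = \min_{x} k(x)$, and if $\mathbf{\widehat{k}} < k_{c}$ then, in the weak competition regime $b_1 b_2 > 2 c_1 c_2$, Lemma \ref{lem:co_ode_1} gives a stable interior equilibrium for that companion system, so $\widehat{v}$ converges to a strictly positive value and the squeeze only yields $0 \leq \lim v \leq v^{**}$ with $v^{**} > 0$. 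This is a genuine gap in the paper's argument, not in yours: unlike Theorem \ref{thm:ce_1_pde}, there is no fear-free competitive exclusion available to force the upper barrier to zero.

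Your proposed remedy --- requiring $\mathbf{\widehat{k}} > k_{c}$, i.e.\ $k(x) > k_{c}$ pointwise, so that Lemma \ref{lem:ce_ode_2} applies to \eqref{eq:upper} as well and $\widehat{v} \to 0$ --- is the correct fix, and it is consistent with the paper's own subsequent remark that the pointwise condition is ``certainly sufficient'' in the PDE setting. Your fallback nullcline-geometry route would not rescue the lemma under the weaker hypothesis either: if $k(x)$ exceeds $k_{c}$ only on a small set, the reaction term for $v$ is essentially the weak-competition one on most of $\Omega$ and coexistence should be expected (this is precisely why the paper later resorts to the integral condition of Theorem \ref{thm:cotoce_1_pde}). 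The remaining steps of your write-up --- $u \leq a_1/b_1$ from the logistic comparison, and $u \to a_1/b_1$ once $v \to 0$ --- are fine.
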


\begin{proof}
Via Lemma \ref{lem:cotce}, and the parametric restrictions assumed we have,

\begin{align*}
	\widetilde{v}	 \leq   v    \leq  \widehat{v},
	\end{align*}
	
	and the result follows via similar analysis as in Theorem \ref{thm:ce_1_pde}.
\end{proof}

%\begin{lemma}\label{lem:co_1_pde}
%	Consider the reaction diffusion system \eqref{eq:PDE-c}, for a fear function $k(x)$, s.t the assumptions via \eqref{eq:as1} are met, with 
%	\[ \mathbf{\widetilde{k} }<\dfrac{a_2 b_1^2-c_2a_1b_1}{a_1^2c_2} \quad \text{and} \quad b_{1}b_{2} > 2 c_{1} c_{2}  , \].
%	 Then the solution $(u,v)$ \eqref{eq:PDE-c}, is bounded above and below as,
%$(\widetilde{u} , \widetilde{v}) \leq (u,v) \leq (\widehat{u} , \widehat{v})$.	
%	  
%\end{lemma}

%\begin{proof}
%	On using the similar comparison argument, we have 
%	\begin{align*}
%		\Big\{  \ref{lower} \Big\} \le \Big\{ \ref{PDE} \Big\} \le \Big\{ \ref{lv_model} \Big\}.
%	\end{align*}
	
%	The assumption $b_2b_1>\frac{b_1^2 c_1 \mathbf{\hat{k}}  a_2}{(b_1+\mathbf{\hat{k}}  a_1)^2}$ and parametric restriction $(\ref{one_post_weak})$ ensures the use of Theorem~$\ref{co_ode_1}$. Hence, we have $(\widetilde{u},\widetilde{v}) \to \Big(u^*,v^*\Big)$. Moreover by making use of the parametric restrictions for weak-competition for $(\ref{lv_model})$, we have $(\overline{u},\overline{v}) \to \Big(u^*,v^*\Big)$. Hence the result follows by comparison.
%	
%%	Hence, on following the squeezing argument, we have  uniform convergence of solutions of $(\ref{PDE})$,i.e.,
%%	\[ (u,v) \to \Big(u^*,v^*\Big) \]
%%	for any choice of flat initial data.
%\end{proof}

\begin{lemma}\label{lem:ce_1_pde-l}
	Consider the reaction diffusion system \eqref{eq:PDE}, s.t. we are in the weak competition case when $k(x) \equiv 0$. Then given $0< \epsilon << 1$, there exists a fear function $k_{\epsilon}(x)$, for which the assumptions via \eqref{eq:as1} are met,
s.t. the solution $(u,v)$ to \eqref{eq:PDE} with the fear function $k_{\epsilon}(x)$, converges uniformly to a spatially homogeneous state $(u^*,v^{*})$ as $t \to \infty$.
\end{lemma}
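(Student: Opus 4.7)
My plan is to reduce the PDE to the underlying ODE by selecting a spatially constant fear function, and then to invoke the ODE analysis already available in Section 2. Specifically, I would take $k_\epsilon(x) \equiv \epsilon$; this is a positive constant, and hence trivially satisfies \eqref{eq:as1} (its zero set is empty, making the measure-theoretic conditions (iii)--(iv) vacuous). With this choice both the reaction nonlinearities and the prescribed initial data $(u_0, v_0) \equiv (c,d)$ are spatially homogeneous, so Theorem \ref{thm:km1} forces the PDE solution of \eqref{eq:PDE} to coincide for all $t$ with the ODE solution of \eqref{eq:ODE2} at fear coefficient $k=\epsilon$. It therefore suffices to show that for $\epsilon$ sufficiently small, this ODE possesses a globally asymptotically stable positive interior equilibrium $(u^*_\epsilon, v^*_\epsilon)$, which we then identify with the claimed $(u^*, v^*)$.

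For the ODE step, I would argue as follows. The classical weak competition regime at $k=0$ satisfies $\frac{c_2}{b_1} < \frac{a_2}{a_1} < \frac{b_2}{c_1}$ together with $b_1 b_2 > c_1 c_2$. As $\epsilon \to 0^+$, the quadratic coefficients in \eqref{quad_param} vary continuously from their degenerate ($A=0$) linear limit, so for $\epsilon$ small there is a unique positive interior equilibrium $\widehat{E}_4 = (u^*_\epsilon, v^*_\epsilon)$ close to the classical coexistence point, and the second quadratic root is pushed outside the biologically meaningful range $v < a_1/c_1$. By Lemma \ref{lem:co_ode_1}, taking $\epsilon < \frac{1}{a_2}\!\left(\frac{b_1 b_2}{c_1} - c_2\right)$ makes $\widehat{E}_4$ locally asymptotically stable. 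Combining the stability criterion for $\widehat{E}_2$ in Lemma \ref{lem:ce_ode_1} with the inequality $a_1 b_2 > c_1 a_2$ (which follows from the weak-competition restriction and controls stability of $\widehat{E}_3$), I would verify that for $\epsilon$ sufficiently small all three boundary equilibria $\widehat{E}_1, \widehat{E}_2, \widehat{E}_3$ are unstable (respectively source, saddle, and saddle) and repel into the open positive quadrant. Since the $u$- and $v$-components are bounded above by comparison with the logistic equation, orbits remain inside a compact forward-invariant rectangle of the positive quadrant. Lemma \ref{lem:dc1} rules out periodic orbits for every $k>0$, so by the Poincar\'e--Bendixson theorem every positive orbit has its $\omega$-limit set equal to an equilibrium, which must be $\widehat{E}_4$. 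Global asymptotic stability in the ODE, combined with Theorem \ref{thm:km1}, then yields uniform convergence of the PDE solution to the spatially homogeneous state $(u^*_\epsilon, v^*_\epsilon)$.

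The main obstacle is the upgrade from the purely local stability supplied by Lemma \ref{lem:co_ode_1} to the global stability needed to identify the $\omega$-limit set uniquely. This rests on carefully tracking that the weak-competition phase portrait persists under a small perturbation in $k$; in particular, that $\widehat{E}_2$ remains a saddle (its stability threshold in Lemma \ref{lem:ce_ode_1} is bounded strictly away from zero under the weak-competition restrictions) and that $\widehat{E}_3$ remains a saddle (its stability criterion $a_1 b_2 < c_1 a_2$ is independent of $k$ and already violated). An alternative route is the sandwich Lemma \ref{lem:com1}, but with the constant choice $k_\epsilon \equiv \epsilon$ the upper and lower comparison systems collapse into the same ODE, so the sandwich adds nothing beyond Theorem \ref{thm:km1}; the payoff of the sandwich approach would only appear if one wanted to allow genuinely non-constant $k_\epsilon(x)$, in which case one must further argue via continuity that the upper and lower limit equilibria coalesce to a common spatially homogeneous state.
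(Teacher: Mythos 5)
Your proof is correct, and it takes a genuinely different route from the paper's. The paper constructs a (possibly heterogeneous) fear function pinched within $\epsilon$ of the critical level $k_{c}$, sandwiches the solution between the two constant-fear comparison systems of Lemma \ref{lem:com1}, invokes Lemma \ref{lem:co_ode_1} to send both bounding systems to coexistence states, and then squeezes as $\epsilon \to 0$ so that the two limiting states coalesce; the conclusion is really the double limit $\lim_{\epsilon\to 0}\lim_{t\to\infty}$. You instead exploit the fact that the statement is purely existential: the constant function $k_{\epsilon}\equiv\epsilon$ satisfies \eqref{eq:as1} vacuously, Theorem \ref{thm:km1} collapses the PDE onto the ODE \eqref{eq:ODE2} exactly, and you then prove \emph{global} convergence to the interior equilibrium for small constant fear via the continuity of the roots of \eqref{quad_param}, instability of the boundary equilibria (Lemma \ref{lem:ce_ode_1} plus $a_1b_2>c_1a_2$), the Dulac argument of Lemma \ref{lem:dc1}, and Poincar\'e--Bendixson. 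This buys you two things the paper's argument does not: convergence for each \emph{fixed} small $\epsilon$ rather than only in a double limit, and a repair of the gap in the paper's proof, which cites the purely local stability result Lemma \ref{lem:co_ode_1} to assert global convergence of the bounding systems. (To make your Poincar\'e--Bendixson step airtight you should add the standard Butler--McGehee remark that the boundary saddles cannot lie in the $\omega$-limit set of an interior orbit because their stable manifolds are contained in the invariant axes and $(0,0)$ is a source; alternatively, quote the convergence theorem for planar competitive systems.) What you give up is any information about genuinely heterogeneous fear functions, which is the payoff of the paper's sandwich construction -- you correctly note that for constant $k_{\epsilon}$ the upper and lower systems in Lemma \ref{lem:com1} coincide, so the comparison machinery adds nothing in your setting.
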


\begin{proof}
	Given $0< \epsilon << 1$, we can always construct a $k_{\epsilon}$, s.t $k_{c} - \epsilon \leq \widehat{k}_{\epsilon} $, whereas $\widetilde{k}_{\epsilon} \leq k_{c} + \epsilon$. Thus via Lemma \ref{lem:com1} we have,
	
	\begin{align*}
		\widetilde{v}	 \leq   v_{\epsilon}    \leq \widehat{v}.
	\end{align*}
	
	Lemma \ref{lem:co_ode_1} ensures we have $(\widetilde{u},\widetilde{v}) \to (u^*,v^{*})$ and $(\widehat{u},\widehat{v}) \to (u^{**},v^{**})$, where the spatially homogeneous solutions may be different.
	Hence, via squeezing argument, we can take $\epsilon \rightarrow 0$, to yield the uniform convergence of solutions, i.e.,
	\[ \lim_{\epsilon \rightarrow 0} \lim_{t \rightarrow \infty} (u_{\epsilon},v_{\epsilon}) \to (u^*,v^{*}). \] 
	This proves the lemma.
\end{proof}

\begin{figure}[h]
	\begin{subfigure}[b]{.475\linewidth}
		\includegraphics[width=\linewidth,height=1.8in]{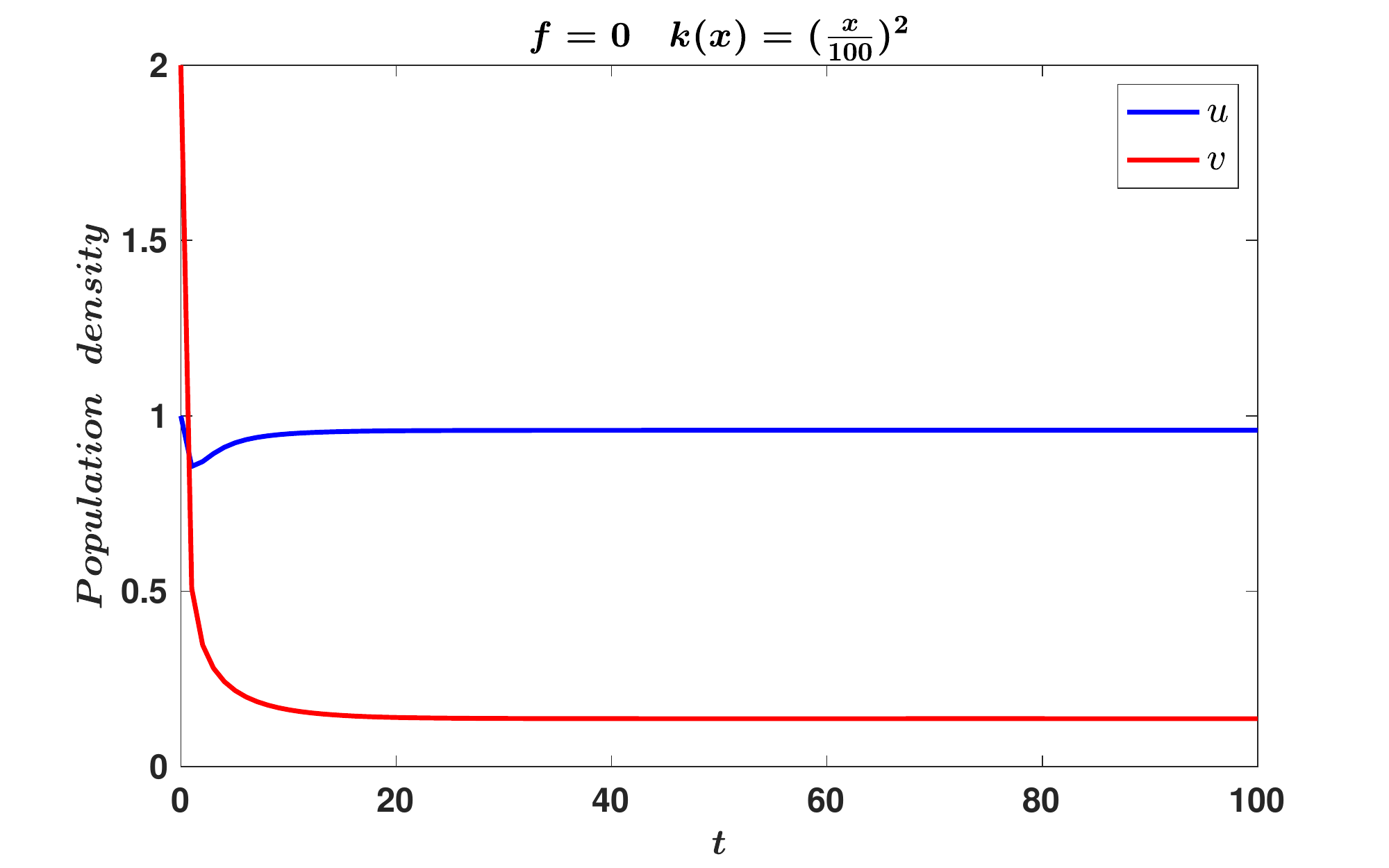}
		\caption{$[u_0,v_0]=[1,2]$ }
	\end{subfigure}
	\hfill
	\begin{subfigure}[b]{.475\linewidth}
		\includegraphics[width=\linewidth,height=1.8in]{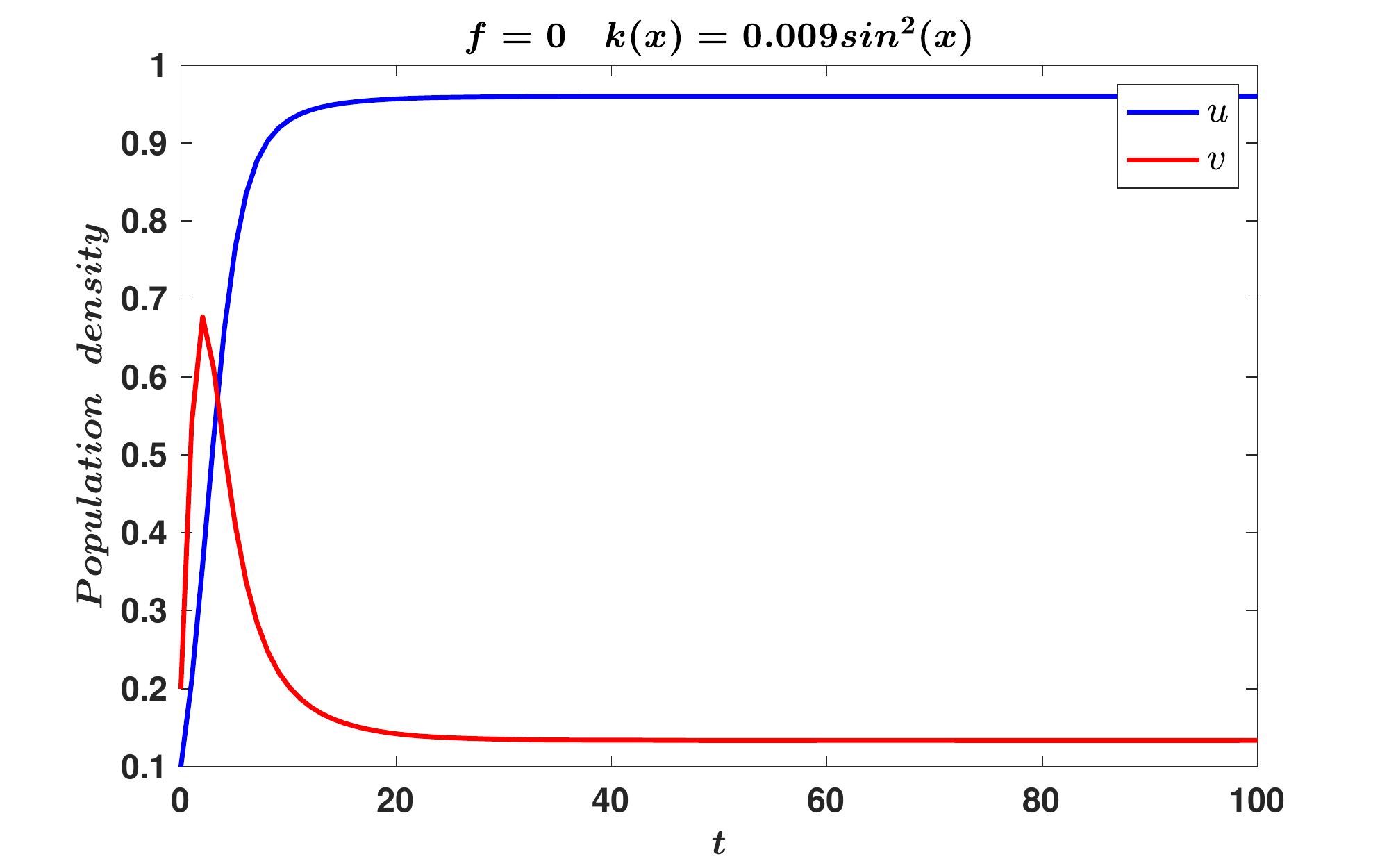}
		\caption{$[u_0,v_0]=[0.1,0.2]$}
	\end{subfigure}
	\newline
	\begin{subfigure}[b]{.475\linewidth}
		\includegraphics[width=\linewidth,height=1.8in]{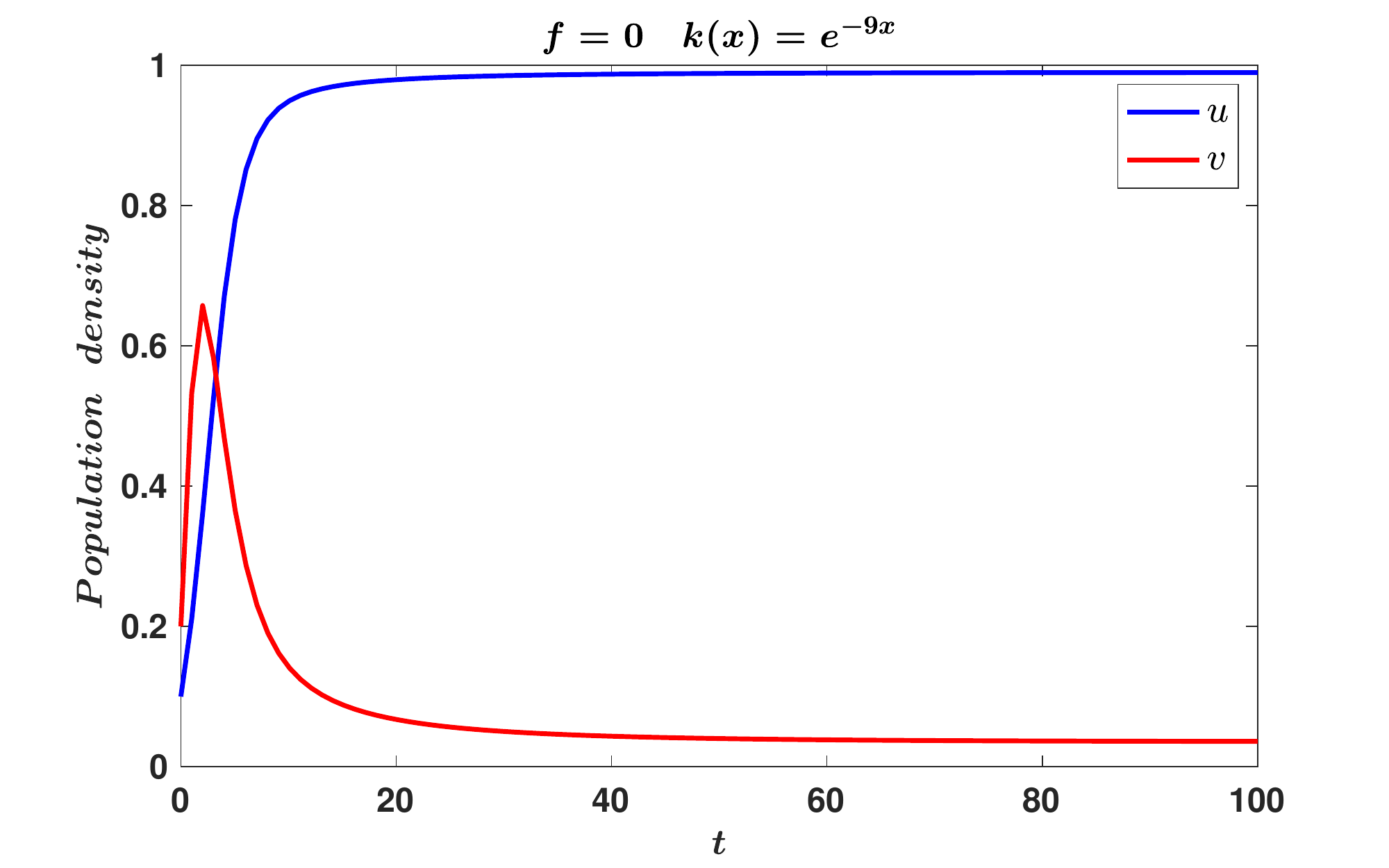}
		\caption{$[u_0,v_0]=[0.1,0.2]$}
	\end{subfigure}
	\hfill
	\begin{subfigure}[b]{.475\linewidth}
		\includegraphics[width=\linewidth,height=1.8in]{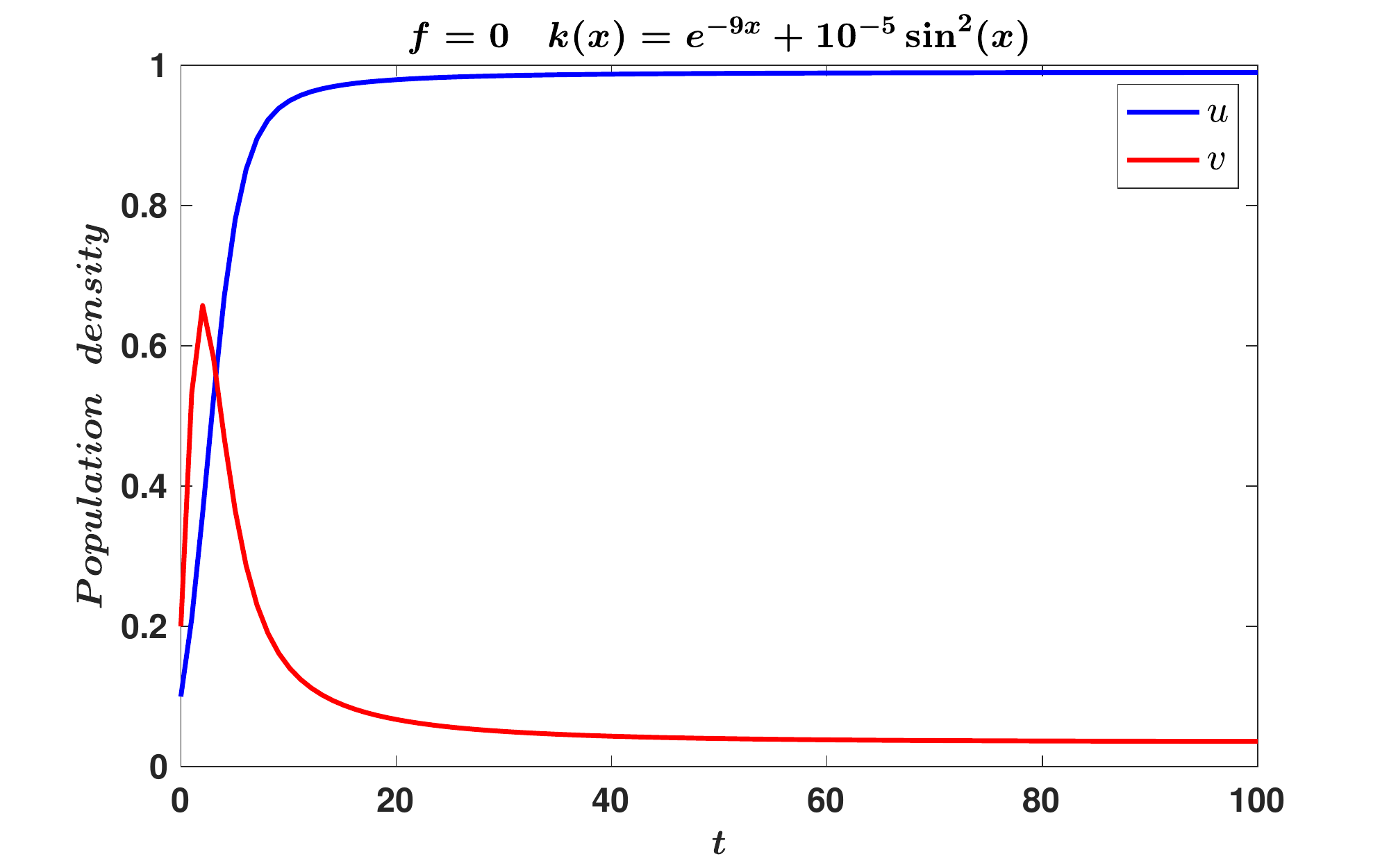}
		\caption{$[u_0,v_0]=[0.1,0.2]$}
	\end{subfigure}
	\caption{Numerical simulation of $(\ref{eq:PDE})$ for the case of weak-competition in $\Omega=[0,1]$. The parameters are chosen as $d_1=1,d_2=1,a_1=1,a_2=2,b_1=1,b_2=2,c_1=0.3$ and $c_2=1.8$.}
	\label{fig:pde_co1_pde1}
\end{figure}

The numerical simulations above motivate the following conjecture,
\begin{conjecture}\label{conj:co_1_pde3}
	Consider the reaction diffusion system \eqref{eq:PDE}, for a fear function $k(x)$, s.t the assumptions via \eqref{eq:as1} are met, with 
	\[ \mathbf{\widetilde{k}} < \frac{1}{a_{2}} \left(\frac{b_{1}b_{2}}{c_{1}} - c_{2}\right),\]
	%\[ b_2b_1>\frac{b_1^2 c_1 \mathbf{\widetilde{k}}  a_2}{(b_1+\mathbf{\widehat{k}}  a_1)^2}, \]
	and the parameters follow $(\ref{one_post_weak})$ and Theorem~$\ref{thm:exist}$, then the solution $(u,v)$ to \eqref{eq:PDE}  converges uniformly to the spatially homogenous state $(u^*,v^{*})$ as $t \to \infty$.

\end{conjecture}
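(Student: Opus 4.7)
The plan is to follow the sandwich template used in Theorems \ref{thm:ce_1_pde} and \ref{thm:ce_2_pde}, now comparing against the two constant-fear auxiliary systems \eqref{eq:upper} and \eqref{eq:lower} in the weak competition regime. First I would verify that the single hypothesis $\mathbf{\widetilde{k}} < \frac{1}{a_{2}}\!\left(\frac{b_{1}b_{2}}{c_{1}}-c_{2}\right)$ already forces $\mathbf{\widehat{k}} \leq \mathbf{\widetilde{k}}$ to satisfy the coexistence condition of Lemma \ref{lem:co_ode_1}, so that both auxiliary systems admit a unique positive interior equilibrium (using Theorem \ref{thm:exist} and \eqref{one_post_weak}). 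Call these $(\widehat{u}^{*},\widehat{v}^{*})$ and $(\widetilde{u}^{*},\widetilde{v}^{*})$. By Lemma \ref{lem:com1} we get the pointwise bounds $\widetilde{v}(x,t)\leq v(x,t)\leq \widehat{v}(x,t)$, and by Theorem \ref{thm:km1} the flat initial data ensures $(\widehat{u},\widehat{v})$ and $(\widetilde{u},\widetilde{v})$ remain spatially homogeneous and track the corresponding ODE orbits of \eqref{eq:ODE2}.

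The second step is to upgrade local stability of each auxiliary interior equilibrium to global attraction for positive data. Since Lemma \ref{lem:dc1} rules out periodic orbits for \eqref{eq:ODE2}, and since the only other positive equilibrium (if any) is a saddle by Lemma \ref{lem:two_post_ode}, Poincar\'e--Bendixson together with an invariant-region argument in the first quadrant should pin every positive orbit down to $(\widehat{u}^{*},\widehat{v}^{*})$, respectively $(\widetilde{u}^{*},\widetilde{v}^{*})$. Combined with the sandwich, this yields
\[
\widetilde{v}^{*}\leq \liminf_{t\to\infty} v(x,t) \leq \limsup_{t\to\infty} v(x,t) \leq \widehat{v}^{*},
\]
uniformly in $x$, and a parallel one-sided bound on $u$.

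The real obstacle is step three: closing the gap between $\widetilde{v}^{*}$ and $\widehat{v}^{*}$ to obtain convergence to a single spatially homogeneous state. A direct squeeze fails because $\mathbf{\widehat{k}}<\mathbf{\widetilde{k}}$ produces strictly different coexistence values. My proposed route is a monotone-iteration in the competitive ordering $(u,v)\preceq (u',v')\iff u\leq u',\ v\geq v'$: feed the current bounds on $v$ into the $u$-equation to produce refined super-/sub-solutions for $u$, then feed those back into the $v$-equation, iterate, and show the resulting nested sequence of invariant rectangles contracts to a unique point. The contraction should follow from the strict monotonicity of the map $k \mapsto (u^{*}(k),v^{*}(k))$ (computable from \eqref{quad_param}) together with the smoothing effect of diffusion with Neumann boundary conditions. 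An alternative I would try in parallel is a Lyapunov approach: construct a functional of Volterra type
\[
V(u,v)=\int_{\Omega}\!\Big(u-u^{*}-u^{*}\ln\tfrac{u}{u^{*}}\Big)\,dx+\alpha\!\int_{\Omega}\!\Big(v-v^{*}-v^{*}\ln\tfrac{v}{v^{*}}\Big)\,dx,
\]
for a suitable weight $\alpha>0$ matched to an averaged fear level, and show $dV/dt\leq 0$ with equality only on the target equilibrium; this mirrors the classical Lotka--Volterra Lyapunov construction but has to absorb the $x$-dependence of $k$.

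The hard part will be this last step: neither the monotone-iteration nor the Lyapunov functional is routine once $k(x)$ is genuinely heterogeneous, since competitive semiflows are only monotone under the reversed ordering and the Volterra Lyapunov function is tuned to a single equilibrium rather than a family parameterized by $x$. This is precisely why the authors leave the statement as a conjecture: the comparison machinery delivers the sandwich and the simulations (Figure \ref{fig:pde_co1_pde1}) support the limit being a single spatially flat state, but a rigorous collapse of the sandwich to a point appears to require genuinely new analysis beyond the tools developed earlier in the paper.
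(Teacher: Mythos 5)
The statement you are asked to prove is left as a \emph{conjecture} in the paper --- the authors offer no proof, only numerical evidence (Fig.~\ref{fig:pde_co1_pde1}) --- and your proposal, by your own admission, does not close the argument either. Your first two steps do faithfully reproduce the paper's comparison machinery: Lemma~\ref{lem:com1} and Theorem~\ref{thm:km1} give the pointwise sandwich $\widetilde{v}\leq v\leq\widehat{v}$ with the bounding solutions tracking the constant-fear ODE \eqref{eq:ODE2} at fear levels $\mathbf{\widetilde{k}}$ and $\mathbf{\widehat{k}}$, and you correctly observe that the hypothesis $\mathbf{\widetilde{k}}<\frac{1}{a_2}\bigl(\frac{b_1b_2}{c_1}-c_2\bigr)$ puts both bounding systems in the coexistence regime of Lemma~\ref{lem:co_ode_1}. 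The obstruction you then name --- that the two bounding systems converge to \emph{different} interior equilibria $(\widetilde{u}^*,\widetilde{v}^*)\neq(\widehat{u}^*,\widehat{v}^*)$ whenever $\mathbf{\widehat{k}}<\mathbf{\widetilde{k}}$, so the squeeze that works in Theorems~\ref{thm:ce_1_pde} and \ref{thm:ce_2_pde} (where both bounds tend to the \emph{same} boundary state) delivers only $\widetilde{v}^*\leq\liminf v\leq\limsup v\leq\widehat{v}^*$ --- is exactly the right diagnosis, and it is precisely why the statement remains a conjecture. The closest the paper comes is a lemma treating the special case of a nearly constant fear function $k_\epsilon$ pinched within $\epsilon$ of the critical level, where the gap between the bounding equilibria can be sent to zero; that device is unavailable for a general heterogeneous $k(x)$ satisfying only the stated upper bound on $\mathbf{\widetilde{k}}$.

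Two further points. First, even your step two contains an unproven assertion: Lemma~\ref{lem:co_ode_1} gives only \emph{local} stability of the interior equilibrium of \eqref{eq:ODE2}, and upgrading this to global attraction via Dulac (Lemma~\ref{lem:dc1}) plus Poincar\'e--Bendixson still requires showing that the boundary equilibria $\widehat{E}_2,\widehat{E}_3$ are repelling on the relevant parameter set and that no heteroclinic cycle on the boundary can lie in the $\omega$-limit set; the paper never carries this out, so you would need to supply it. Second, your proposed third steps are the genuinely missing mathematics, not a routine completion: the monotone iteration in the competitive order must contend with the fact that the map from a fear level to the interior equilibrium is only known implicitly through the quadratic \eqref{quad_param}, and a strict contraction of the nested rectangles has not been exhibited; the Volterra-type Lyapunov functional is tuned to a single spatially constant equilibrium and there is no candidate equilibrium of \eqref{eq:PDE} that is known a priori to be spatially flat when $k(x)$ is heterogeneous (indeed, whether the limit is flat at all is part of what the conjecture asserts). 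So the proposal should be read as a correct account of what the paper's tools yield and an honest identification of the open gap, not as a proof.
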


\begin{remark}
We see via Lemma \ref{lem:ce_ode_2}, and Lemma \ref{lem:co_ode_1}, that in the ODE case, if we are in the weak competition setting without fear, then a critical amount of fear $k_{c}$ is \emph{both} sufficient and necessarily required to change the system's dynamics to a competitive exclusion type scenario. In the PDE case, where the fear function $k(x)$ can be spatially heterogeneous, this requirement is certainly sufficient, as seen via Theorem \ref{thm:cotoce_1_pde}, but \emph{not} necessary, in a point wise sense. 
\end{remark}

This result is stated next,

%\begin{lemma}\label{lem:cotoce_1_pde}
%	Consider the reaction diffusion system \eqref{eq:PDE-c}, for a fear function $k(x)$, s.t the assumptions via \eqref{eq:as1} are met, and a critical level of fear $k_{c} = \dfrac{a_2 b_1^2-c_2a_1b_1}{a_1^2c_2}$, as in theorem \ref{thm:}, with $k_{min} > k_{c}$, then the solution $(u,v)$ to \eqref{eq:PDE-c}  converges uniformly to the spatially homogenous state $(u^*,0)$ as $t \to \infty$, for any positive initial data.
%\end{lemma}

\begin{theorem}\label{thm:cotoce_1_pde}
	Consider the reaction diffusion system \eqref{eq:PDE}, with initial data $u_{0}(x) \geq 1$, s.t $u \nearrow u^{*}$, s.t. we are in the weak competition setting when $k(x) \equiv 0$. For a fear function $k(x)$, s.t the assumptions via \eqref{eq:as1} are met, the solution $(u,v) $ converges uniformly to the spatially homogeneous state $(\frac{a_{1}}{b_{1}},0)$ as $t \to \infty$, if the following condition holds,
	
	\begin{align}\label{eq:com1}
	\frac{C_{1}}{|\Omega|}\int_{\Omega} \frac{1}{1+k(x)} dx  <  \left(\frac{1}{1+k_{c} \left(\frac{a_{1}}{b_{1}}\right)} \right),
\end{align}
	
	where $k_{c}$ is as defined in Lemma \ref{lem:ce_ode_2}.
\end{theorem}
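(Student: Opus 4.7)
The plan is to integrate the $v$-equation over $\Omega$ and derive a differential inequality for the $L^{1}$ mass $W(t)=\int_\Omega v(x,t)\,dx$ which, under the stated condition, forces $W(t)\to 0$ exponentially. Combined with parabolic regularity and the a priori $L^\infty$ bounds available from Lemma~\ref{lem:com1}, this will upgrade to uniform convergence $v\to 0$; the convergence $u\to a_{1}/b_{1}$ then follows from the hypothesis $u\nearrow u^{*}$ together with the consistency observation that once $v\equiv 0$ the $u$-equation reduces to the logistic equation, whose only positive equilibrium is $a_{1}/b_{1}$.

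First I would use the hypothesis $u\nearrow u^{*}$ together with the standard logistic comparison $u\le a_{1}/b_{1}$ (drop the non-positive $-c_{1}uv$ term in the $u$-equation) to conclude that $u\to a_{1}/b_{1}$ uniformly; hence for any $\epsilon>0$ there exists $T$ with $u(x,t)\ge a_{1}/b_{1}-\epsilon$ for $t\ge T$ and all $x\in\Omega$. Integrating the $v$-equation over $\Omega$ and using the Neumann condition to kill the diffusion term yields
\[
\frac{dW}{dt} \;=\; \int_\Omega \frac{a_{2}v}{1+k(x)u}\,dx \;-\; b_{2}\int_\Omega v^{2}\,dx \;-\; c_{2}\int_\Omega uv\,dx.
\]
After applying the lower bound on $u$, dropping the non-positive term $-b_{2}\int v^{2}$, and invoking the identity $\frac{1}{1+k_{c}(a_{1}/b_{1})}=\frac{c_{2}a_{1}}{a_{2}b_{1}}$ obtained by rearranging the definition of $k_{c}$ in Lemma~\ref{lem:ce_ode_2}, this should reduce, up to $O(\epsilon)$ corrections, to a differential inequality of the form
\[
\frac{dW}{dt} \;\le\; \Big[\frac{a_{2}C_{1}}{|\Omega|}\int_\Omega \frac{dx}{1+k(x)} \;-\; \frac{c_{2}a_{1}}{b_{1}}\Big]\,W.
\]
Here the constant $C_{1}$ is introduced to pass from the pointwise factor $(1+k(x)\,a_{1}/b_{1})^{-1}$ to $(1+k(x))^{-1}$; a valid choice is $C_{1}=\max(1,\,b_{1}/a_{1})$, independent of $k$. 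Under the hypothesis of the theorem the bracketed factor is strictly negative, giving $W\to 0$ exponentially.

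The main obstacle is the step that turns the pointwise integrand $\int (1+k(x)u)^{-1}a_{2}v\,dx$ into the $W$-multiplied quantity appearing above. If $v$ were spatially constant, $v(x,t)\equiv V(t)=W/|\Omega|$, this is an equality and everything else is routine. For general $v$ I would write $v=V+\tilde v$ with $\int \tilde v\,dx = 0$ and control $\|\tilde v\|_{L^{2}}$ by Poincar\'e's inequality applied to the dissipation $\int|\nabla v|^{2}$, which enters with a favourable sign when one differentiates $\tfrac12\int v^{2}$ against the equation. The expected claim is that $\|\tilde v\|_{L^{2}}=o(V)$ in the tail, so that the replacement $v\leftrightarrow V$ in the growth integral introduces only a lower-order error that does not spoil the sign of the bracket; this homogenization step is precisely where diffusion is used, and is the reason an $L^{1}$-average condition (rather than a pointwise lower bound on $k$) suffices, as noted in the remark preceding the statement.

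With $W(t)\to 0$ established, an interpolation between the uniform $L^\infty$ bound on $v$ from Lemma~\ref{lem:com1} and $\|v\|_{L^{1}}\to 0$, combined with standard parabolic regularity, yields $\|v(\cdot,t)\|_{L^\infty}\to 0$; the convergence $u\to a_{1}/b_{1}$ uniformly then follows by plugging $v\to 0$ into the $u$-equation and applying the monotone convergence to $u^{*}$ given by hypothesis. The delicate part of the proof, which I expect to occupy most of the technical work, is the quantitative homogenization estimate described in the preceding paragraph; once that is in hand the rest is a squeezing argument analogous to the one used in Theorems~\ref{thm:ce_1_pde} and~\ref{thm:ce_2_pde}.
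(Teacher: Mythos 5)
Your overall strategy (integrate the $v$-equation, kill the diffusion term with the Neumann condition, and drive $W(t)=\int_\Omega v\,dx$ to zero via a differential inequality whose sign is controlled by \eqref{eq:com1}) matches the paper's, and your algebra for the threshold, $\frac{1}{1+k_c(a_1/b_1)}=\frac{a_1c_2}{a_2b_1}$, is correct. The proof breaks, however, exactly at the step you yourself flag as the delicate part: converting $\int_\Omega \frac{a_2 v}{1+k(x)u}\,dx$ into a constant multiple of $W$. You leave this as an ``expected claim'' that $\|\tilde v\|_{L^2}=o(V)$ in the tail, and that claim is not only unproven but should be false in general. Near extinction the $v$-equation linearizes to $v_t=d_2\Delta v+\bigl(\frac{a_2}{1+k(x)a_1/b_1}-c_2\frac{a_1}{b_1}\bigr)v$ with a genuinely $x$-dependent zeroth-order coefficient, so the normalized profile $v(\cdot,t)/\|v(\cdot,t)\|_{L^1}$ converges to the non-constant principal Neumann eigenfunction of that operator; the fluctuation $\tilde v$ is then \emph{comparable} to the mean $V$, not $o(V)$, and Poincar\'e plus the $\int|\nabla v|^2$ dissipation cannot rescue this, since the heterogeneous growth term continually regenerates the non-constant mode. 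So as written the homogenization step is a genuine gap, not a routine technicality.

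The paper avoids this issue entirely. After using $u\nearrow u^*$ and $u_0\ge 1$ to replace $\frac{1}{1+k(x)u}$ by $\frac{1}{1+k(x)}$ pointwise, it applies the mean value theorem for integrals to the product of the continuous weight $\frac{1}{1+k(x)}$ with the nonnegative density $a_2v$: this gives the exact identity $\int_\Omega\frac{a_2v}{1+k(x)}\,dx=\frac{a_2}{1+k(x^*)}\int_\Omega v\,dx$ for some $x^*\in\Omega$, with no decomposition of $v$ into mean plus fluctuation needed. The point value $\frac{1}{1+k(x^*)}$ is then bounded by $C_1$ times the average appearing in \eqref{eq:com1}, and the resulting linear-in-$W$ growth term is compared against the kinetic system of Lemma \ref{lem:cotce} via Theorem \ref{thm:km1}. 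If you want to repair your argument, replace the homogenization step by this mean-value-theorem factorization; the rest of your outline (decay of $W$, interpolation with the $L^\infty$ bound, and the logistic limit for $u$) then goes through essentially as you describe.
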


\begin{proof}
Consider \eqref{eq:PDE}, integrating the equation over $\Omega$ yields,

\begin{eqnarray}
&& \frac{d}{dt} \int_{\Omega} v dx \nonumber \\
&& = \int_{\Omega} \left(\dfrac{a_2 v}{1+k(x) u}  - b_2 v^2 - c_2 uv \right) dx\nonumber \\
&& \leq    \int_{\Omega} \left(\dfrac{a_2 v}{1 + k(x)u_{0}}  - b_2 v^2 - c_2 uv \right) dx \nonumber \\
&& =   \frac{1}{1+k(x^{*})u_{0}}\int_{\Omega} a_{2} v dx  - \int_{\Omega}  \left( b_2 v^2 + c_2 uv \right) dx \nonumber \\
&& \leq   \frac{1}{1+k(x^{*})}\int_{\Omega} a_{2} v dx  - \int_{\Omega}  \left( b_2 v^2 + c_2 uv \right) dx \nonumber \\
&&\leq   \left(C_{1}\int_{\Omega} \frac{1}{1+k(x)} dx \right) \int_{\Omega} a_{2} v dx  - \int_{\Omega}  \left( b_2 v^2 + c_2 uv \right) dx \nonumber \\
&& \leq   \left( \frac{1}{1 + k_{c}\frac{a_{1}}{b_{1}}} \right)\int_{\Omega} a_{2}v dx  - \int_{\Omega}  \left( b_2 v^2 + c_2 uv \right) dx \nonumber \\
&&\leq  \int_{\Omega} \left(\frac{a_{2}}{1+k_{c}\frac{a_{1}}{b_{1}}}v  - b_2 v^2 - c_2 uv \right) dx. \nonumber \\
\end{eqnarray}

This follows via the mean value theorem for integrals. We can now compare,

\begin{align}\label{eq:com133}
	\int_{\Omega} v dx  < \int_{\Omega} \tilde{v} dx,
\end{align}
where $\tilde{v}$ solves $\frac{d \tilde{v}}{dt} = \left(\frac{a_{2}}{1+k_{c}\frac{a_{1}}{b_{1}}}\tilde{v}  - b_2 \tilde{v}^2 - c_2 u\tilde{v} \right).$
 Thus using Theorem \ref{thm:km1}, Lemma \ref{lem:cotce}
and the positivity of solutions, this entails
\begin{align}\label{eq:com134}
0 \leq \lim_{t \rightarrow \infty}   \int_{\Omega} v dx     \leq \lim_{t \rightarrow \infty} 	\int_{\Omega} \tilde{v} dx = 0.
\end{align}

The bounds on $v$, standard Lebesgue convergence theorems, and squeezing argument entail,

\begin{align}\label{eq:com133}
	\int_{\Omega} (\lim_{t \rightarrow \infty}  v ) dx  \rightarrow 0 ,
\end{align}

which implies the uniform convergence,
\begin{align*}
\lim_{t \rightarrow \infty} (u,v) \rightarrow \left(\frac{a_{1}}{b_{1}}, 0\right).
\end{align*}

\end{proof}

\begin{remark}
The $C_{1}$ in Theorem \ref{thm:cotoce_1_pde} is a pure constant, that will depend on the size of the domain $\Omega$, and the other problem parameters, but not on the initial data or the spatial variable $x$.
\end{remark}

\begin{remark}
Clearly $k(x)$ could be chosen s.t. it lies below $k_{c}$ for a portion of the domain, and above $k_{c}$ on some portion of the domain - thus $k_{min}$ does not lie uniformly above $k_{c}$. Yet via Theorem \ref{thm:cotoce_1_pde}, we see that, one can change the system's dynamics and bring it to a competitive exclusion type scenario, from a coexistence situation.
\end{remark}

The next result gives a lower estimate for the fear function $k$,

\begin{lemma}
\label{lem:ff1}
Consider the fear function $k(x)$ in Theorem \ref{thm:cotoce_1_pde}, then we have,

\begin{align*}
	\int_{\Omega} k(x) dx  \geq  |\Omega|\left(1 - \frac{1}{C_{1}}\left(\frac{1}{1+k_{c} \left(\frac{a_{1}}{b_{1}}\right)} \right) \right).
\end{align*}

\end{lemma}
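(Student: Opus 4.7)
The plan is to extract the stated lower bound on $\int_{\Omega} k(x)\,dx$ directly from the hypothesis of Theorem \ref{thm:cotoce_1_pde}, which provides the upper bound
\[
\int_{\Omega} \frac{1}{1+k(x)}\, dx < \frac{|\Omega|}{C_{1}} \left( \frac{1}{1+k_{c}\left(\frac{a_{1}}{b_{1}}\right)} \right).
\]
So the strategy is to convert this integral inequality on $\frac{1}{1+k(x)}$ into one on $k(x)$ itself, via a pointwise elementary bound that replaces the nonlinear expression with a linear one.

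First I would observe the elementary inequality $\frac{1}{1+s} \geq 1 - s$ for every $s \geq 0$, which follows immediately from $(1+s)(1-s) = 1 - s^{2} \leq 1$. Since the fear function $k(x)$ is nonnegative on $\Omega$ by assumption \eqref{eq:as1}, applying this pointwise to $s = k(x)$ and integrating gives
\[
|\Omega| - \int_{\Omega} k(x)\, dx \;=\; \int_{\Omega} \bigl(1 - k(x)\bigr)\, dx \;\leq\; \int_{\Omega} \frac{1}{1+k(x)}\, dx.
\]
Chaining this with the hypothesis from Theorem \ref{thm:cotoce_1_pde} yields
\[
|\Omega| - \int_{\Omega} k(x)\, dx \;<\; \frac{|\Omega|}{C_{1}}\left( \frac{1}{1+k_{c}\left(\frac{a_{1}}{b_{1}}\right)} \right),
\]
and a simple rearrangement produces exactly the claimed lower bound.

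There is no genuine obstacle here; the lemma is essentially an algebraic consequence of the integral bound assumed in the preceding theorem together with the convexity/tangent-line inequality $\frac{1}{1+s} \geq 1 - s$. The only thing to note is that the strict inequality in the hypothesis implies the (non-strict) $\geq$ in the conclusion, and that nonnegativity of $k(x)$ from \eqref{eq:as1} is essential for the elementary bound to hold pointwise a.e.\ on $\Omega$.
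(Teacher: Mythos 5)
Your proposal is correct and follows essentially the same route as the paper: the pointwise tangent-line bound $\frac{1}{1+k(x)} \geq 1-k(x)$, integration over $\Omega$, and then the $\mathbb{L}^{1}$ hypothesis from Theorem \ref{thm:cotoce_1_pde} to close the chain. Your write-up is in fact slightly more careful than the paper's, since you justify the elementary inequality and note where nonnegativity of $k$ and the strict-versus-nonstrict inequality enter.
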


\begin{proof}
We have that for any fear function $k(x)$ satisfying \eqref{eq:as1},

\begin{align}\label{eq:com1}
	\frac{1}{1+k(x)} > 1 - k(x)  \implies \int_{\Omega} \frac{1}{1+k(x)} dx > |\Omega| - \int_{\Omega} k(x) dx.
\end{align}

Now using Theorem \ref{thm:cotoce_1_pde} the result follows.
\end{proof}

\subsection{The Strong Competition Case}

\begin{theorem}\label{st_pde}
	Consider the reaction diffusion system \eqref{eq:PDE}, for a fear function $k(x)$, s.t the assumptions via \eqref{eq:as1} are met, with $b_{1}b_{2} < c_{1}c_{2}$. Then there exists sufficiently large positive initial data $[u_0,v_0]$ for which the solution $(u,v)$ to \eqref{eq:PDE}  converges uniformly to the spatially homogenous state $(0,\frac{a_{2}}{b_{2}})$ as $t \to \infty$, while there also exists sufficiently small positive initial data $[u_1,v_1]$ for which solution $(u,v)$ to \eqref{eq:PDE}  converges uniformly to the spatially homogenous state $(\frac{a_{1}}{b_{1}},0)$ as $t \to \infty$.
\end{theorem}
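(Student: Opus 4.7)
The plan is to exploit the pointwise sandwich $\widetilde{v}\leq v\leq \overline{v}$ supplied by Lemma~\ref{lem:com1}, where $\overline{v}$ comes from the fear-free Lotka--Volterra system \eqref{eq:lv_model} and $\widetilde{v}$ comes from the constant maximum-fear system \eqref{eq:lower} with $\widetilde{k}=\max_{x\in\Omega} k(x)$, and to use the fact that, under strong competition $b_1b_2<c_1c_2$ (understood together with the underlying conditions $a_1b_2<c_1a_2$ and $a_2b_1<c_2a_1$ that make both boundary equilibria locally stable in the ODE), both auxiliary systems display the classical planar bistability between $(\tfrac{a_1}{b_1},0)$ and $(0,\tfrac{a_2}{b_2})$. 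Because the chosen initial data are spatially flat and both auxiliary systems have constant coefficients, Theorem~\ref{thm:km1} makes their PDE evolutions coincide with their planar ODE trajectories, so their long-time behavior is governed entirely by the classical phase portrait.

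For the case of small initial data converging to $(\tfrac{a_1}{b_1},0)$, I would choose flat data $(u_1,v_1)$ inside the basin of attraction of $(\tfrac{a_1}{b_1},0)$ for the ODE version of \eqref{eq:lv_model}; this basin is nonempty in strong competition and contains, for instance, a neighborhood of the positive $u$-axis. Then $\overline{v}(t)\to 0$ and Lemma~\ref{lem:com1} yields $0\leq v(x,t)\leq \overline{v}(t)\to 0$ uniformly. Substituting into the $u$-equation gives the non-autonomous inequality $u_t-d_1\Delta u\geq (a_1-c_1\overline{v}(t))u-b_1u^2$ with coefficient tending to $a_1$; a standard logistic sub/supersolution argument (freezing $\overline{v}(t)$ at $\pm\epsilon$ and letting $\epsilon\to 0^+$), together with the pure-logistic upper bound $u\leq a_1/b_1$, delivers $u\to a_1/b_1$ uniformly.

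For the case of large initial data converging to $(0,\tfrac{a_2}{b_2})$, I would pick flat data $(u_0,v_0)$ inside the basin of $(0,\tfrac{a_2}{b_2})$ for the ODE version of \eqref{eq:lower}; since the positive $v$-axis is invariant and attracted to $(0,a_2/b_2)$ under any choice of fear constant, that basin contains initial data with $v_0$ arbitrarily large. Then $\widetilde{v}(t)\to a_2/b_2$ and Lemma~\ref{lem:com1} gives $v(x,t)\geq \widetilde{v}(t)\to a_2/b_2$ uniformly. The strong competition inequality $a_1b_2<c_1a_2$ rearranges to $a_2/b_2>a_1/c_1$, so for all sufficiently large $t$ one has $v(x,t)\geq a_1/c_1+\delta$ pointwise, and the $u$-equation then yields $u_t-d_1\Delta u\leq -c_1\delta\,u$, forcing $u\to 0$ uniformly and exponentially by parabolic comparison. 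Once $u\to 0$, the $v$-equation is asymptotically the autonomous logistic $v_t-d_2\Delta v=a_2v-b_2v^2+o(1)$, and combining the lower bound $\widetilde{v}\to a_2/b_2$ with the upper bound $v\leq\overline{v}\leq a_2/b_2+o(1)$ (from the pure-logistic supersolution for $\overline{v}$) squeezes $v\to a_2/b_2$ uniformly.

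The main obstacle is verifying that the two auxiliary planar ODEs genuinely retain the bistable strong-competition phase portrait once the fear modification is installed in \eqref{eq:lower}, and that the chosen flat initial data actually lie in the intended basins. Local stability of $(0,\tfrac{a_2}{b_2})$ is fear-independent, and local stability of $(\tfrac{a_1}{b_1},0)$ is only reinforced by fear, so neither saddle--node nor transcritical bifurcations of these boundary equilibria occur and their basins remain open neighborhoods of the respective axes. Granting this, placing small $(u_1,v_1)$ near the positive $u$-axis and large $(u_0,v_0)$ near the positive $v$-axis suffices, and everything else reduces to routine parabolic comparison and logistic asymptotics.
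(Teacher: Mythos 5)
Your proposal is correct, but it closes the argument by a genuinely different mechanism than the paper. The paper squeezes $v$ between \emph{two} bistable strong-competition auxiliary systems (Lemma~\ref{lem:com1}) and must therefore place the flat initial data simultaneously in the matching basins of attraction of \emph{both} bounding systems; this forces the construction of a wedge $\nu$ in which the two separatrices $W_s(E_4)$ and $W^1_s(E^*_4)$ are ordered, which is the most delicate (and least detailed) step of the published proof. You instead use only the one-sided bound that is relevant in each regime: for the $(\frac{a_1}{b_1},0)$ outcome, only $v\leq\overline{v}\to 0$, after which the $u$-equation is an asymptotically autonomous scalar logistic problem; for the $(0,\frac{a_2}{b_2})$ outcome, only $v\geq\widetilde{v}\to a_2/b_2>a_1/c_1$, which makes the $u$-equation a linear decay estimate, after which the $v$-equation is again asymptotically logistic and is squeezed between $\widetilde{v}$ and the pure-logistic supersolution. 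This buys you two things: you avoid the separatrix-ordering/wedge argument entirely, and you only need the initial data to lie in a single basin at a time (and your observation that local stability of $(0,a_2/b_2)$ for the max-fear system \eqref{eq:lower} is fear-independent is exactly what makes that basin nonempty). Both your proof and the paper's implicitly strengthen the stated hypothesis $b_1b_2<c_1c_2$ to the full strong-competition configuration ($a_1b_2<c_1a_2$ and $a_2b_1<c_2a_1$, so that both boundary equilibria are stable and the classical bistable phase portrait is available); you state this explicitly, which is a point in your favor rather than a gap, since the theorem is false without it. The remaining steps (Theorem~\ref{thm:km1} to reduce flat-data constant-coefficient PDEs to their kinetic ODEs, and scalar parabolic comparison) are routine and correctly deployed.
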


\begin{proof}
%	Consider the comparison
%\[ \dfrac{1}{1 +  \mathbf{\widetilde{k}}  \frac{a_{1}}{b_{1}}} \le \dfrac{1}{1 + k(x) u(x)} \le \dfrac{1}{1 +  \mathbf{\widehat{k}} {u(x)}} \le 1, \quad \forall x\in \Omega,\]
%	we can compare the above system's solution, i.e.,
%	\begin{align*}
%		\Big\{  \ref{lower} \Big\} \le \Big\{ \ref{PDE} \Big\} \le \Big\{ \ref{lv_model} \Big\}.
%	\end{align*}
Consider the system \eqref{eq:lv_model}. From the classical strong competition parametric restrictions, $b_{1}b_{2} < c_{1}c_{2}$ that are assumed, as well as Lemma \ref{lem:com1}, we can make use of standard competition theory and use the stable manifold theorem, i.e., $\exists W_{s}(E_{4}) \in C^{1}$ separatrix, such that for initial data $(\overline{u}_0,\overline{v}_0)$ chosen above $W_{s}(E_{4})$ the solution $(\overline{u},\overline{v}) \to (0,v^*)$ and for initial data chosen below $W_{s}(E_{4})$, $(\overline{u},\overline{v}) \to (u^*,0)$. Here $E_{4}$ is the interior saddle equilibrium to the kinetic (ODE) system for \eqref{eq:lv_model}.
	Moreover, since $\dfrac{a_{2}}{1 + \mathbf{\widetilde{k}}  \frac{a_{1}}{b_{1}}} \le a_{2}$, and $b_{1}b_{2} < c_{1}c_{2}$, we have that for \eqref{eq:lowest1} we still remain in the strong competition case, and via standard theory again, $W^{1}_{s}(E^{*}_{4}) \in C^{1}$ separatrix, such that for initial data $(\widehat{u}_0,\widehat{v}_0)$ chosen above $W^{1}_{s}(E^{*}_{4})$ the solution $(\widehat{u},\widehat{v}) \to (0,v^*)$ and for initial data chosen below $W^{1}_{s}(E_{4})$, $(\widehat{u},\widehat{v})\to (u^*,0)$. Here $E^{*}_{4}$ is the interior saddle equilibrium to the kinetic (ODE) system for \eqref{eq:lowest1}.
Now since $\dfrac{a_{2}}{1 + \mathbf{\widetilde{k}}  \frac{a_{1}}{b_{1}}} \le a_{2}$, the $v$ component of $E^{*}_{4}$ is higher than the $v$ component of $E_{4}$. Now using the $C^{1}$ property of the separatricies $W^{1}_{s}(E^{*}_{4}) , W_{s}(E_{4})$, we have the existence of a wedge $\nu$ emanating from $E_{4}$, s.t within $\nu$ we have $W^{1}_{s}(E^{*}_{4}) \geq W_{s}(E_{4})$. Note via Lemma \ref{lem:com1} we have $ \tilde{v} \leq v \leq \widehat{v}$.	Let us consider positive initial data $(u_0,v_0)$ chosen large enough, within $\nu$ s.t. $(u_0,v_0) >> W^{1}_{s}(E^{*}_{4}) > W_{s}(E_{4})$, we will have 
	
	\begin{align*}
	\Big\{  (0,v^*) \Big\} \le \Big\{ (u,v) \Big\} \le \Big\{ (0,v^*) \Big\}.
	\end{align*}
	
	On the other hand,  for positive initial data $(u_1,v_1)$ chosen small enough s.t. $(u_1,v_1) << W^{1}_{s}(E_{4}) < W_{s}(E_{4})$, we will have 
	\begin{align*}
	\Big\{  (u^*,0)\Big\} \le \Big\{ (u,v) \Big\} \le \Big\{ (u^*,0) \Big\}.
	\end{align*}
	
	This proves the theorem.

\end{proof}

%\begin{lemma}
%	Consider the problem $(\ref{PDE})$. For any non-negative fear function $k(x)$ such that
%	\begin{align}\label{pde_st1}
%		b_2b_1>\frac{b_1^2 c_1 \mathbf{k}  a_2}{(b_1+\mathbf{k}  a_1)^2}
%	\end{align}
%	and parametric restrictions given by $(\ref{two_post})$ hold true for $\mathbf{k}=\mathbf{\widetilde{k}} ,\mathbf{\hat{k}}$, then for some data $[u_0,v_0]$, the solution $(u,v)$ is bounded above by $ (u^*,v^*)$, and for some choice of data $[u_1,v_1]$, the solution is bounded below the boundary equilibrium.
%\end{lemma}
%
%\begin{proof}
%	Since $\mathbf{k}=\mathbf{\widetilde{k}} ,\mathbf{\hat{k}}$ satisfies the parametric restriction given by $(\ref{two_post})$, hence from the existence Theorem~$\ref{exist}$, we have two positive interior equillibriums. Moreover, $\mathbf{k}=\mathbf{\widetilde{k}} ,\mathbf{\hat{k}}$ also statisfies the $(\ref{pde_st1})$, hence the proof follows from Theorem $(\ref{two_post_ode})$ and squeezing argument.
%\end{proof}

\begin{figure}[h]
	\begin{subfigure}[b]{.475\linewidth}
		\includegraphics[width=\linewidth,height=1.8in]{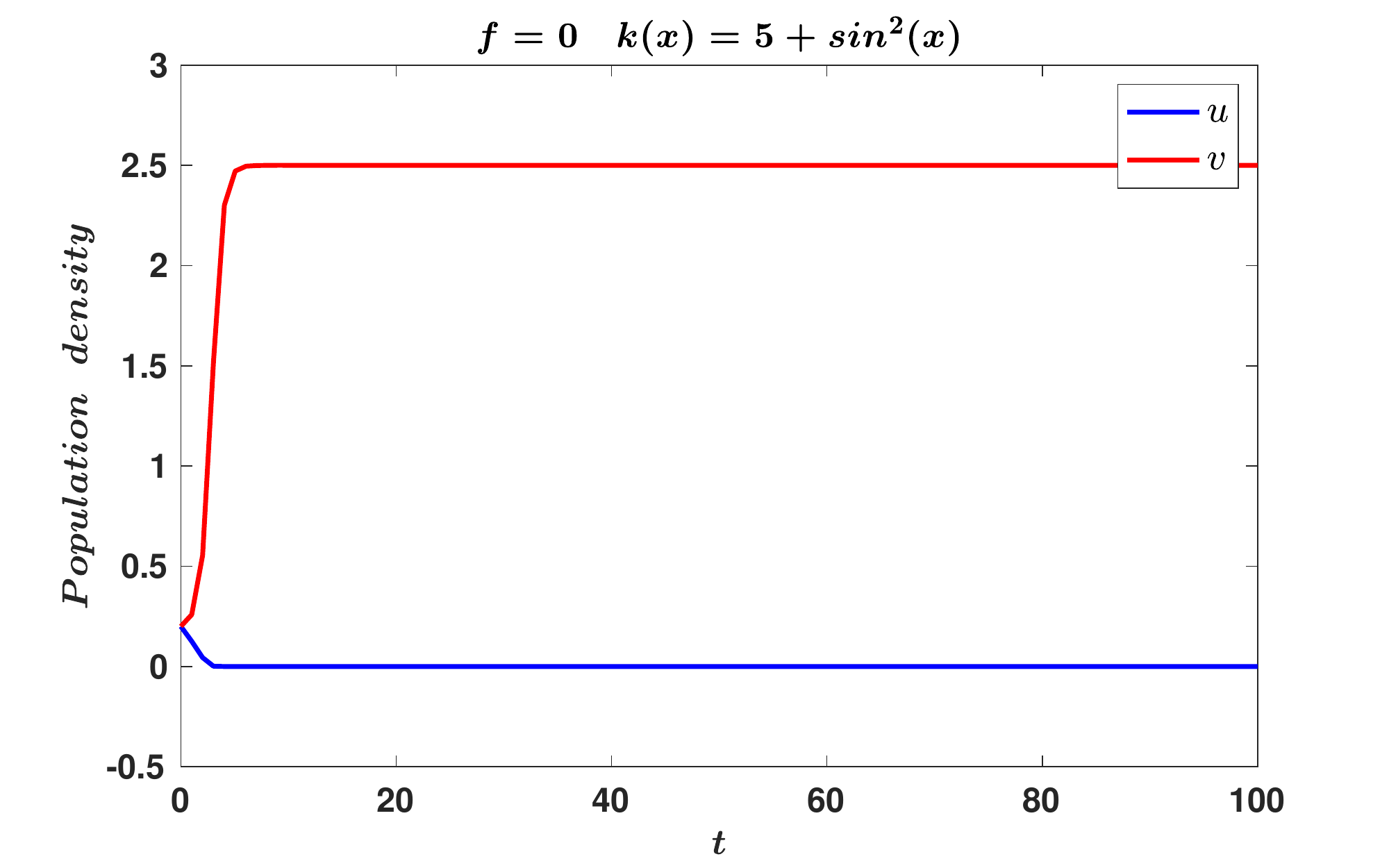}
		\caption{$[u_0,v_0]=[0.2,0.2].$ }
	\end{subfigure}
	\hfill
	\begin{subfigure}[b]{.475\linewidth}
		\includegraphics[width=\linewidth,height=1.8in]{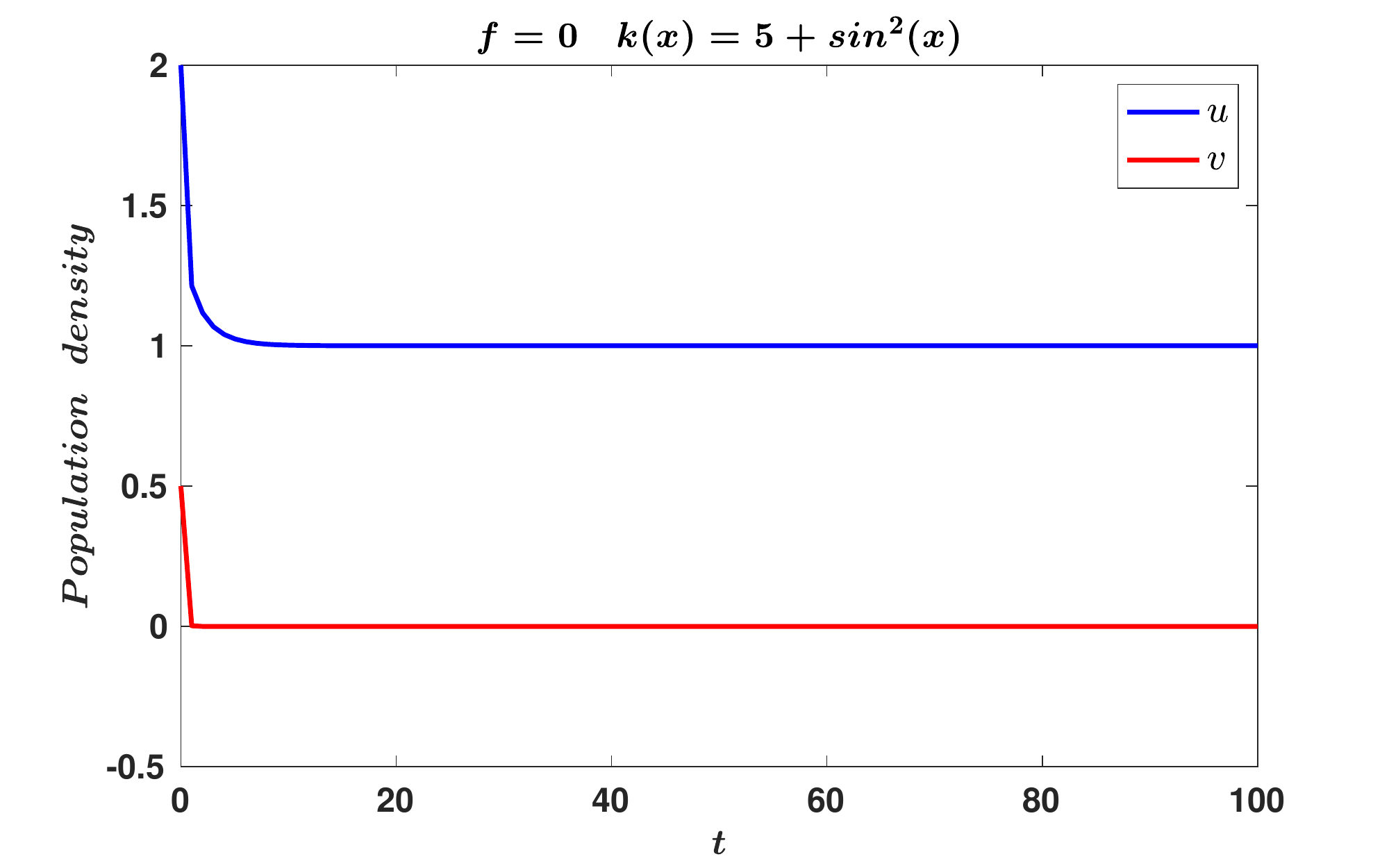}
		\caption{$[u_0,v_0]=[2,0.5].$}
	\end{subfigure}
	\newline
	\begin{subfigure}[b]{.475\linewidth}
		\includegraphics[width=\linewidth,height=1.8in]{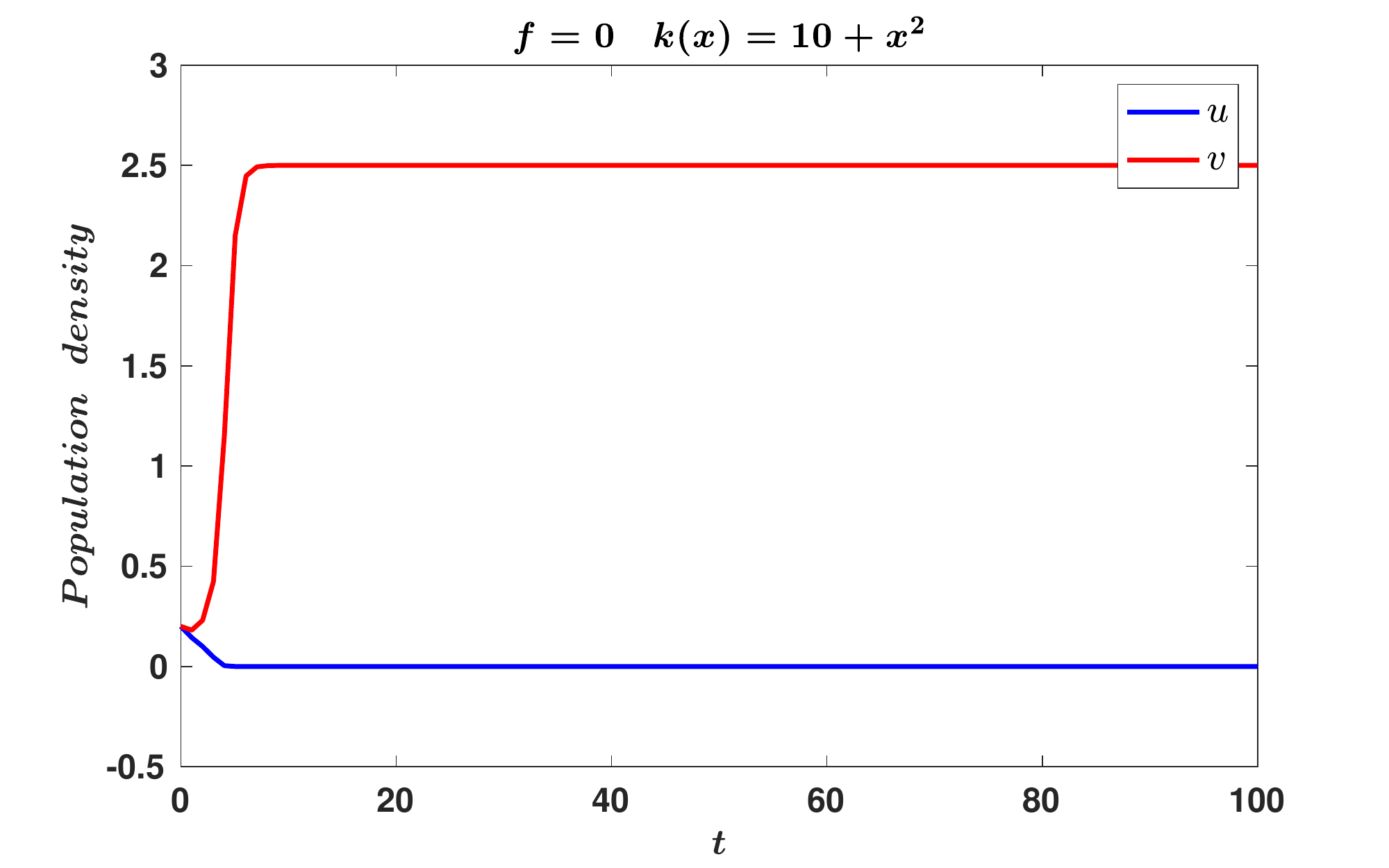}
		\caption{$[u_0,v_0]=[0.2,0.2].$ }
	\end{subfigure}
	\hfill
	\begin{subfigure}[b]{.475\linewidth}
		\includegraphics[width=\linewidth,height=1.8in]{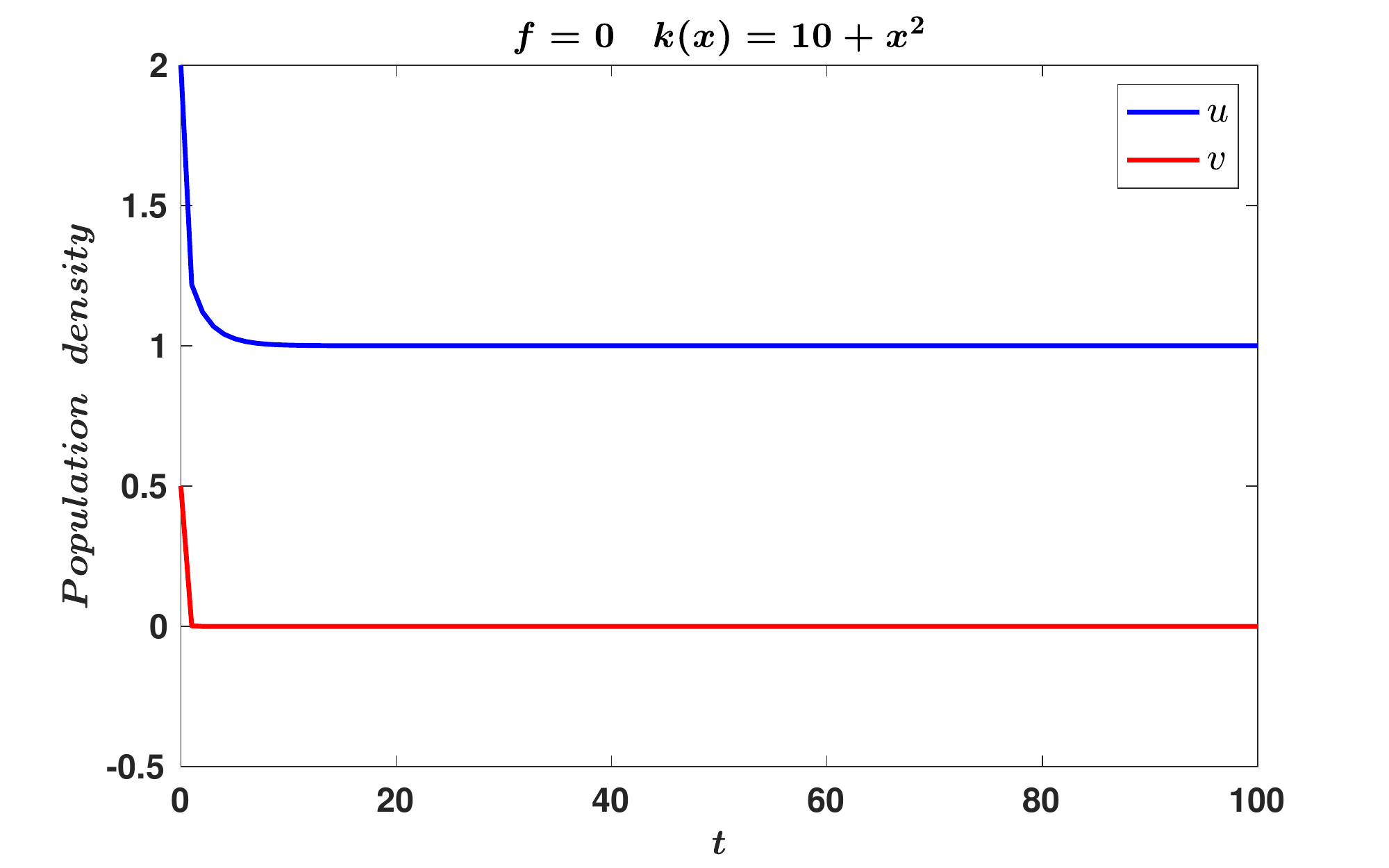}
		\caption{$[u_0,v_0]=[2,0.5].$}
	\end{subfigure}
	\caption{Numerical simulation of $(\ref{eq:PDE})$ for the case of strong-competition in $\Omega=[0,1]$. The parameters are chosen as $d_1=1,d_2=1,a_1=0.5,a_2=2,b_1=0.5,b_2=0.8$ and $c_1=c_2=4$.}
	\label{fig:pde_sc1_pde1}
\end{figure}

\begin{figure}[h]
	\begin{subfigure}[b]{.475\linewidth}
		\includegraphics[width=\linewidth,height=1.8in]{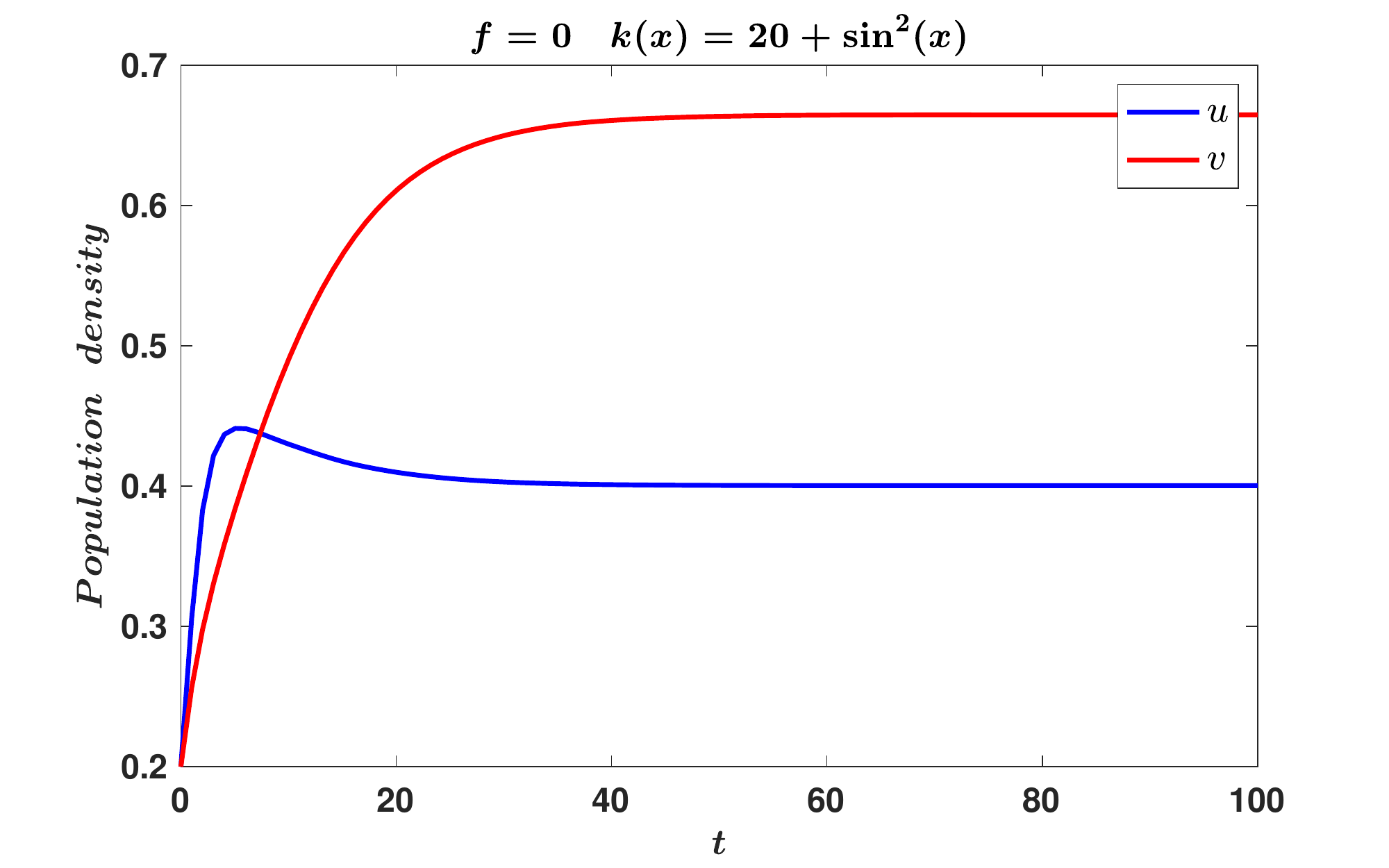}
		\caption{$[u_0,v_0]=[0.2,0.2].$ }
	\end{subfigure}
	\hfill
	\begin{subfigure}[b]{.475\linewidth}
		\includegraphics[width=\linewidth,height=1.8in]{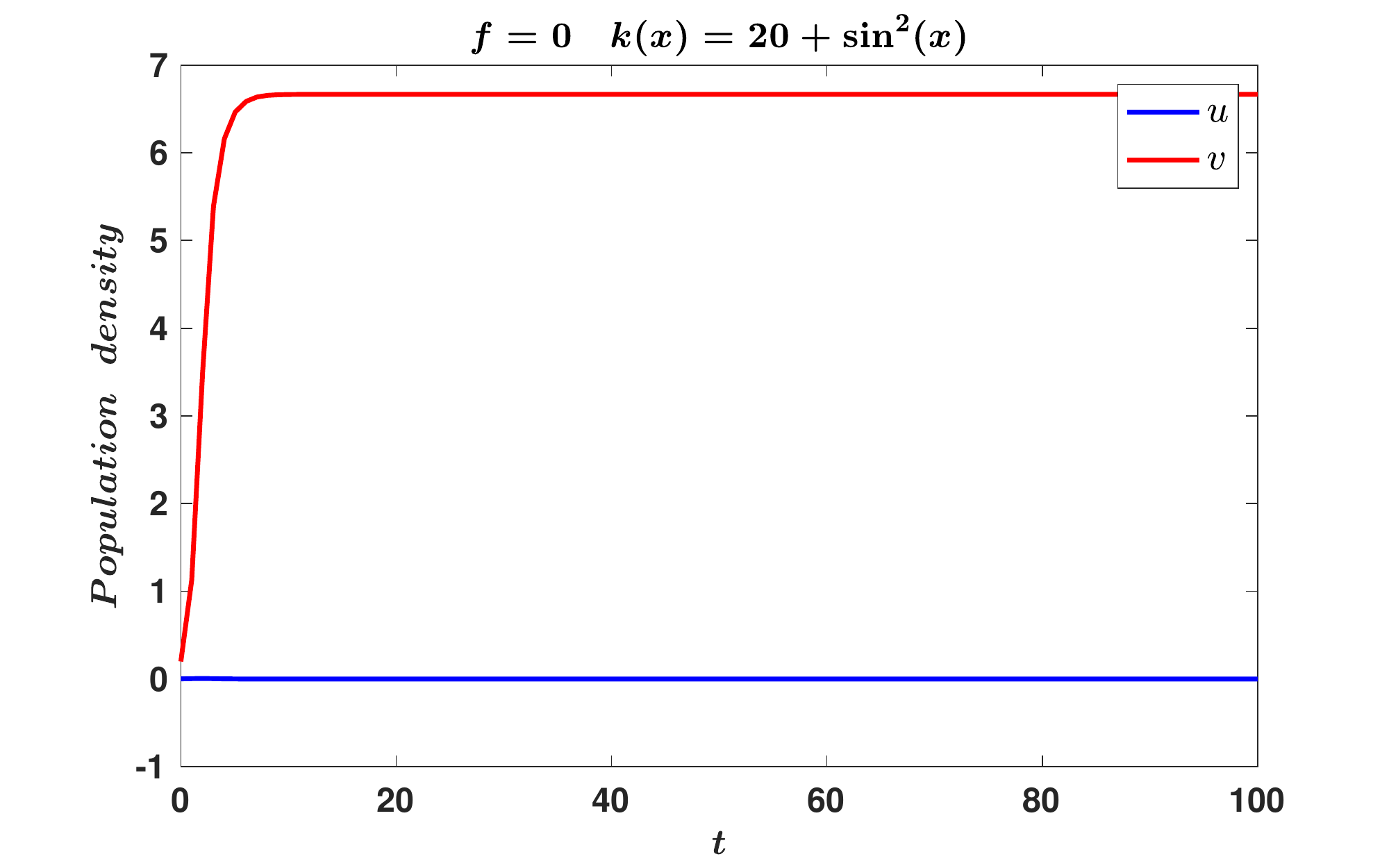}
		\caption{$[u_0,v_0]=[0.002,0.2].$}
	\end{subfigure}
	\caption{Numerical simulation of $(\ref{eq:PDE})$ for the case of two-positive interior equilibria in $\Omega=[0,1]$. The parameters are chosen as $d_1=1,d_2=1,a_1=1,a_2=2,b_1=2,b_2=0.3$ and $c_1=0.3,c_2=0.05$. }
	\label{fig:pde_two_post_pde1}
\end{figure}

\begin{figure}[h]
	\begin{subfigure}[b]{.475\linewidth}
		\includegraphics[width=\linewidth,height=1.8in]{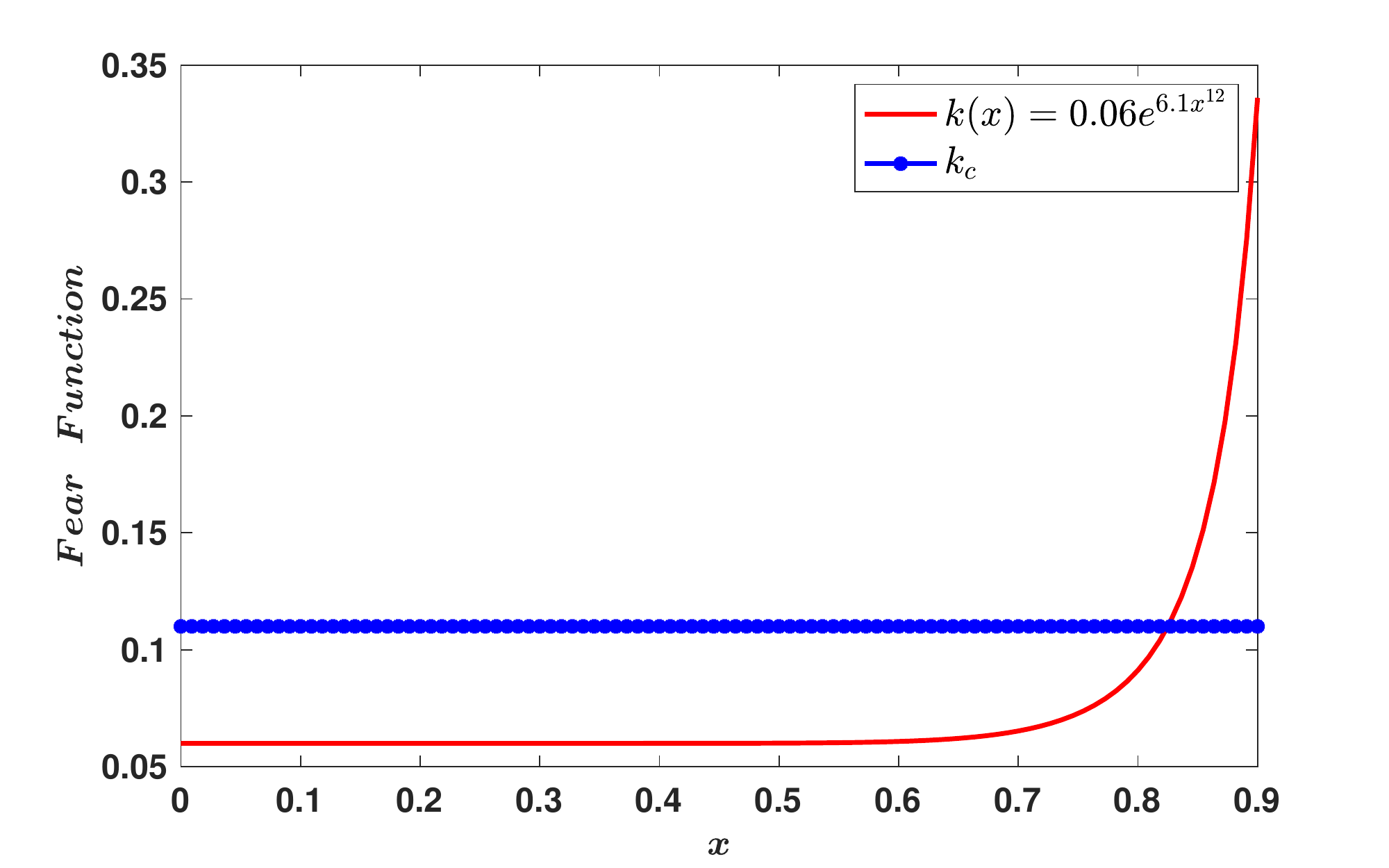}
	\end{subfigure}
	\hfill
	\begin{subfigure}[b]{.475\linewidth}
		\includegraphics[width=\linewidth,height=1.8in]{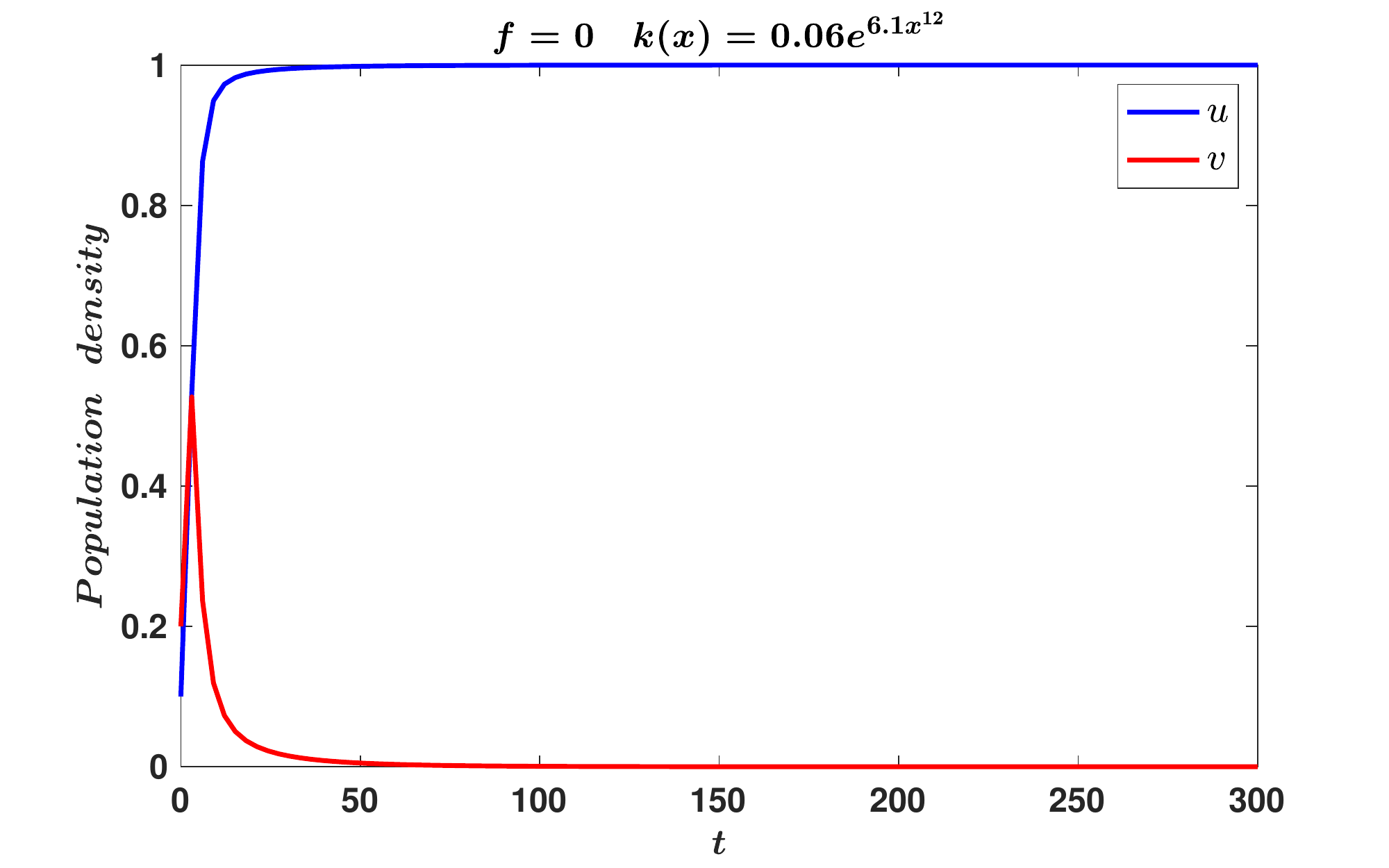}
	\end{subfigure}
	\caption{Numerical simulation for the case  $(v^*<u^*)$. The parameters are chosen as $[u_0,v_0]=[0.1,0.2],d_1=1,d_2=1,a_1=1,a_2=2,b_1=1,b_2=2,c_1=0.3$ and $c_2=1.8$. The interior equilibrium $[u^*,v^*]=[0.958904,0.136986]$ and the fear threshold $k_c=\frac{ a_2b_1^2-c_2 a_1 b_1}{a_1^2c_2}=0.11$ and $C_{1}=1$.}
	\label{fig:remark_5}
\end{figure}

\begin{figure}[h]
	\begin{subfigure}[b]{.475\linewidth}
		\includegraphics[width=\linewidth,height=1.8in]{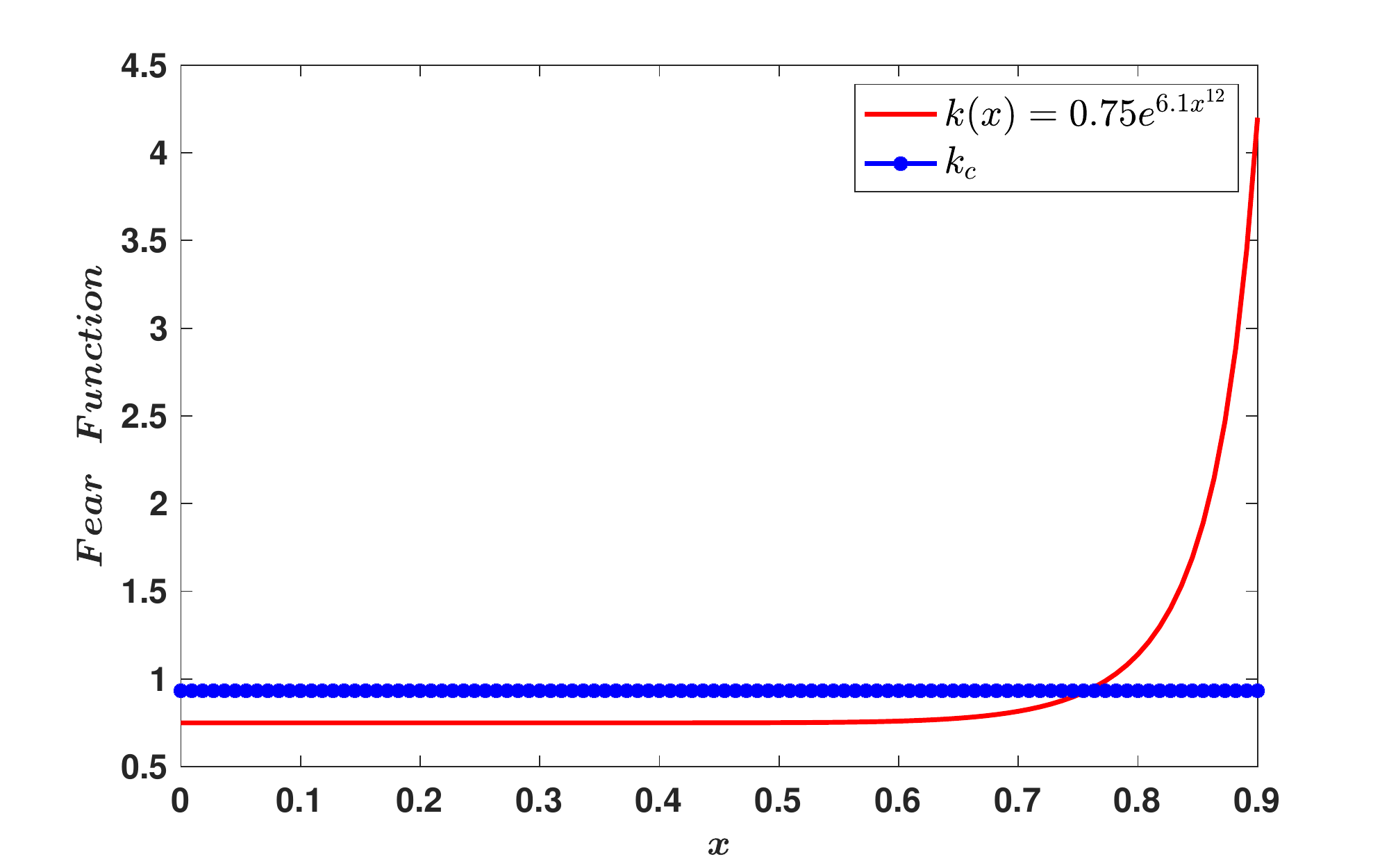}
	\end{subfigure}
	\hfill
	\begin{subfigure}[b]{.475\linewidth}
		\includegraphics[width=\linewidth,height=1.8in]{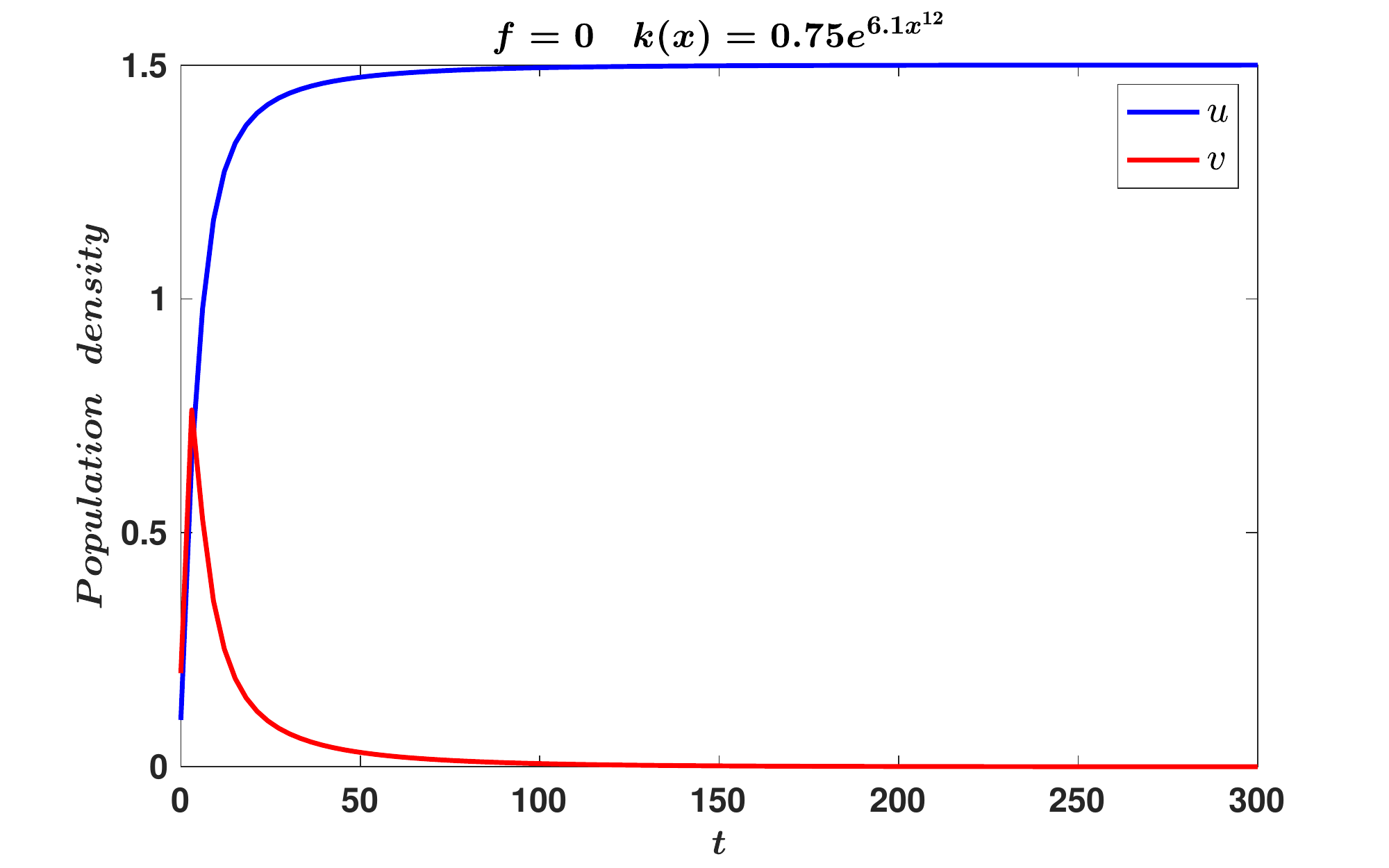}
	\end{subfigure}
	\caption{Numerical simulation for the case $(u^*<v^*)$. The parameters are chosen as $[u_0,v_0]=[0.1,0.2],d_1=1,d_2=1,a_1=1.5,a_2=1.8,b_1=1,b_2=1,c_1=0.8$ and $c_2=0.5$. The interior equilibrium $[u^*,v^*]=[0.1,1.75]$ and the fear threshold $k_c=\frac{ a_2b_1^2-c_2 a_1 b_1}{a_1^2c_2}=0.933$ and $C_{1}=0.8$.}
	\label{fig:remark_51}
\end{figure}

\begin{figure}[h]
	\begin{subfigure}[b]{.475\linewidth}
		\includegraphics[width=\linewidth,height=1.8in]{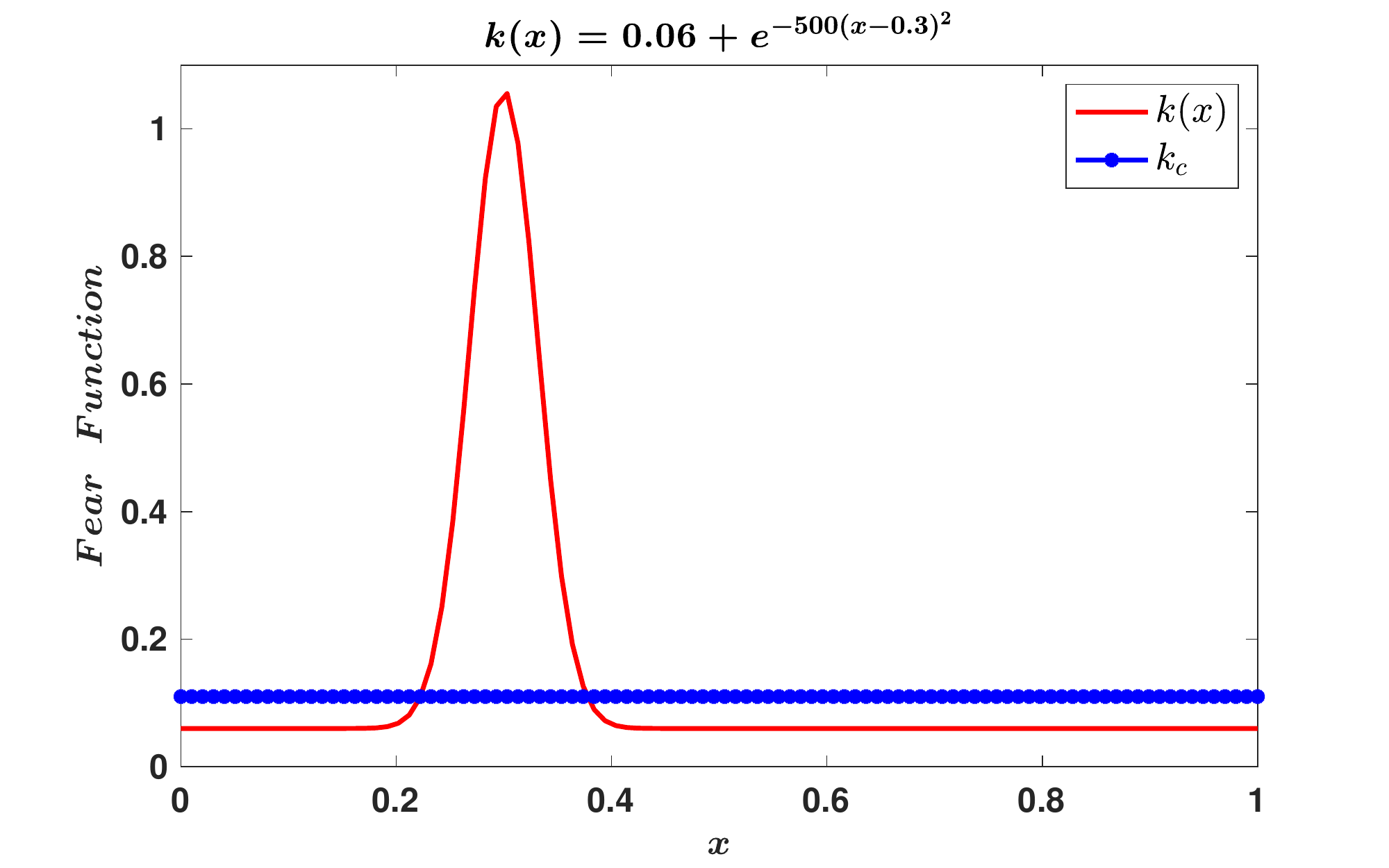}
	\end{subfigure}
	\hfill
	\begin{subfigure}[b]{.475\linewidth}
		\includegraphics[width=\linewidth,height=1.8in]{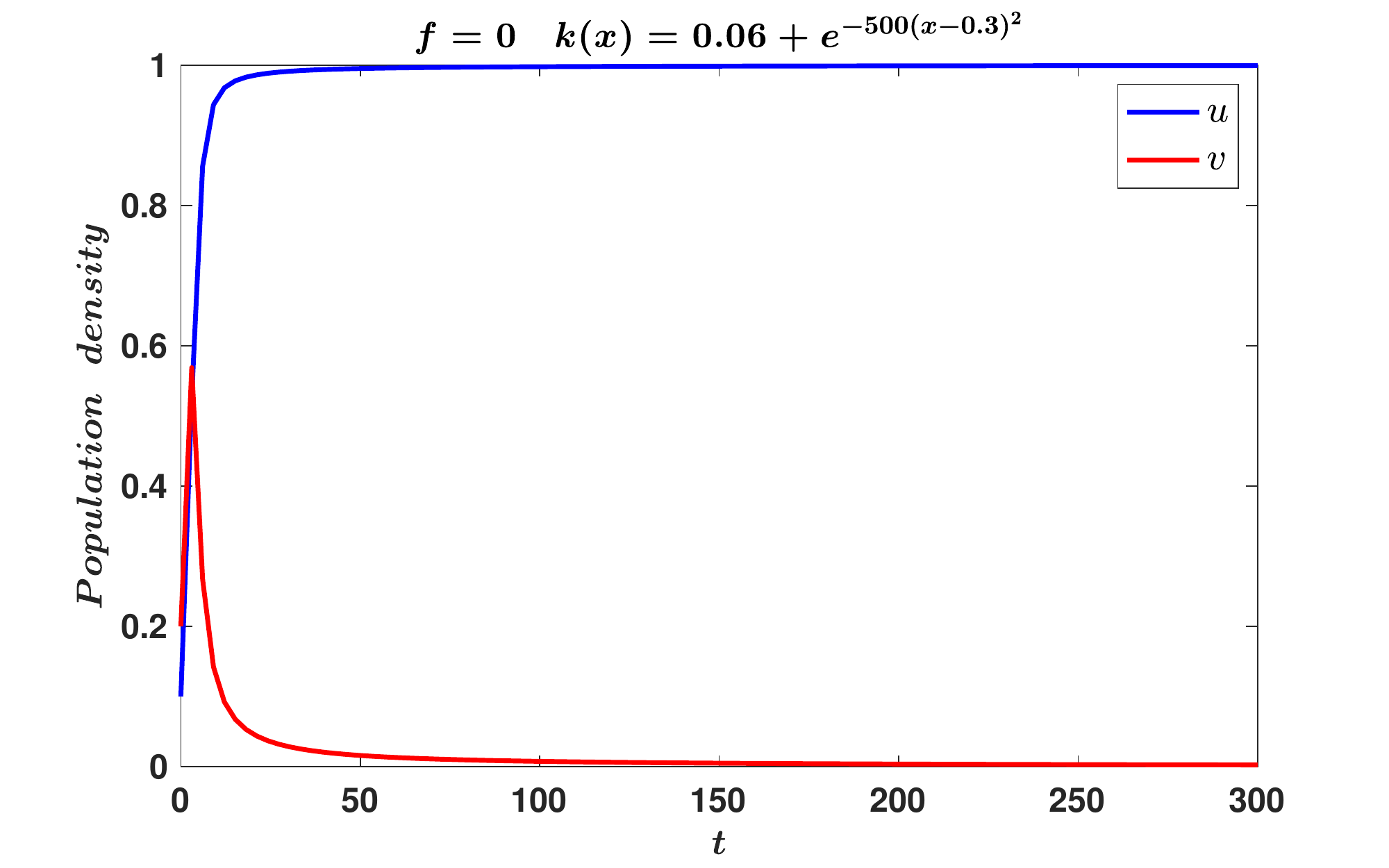}
	\end{subfigure}
	\newline
	\begin{subfigure}[b]{.475\linewidth}
		\includegraphics[width=\linewidth,height=1.8in]{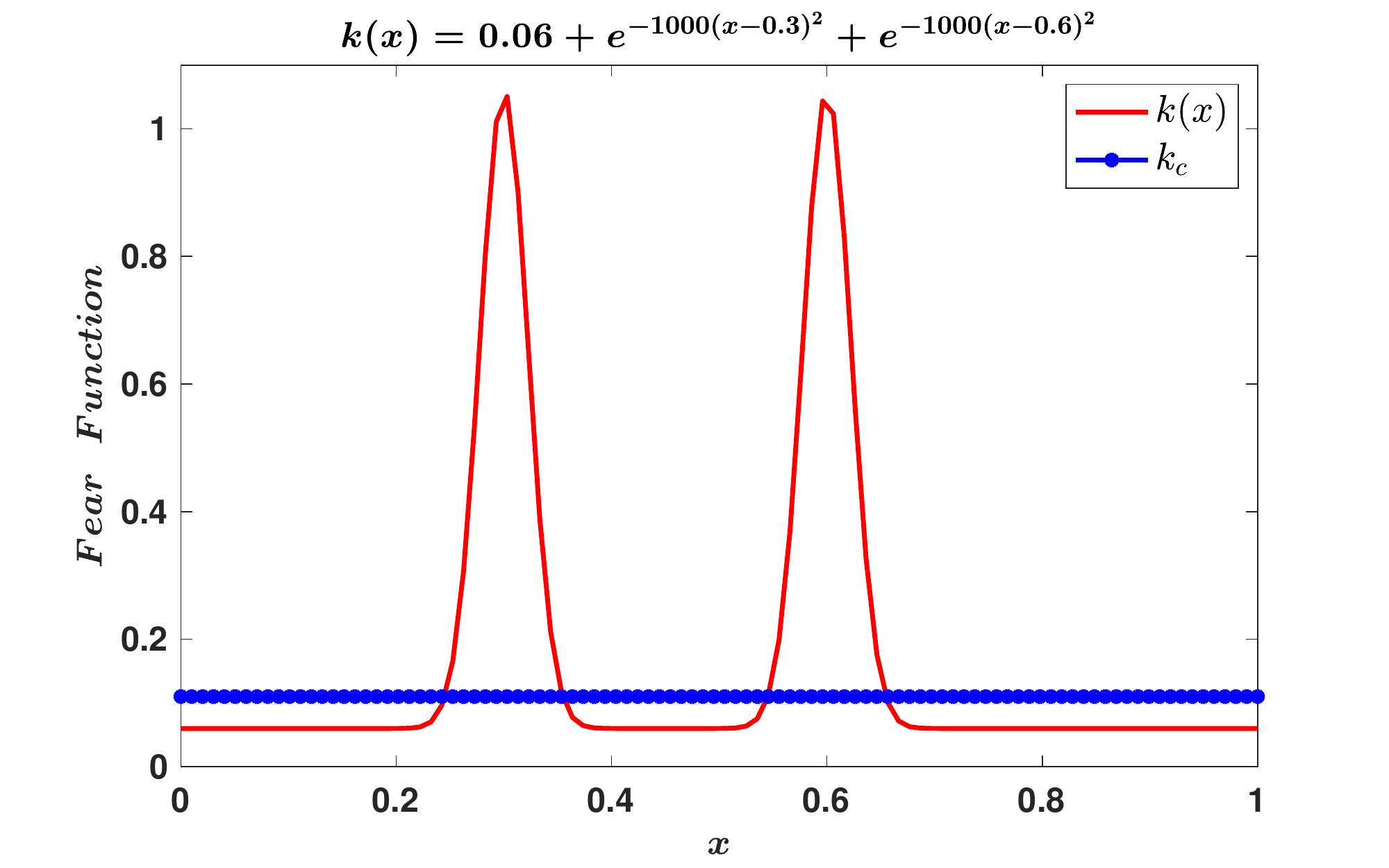}
	\end{subfigure}
	\hfill
	\begin{subfigure}[b]{.475\linewidth}
		\includegraphics[width=\linewidth,height=1.8in]{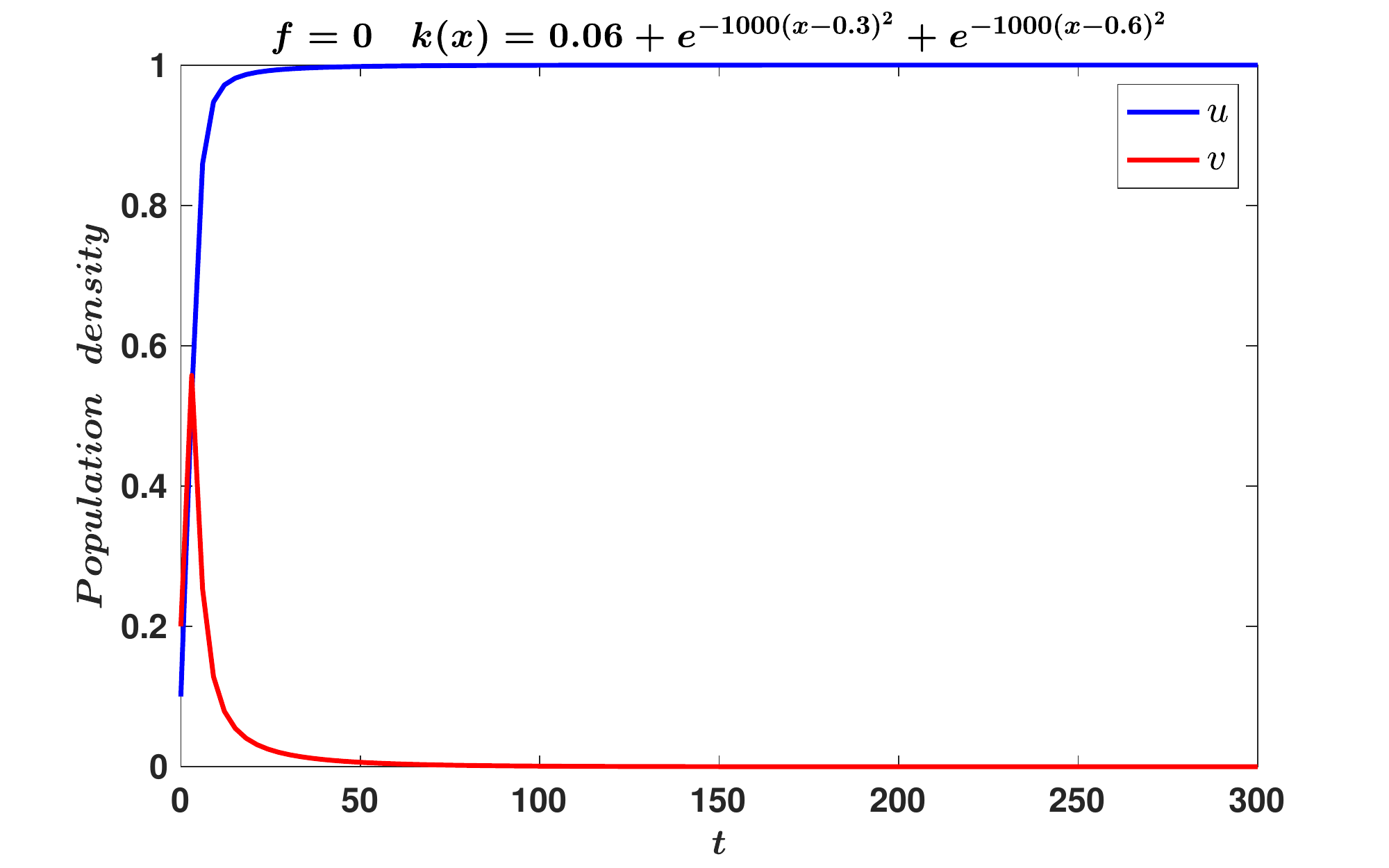}
	\end{subfigure}
	\caption{The parameters are chosen as $[u_0,v_0]=[0.1,0.2],d_1=1,d_2=1,a_1=1,a_2=2,b_1=1,b_2=2,c_1=0.3$ and $c_2=1.8$. The interior equilibrium $[u^*,v^*]=[0.958904,0.136986]$ and the fear threshold $k_c=\frac{ a_2b_1^2-c_2 a_1 b_1}{a_1^2c_2}=0.11$ and $C_{1}=1$.}
	\label{fig:remark_52}
\end{figure}

From the observations in the above numerical simulations, we can state a conjecture concerning positive interior equilibrium,

\begin{conjecture}
\label{conj:pde_st1}
	Consider the system $\eqref{eq:PDE}$. For any non-negative fear function $k(x)$ such that
	\begin{align}
		\Big( \dfrac{b_2b_1}{c_1} -c_2\Big) <\frac{k a_2b_1^2}{(b_1+ka_1)^2}
	\end{align}
	and parametric restrictions given by \eqref{eq:as1} and Theorem~\ref{thm:exist} hold true, for $\mathbf{k}=\mathbf{\widetilde{k}} ,\mathbf{\hat{k}}$. There exists some data $[u_0,v_0]$, for which the solution $(u,v) \to (u^*,v^*)$, and for some choice of data $[u_1,v_1]$, the solution converges to the boundary equilibrium $(0,v^*)$.
\end{conjecture}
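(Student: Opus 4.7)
The plan is to follow the strategy of Theorem \ref{st_pde} as closely as possible, using the two constant-fear approximations \eqref{eq:upper} and \eqref{eq:lower} to sandwich the solution of \eqref{eq:PDE}. The hypotheses of the conjecture are engineered precisely so that both approximating systems, viewed with constant fear coefficient $\mathbf{\widehat{k}}$ and $\mathbf{\widetilde{k}}$ respectively, fall into the bi-stable regime already analyzed in Lemma \ref{lem:two_post_ode}: each possesses two positive interior equilibria (a sink and a saddle by Lemma \ref{ce_strong}) together with the stable boundary equilibrium $(0, a_2/b_2)$. My first step would be to invoke Theorem \ref{thm:km1} with spatially flat initial data, which transfers this ODE phase portrait to the reaction--diffusion systems \eqref{eq:upper} and \eqref{eq:lower} without modification; in particular, each has a $C^1$ stable manifold $W_s^{\widehat{k}}$, $W_s^{\widetilde{k}}$ of its interior saddle that splits the positive cone into the basin of the interior sink and the basin of $(0, a_2/b_2)$.

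The key geometric observation is then the analogue of the wedge argument used in the proof of Theorem \ref{st_pde}: because $\frac{a_{2}}{1+\mathbf{\widetilde{k}}u} \le \frac{a_{2}}{1+\mathbf{\widehat{k}}u}$, continuous dependence of stable manifolds on parameters gives an open region $\mathcal{N}$ lying strictly above both $W_s^{\widehat{k}}$ and $W_s^{\widetilde{k}}$, and an open region $\mathcal{M}$ lying strictly below both. For the second conclusion of the conjecture, I would choose $[u_0, v_0]$ inside $\mathcal{N}$, so that by ODE bi-stability both $(\widetilde{u}, \widetilde{v})$ and $(\widehat{u}, \widehat{v})$ tend to the common limit $(0, a_2/b_2)$. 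The sandwich of Lemma \ref{lem:com1}, $\widetilde{v} \le v \le \widehat{v}$, combined with the nonnegativity of $u$ from Lemma \ref{lem:pos1} and an upper-logistic bound on $u$, then squeezes $(u,v) \to (0, a_2/b_2) = (0, v^{*})$ uniformly, exactly as in the second case of Theorem \ref{st_pde}. This half of the conjecture is therefore fully in reach of the present toolkit.

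The other half — producing data $[u_1, v_1]$ inside $\mathcal{M}$ that drives $(u,v)$ to an interior equilibrium $(u^{*}, v^{*})$ — is the main obstacle and explains why the statement is posed as a conjecture. The sandwich does confine $(u,v)$ asymptotically to the rectangle whose corners are the interior sinks of the two approximating problems, namely $\widehat{E}_4^{\widehat{k}}$ and $\widehat{E}_4^{\widetilde{k}}$, but these sinks are \emph{distinct} because they depend on the value of the constant fear coefficient, so squeezing alone cannot collapse the trap to a single point. To close the argument one would need either a uniqueness-type result for spatially homogeneous steady states of \eqref{eq:PDE} inside that trapping box, or a Lyapunov/monotone-flow argument in the spirit of Hsu--Smith--Waltman that forces convergence within the trap. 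Both avenues go beyond pure comparison theory with constant-fear bounds, which is the structural reason we leave the interior case as a conjecture and numerically illustrated claim (see Fig.~\ref{fig:pde_two_post_pde1}).
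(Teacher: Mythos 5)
This statement is posed in the paper as a \emph{conjecture}: there is no proof to compare against. The paper supports it only with the numerics of Fig.~\ref{fig:pde_two_post_pde1}, and the Discussion explicitly lists a rigorous proof of Conjecture~\ref{conj:pde_st1} as future work. Your assessment is therefore aligned with the paper's own position, and your diagnosis of \emph{why} the interior half is hard is exactly right: the comparison framework of Lemma~\ref{lem:com1} traps $v$ between solutions of two constant-fear systems whose interior sinks $\widehat{E}_4^{\widehat{k}}$ and $\widehat{E}_4^{\widetilde{k}}$ are distinct, so squeezing cannot collapse the limit set to a single point, and something beyond comparison (uniqueness of the homogeneous steady state in the trapping box, or a monotone-systems/Lyapunov argument) is needed. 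That is a genuine obstruction, not a gap in your reasoning.

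Two cautions on the half you claim is ``fully in reach.'' First, Lemma~\ref{lem:com1} sandwiches only the $v$-component; concluding $(u,v)\to(0,a_2/b_2)$ requires a separate argument that $u\to 0$. The logistic upper bound $u\le a_1/b_1$ does not give this; you need that once $\widetilde{v}$ (the lower barrier) is close to $a_2/b_2$, the $u$-equation satisfies $u_t\le d_1\Delta u+u\left(a_1-c_1\widetilde{v}\right)$ with $a_1-c_1\,a_2/b_2<0$, i.e.\ the local stability condition $a_1b_2<c_1a_2$ for $\widehat{E}_3$. This inequality is part of the bi-stable scenario the conjecture has in mind (it holds for the parameters of Fig.~\ref{fig:ode_two_postive}) but is not literally implied by the hypotheses as written, so you should state it as an additional assumption or verify it follows from the intended regime. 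Second, your region $\mathcal{N}$ must be shown to contain spatially flat data, since Theorem~\ref{thm:km1} and the whole comparison scheme of the paper are set up only for homogeneous initial conditions; the $C^1$ wedge argument from Theorem~\ref{st_pde} does supply this, but it deserves an explicit sentence. With those two points addressed, your boundary-convergence half would constitute a partial result going beyond what the paper establishes.
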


%\begin{figure}[h]
%	\includegraphics[width=0.5\linewidth,height=1.8in]{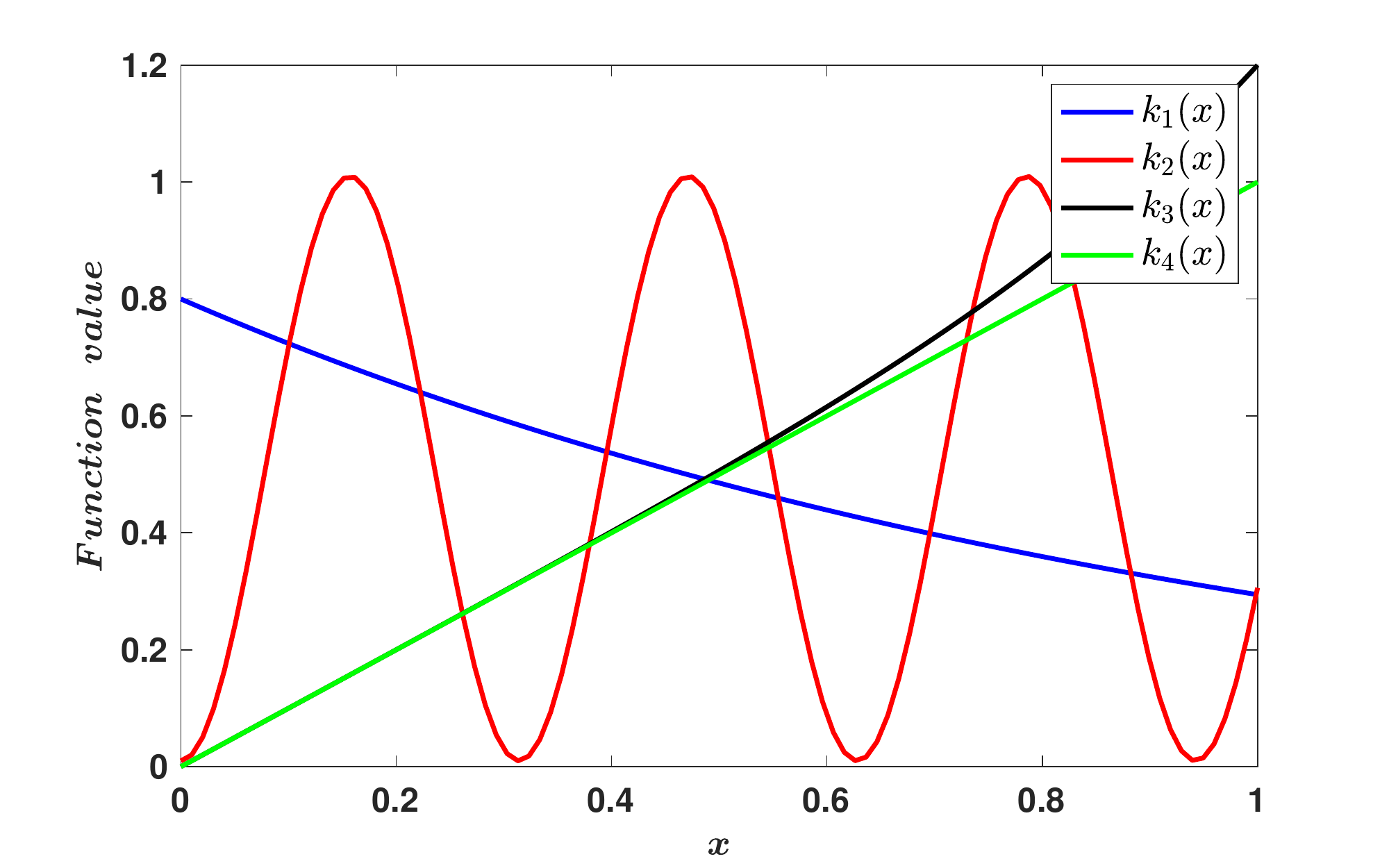}
%	\caption{Illustration of various homogenous fear fucntions. $k_1(x)=0.8 e^{-x},k_2 (x)=0.01+\sin^2 (10x),k_3 (x)=x+0.2x^5$ and $k_4(x)=x$.}
%	\label{fear_plot}
%\end{figure}

\section{Discussion}
\label{disc}
In this section we discuss several aspects of the lemmas and theorems described in the current manuscript.
%\begin{wrapfigure}

\begin{figure}[h]
	\begin{subfigure}[b]{.49\linewidth}
		\includegraphics[width=\linewidth,height=1.9in]{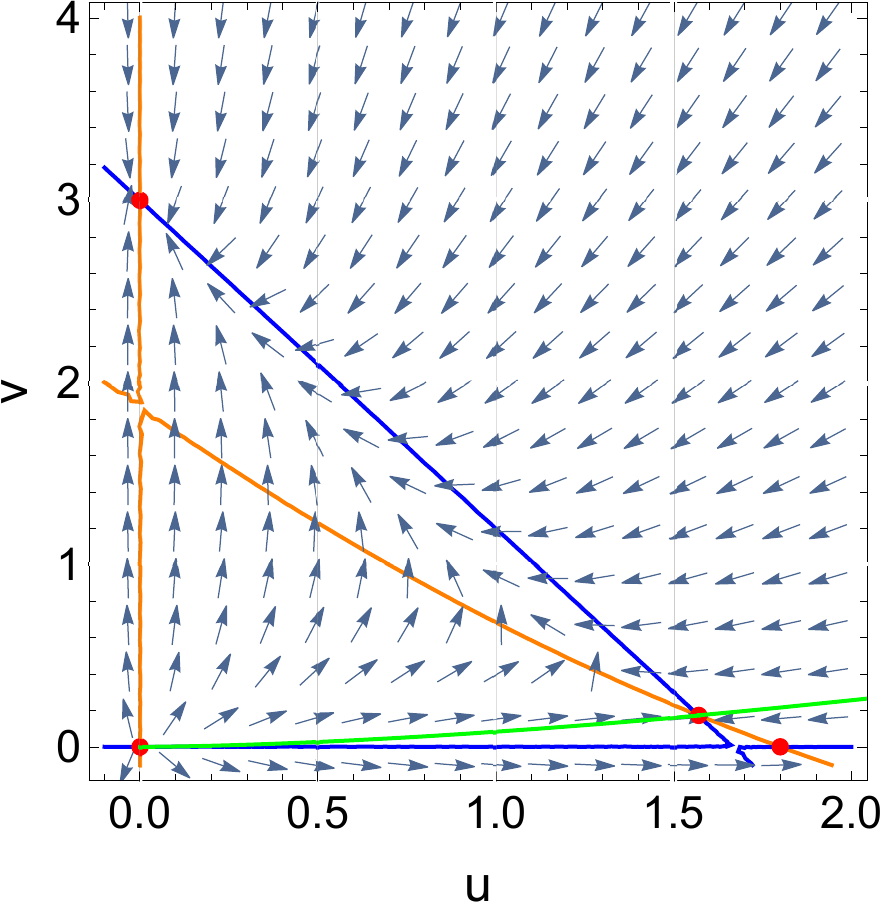}
		\caption{$f=0.5$}
		\label{fig:dis_st}
		\end{subfigure}
	\hfill
	\begin{subfigure}[b]{.5\linewidth}
		\includegraphics[width=\linewidth,height=2in]{Plots/PDE/ODE1/fear_func.pdf}
		\caption{Various heterogeneous fear functions}
		\label{fig:fear_plot}
	\end{subfigure}
	\caption{(A) Phase plots showing dynamics under strong competition parametric restriction with $k=0$. Parameters used are $a_1=1, a_2=1,b_1=1, b_2=1, c_1=2, c_2=2$ and $f=0.5$ (B) Illustration of various heterogeneous fear functions. $k_1(x)=0.8 e^{-x},k_2 (x)=0.01+\sin^2 (10x),k_3 (x)=x+0.2x^5$ and $k_4(x)=x$}
\end{figure}	

We see that in the competitive exclusion case, where $(u^{*},0)$ is globally stable, a small amount of fear in $u$, can bring about a strong competition type situation - via a transcritical bifurcation. Where after bifurcation there appears an interior equilibrium, which is a saddle, and one has initial condition dependent attraction to $(u^{*},0)$ or $(0,v^{*})$. See Figs.~[\ref{fig:ode_two_postive},\ref{fig:Fig1},\ref{fig:trans}]. However, a slightly larger level of fear, can bring down the separatrix drastically see Fig.~\ref{fig:dis_st}. Furthermore, here (depending on parametric restrictions) only a very small quantity of fear can create two interior equilibrium, see Figs.~[\ref{fig:Fig1},\ref{fig:trans}], where there are both weak and strong competition dynamics at play, where most initial conditions are attracted to an interior equilibrium, but certain initial conditions $(u_{0},v_{0})$ (where $u_{0} << 1, v_{0} >> 1$), are attracted to $(0,v^{*})$. It is interesting to think about this in an invasion setting. If $u$ were an invasive species, and $v$ a resident species, then in the absence of fear in $u$, it can invade and thus exclude $v$. However, if the resident $v$ can instill just a small amount of fear in the invader $u$, coexistence for most initial conditions is possible. If one plays devil's advocate, and switches the role of $u$ and $v$, ($u$ resident, $v$ invader), then we see in the absence of fear in the resident, it will exclude the invader, but if it can instill some fear, then coexistence is possible, as earlier alluded to. However, having too much fear in either case, yields only one interior equilibrium, which is a saddle and we are in a strong competition type setting, with initial condition dependent attraction to either $(u^{*},0)$ or $(0,v^{*})$. Thus if coexistence (for most initial conditions) is sought after, it is advantageous to induce a little fear  - but not to much fear. Dynamically, this is seen because in the former case we have a saddle-node bifurcation occurring first, followed by a transcritical bifurcation, see Fig.~\ref{fig:transB}, as opposed to the later case, where only a transcritical bifurcation occurs, see Fig.~\ref{fig:transA}.

This phenomenon becomes even more interesting in sociological or political settings, particularly when we are in the spatially explicit scenario. Consider the case of law enforcement trying to control crime in high crime areas, where traditionally control efforts have failed, indicative of the competitive exclusion scenario, or a coexistence scenario, where the criminal groups have not been managed so coexist at high levels, and have not been eradicated. The question becomes, what is the optimal design of the fear function $k(x)$, that could now yield a co-existence scenario - or one that can yield a competitive exclusion scenario. We see via Theorem \ref{thm:cotoce_1_pde}, that this function could be very small in some areas of the spatial domain, and large enough in others, so that it would change dynamics - this is also seen numerically, where ``hotspot" type fear functions have been utilised to generate a competitive exclusion type scenario, see Figs.~[\ref{fig:remark_5},\ref{fig:remark_51},\ref{fig:remark_52}]. Similar ideas for control and policing activities have been explored in \cite{Rod21, Ber13}. Essentially, the fear function can be modeled  through various functional types, that also preserve spatial features of the underlying domain. For instance, consider competition between two warring drug cartels, where the weaker cartel has certain territorial strongholds. This situation can be modeled through oscillatory functions like $\sin^2(nx)$, where $n\in \mathbb{N}$. Moreover, for complex ecological models, the result of these theorems also holds, where we need to model the fear responses by a sequence of functions or in the form of piece-wise functions. We display some of these functions in Fig.~\ref{fig:fear_plot}. Many of the relevant theorem have been tested by choosing such functions (See Figs.~\ref{fig:pde_ce1_pde1},\ref{fig:pde_co1_pde1},\ref{fig:pde_sc1_pde1},\ref{fig:pde_two_post_pde1}). Note, all of our PDE simulations for the case of a spatially heterogeneous fear function were performed in \verb|MATLAB| R2021b, using the \verb|MATLAB| inbuilt function pdepe, which is used in solving $1-$D parabolic and elliptic PDEs. The simulations were run on an $8-$core CPU, Apple $M1$ pro-chip-based workstation. In this configuration, a typical simulation takes between $5-7$ seconds when the spatial domain is taken to be $[0,1]$, and is partitioned into $1000$ sub-intervals.

In the weak competition case without fear, once there is a critical level of fear, in either $u$ or $v$, a competitive exclusion type phenomenon will occur, see Lemma \ref{lem:ce_ode_2}. This occurs again via a transcritical bifurcation, see Theorem \ref{thm:tc} and Fig.~[\ref{fig:Fig1},\ref{fig:trans}]. This again is interesting in the spatially explicit setting, as via Theorem \ref{thm:ce_1_pde}, where the fear function need not be above $k_{c}$ so as to induce competitive exclusion. We have explored certain types of fear functions herein, see Fig.[\ref{fig:pde_ce1_pde1},\ref{fig:remark_5},\ref{fig:remark_51},\ref{fig:remark_52}], but the effect of functions such as in Fig. \ref{fig:fear_plot}, remain to be explored. Here one could look at the effect of fear in fragmented domains vs domains that are not, via deriving or enforcing conditions relating the fear function $k(x)$ to the resource function $m(x)$, see similar ideas explored in \cite{DeAngelis2016}. It would be interesting to explore applications of these sorts to competing political parties, where a nationally weaker party with smaller voter bank, could have several local strongholds. What levels of fear need to be instilled in those strongholds so as to enforce competitive advantage, is an apt question. Herein estimates on the $\mathbb{L}^{1}(\Omega)$ norm of $k(x)$, would be interesting to derive. Currently Lemma \ref{lem:ff1}, only gives lower estimates - but sharp upper estimates or even just upper estimates are unknown, and would make for interesting future work. The effect of fear on the strong competition setting is perhaps the least interesting dynamically. Herein, having fear in $u$ or $v$, only shifts the interior saddle equilibrium - but qualitatively the dynamics remain the same, as in a strong competition type scenario persists, no matter what level of fear ($f$ or $k$) is chosen. Also it would be of interest to rigorously prove Conjecture \ref{conj:co_1_pde3} and Conjecture \ref{conj:pde_st1}. Our current numerical evidence clearly motivates their validity.

Also, in \cite{Wang16}, it is found that various monotonically decreasing fear functions essentially yield the same dynamics. This has not been tested in the current work, in the case of competitive systems. Neither have we investigated rigorosly the case of both competitors being fearful of each other (see \ref{app} for some preliminaries), with possible different fear functions, indicative of different LOF for each competitor. Here again, upper estimates on the fear functions would be useful and we could allude to methods and techniques explored in \cite{Mazari2020a} and in \cite{Zhang20}. Here, one could consider a spatially dependent growth function as well, such as in \cite{Zhang20}, and attempt to derive conditions relating the function describing the resources to the fear function. Furthermore, it would make for interesting future work if certain choices of (density) dependence or fear functions which lead to degenerate dynamics, can cause periodic orbits. In the current scenario Lemma \ref{lem:dc1} and Lemma \ref{lem:dc2}, do not allow limit cycle dynamics. All in all, we hope these questions lead to future investigations of the fear effect in competitive systems, as a host of rich applications exist.  
	
\section{Acknowledgement}	
VS and RP acknowledge valuable summer support from the National Science Foundation via DMS 1715044.

\section{Appendix}
\label{app}

\subsection{The case of $u$ fearing $v$}

\begin{lemma}\label{3eq1}
	The trivial steady state $E_1$ is locally unstable.
\end{lemma}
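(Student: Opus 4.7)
The plan is straightforward linearization at the origin. I would start by recalling the Jacobian matrix of system \eqref{eq:ODE3} that was already displayed in the excerpt, namely
\[
J^*(u^*,v^*) = \begin{pmatrix} \dfrac{a_1}{f v^*+1}-2 b_1 u^*-c_1 v^* & -\dfrac{a_1 f u^*}{(f v^*+1)^2}-c_1 u^* \\ -c_2 v^* & a_2-2 b_2 v^*-c_2 u^* \end{pmatrix}.
\]
Next I would substitute $(u^*,v^*) = E_1 = (0,0)$. Every off-diagonal entry contains a factor of either $u^*$ or $v^*$, so both vanish, and the diagonal entries collapse to $a_1$ and $a_2$ respectively. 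One obtains the diagonal (hence triangular) matrix
\[
J^*(E_1) = \begin{pmatrix} a_1 & 0 \\ 0 & a_2 \end{pmatrix}.
\]

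Because this matrix is diagonal, its eigenvalues are read off directly as $\lambda_1 = a_1$ and $\lambda_2 = a_2$. Since all model parameters were declared strictly positive at the outset, both eigenvalues are strictly positive. By the standard Hartman–Grobman theorem, any equilibrium of a $C^1$ planar ODE system at which the Jacobian has an eigenvalue with positive real part is locally unstable; here in fact both eigenvalues are positive reals, so $E_1$ is a source. This immediately yields the claim.

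There is no real obstacle here: the statement follows by a two-line linearization, entirely parallel to the proof of the analogous lemma for system \eqref{eq:ODE2}, and is independent of the fear coefficient $f$ since the nonlinear denominator $(1+fv)^{-1}$ reduces to $1$ at $v=0$. The only subtlety worth flagging in the write-up is that this local instability is unconditional in $f$, reflecting the fact that fear does not alter the per-capita growth rates at zero population density.
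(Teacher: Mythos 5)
Your proof is correct and follows exactly the paper's own argument: evaluate the Jacobian at the origin, observe it is diagonal with eigenvalues $a_1>0$ and $a_2>0$, and conclude instability. No issues.
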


\begin{proof}
	Evaluating $J^*$ at $E_1$ yields the following characteristic equation: $$\left(a_1 - \lambda \right) \left(a_2 -\lambda \right)=0.$$ Clearly, $\lambda_1=a_1>0$ and $\lambda_2=a_2>0$ and hence $E_0$  is locally unstable.
\end{proof}

\begin{lemma}\label{3eq2}
	The boundary equilibrium point $E_2$ is locally stable iff $a_2b_1<c_2a_1$.
\end{lemma}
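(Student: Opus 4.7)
The plan is to substitute the coordinates of $E_2=(a_1/b_1,0)$ into the Jacobian $J^{*}$ displayed just before the lemma, and then read off the eigenvalues directly. Because the $(2,1)$ entry of $J^{*}$ carries a factor of $v^{*}$, it will vanish at $E_2$, so the evaluated matrix will be (upper) triangular. This reduces the stability question to inspecting the two diagonal entries.

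Concretely, I would first compute the $(1,1)$ entry: at $v^{*}=0$ and $u^{*}=a_1/b_1$ it becomes $a_1-2b_1(a_1/b_1)-c_1\cdot 0 = -a_1$, which is automatically negative since $a_1>0$. Next, the $(2,2)$ entry evaluates to $a_2-c_2(a_1/b_1)$. The off-diagonal $(1,2)$ entry will be a combination involving $f$, $a_1$, $b_1$ and $c_1$, but since the matrix is triangular, this entry plays no role in the spectrum. Hence the two eigenvalues are $\lambda_1=-a_1<0$ and $\lambda_2 = a_2-c_2 a_1/b_1$.

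Local stability of $E_2$ is therefore equivalent to $\lambda_2<0$, i.e.\ $a_2<c_2 a_1/b_1$, which on clearing denominators is exactly $a_2 b_1 < c_2 a_1$. Conversely, if this inequality fails, $\lambda_2\ge 0$, so $E_2$ is not asymptotically stable (and is in fact unstable when the inequality is strict in the opposite direction). This gives the ``iff'' claim.

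There is essentially no obstacle here: the main observation driving the proof is that the $c_2 v$-term in the $v$-equation produces a $-c_2 v^{*}$ factor in the $(2,1)$ Jacobian entry, which forces triangularity at any boundary equilibrium with $v^{*}=0$. Notably, the stability criterion is independent of the fear coefficient $f$, which is consistent with the biological picture that fear of $v$ cannot affect $u$'s dynamics when $v$ is absent. No further tools (center manifold, Lyapunov function, etc.) are needed beyond linearization.
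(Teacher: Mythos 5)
Your proposal is correct and follows exactly the paper's own argument: evaluating the Jacobian at $E_2=(a_1/b_1,0)$ makes the $(2,1)$ entry vanish (since it carries the factor $v^{*}$), yielding a triangular matrix with eigenvalues $\lambda_1=-a_1$ and $\lambda_2=a_2-c_2a_1/b_1$, and the stability condition $a_2b_1<c_2a_1$ follows. Your added remark that the criterion is independent of $f$ is a nice observation not made explicit in the paper, but the proof itself is the same.
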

\begin{proof}
	On evaluating the $J^*$ at $E_2$ , we have
	\begin{equation*}
		\widehat{J}^*(E_2)=
		\left(
		\begin{array}{cc}
			-a_1 & -\dfrac{a_1^2 f}{b_1} - \dfrac{c_1 a_1}{b_1} \\
			0&  a_2 - \dfrac{c_2 a_1}{b_1} \\
		\end{array}
		\right).
	\end{equation*}
	Being a triangular matrix, we know that the above matrix has two eigenvalues, $\lambda_1=-a_1$ and $\lambda_2=a_2 - \dfrac{c_2a_1}{b_1}$. Under the assumed parametric restriction 
	\[ a_2b_1<c_2a_1  \iff \lambda_2:=a_2 - \dfrac{c_2 a_1}{b_1}<0.\]
	Hence, the equilibrium point $E_2$ is locally stable
	
\end{proof}

\begin{lemma}\label{3eq3}
	The boundary equilibrium point $E_3$ is locally stable iff $f>\dfrac{a_1 b_2^2 - a_2 b_2 c_1}{a_2^2 c_1}$.
\end{lemma}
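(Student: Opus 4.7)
The plan is to follow exactly the template used in the proofs of the analogous stability lemmas for $\widehat{E}_3$ and $E_2$ earlier in the excerpt: substitute the coordinates of $E_3=(0,\tfrac{a_2}{b_2})$ into the Jacobian $J^*$ displayed just above Lemma \ref{3eq1}, read off the eigenvalues from the resulting triangular matrix, and translate the sign condition on the nontrivial eigenvalue into the stated inequality on $f$.

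First I would evaluate $J^*(E_3)$. Because $u^*=0$ at $E_3$, the top-right entry $-\tfrac{a_1 f u^*}{(fv^*+1)^2}-c_1 u^*$ collapses to $0$, so the Jacobian is lower triangular. The bottom-right entry simplifies to $a_2-2b_2\cdot\tfrac{a_2}{b_2}=-a_2$, which is automatically negative and gives one eigenvalue $\lambda_1=-a_2<0$. The top-left entry gives the second eigenvalue
\[
\lambda_2 \;=\; \frac{a_1}{1+f\cdot\tfrac{a_2}{b_2}} \;-\; c_1\cdot\frac{a_2}{b_2} \;=\; \frac{a_1 b_2}{b_2+fa_2}-\frac{c_1 a_2}{b_2}.
\]

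Next I would note that local stability is equivalent to $\lambda_2<0$, and reduce this inequality algebraically. Clearing denominators (both $b_2$ and $b_2+fa_2$ are strictly positive, so no sign flip occurs) turns $\lambda_2<0$ into $a_1 b_2^2 < c_1 a_2(b_2+fa_2)$, i.e.\ $a_1 b_2^2 - a_2 b_2 c_1 < c_1 a_2^2 f$, which is exactly $f > \tfrac{a_1 b_2^2-a_2 b_2 c_1}{a_2^2 c_1}$. Reversing each step shows the converse, yielding the ``iff'' statement.

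There is no real obstacle here: the result is a direct linearization computation, and triangularity of $J^*(E_3)$ removes any need to compute a trace/determinant pair. The only point that deserves a brief comment in the write-up is the positivity of $b_2+fa_2$ (which is immediate from $f,a_2,b_2>0$) to justify that clearing denominators preserves the direction of the inequality; everything else is mechanical and mirrors Lemma \ref{3eq2} line by line.
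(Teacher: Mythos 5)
Your proposal is correct and follows exactly the same route as the paper's own proof: evaluate the Jacobian at $E_3=(0,\tfrac{a_2}{b_2})$, observe it is triangular with eigenvalues $\lambda_1=-a_2$ and $\lambda_2=\tfrac{a_1b_2}{b_2+fa_2}-\tfrac{c_1a_2}{b_2}$, and convert $\lambda_2<0$ into the stated threshold on $f$. The paper simply writes $\lambda_2$ over a common denominator $b_2(a_2f+b_2)$; otherwise the arguments are identical.
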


\begin{proof}
	Let $f>\dfrac{a_1 b_2^2 - a_2 b_2 c_1}{a_2^2 c_1}$. A similar evaluation of $J^*$ at $E_3$ gives
	\begin{equation}
		J^*(E_3)=
		\left(
		\begin{array}{cc}
			\dfrac{a_1 b_2}{a_2 f+b_2}-\dfrac{a_2 c_1}{b_2} & 0 \\
			-\dfrac{a_2 \left(c_2\right)}{b_2} & -a_2 \\
		\end{array}
		\right)
	\end{equation}
	and its corresponding characteristic equation is
	\[\dfrac{\left(a_2+\lambda \right) \left(a_2 b_2 c_1-a_1 b_2^2+a_2^2 c_1 f+\lambda (b_2^2  +a_2 b_2 f) \right)}{b_2 \left(a_2 f+b_2\right)}=0.\]
	The associated eigenvalues are $\lambda_1=-a_2<0$ and $\lambda_2=\dfrac{-a_2 b_2 c_1+a_1 b_2^2-a_2^2 c_1 f}{b_2 \left(a_2 f+b_2\right)}$. Since 
	\[f>\dfrac{a_1 b_2^2 - a_2 b_2 c_1}{a_2^2 c_1} \iff \lambda_2<0,\]
	and hence $E_3$ is locally stable.
\end{proof}

\begin{lemma}\label{3eq4}
	The interior equilibrium $E_4$ exists and is locally stable  if  
	\[ f < \frac{1}{a_1} \Big( \frac{b_1b_2}{c_2} - c_1 \Big).\]
\end{lemma}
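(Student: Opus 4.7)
\textbf{Proof proposal for Lemma \ref{3eq4}.} The plan is to mirror the strategy used for Lemma \ref{lem:co_ode_1} in the $v$-fears-$u$ system: evaluate the Jacobian $J^{*}$ at $E_{4}=(u^{*},v^{*})$, simplify using the equilibrium identities, and then apply the trace-determinant criterion. First, the existence of $E_{4}$ under the given parametric window follows from the cubic analysis surrounding \eqref{three_post1}--\eqref{three_post}: when $f<\tfrac{1}{a_{1}}\bigl(\tfrac{b_{1}b_{2}}{c_{2}}-c_{1}\bigr)$ one is inside the coexistence regime identified there, and at least one positive root $v^{*}$ of the reduced polynomial produces a valid $u^{*}$ via \eqref{u_star}.

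To analyze stability, I would substitute the identities $\tfrac{a_{1}}{1+fv^{*}}=b_{1}u^{*}+c_{1}v^{*}$ and $a_{2}=b_{2}v^{*}+c_{2}u^{*}$, obtained from the nullcline equations, into $J^{*}$. This collapses the diagonal entries and yields
\[
J^{*}(E_{4})=
\begin{pmatrix}
-b_{1}u^{*} & -\dfrac{a_{1}fu^{*}}{(1+fv^{*})^{2}}-c_{1}u^{*}\\[1ex]
-c_{2}v^{*} & -b_{2}v^{*}
\end{pmatrix}.
\]
The trace is then $-b_{1}u^{*}-b_{2}v^{*}<0$ at no cost. The determinant factors as
\[
\det J^{*}(E_{4})=u^{*}v^{*}\left[b_{1}b_{2}-c_{2}\left(\dfrac{a_{1}f}{(1+fv^{*})^{2}}+c_{1}\right)\right],
\]
so stability reduces to ensuring $\tfrac{b_{1}b_{2}}{c_{2}}-c_{1}>\tfrac{a_{1}f}{(1+fv^{*})^{2}}$.

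The key observation, which handles the dependence on the unknown $v^{*}$, is the crude but tight bound
\[
\dfrac{a_{1}f}{(1+fv^{*})^{2}}<a_{1}f,
\]
valid whenever $f>0$ and $v^{*}>0$. Thus the hypothesis $f<\tfrac{1}{a_{1}}\bigl(\tfrac{b_{1}b_{2}}{c_{2}}-c_{1}\bigr)$ yields $a_{1}f<\tfrac{b_{1}b_{2}}{c_{2}}-c_{1}$, which in turn forces $\det J^{*}(E_{4})>0$. Combined with $\operatorname{tr}J^{*}(E_{4})<0$, the Routh--Hurwitz criterion gives local asymptotic stability.

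The main obstacle, such as it is, lies not in the stability computation (which is nearly verbatim from Lemma \ref{lem:co_ode_1}) but in making the existence claim airtight: one must verify that under the stated smallness condition on $f$, a genuinely positive root of the cubic in \eqref{three_post1} produces $u^{*}=\tfrac{a_{2}-b_{2}v^{*}}{c_{2}}>0$. I would argue this by continuity from the $f=0$ weak-competition interior point (which exists by the classical Lotka--Volterra theory recalled in Section~2.1), noting that the smallness assumption on $f$ keeps us inside the weak-competition parametric cone $\tfrac{c_{2}}{b_{1}}<\tfrac{a_{2}}{a_{1}}<\tfrac{b_{2}}{c_{1}}$ (since $\tfrac{b_{1}b_{2}}{c_{2}}-c_{1}>0$ is precisely the relevant inequality), so the coexistence equilibrium perturbs smoothly from its $f=0$ position without leaving the positive quadrant.
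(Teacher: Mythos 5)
Your proposal is correct and follows essentially the same route as the paper: evaluate the Jacobian at $E_4$ using the nullcline identities, note the trace is $-b_1u^*-b_2v^*<0$, and use the bound $\tfrac{a_1f}{(1+fv^*)^2}<a_1f$ together with the hypothesis to force $\det J^*(E_4)>0$. The only difference is that you additionally sketch an existence argument via continuity from the $f=0$ weak-competition case, which the paper's proof omits entirely; that addition is welcome, though strictly the hypothesis only gives $b_1b_2>c_1c_2$ and one still needs $\tfrac{c_2}{b_1}<\tfrac{a_2}{a_1}<\tfrac{b_2}{c_1}$ separately for the $f=0$ interior point to exist.
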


\begin{proof}
	On evaluating  $J^*$ at $E_4$ , we have
	\begin{equation*}
		\widehat{J}^*(E_4)=
		\left(
		\begin{array}{cc}
			-b_1 u^* & -\dfrac{a_1 f u^*}{(f v^* +1)^2} - c_1 u^* \\
			-c_2 v^*&  -b_2 v^*\\
			
		\end{array}
		\right).
	\end{equation*}
	To claim local stability of interior equilibrium $E_4$, we need to show $Trace(J^*(E_4))<0$ and $Det(J^*(E_4))>0$. Simple calculation yields
	\[ Trace(J^*(E_4)) =  -b_1 u^* -b_2 v^*<0\]
	and 
	\begin{align*}
		\begin{split}
			Det(J^*(E_4)) &= u^* v^* \Big[b_1 b_2 - c_2 \Big( \dfrac{a_1 f}{(f v^* +1)^2} + c_1 \Big)\Big].
		\end{split}
	\end{align*}
	Note,
	\[ a_1 f + c_1 > \dfrac{a_1 f}{(f v^* +1)^2} + c_1.\]
	Hence, under the assumption $f < \frac{1}{a_1} \Big( \frac{b_1b_2}{c_2} - c_1 \Big)$, we have
	\[ c_2 \Big( \dfrac{a_1 f}{(f v^* +1)^2} + c_1 \Big) < c_2 \Big( a_1 f + c_1 \Big) < b_1 b_2 \implies Det(J^*(E_4))>0,\]
	and the result follows.
\end{proof}

\begin{lemma}\label{3eq5}
	The interior equilibrium $E_4$ exits and is saddle  if  
	\[\Big( \dfrac{b_1 b_2}{c_2} -c_1 \Big)< \dfrac{a_1 c_2^2 f}{(f a_2 +c_2)^2}.\]
\end{lemma}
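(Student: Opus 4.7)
The plan is to mirror the proof of Lemma~\ref{ce_strong}, exchanging the roles of $u$ and $v$ since in \eqref{eq:ODE3} it is $u$ that is fearful of $v$. First I would compute the Jacobian of \eqref{eq:ODE3} at $E_4 = (u^*, v^*)$. Using the two equilibrium (nullcline) identities $\tfrac{a_1}{1+fv^*} = b_1 u^* + c_1 v^*$ and $a_2 = b_2 v^* + c_2 u^*$, the diagonal entries of $J^*$ collapse to $-b_1 u^*$ and $-b_2 v^*$, so
\begin{equation*}
J^*(E_4) \;=\; \begin{pmatrix} -b_1 u^* & -\dfrac{a_1 f u^*}{(1+fv^*)^2} - c_1 u^* \\[1ex] -c_2 v^* & -b_2 v^* \end{pmatrix}.
\end{equation*}

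For $E_4$ to be a hyperbolic saddle it suffices to show $\det J^*(E_4) < 0$, since $\mathrm{tr}\,J^*(E_4) = -b_1u^*-b_2v^* < 0$ is automatic from the positivity of the interior equilibrium (this immediately rules out a source, which is what the trace sign is doing for us). Expanding the determinant gives
\begin{equation*}
\det J^*(E_4) \;=\; u^*v^*\left[\,b_1b_2 - c_1c_2 - \frac{a_1 c_2 f}{(1+fv^*)^2}\,\right],
\end{equation*}
so the key inequality I must establish is
\begin{equation*}
\frac{b_1b_2}{c_2} - c_1 \;<\; \frac{a_1 f}{(1+fv^*)^2}.
\end{equation*}

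The main (and essentially only) technical obstacle is bridging the gap between the hypothesis, which is stated purely in terms of the fixed parameters, and this pointwise inequality, which still contains the unknown coordinate $v^*$. My plan is to derive an a priori upper bound on $v^*$ directly from the model: the comparison $\dot v \le a_2 v - b_2 v^2$ forces $v^* \le a_2/b_2$, and combined with positivity of $u^*$ in the identity $a_2 = b_2 v^* + c_2 u^*$ this yields an upper estimate for $1+fv^*$ in terms of pure parameters. Plugging that bound into the denominator produces a strict lower estimate of $\tfrac{a_1 f}{(1+fv^*)^2}$ by exactly the right-hand side of the hypothesis, namely $\tfrac{a_1 c_2^2 f}{(fa_2+c_2)^2}$. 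By the density-of-reals argument used in Lemma~\ref{ce_strong}, the assumed strict inequality then propagates through the chain and gives $\det J^*(E_4) < 0$, so $E_4$ is a saddle and the lemma follows.
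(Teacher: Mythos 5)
Your setup is identical to the paper's: the same Jacobian reduction via the two nullcline identities, the same trace and determinant expressions, and the same target inequality $\tfrac{b_1b_2}{c_2}-c_1<\tfrac{a_1f}{(1+fv^*)^2}$. The problem is the bridging step that you yourself single out as the only technical obstacle: it does not close the way you describe. The a priori bound you can actually extract from the $v$-nullcline $a_2=b_2v^*+c_2u^*$ together with $u^*>0$ is $v^*<a_2/b_2$, and substituting this into the denominator gives
\[
\frac{a_1 f}{(1+fv^*)^2}\;>\;\frac{a_1 f}{\left(1+f\frac{a_2}{b_2}\right)^2}\;=\;\frac{a_1 b_2^2 f}{(f a_2+b_2)^2},
\]
which is \emph{not} the right-hand side of the hypothesis, $\frac{a_1 c_2^2 f}{(fa_2+c_2)^2}=\frac{a_1 f}{(1+f a_2/c_2)^2}$, unless $b_2=c_2$. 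To land on the stated constant you would need $v^*<a_2/c_2$, and that follows from $v^*<a_2/b_2$ only when $c_2\le b_2$; in the strong-competition regime where this lemma is meant to apply ($b_1b_2<c_1c_2$, i.e.\ large interspecific rates), $c_2>b_2$ is entirely possible, so the chain of inequalities genuinely breaks there.

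In fairness, the paper's own proof stumbles at exactly the same spot: it asserts $v^*=\frac{a_2}{c_2}-c_2u^*$ ``from the nullclines of $v$,'' which is not what $a_2-b_2v^*-c_2u^*=0$ yields. Mirroring Lemma \ref{ce_strong} correctly (there $u^*<a_1/b_1$ produces the constant $\frac{ka_2b_1^2}{(b_1+ka_1)^2}$), the hypothesis here ought to read $\tfrac{b_1b_2}{c_2}-c_1<\frac{a_1 b_2^2 f}{(fa_2+b_2)^2}$, and under that corrected hypothesis your argument goes through verbatim. Two smaller remarks: once $\det J^*(E_4)<0$ the equilibrium is automatically a hyperbolic saddle (real eigenvalues of opposite sign), so the trace computation is superfluous rather than ``ruling out a source''; and the comparison $\dot v\le a_2v-b_2v^2$ is not needed, since the equilibrium identity already gives the bound on $v^*$.
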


\begin{proof}
	To claim local stability of interior equilibrium $E_4$, we need to show $Trace(J^*(E_4))<0$ and $Det(J^*(E_4))<0$. Consider,
	\[ Trace(J^*(E_4)) =  -b_1 u^* -b_2 v^*<0\]
	whereas
	\begin{align*}
		\begin{split}
			Det(J^*(E_4)) &= u^* v^* \Big[ b_1 b_2 - c_2 \Big( \dfrac{a_1 f u^*}{(f v^* +1)^2} + c_1 \Big)\Big].
		\end{split}
	\end{align*}
	From the density of reals and nullclines of $v$, we have
	\[ \dfrac{a_1 c_2^2 f}{(f a_2  + c_2)^2} < \dfrac{a_1 f}{\Big[ f (\frac{a_2}{c_2} - c_2 u^*) +1\Big]^2} = \dfrac{a_1 f}{(f v^*+1)^2}.\]
	Moreover, under the assumption, we have
	\[ \dfrac{b_1 b_2}{c_2} - c_1 < \dfrac{a_1 c_2^2 f}{(fa_2  + c_2)^2} < \dfrac{a_1 f}{(f v^* +1)^2}.\]
	On further rearrangement,
	\[ b_1b_2 - c_2 \Big( \dfrac{a_1 f}{(f v^* +1)^2} + c_1 \Big) <0 \implies  Det(J^*(E_4))<0,\]
	and the result follows.
\end{proof}

\subsection{Case of both species $u$ and $v$ fearing each other}
We consider the case of the both the competitor $v$ and $u$ are fearful to each other. Thus in the classical model \eqref{eq:GeneralEquation}, we model the fear effect as in \cite{Wang16}, where the growth rate of both the competitor $v$ and $u$, is not constant but rather density dependent. Essentially, the growth rate is decreased by a factor $\approx \frac{1}{1+k u}$, and $\approx \frac{1}{1+f v}$, where $k,f \geq 0$ is a fear coefficient. When $k,f=0$, the assumption is there is no fear, and one recovers the classical model \eqref{eq:GeneralEquation}. If fear is present, we obtain the following ODE model for two competing species $u$ and $v$: 

%We now consider, the effect of fear on the classical model, that is $u$ is fearful of $v$, with a fear coefficient $f$, and  $v$ is fearful of $u$, with a fear coefficient $k$, here $f,k>0$. 

\begin{equation}\label{eq: Fearmodel1}
	\begin{split}
		\dfrac{du}{dt} &= \dfrac{a_1 u}{1+f v}-b_1 u^2 -c_1 uv, \\  
		\dfrac{dv}{dt}  &=  \dfrac{a_2 v}{1+k u}-b_2 v^2 -c_2 uv. \\
	\end{split}
\end{equation}

\textit{Equilibria:} In the event that there is no fear present, or $f=k=0$, \eqref{eq: Fearmodel1} reduces to the classical competition model \eqref{eq:GeneralEquation}. The dynamical analysis of \eqref{eq: Fearmodel1} when $f,k>0$, leads to 4$^{th}$ order polynomial analysis $\eqref{eq: vpoly}.$ The system (\ref{eq: Fearmodel1}) possesses the following biologically feasible equilibria. These are 
\begin{itemize}
\item $\mathbf{E}_1=(0,0)$,
\item $\mathbf{E}_2=\left(\dfrac{a_1}{b_1},0 \right)$,
\item $\mathbf{E}_3=\left(0,\dfrac{a_2}{b_2} \right)$,
\item $\mathbf{E}_4=\left(u^*,v^*\right)$,
\end{itemize}
where 
\begin{align}\label{u_star}
	u^*=\dfrac{1}{b_1}\left(\dfrac{a_1}{1+fv^*}-c_1 v^* \right)
\end{align}
and $v^*$ is a positive root of the following fourth-order polynomial
\begin{equation}
\label{eq: vpoly}
A(v^*)^4+B(v^*)^3+C(v^*)^2+D(v^*)+E=0
\end{equation}
where  
%\begin{equation}
%\begin{split}
%A&=b_1 b_2 c_1 f^2 k-c_1^2 c_2 f^2 k,\\
%B&=b_1 c_1 c_2 f^2+2 b_1 b_2 c_1 f k-b_1^2 b_2 f^2-2 c_1^2 c_2 f k,\\
%C&=a_2 b_1^2 f^2-a_1 b_2 b_1 f k+2 a_1 c_1 c_2 f k+2 b_1 c_1 c_2 f+b_2 b_1 c_1 k-2 b_2 b_1^2 f-c_1^2 c_2 k,\\
%D&=-a_1 b_1 c_2 f+2 a_2 b_1^2 f-a_1 b_2 b_1 k+2 a_1 c_1 c_2 k+b_1 c_1 c_2-b_2 b_1^2,\\
%E&=-a_1 b_1 c_2+a_2 b_1^2-a_1^2 c_2 k.
%\end{split}
%\end{equation}

\begin{equation}
\begin{split}
A&=c_1 f^2 k \left(b_1 b_2-c_1 c_2\right),\\
B&=-f \left(b_1 b_2-c_1 c_2\right) \left(b_1 f-2 c_1 k\right),\\
C&=b_2 b_1 \left(-a_1 f k-2 b_1 f+c_1 k\right)+c_1 c_2 \left(2 f \left(a_1 k+b_1\right)-c_1 k\right)+a_2 b_1^2 f^2,\\
D&=c_2 \left(c_1 \left(2 a_1 k+b_1\right)-a_1 b_1 f\right)+2 a_2 b_1^2 f-b_2 b_1 \left(a_1 k+b_1\right),\\
E&=a_2 b_1^2-a_1 c_2 \left(a_1 k+b_1\right).
\end{split}
\end{equation}

The Jacobian matrix of system $(\ref{eq: Fearmodel1})$ is given by

\begin{equation}\label{jacob_main}
	\mathbf{J}(u^*,v^*)=
	\left(
	\begin{array}{cc}
		\dfrac{a_1}{fv^*+1}-2b_1u^*-c_1v^* &-\dfrac{a_1 f u^*}{(fv^*+1)^2} -c_1 u^* \\
		-\dfrac{a_2 k v^*}{(k u^*+1)^2}-c_2 v^* & \dfrac{a_2}{k u^*+1}-2 b_2 v^*-c_2 u^* \\
	\end{array}
	\right).
\end{equation}

\subsection{Transcritical bifurcation}

\begin{theorem}
The model \eqref{eq:ODE2} undergoes a transcritical bifurcation around $E_3^*$ when $a_1=a_1^*=\dfrac{c_1a_2}{b_2}$ and $c_1 \neq \dfrac{b_1b_2}{ka_2+c_2}$.
\end{theorem}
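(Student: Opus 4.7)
The plan is to apply Sotomayor's theorem along the lines of Theorem~\ref{thm:tc}, now to system \eqref{eq:ODE2} at the boundary equilibrium $E_3^* = \widehat{E}_3 = (0, a_2/b_2)$ with $a_1$ as the bifurcation parameter. First I would evaluate the Jacobian \eqref{jacob_second} at $\widehat{E}_3$ with $a_1 = a_1^* = c_1 a_2 / b_2$; this makes the top-left entry $a_1 - c_1 a_2/b_2$ vanish, leaving the lower-triangular matrix
\[
\widehat{J}^{*}(\widehat{E}_3, a_1^*) = \begin{pmatrix} 0 & 0 \\ -a_2(k a_2 + c_2)/b_2 & -a_2 \end{pmatrix},
\]
whose spectrum is $\{0, -a_2\}$. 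Thus $0$ is a simple eigenvalue, the starting hypothesis of Sotomayor.

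Second, I would compute the right eigenvector $P$ of $\widehat{J}^{*}$ and the left eigenvector $Q$ of $(\widehat{J}^{*})^{T}$ associated with the zero eigenvalue. Solving $\widehat{J}^{*} P = 0$ gives $P = (1, -(k a_2 + c_2)/b_2)^{T}$, and solving $(\widehat{J}^{*})^{T} Q = 0$ gives $Q = (1, 0)^{T}$. Writing $X = (X_1, X_2)^{T}$ with $X_1 = a_1 u - b_1 u^2 - c_1 u v$ and $X_2 = a_2 v/(1+ku) - b_2 v^2 - c_2 u v$, I then check the three Sotomayor transversality conditions. The first, $Q^{T} X_{a_1}(\widehat{E}_3, a_1^*) = 0$, follows immediately because $X_{a_1} = (u, 0)^{T}$ vanishes at $u = 0$; this is precisely what distinguishes the transcritical case from a saddle-node. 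The second, $Q^{T}[D X_{a_1}(\widehat{E}_3, a_1^*) P] \neq 0$, reduces to the $(1,1)$ entry of $D X_{a_1}$ times the first component of $P$ and yields $1 \neq 0$.

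The third and most delicate condition is $Q^{T}[D^{2} X(\widehat{E}_3, a_1^*)(P, P)] \neq 0$. Because $Q = (1, 0)^{T}$, only the Hessian of $X_1$ contributes, which simplifies the algebra considerably: using $\partial^{2}_{uu} X_1 = -2 b_1$, $\partial^{2}_{uv} X_1 = -c_1$, and $\partial^{2}_{vv} X_1 = 0$, I obtain
\[
Q^{T}[D^{2} X(P, P)] = -2 b_1 + \frac{2 c_1 (k a_2 + c_2)}{b_2} = \frac{2}{b_2}\bigl( c_1(k a_2 + c_2) - b_1 b_2 \bigr),
\]
which is nonzero exactly under the stated hypothesis $c_1 \neq b_1 b_2 / (k a_2 + c_2)$. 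Once all three conditions hold, Sotomayor's theorem yields the transcritical bifurcation at $a_1 = a_1^*$.

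The main obstacle is the second-derivative calculation, where one must assemble the bilinear form $D^{2} X(P, P)$ carefully and keep track of the dependence on the fear parameter $k$ that enters through the second component of $P$. However, the orthogonality $Q = (1, 0)^{T}$ kills the $X_2$ Hessian entirely, so only the (polynomial) Hessian of $X_1$ enters, and the nonvanishing condition cleanly translates into the stated parametric inequality. With the eigenvectors in hand and the transversality conditions verified, the conclusion follows from Sotomayor's theorem exactly as in the proof of Theorem~\ref{thm:tc}.
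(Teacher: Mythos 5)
Your proposal is correct and follows essentially the same route as the paper: Sotomayor's theorem at $\widehat{E}_3=(0,a_2/b_2)$ with bifurcation parameter $a_1$, the same lower-triangular Jacobian with simple zero eigenvalue, and the same three transversality checks, with your eigenvector $P=(1,-(ka_2+c_2)/b_2)^T$ being just a rescaling of the paper's $W=(-b_2/(ka_2+c_2),1)^T$ (so your nondegeneracy quantity $\tfrac{2}{b_2}(c_1(ka_2+c_2)-b_1b_2)$ agrees with the paper's $\tfrac{2b_2}{ka_2+c_2}\bigl(c_1-\tfrac{b_1b_2}{ka_2+c_2}\bigr)$ up to a positive factor). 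The observation that $Q=(1,0)^T$ annihilates the Hessian of $X_2$ is exactly what the paper uses implicitly, and the final nonvanishing condition matches the stated hypothesis $c_1\neq b_1b_2/(ka_2+c_2)$.
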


\begin{proof}
The Jacobian matrix for system \eqref{eq:ODE2} evaluated at $E_3$ with $a_1^*=\dfrac{c_1a_2}{b_2}$ is given as 
\begin{equation}\label{trans}
JJ^*=
\left(
\begin{array}{cc}
     0& 0 \\
 -\dfrac{a_2}{b_2} \left(a_2 k+c_2\right) & -a_2 \\
\end{array}
\right).
\end{equation}
The corresponding eigenvalues to the Jacobian of  \eqref{eq:ODE2} in Eq. (\ref{trans}) are $\lambda_1=0$ and $\lambda_2=-a_2.$ Clearly, there is a zero eigenvalue at  $a_1=a_1^*=\dfrac{c_1a_2}{b_2}$. Next, we let $W=(w_1,w_2)^T$ and $Z=(z_1,z_2)^T$ represent the eigenvectors  related to the zero eigenvalue of the matrices $JJ^{*}$ and $JJ^{*T}$ respectively.

We obtain $W=\left(-\dfrac{b_2}{ka_2+c_2},1 \right)^T$ and $Z=\left(1,0 \right)^T$.  Now, let $R=(R_1,R_2)^T$ where 
	\begin{equation*}
		\begin{split}
			R_1&= a_1u-b_1 u^2-c_1uv,\\
			R_2&= \dfrac{a_2 v}{1+ku}-b_2v^2-c_2 uv.
		\end{split}
	\end{equation*}
	Presently, we validate the transversality conditions using the Sotomayor's theorem \cite{perko2013differential}. Now,
	$$Z^TR_{a_1^*}(E_3^*,a_1)=\left(1,0 \right) \left(0,0 \right)^T= 0.$$
	Also,
	\begin{align*}
Z^{T}\left[DR_{a_1}\left(E_3,a_1^* \right)W\right] &= \left(
\begin{array}{cc}
 1 & 0 \\
\end{array}
\right) 
\left(
\begin{array}{ccc}
 1& 0  \\
  0 & 0 \\
\end{array}
\right)
\left(
\begin{array}{ccc}
 w_1 \\
  w_2 \\
\end{array}
\right) \\
&=-\dfrac{b_2}{ka_2+c_2}\neq 0.
\end{align*}

and 
\begin{equation*}
\begin{split}
Z^{T}\left[D^2 R\left(E_3,a_1^* \right)(W,W)\right] &= \left(
\begin{array}{cc}
 1 & 0 \\
\end{array}
\right) 
\left(
\begin{array}{ccc}
 \dfrac{2b_2}{ka_2+c_2}\left(c_1-\dfrac{b_1 b_2}{ka_2+c_2} \right)  \\
  \dfrac{2a_2^2 k^2 b_2}{\left(ka_2+c_2 \right)^2} \\
\end{array}
\right) \\
& =\dfrac{2b_2}{ka_2+c_2}\left(c_1-\dfrac{b_1 b_2}{ka_2+c_2} \right) \neq 0.
\end{split}
\end{equation*}

Therefore by the Sotomayor's theorem system (\ref{eq:ODE2}) experiences a transcritical bifurcation at $a_1=a_1^*=\dfrac{c_1a_2}{b_2}$ around $E_3^*$.
\end{proof}

\end{document}